\documentclass[aps,prl, twocolumn,amsmath,amssymb,longbibliography]{revtex4-2}
\usepackage{amsmath,amssymb,amsfonts,bm}
\usepackage{amsthm}
\usepackage{adjustbox}
\usepackage{xcolor}
\usepackage{bbold}
\usepackage{soul}
\usepackage{graphicx}
\usepackage{dcolumn}
\usepackage{epstopdf}
\usepackage{epsfig}
\usepackage{url}
\usepackage{hyperref}
\makeatletter
\let\origaddcontentsline\addcontentsline
\newcommand{\suspendtoc}{\let\addcontentsline\@gobblethree}
\newcommand{\resumetoc}{\let\addcontentsline\origaddcontentsline}
\makeatother

\usepackage{verbatim}
\usepackage{braket}
\usepackage[normalem]{ulem}
\usepackage{IEEEtrantools}
\usepackage{mathrsfs}

\usepackage{enumitem}

\usepackage{tensor}

\usepackage{adjustbox}
\usepackage{float}

\usepackage{amsfonts}
\usepackage{booktabs}
\usepackage{siunitx}

\usepackage[flushleft]{threeparttable}
\usepackage{makecell,booktabs}

\newtheorem{theorem}{Theorem}[section]   
\newtheorem{lemma}[theorem]{Lemma}       
\newtheorem{proposition}[theorem]{Proposition}

\theoremstyle{definition}
\newtheorem{definition}[theorem]{Definition}

\theoremstyle{remark}

\newcommand{\gauss}{{\mathrm G}}
\newcommand{\ngauss}{{\mathrm{nG}}}
\newcommand{\piedge}{\pi_{\mathrm e}}
\newcommand{\pibulk}{\pi_{\mathrm b}}
\newcommand{\zeroedge}{0_{\mathrm e}}
\newcommand{\zerobulk}{0_{\mathrm b}}
\newcommand{\EP}{\mathrm{EP}}

\newcommand{\Peter}{\mathcal{P}}
\newcommand{\Overlap}{\mathcal{O}}
\newcommand{\CH}{\mathrm{CH}}
\newcommand{\PTsym}{\mathcal{PT}}

\newcommand{\Trb}{\mathsf{t}}
\newcommand{\Detb}{\mathsf{d}}
\newcommand{\boltz}{\mathcal{B}}
\newcommand{\Leff}{L_\mathrm{e}}

\newcommand{\Knormalized}{\mathcal{K}}

\newcommand{\DeltaDisc}{\Delta}
\newcommand{\Ttilde}{\widetilde{\boltz}}
\newcommand{\Ncal}{\mathcal{N}}

\newcommand{\spatdef}{\mathrm{top}}
\newcommand{\atanh}{\operatorname{atanh}}

\newcommand{\Asf}{\mathsf{A}}
\newcommand{\Bsf}{\mathsf{B}}
\newcommand{\Csf}{\mathsf{C}}
\newcommand{\Ksf}{\mathsf{K}}

\newcommand{\TT}{\mathcal{T}}
\newcommand{\tth}{t_\mathrm{Th}}

\newcommand{\uu}{\mathcal{U}}

\newcommand{\Tr}{\operatorname{Tr}}

\newcommand{\tr}{\operatorname{tr}}

\newcommand{\be}{\begin{equation}}
\newcommand{\ee}{\end{equation}}
\newcommand{\ba}{\begin{aligned}}
\newcommand{\ea}{\end{aligned}}
\newcommand{\PT}{\mathcal{PT}}
\newcommand{\RPM}{\mathrm{RPM}}

\newcommand{\wg}{\mathrm{Wg}}
\newcommand{\Wg}{\mathrm{Wg}}

\newcommand{\Th}{\mathrm{Th}}

\newcommand{\iden}{\mathbb{1}}

\newcommand{\topo}{\text{top}}

\newcommand{\Lth}{L_{\mathrm{Th}} }

\usepackage{mathtools}

\graphicspath{
  {Figures/}
  {Figures/Daniel_figures/figures_for_tsff/Parity_dual/}
  {Figures/Daniel_figures/figures_for_tsff/Parity_dual/classification_plots/}
  {Figures/Daniel_figures/figures_for_tsff/Parity_dual/analysis_plots/}
  {Figures/Daniel_figures/figures_for_tsff/combined/}
  {Figures_upload/}
}

\begin{document}
  \title{Topological spectral form factor reveals emergent   
  non-Hermitian  \\
   single-particle $\mathcal{PT}$ transitions from many-body quantum chaos}

\newcommand{\titleinfo}{Topological spectral form factor reveals emergent   
  non-Hermitian  \\
   single-particle $\mathcal{PT}$ transitions from many-body quantum chaos}

\author{Daniel Harkin}
\email{daniel.harkin@lancaster.ac.uk}
\address{Department of Physics, Lancaster University, Lancaster LA1 4YB, United Kingdom}

\author{Chun Y. Leung}
\email{c.leung@northumbria.ac.uk}
\address{School of Engineering, Physics and Mathematics, Northumbria
University, Newcastle upon Tyne NE1 8ST, United Kingdom}
\address{Department of Physics, Lancaster University, Lancaster LA1 4YB, United Kingdom}

\author{Amos Chan}
\email{amos.chan@warwick.ac.uk}
\affiliation{Department of Physics, University of Warwick, Coventry, CV4 7AL, United Kingdom}
\address{Department of Physics, Lancaster University, Lancaster LA1 4YB, United Kingdom}
\date{\today}

\begin{abstract}
In equilibrium, topological defect insertions in quantum and classical partition functions provide non-perturbative probes of phase transitions beyond local observables. 
In non-equilibrium physics, the spectral form factor provides a minimal probe of universal quantum dynamics and is a product of two partition functions at imaginary inverse temperature. 
We define the topological spectral form factor (TopSFF) by inserting topological defects acting non-trivially on the doubled partition functions, producing mismatched spacetime world-sheet topologies. 
For the minimal \(\mathbb{Z}_2\) spatially extended defect, implemented by the global swap operator, we derive an exact mapping of the TopSFF of a generic \((1+1)\)D  many-body chaotic system to an emergent \((3+1)\)D non-Hermitian single-particle problem describing a temporal domain wall (tDW).  We show analytically that the effective tDW dynamics undergoes a \(\mathcal{PT}\) symmetry breaking transition at finite interaction strength \(\epsilon_{\EP}\): below \(\epsilon_{\EP}\), the leading modes are polarized into Gaussian or non-Gaussian tDW sectors and the TopSFF varies monotonically and exponentially with system size;
above \(\epsilon_{\EP}\), the tDW sectors hybridize and the TopSFF oscillates with system size; at the exceptional point \(\epsilon_{\EP}\), Jordan non-diagonality produces a linear-in-system-size enhancement.
For minimal temporally extended topological defects associated with global swap or time translation, we derive exact universal TopSFF free-energy scaling forms. We verify these signatures numerically in independent models.

\end{abstract}
\maketitle

\suspendtoc

\begin{figure}[ht]
    \centering
    \includegraphics[width=0.48\textwidth]{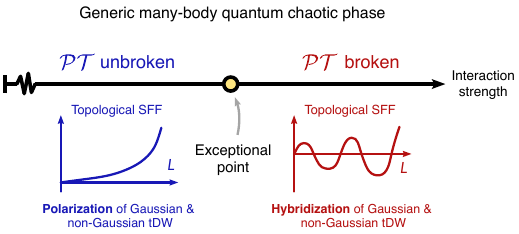}
    \caption{
    \textbf{Topological SFF and $\mathcal{PT}$ transitions.} In generic quantum many-body chaotic systems, TopSFF with minimal  spatially extended \(\mathbb{Z}_2\) defect, implemented by a global swap insertion, exhibits an emergent $\mathcal{PT}$ symmetry breaking transition at a finite interaction strength. 
    In the \(\mathcal{PT}\) unbroken phase, the leading modes are polarised into Gaussian or non-Gaussian temporal domain walls (tDW), producing monotonic exponential scaling in system size \(L\) at sufficiently large \(L\). 
    In the \(\mathcal{PT}\) broken phase, the two tDW species hybridise, producing oscillations in \(L\) with an exponential envelope.
    At the exceptional point, mode coalescence renders the effective tDW dynamics non-diagonalizable, producing a Jordan-block contribution with an additional polynomial prefactor in \(L\). 
    }
    \label{fig:trans}
\end{figure}

\begin{figure}[ht]
    \centering
\includegraphics[width=0.34\textwidth]{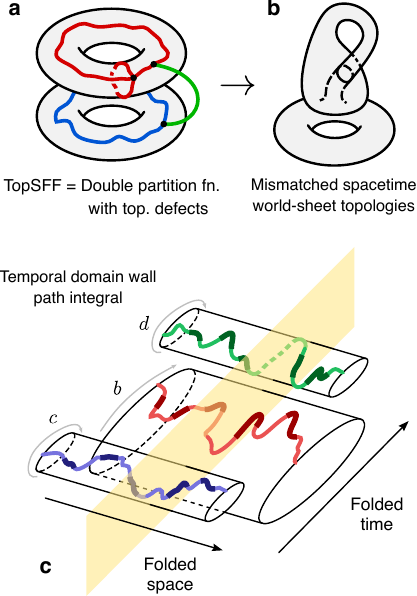}
    \caption{\textbf{Topological SFF as path integral of temporal domain walls.} (a) TopSFF is constructed by inserting non-trivial topological defects onto SFF as double partition functions, leading to (b) mismatched spacetime world-sheet topologies. 
(c) For a \((1+1)\mathrm{D}\) generic quantum many-body chaotic system,
the ensemble-averaged TopSFF with a spatially extended global swap defect
exactly maps to an effective \((3+1)\mathrm{D}\) path integral of emergent
tDW (yellow slice).  The tDW has two species, Gaussian and non-Gaussian (lighter and darker colours), and is specified by three folded-time loop coordinates while propagating along space as a non-Hermitian single particle.
    }
    \label{fig:schematic}
\end{figure}

Topological defects are robust non-perturbative structures that arise when locally ordered degrees of freedom cannot be made globally consistent, providing probes of global properties beyond the reach of local observables~\cite{kibble1976,mermin1979}.
In equilibrium statistical mechanics and quantum field theory, inserting a topological defect \(\hat{\mathcal D}\) into the partition function, \(Z_{\mathcal D}(\beta):=
\Tr(\hat{\mathcal D}e^{-\beta\hat H})\), with Hamiltonian 
\(\hat H\) and inverse temperature  \(\beta\),
diagnoses symmetry breaking, dualities, anomalies and topological
phases~\cite{kadanoff1971, Petkova_2001, conformaldefects2004frohlich, Gaiotto2015gensym, aasen2020topological}.
 Their key feature is topological invariance: subject to local commutation relations, topological defects can be deformed without changing the partition function. 

In non-equilibrium quantum dynamics, the partition function viewpoint has a natural analogue: The analytically continued partition function, \(Z(\beta= it):=\Tr e^{-i\hat H t}= \Tr\hat U(t)\), is the trace of the propagator and probes the model-dependent one-point spectral density.
The minimal universal probe is instead the spectral form factor (SFF), \(K(t)=|Z(it)|^2\), the Fourier
transform of the two-level spectral correlation function~\cite{Mehta, berry1985semiclassical}. Equivalently,
the SFF is a doubled partition function at imaginary inverse temperature, or a double sum over Feynman paths.
In generic many-body quantum chaotic systems~\cite{deutsch1991quantum, Srednicki, Rigol2008}, the connected SFF displays the bump-ramp-plateau behaviour~\cite{chan2018spectral}. The ramp and plateau are well-known universal signatures of quantum chaos~\cite{bohigas1984characterization, berry1985semiclassical, Sieber_2001, muller_2004, Cotler_2017, Kos_2018, chan2018solution, bertini2018exact, chan2018spectral, friedman2019, saad2019semiclassical, li2021spectral, bertini2021random}, whereas the bump is an intrinsically many-body phenomenon: it arises from spatial domain walls between locally distinct pairings of Feynman paths~\cite{chan2018spectral}, can be probed as topological defects via twisted boundary conditions~\cite{Garratt2021prx}, and exhibits
universality beyond random matrix theory (RMT)~\cite{friedman2019, chan2021trans, Shivam_2023}. Related deviations from RMT in the SFF bump regime have since been identified across a range of many-body systems~\cite{Gharibyan_2018, winer2022hydrodynamic,chaoschallengeMBL, chan2020lyap, roy2020random, yoshimura2025operator, altland2026path, kieler2026semiclassical}.

\begin{figure*}[ht]
    \centering
\includegraphics[width=0.98\textwidth]{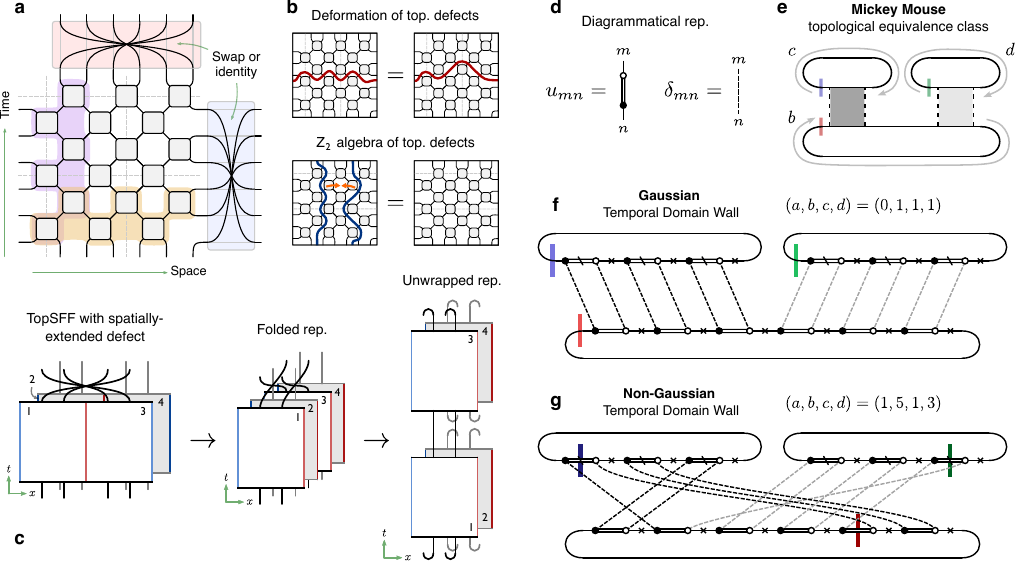}
\caption{
\textbf{Temporal domain walls and Mickey Mouse diagrams.}
(a) A global swap defect or the identity operator may be inserted along the spatial or temporal direction of the worldsheet describing a quantum many-body system evolved by local unitary gates. (b) Topological invariance allows local deformations of the defect line after folding described in (c), while two swap defects fuse to the identity, realizing the \(\mathbb{Z}_2\) algebra.
(c) For the TopSFF with a spatially extended global swap defect, worldsheets 3 and 4 are folded onto worldsheets 1 and 2 along the parity inversion axes (red and blue), so that parity-related degrees of freedom are stacked. Unwrapping each folded site in the time direction yields three loops at each folded site: one large loop formed by worldlines in \(\Tr[\hat{S}\hat{U}]\) passing through sheets 1 and 3, and two smaller loops formed by worldlines in \(\Tr[\hat{U}^*]\) passing through sheets 2 and 4.
(d) Diagrammatic notation for a local Haar-random unitary gate element \(u_{mn}\) and for the contraction \(\delta_{mn}\) generated from ensemble averaging.
(e) We prove that at large \(q\), the leading local diagrams form the ``Mickey Mouse'' topological equivalence class, where dashed ribbons denote a series of parallel contractions in (d)~\cite{supplementary}.
(f,g) Mickey Mouse diagrams contain temporal domain walls (tDW).  The red, blue and green vertical lines mark the tDW positions on the three folded-time loops.  Each tDW is labelled by \((a,b,c,d)\), where \((b,c,d)\) specify the folded-time loop coordinates, while \(a=0,1\) distinguishes the Gaussian and non-Gaussian tDW. A Gaussian tDW crosses a physical worldline represented by a single line, whereas a non-Gaussian  tDW crosses a unitary double line.
}
\label{fig:tdw_mickey_mouse}
\end{figure*}

Exploiting the doubled structure of the SFF, we define the topological
spectral form factor (TopSFF) by inserting a topological defect that acts non-trivially on the doubled partition function [Fig.~\ref{fig:schematic}],
\begin{equation}
\label{eq:K_top}
K_{\mathrm{top}}\!:=\! 
\Tr\!\left[
\hat{\mathcal D}
\big(\hat U_{p}\otimes \hat U_{p}^{*} \big)
\right],
\,\,\,
\left[
f_{\mathrm d}\big(\hat{\mathcal D}\big),
f_{\mathrm q}\big( \hat U_{p}\otimes \hat U_{p}^{*}\big)
\right] \!=0 .
\end{equation}
Here \(\hat{\mathcal D}\) is a defect operator and
\(\hat U_{p}\otimes \hat U_{p}^{*}\) is the doubled many-body evolution
operator, both acting on the doubled Hilbert space. In general, extended topological defects may be inserted at arbitrary locations, orientations and slices in spacetime. We focus on two representative orientations: spatially extended defects with \(p=\mathrm{t}\), which probe the interplay between topological defects
and unitary time evolution; temporally extended defects with  \(p=\mathrm{s}\), which probe the generally non-unitary spatial propagation.  
The commutation relation encodes the local pulling-through
condition responsible for the topological invariance of the TopSFF.  The model-dependent maps \(f_{\mathrm d}\) and \(f_{\mathrm q}\) select, respectively, the defect component and the corresponding component of the doubled evolution that commute locally.
In this manuscript, we focus on the \textit{minimal} non-trivial \(\mathbb{Z}_2\) defect \(\hat{\mathcal D}=\hat O\otimes\hat\iden\), implemented by the global swap operator \(\hat O = \hat S\), inserted into the doubled evolution of generic quantum many-body chaotic circuits \( \hat{U}(t,L)\)  [Fig.~\ref{fig:tdw_mickey_mouse}(a)]. In one dimension, acting on \(L\) qudits,
\(\hat S\ket{a_1,a_2,\ldots,a_L}=\ket{a_L,\ldots,a_2,a_1}\). \(f_{\mathrm{d}}\) is the identity map, while \(f_{\mathrm{q}}(\hat{U}_p (t,L) \otimes \hat{U}_p^{*}(t,L)) \) selects the time evolution operator (orange) for \(p=\mathrm{t}\), or the dual transfer matrix (purple) for \(p=\mathrm{s}\) in Fig.~\ref{fig:tdw_mickey_mouse}(a). The commutation relation is local in both space and time after the folding procedure discussed below in Fig.~\ref{fig:tdw_mickey_mouse}(b). 
We emphasize that the asymmetric and non-trivial insertion into the doubled partition function is needed to create mismatched world-sheet topologies -- a non-orientable Klein bottle and an orientable torus [Fig.~\ref{fig:schematic}(b)] -- thereby inducing a topological defect in the pairing structure of Feynman paths; see also \cite{Garratt2021prx}.
Algebraically, if \([\hat S,\hat U]=0\), defining
\(\hat\Pi_\pm=(\hat\iden\pm\hat S)/2\) and
\(Z_\pm=\Tr[\hat\Pi_\pm\hat U^t]\), the ordinary SFF gives
\(K=|Z_+|^2+|Z_-|^2 + \mathrm{Re}(Z_+Z_-^*) \), while the TopSFF with \((\hat D=\hat S\otimes\hat 1)\) gives
\(K_{\topo}=(Z_+-Z_-)(Z_+^*+Z_-^*)
=|Z_+|^2-|Z_-|^2+2i \mathrm{Im}(Z_+Z_-^*)\).
Thus, TopSFF accesses the imaginary cross sector interference, invisible to the regular SFF. In contrast, with the matched insertion \(\hat{\mathcal D}=\hat S\otimes\hat S\), the TopSFF reduces to the ordinary SFF of a system with a twisted boundary condition, and does not generate mismatched pairing topologies and emergent single-particle defect dynamics described below, see \cite{supplementary} and \cite{nakai2025discretetimecrystalsdetected}. 
Equation~\eqref{eq:K_top} generalizes directly to higher replica defect insertions and observables beyond the SFF; more general topological defects will be discussed in forthcoming work~\cite{upcoming}.
After the Heisenberg scale, set by the Hilbert space dimension, the TopSFF in chaotic systems displays RMT plateau behaviour, whose explicit solution is given in~\cite{supplementary}. From now on, we focus instead on universal many-body effects beyond RMT in TopSFF in the pre-Heisenberg regime.

\begin{figure*}[ht!]
    \centering
    \includegraphics[width=1\linewidth]{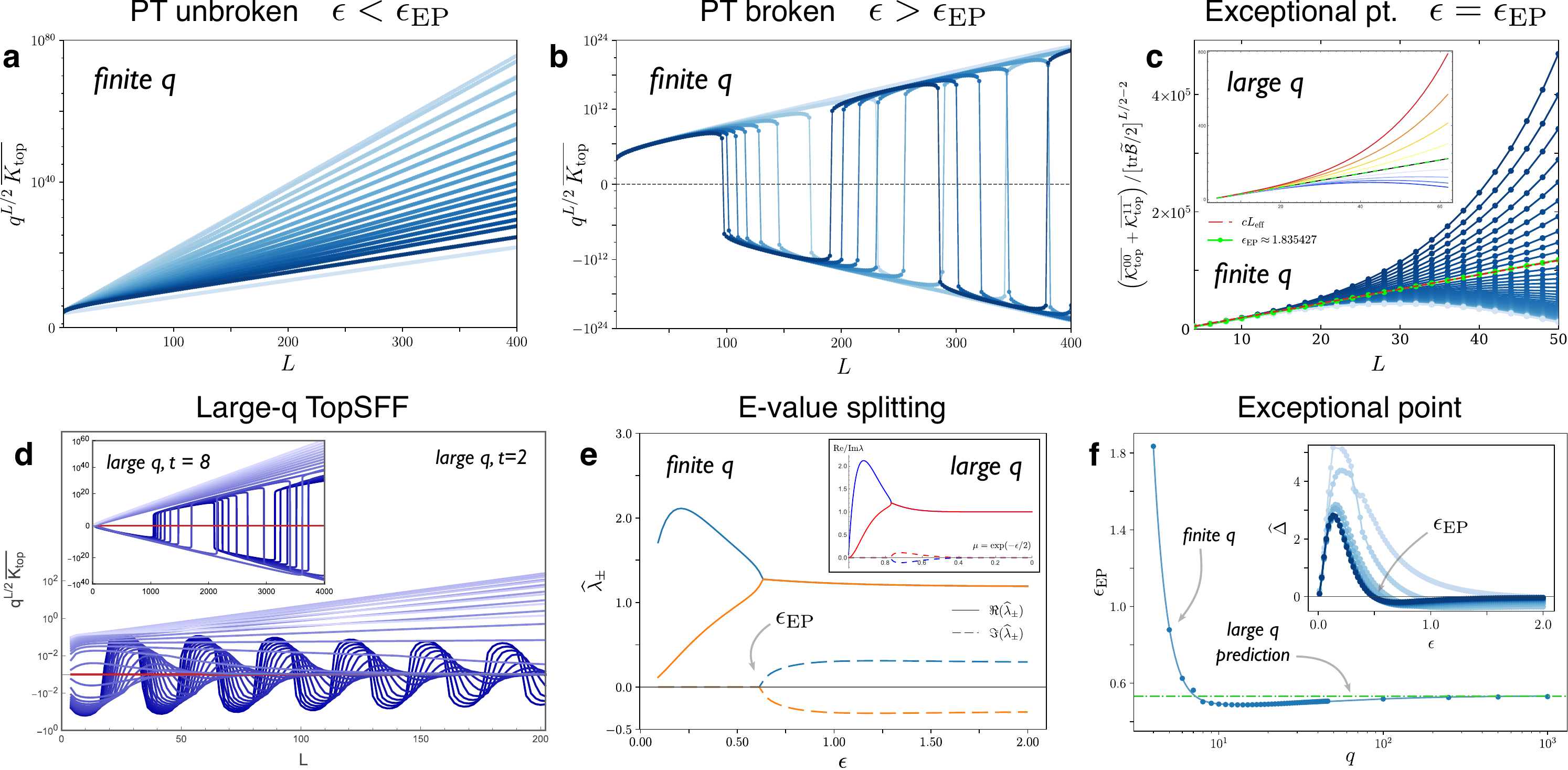}
    \caption{\textbf{Finite-\(q\) and large-\(q\) signatures of TopSFF \(\mathcal{PT}\) transition.} We exactly compute TopSFF in the \(k=(\pi,\pi,0)\) sector, with finite-\(q\) data at \(t=4\) and with large-\(q\) results for arbitrary \(t\) in the pre-Heisenberg regime.
    Darker lines denote stronger interaction strength. 
(a) In the \(\mathcal{PT}\) unbroken phase, the TopSFF at \(q=4\) displays exponential growth with system size \(L\).  
(b) In the \(\mathcal{PT}\) broken phase, TopSFF at \(q=4\) exhibits oscillatory behaviour with an exponential envelope. 
(c) At the exceptional point, eigenvector coalescence produces a Jordan-block contribution, giving the boundary-resolved TopSFF a  prefactor linear in system size at \(q=4\). The inset shows the corresponding large-\(q\) result. 
(d) Large-\(q\) exact TopSFF displays \(\mathcal{PT}\) transition at $t=2$ (main panel) and generally larger $t$, e.g. 
\(t=8\) (inset).
(e) Eigenvalue splitting across the transition at \(q=6\). The real eigenvalues in the \(\mathcal{PT}\) unbroken phase coalesce at the exceptional point and become a complex-conjugate pair in the \(\mathcal{PT}\) broken phase. 
Solid and dashed curves denote real and imaginary parts, respectively. 
Remarkably, the finite-$q$ spectrum is already qualitatively consistent with the large-$q$ spectrum shown in the inset and, with increasing $q$, converges toward the large-$q$ prediction~\cite{supplementary}.
(f) Finite-\(q\) exceptional points extracted from the Cayley-Hamilton discriminant converge towards the analytic large-\(q\) prediction as \(q\) increases. The inset shows the finite-\(q\) discriminant crossing used to locate the transition. 
Lastly, $q=2$ simulations of an independent model with non-diagonal coupling gates exhibit the oscillatory and monotonic TopSFF behaviours predicted by the large-$q$ RPM analysis~\cite{supplementary}.
}
    \label{fig:panel_finite_q_data}
\end{figure*}

\textbf{Path integral of temporal domain walls.}
We first consider the TopSFF of a spatially extended global swap defect
\(\hat{\mathcal D}=\hat S\otimes \hat\iden\) acting in generic quantum
many-body chaotic systems \(\hat U_{\mathrm{t}}(t,L)= \hat U^t\) with discrete time translational symmetry, realised by the parity inversion symmetric
Random Phase Model (RPM) with local Hilbert space dimension \(q\) (see Methods). 
The tensor network for \(K_{\mathrm{top}}=\Tr[\hat S\hat U^t]\,\Tr[\hat U^{*t}]\) contains two mismatched world-sheets [Fig.~\ref{fig:tdw_mickey_mouse}(c)]. Folding along the parity inversion axes stacks parity-related degrees of freedom, while unwrapping each folded site in the time direction exposes three folded-time loops: one long \(b\)-loop associated with
\(\Tr[\hat S\hat U^{t}]\), and two shorter loops, denoted the \(c\)- and \(d\)-loops, associated with \(\Tr[\hat U^{*t}]\).
We represent each local Haar-random unitary gate element \(u_{mn}\) of the RPM with two dots connected by a double line, representing the incoming and outgoing degrees of freedom \(m\) and \(n\) of \(u_{mn}\) [Fig.~\ref{fig:tdw_mickey_mouse}(d)]. Upon Haar averaging, the indices of local unitary gates and their complex conjugations are contracted with weights given by Weingarten functions~\cite{weingarten1978asymptotic, Brouwer1996, chan2018solution} [Fig.~\ref{fig:tdw_mickey_mouse}(d) right]. 
In the large-\(q\) limit, we prove that the leading local diagrams belong to the ``Mickey Mouse'' topological equivalence class
[Fig.~\ref{fig:tdw_mickey_mouse}(e)].  Each Mickey Mouse diagram contains a temporal domain wall (tDW), separating two locally distinct pairing domains along folded time: one with parallel contractions between the \(b\)- and \(c\)-loops, and the other with contractions between the \(b\)- and \(d\)-loops. The equivalence class is defined modulo independent cyclic rotations along the three folded-time loops, labelled \(b\), \(c\) and \(d\).  
There are two types of tDW.  A Gaussian tDW crosses the physical
worldline of the long \(b\)-loop, whereas a non-Gaussian tDW crosses the unitary double line, i.e. the two matrix indices of the associated unitary are contracted separately with \(c\)- and \(d\)-loops [Fig.~\ref{fig:tdw_mickey_mouse}(f,g)].  
A local tDW configuration is labelled by
\(\rho=(a,b,c,d)\), with \(a=0,1\) distinguishing the Gaussian and non-Gaussian species, and \((b,c,d)\in\{1,\ldots,2t\}\times\{1,\ldots,t\}^{2}\) specifying the three folded-time loop coordinates, giving \(4t^{3}\) configurations in total.
Importantly, the corresponding leading Weingarten weights differ by a
sign: \(+1\) for a Gaussian tDW and \(-1\) for a non-Gaussian tDW.
The interacting gates generate a generalized Boltzmann factor \(\mathcal B(\rho_1,\rho_2)\) between neighbouring folded sites. 
 Translational invariance along the three folded-time loops allows \(\mathcal B\) to be Fourier transformed  into  momentum (or frequency) sectors \(k=(k_b,k_c,k_d)\) as \(\widetilde{\mathcal B}\). Upon ensemble-averaging, the TopSFF can be evaluated as a discrete path integral of an emergent tDW propagating along folded space as a non-Hermitian single particle~\cite{supplementary},
\begin{equation}
\begin{aligned}
\label{eq:Ttilde_matrix}
\overline{\Knormalized_{\mathrm{top}}}= \sum_{k} \phi(-k)^T\Ttilde(k)^{\Leff}\phi(k), \,\,\,\, 
\Ttilde= \, 
\begin{pmatrix}
\Ttilde_{00} & \Ttilde_{01}\\
\Ttilde_{10} & \Ttilde_{11}
\end{pmatrix}
.
\end{aligned}
\end{equation}
Non-trivially, \(\Ttilde_{00},\Ttilde_{11}\in\mathbb R\), 
\(\Ttilde_{01},\Ttilde_{10}\in i \mathbb{R}\). 
The complex-valued generalized Boltzmann factor \(\Ttilde\), equivalently the transfer matrix, is written in the
two-species tDW basis, with \(\ket{\gauss}\) and \(\ket{\ngauss}\) denoting respectively the Gaussian and non-Gaussian state for each momentum sector. The matrix element \(\Ttilde_{a a'}(k)\) gives the transition amplitude from species \(a\) to species \(a'\).
We use the normalization \(\Knormalized_{\mathrm{top}}:=q^{L/2}K_{\mathrm{top}}\), and the effective propagation length is \(\Leff=L/2-1\).
\(\phi(k)\) is the physical boundary vector generated by the couplings along the parity-inversion axes. For period-one driving along the axes, the boundary vector is \(\phi=(1,i)^T\), an equal-weight superposition of Gaussian and non-Gaussian tDW sectors. The factor of \(i\) originates from the relative Weingarten sign. 
The trace and determinant of \(\Ttilde\) are always real, and therefore, the two eigenvalues are either real or a complex conjugate pair. Further, \(\Ttilde\) exhibits an emergent antiunitary \((\mathcal{PT}) \Ttilde(k) (\mathcal{PT})^{-1}=\Ttilde(-k)\), 
with complex conjugation \(\mathcal T\)  and \(\mathcal P=\sigma_z\) acting on the Gaussian/non-Gaussian tDW index \(a=0,1\). We prove that this antiunitary symmetry becomes a \(\mathcal{PT}\) symmetry~\cite{BenderBoettcher1998} for \(k_b+k_c+k_d=0\pmod{\pi}\)~\cite{supplementary}. The emergence of antiunitary symmetry intrinsically follows from the relative Weingarten sign of Gaussian and non-Gaussian tDWs: conversion between the two types of tDW carries a factor of \(i\).

In the large-\(q\) limit, for each momentum sector \(k\), the generalized Boltzmann factor is an exactly solvable two-by-two transfer matrix, with eigenvalues and eigenvectors obtained explicitly in~\cite{supplementary}. The eigenvalues are \(\lambda_{\pm}(k)
=
[\Ttilde_{00}(k)+\Ttilde_{11}(k)]/2
\pm
\sqrt{\Delta(k)}\), and real or complex conjugate pairs depending on the sign of discriminant
\begin{equation}
\label{eq:Ttilde_eigenvalues}
\Delta(k,\epsilon)
=
\underbrace{
\left[
\frac{\Ttilde_{00}-\Ttilde_{11}}{2}
\right]^2
}_{\text{detuning}}
+
\underbrace{
\Ttilde_{01}\Ttilde_{10}
}_{\text{conversion}}.
\end{equation}
Thus the spectral bifurcation, corresponding to the \(\PT\) transition in \(\PT\) symmetric sectors, is controlled by the competition between the Gaussian/non-Gaussian detuning \(|[\Ttilde_{00}-\Ttilde_{11})/2|^2\) and the negative-valued inter-species conversion term \(\Ttilde_{01}\Ttilde_{10}\).
Numerous momentum sectors of \(\Ttilde\) exhibit \(\PT\) transitions. Hereafter, we focus on the sector \(k=(\pi,\pi,0)\) for even \(t\) and \(k=(2\pi/t,-2\pi/t,0)\) for odd \(t\), which are distinguished by the coexistence of two features: (i) leading eigenvalues have moduli exceeding unity, giving the exponential growth with system size discussed below, and (ii) an exceptional point (EP) exactly solvable at large-\(q\) for both even and odd \(t\) as
\begin{equation}
\label{eq:epsilon_EP}
\epsilon_{\EP}
=
\log\!\left[
\frac{\sqrt{(t-1)^2+8t}-(t-1)}{2}
\right] \xrightarrow{t\gg 1}
\log 2,
\end{equation}
which remains \textit{finite} even at large \(t\), within the pre-Heisenberg regime of the large-\(q\) analytics. 
The special role of \((\pi,\pi,0)\) follows from the momentum structure of the exact large-\(q\) transfer matrix. This sector satisfies the \(\mathcal{PT}\) symmetric condition \(k_b+k_c+k_d=0 \pmod{\pi}\), while combining a zero momentum on one folded-time loop with alternating \(\pi\) phase factors on the other two loops. The zero momentum gives a non-oscillatory enhancement, whereas the alternating signs reduce the Gaussian--non-Gaussian detuning without eliminating their off-diagonal mixing. This balance allows \(\Delta(k,\epsilon)\) to change sign at finite \(\epsilon\) unlike, for example, the zero-momentum sector \((0,0,0)\) where the \(\pi\) phase factors are absent. 
The \((\pi,\pi,0)\) sector is accessible, for example, in ordinary time evolution by computing TopSFF with period-two driving specifically along the parity inversion axes~\cite{supplementary}. More directly, we gain access to  the \((\pi,\pi,0)\) and \((2\pi/t,-2\pi/t,0)\) sectors via non-unitary spatial propagation under the dual transfer matrix for small \(t\) and large \(L\)~\cite{chan2018spectral, chan2020lyap, Akila_2016, bertini2018exact} (Methods).

\textbf{Emergent \(\mathcal{PT}\) transition.} In the \(\mathcal{PT}\) unbroken phase, with
\(\epsilon<\epsilon_{\EP}\) and \(\Delta>0\), detuning dominates conversion in \eqref{eq:Ttilde_eigenvalues}, and the
eigenvalues of \(\Ttilde\) are real. For sufficiently large \(L\), the TopSFF is controlled by the eigenvalue of largest modulus and therefore grows or decays monotonically and exponentially according to a single dominant eigenmode,
\begin{equation}
\label{eq:Ktop_PT_unbroken}
\overline{\Knormalized_{\mathrm{top}}}
\propto
\lambda_{+}^{\Leff},
\qquad
\epsilon < \epsilon_{\EP},
\end{equation}
where \(|\lambda_{+}|>|\lambda_{-}|\). This leading eigenmode is generally a superposition of Gaussian tDW and non-Gaussian tDW, polarized predominantly towards one of them, i.e., \(\ket{\mathrm{G}}\) or \(\ket{\mathrm{nG}}\). We emphasize that this
\(\mathcal{PT}\) unbroken phase lies within the strongly
interacting chaotic regime. 

In the \(\mathcal{PT}\) broken phase, with
\(\epsilon>\epsilon_{\EP}\) and \(\Delta<0\), Gaussian--non-Gaussian conversion dominates detuning. The eigenvalues
form a complex conjugate pair, \(\lambda_{\pm}=|\lambda|e^{\pm i\theta}\), and the propagating tDW eigenmodes are hybridized Gaussian and non-Gaussian tDW superpositions, \(\ket{\pm} \propto (\ket{\mathrm{G}} \pm \ket{\mathrm{nG}})\).
The TopSFF is controlled by a coherent superposition of the two leading tDW eigenmodes. Under spatial propagation, these two modes acquire opposite phases, and their linear combination produces constructive and destructive interference as \(L\) is varied. Consequently, the TopSFF acquires an oscillatory dependence on system size with an exponential envelope, 
\begin{equation}
\label{eq:Ktop_PT_broken}
\overline{\Knormalized_{\mathrm{top}}}
\propto
|\lambda|^{\Leff}\sin(\Leff \theta+\varphi),
\qquad 
\epsilon> \epsilon_{\EP},
\end{equation}
with a phase offset \(\varphi\) fixed by the boundary vectors.

At the exceptional point, \(\epsilon=\epsilon_{\EP}\) and \(\Delta=0\), the eigenvalues become degenerate, \(\lambda_{+}=\lambda_{-}\equiv\lambda_{\EP}\), and the corresponding eigenvectors coalesce, rendering \(\Ttilde\) non-diagonalizable. Writing \(\Ttilde(k)|_{\EP}=\lambda_{\EP}\iden+N\), with \(N^2=0\), gives \(\Ttilde(k)^{\Leff}|_{\EP} = \lambda_{\EP}^{\Leff}\iden + \Leff\lambda_{\EP}^{\Leff-1}N\). The second term is the characteristic Jordan-block contribution, producing a polynomial enhancement in \(\Leff\).
For the boundary vector above, \(\phi=(1,i)^T\), an equal weight superposition of Gaussian and non-Gaussian tDW sectors, the Jordan contribution has vanishing overlap at the exceptional point, and therefore does not directly reveal the non-diagonalizability. We instead resolve the TopSFF into Gaussian and non-Gaussian boundary components via \(\overline{\Knormalized_{\spatdef}^{mn}} := \bigl[\phi^{(m)}\bigr]^T \Ttilde(k)^{\Leff} \phi^{(n)}\), with \(m,n\in\{0,1\}\), \(\phi^{(0)}=(1,0)^T\) and \(\phi^{(1)}=(0,i)^T\). The exceptional point and its Jordan non-diagonality are then directly exposed through
\begin{equation}\label{eq:Ktop_EP}
\frac{\overline{\Knormalized_{\spatdef}^{00}}+ \overline{\Knormalized_{\spatdef}^{11}}}
    {[\tr \widetilde{\boltz}/2]^{\Leff -1}}
    \propto   \Leff \, ,
    \qquad \epsilon = \epsilon_{\EP}.
\end{equation}
At the EP, dividing by the degenerate eigenvalue removes the exponential envelope and isolates the Jordan prefactor.

\textbf{Evidence from simulations.}
We now present analytical and numerical evidence for \(\mathcal{PT}\) transition signatures in the large-\(q\) theory, and show that signatures persist at finite \(q\). In the large-\(q\) theory, Fig.~\ref{fig:panel_finite_q_data}  displays the exact TopSFF across the \(\PT\) transition [(d)], confirming \eqref{eq:Ktop_PT_unbroken} and \eqref{eq:Ktop_PT_broken}; at the EP  [(c) inset], confirming \eqref{eq:epsilon_EP} and \eqref{eq:Ktop_EP}; and the associated eigenvalue splitting [(e) inset and (f)]. 
 The EP is further accompanied by divergent eigenvector overlap diagnostics, including the Petermann factor~\cite{petermann1979}, reflecting the non-orthogonality of the coalescing eigenvectors~\cite{supplementary}. 
At finite \(q\), we devise exact simulations of TopSFF  at \(t=4\), varying large \(L\) (Methods).
 In the strongly interacting regime, the Cayley-Hamilton (CH) extraction (Methods) reveals the same three finite-\(q\) signatures in Fig.~\ref{fig:panel_finite_q_data}: exponential dependence in the \(\mathcal{PT}\) unbroken regime [(a)], oscillations in the \(\mathcal{PT}\) broken regime [(b)], and a linear prefactor in \(L\) from Jordan non-diagonality at the EP [(c)]. 
Remarkably, the extracted \(\epsilon_{\EP}\) converges well to the large-\(q\) prediction as \(q\) increases [(f)], while the eigenvalue splitting shows qualitative agreement with the large-\(q\) behaviour [(e)].
Although the finite-\(q\) numerical simulations here are restricted to \(t=4\), their non-trivial agreement with the large-\(q\) theory supports the same underlying \(\PT\) transition mechanism at finite \(q\).
Crucially, the exact large-\(q\) analysis shows that these signatures persist to arbitrarily large \(t\) within the pre-Heisenberg regime, e.g. \(t=8\) in Fig.~\ref{fig:panel_finite_q_data}(d) inset and~\cite{supplementary}, with a finite interaction strength transition and an exponentially increasing TopSFF.
Importantly, $q=2$ simulations of an independent model, the Brick Wall Model (BWM)~\cite{supplementary}, which, unlike the RPM, is composed of non-diagonal gates, qualitatively reproduce the exact large-$q$ RPM predictions, exhibiting the predicted momentum-sector-dependent oscillatory and monotonic TopSFF behaviours. A tunable $\PT$ transition cannot be accessed in the simplest form of the BWM, since it lacks an interaction-strength tuning parameter. Nevertheless, this nontrivial agreement demonstrates that the TopSFF signatures obtained in the large-$q$ RPM are not specific to the RPM or to the large-$q$ limit.

\begin{figure}
    \centering
    \includegraphics[width=0.99\linewidth]{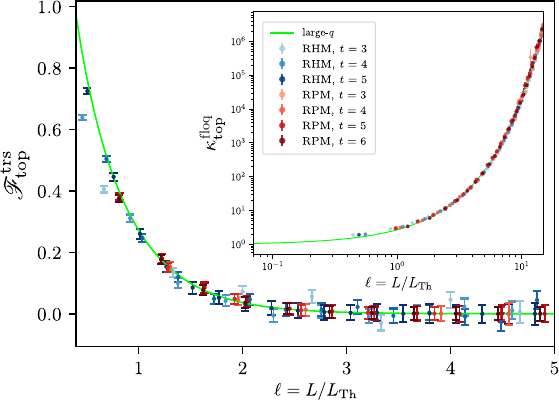}
\caption{
\textbf{Finite-\(q\) TopSFF collapse for temporally extended defects.}
Topological free energy \(\mathscr F^{\mathrm{trs}}_{\topo}\) of a temporally extended global swap defect in the TRS RPM at \(q=3\) (red), and the TRS Random Hamiltonian Model (RHM)~\cite{supplementary} at \(q=2\) (blue), compared with the large-\(q\) scaling form (green). 
Inset: finite-\(q\) collapse of the TopSFF \(\mathcal K^{\mathrm{floq}}_{\topo}\) with temporally extended translation defect for the corresponding Floquet models (without TRS).
}
\label{fig:topSFF_temporal_defect_scaling}
\end{figure}

\textbf{Temporally extended defects.} 
We next study TopSFFs with temporally extended defects \(\hat{\mathcal{D}}\) inserted in the spatial evolution \(\hat U_{\mathrm{s}}\otimes\hat U_{\mathrm{s}}^{*}\), where
\(\hat U_{\mathrm{s}}(t,L)=\prod_{i=1}^{L}\hat V_i(t)\) is generated by the
space-time dual transfer matrices of the RPM [Fig.~\ref{fig:tdw_mickey_mouse}(a), purple]. 
We consider \(\hat{\mathcal D}=\hat O\otimes\hat\iden\), with \(\hat O=\hat S\) the
global swap defect of the time reversal symmetric (TRS) RPM, or \(\hat O=\hat T\)
the time translation defect of the Floquet RPM with
\(\hat T\ket{a_1,a_2,\ldots,a_t}
=\ket{a_t,a_1,\ldots,a_{t-1}}\).
For the global swap case, at large \(q\), the averaged TopSFF is equivalent to an Ising partition function with a spin-flip defect. This defect forces a spatial domain wall (sDW), i.e. a kink in the pairing or Ising variable along the spatial direction, in the sense of Ref.~\cite{Garratt2021prx}.
The partition function evaluates to 
\(
\overline{K^{\mathrm{trs}}_{\mathrm{top}}}
=
\bigl[1+e^{-\epsilon\alpha(t)}\bigr]^L
-
\bigl[1-e^{-\epsilon\alpha(t)}\bigr]^L\), with 
\(\alpha(t):=t-1-\delta_{0,t\,{\rm mod}\,2}\). 
The ordinary SFF is obtained by replacing the minus sign by a plus sign~\cite{wu2025}. 
In generic many-body chaotic systems, the Thouless length \(\Lth\) is the crossover scale separating the 0D RMT regime, \(L\ll \Lth\), from the many-body regime, \(L\gtrsim \Lth\).
In the Thouless double scaling limit~\cite{chan2021trans} where \(t,L \to \infty\) at  fixed \(\ell:=L/\Lth = L e^{-\epsilon t}\) fixed,  an exact universal scaling form of TopSFF can be derived at large \(q\),
\begin{equation}
\kappa^{\mathrm{trs}}_{\topo}
:=
\lim_{\substack{L,t\to\infty\\ \ell=L/\Lth}}
\overline{K^{\mathrm{trs}}_{\topo}}
=
2\sinh\ell .
\label{eq:tsff_gtrs_scaling}
\end{equation}
The analogous scaling form of SFF is derived in \cite{wu2025}. The effective free energy cost of the spatial domain wall (sDW) is then
\(
\Delta F^{\mathrm{trs}}_{\topo}
:=
-\frac{1}{t}
\log
\big(\overline{K^{\mathrm{trs}}_{\mathrm{top}}}/ \overline{K^{\mathrm{trs}}}\big)
=
-\frac{1}{t}
\log
\!\left[
\tanh\!\left(L\,\operatorname{atanh}e^{-\epsilon\alpha(t)}\right)
\right].
\)
In the Thouless scaling limit, 
the TopSFF free energy scaling form at large \(q\) is
\be
\mathscr{F}^{\mathrm{trs}}_{\topo}\equiv \lim_{\substack{L,t \to \infty\\ \ell \equiv  L/\Lth}}  t\, \Delta F^{\mathrm{trs}}_{\topo}= - \ln \tanh \ell \,.
\label{eq:trs_scaling}
\ee

For the time translation case, analogously, the averaged TopSFF is the \(t\)-state Potts partition function~\cite{chan2018spectral} with color-flip defect,
forcing an sDW in the Potts color or pairing structure.
It evaluates as \(\overline{K^{\mathrm{floq}}_{\topo}} = 
\left[1+ (t-1) e^{-\epsilon t}\right]^L
-(1- e^{-\epsilon t})^L\). Taking the Thouless scaling limit gives the TopSFF universal scaling form 
\be\label{eq:floq_tsff_scaling}
\kappa^{\mathrm{floq}}_{\topo} :=  \lim_{\substack{L,t \to \infty\\ \ell \equiv  L/\Lth}}  \! \left(\, \overline{K^{\mathrm{floq}}_\topo}  +1 \right) = e^\ell  .
\ee
with  \(\Lth =e^{\epsilon t}/t \). The analogous scaling form for the ordinary SFF is derived in \cite{chan2021trans}. Again, comparing TopSFF and SFF gives the topological free energy difference
\(
\Delta F^{\mathrm{floq}}_{\topo}
:=
-\frac{1}{t}
\log
\big(\overline{K^{\mathrm{floq}}_{\mathrm{top}}}/ \overline{K^{\mathrm{floq}}}\big)
 =  -\frac{1}{t} \ln \, \frac{
1-\omega^L 
}{
1 +  (t-1) \omega^L} \) with \(\omega =  (1- e^{-\epsilon t})/[1+ (t-1) e^{-\epsilon t}]\). In the large-\(q\) limit, the universal scaling
form of TopSFF with time translation defect is
\begin{equation}\label{eq:time_trans_scaling}
\mathscr{F}^{\mathrm{floq}}_{\topo}\equiv 
\lim_{\substack{L,t\to\infty\\ \ell \equiv L / \Lth }} \!
\left(
t\Delta F^{\mathrm{floq}}_{\topo}-\ln t
\right)
=
-\ln(e^\ell-1),
\end{equation}
 capturing the free energy cost of sDW \cite{chan2018spectral, Garratt2021prx}.
The convergence to Eq.~\eqref{eq:time_trans_scaling} requires \(\ell\ll\ln t\). Since \(\ln t\) grows only logarithmically, this window is narrow in the numerically accessible regime, and we test the TopSFF scaling form~\eqref{eq:floq_tsff_scaling}, rather than the free energy scaling form~\eqref{eq:time_trans_scaling}. In Fig.~\ref{fig:topSFF_temporal_defect_scaling}, finite-\(q\) simulations of two independent models (Methods) show excellent agreement with the universal free energy scaling form for the TRS case~\eqref{eq:trs_scaling} and with the universal TopSFF scaling form for the time-translation case~\eqref{eq:floq_tsff_scaling}.

The TopSFF free energy scaling forms capture the universal sDW physics of generic many-body chaotic systems  beyond the 0D RMT regime~\cite{chan2018spectral, Garratt2021prx, wu2025}.
\eqref{eq:trs_scaling} identifies Thouless length \(L_{\Th}\) as the sole remaining length scale, acting as the effective correlation length of the sDW theory. For \(L\ll L_{\Th}\), 
\(\mathscr{F}^{\mathrm{trs}}_{\topo}\simeq \ln(L_{\Th}/L)\), i.e., a forced sDW is rare and costly, and the system behaves as a 0D
chaotic system described by a single random matrix. For \(L\gg L_{\Th}\), proliferating sDW screen the defect insertion, causing the free energy to vanish exponentially and producing universal many-body deviations from the 0D RMT behaviour. The time translation defect has the same interface free energy interpretation, but the sDW also carries a \(t\)-fold entropy, giving the \(\ln t\) subtraction in
Eq.~\eqref{eq:time_trans_scaling}.

\textbf{Discussion.} 
Generic strongly interacting quantum many-body systems are notoriously difficult to analyse. The naturalness of the TopSFF formulation as a defect-decorated doubled partition function enables exact analytic access to universal non-perturbative signatures that are not resolved by conventional observables such as the SFF.
These signatures include an emergent \(\mathcal{PT}\) transition within the generic many-body chaotic regime and universal scaling forms of topological-defect free energy, whose generality is corroborated across multiple models and for both temporally and spatially extended defects.
More broadly, our results expose an interplay between unitary evolution in time and non-unitary propagation in space. In particular, the tDW path integral provides an unexpected bridge between Hermitian many-body quantum chaos and emergent non-Hermitian single-particle physics.

The universal TopSFF signatures are intrinsically \textit{many-body} in two complementary senses. First, they arise from collective temporal and spatial DW degrees of freedom before the system reduces to the zero-dimensional RMT regime beyond the Thouless scale. Second, the Gaussian--non-Gaussian interference underlying the spatially extended TopSFF has no non-trivial zero-dimensional RMT counterpart. Although the same Mickey Mouse contraction topology occurs in the CUE, its leading Gaussian and non-Gaussian contributions cancel, with
$\overline{\Tr U^{2t}(\Tr U^{*t})^2}=0$ 
exactly for matrix size $N\geq 2t$~\cite{supplementary}. Many-body interactions lift this cancellation, generating distinct propagation amplitudes in the non-Hermitian tDW path integral, and inducing a finite-interaction-strength $\mathcal{PT}$ transition and exceptional point driven by many-body competition.
As a complementary test, a classically simulable Clifford/stabilizer TopSFF counterpart exhibits only an asymptotic \(\mathcal{PT}\) transition, lacking the finite-interaction-strength Gaussian--non-Gaussian competition found here (Methods).

This work opens many directions for future studies~\cite{upcoming}. Beyond conventional symmetries such as \(U(1)\), generalized symmetry defects and defect junctions define a rich class of defect-resolved observables~\cite{Gaiotto2015gensym,aasen2020topological}. The emergent non-Hermitian generalized Boltzmann factors also call for field-theoretic descriptions, tDW trajectory analysis, and a topological classification of exceptional points and topological invariants. TopSFF in the Sachdev-Ye-Kitaev model~\cite{Sachdev_1993,kitaev2015simple} could potentially connect defect-resolved observables to dual gravitational descriptions. 
Finally, TopSFF may be experimentally accessible by adapting existing measurement protocols~\cite{Poulin_2003,ZollerSFF2021,sff_exp_2025} to incorporate non-trivial defect insertions in the doubled evolution.

\section*{Methods}

\textbf{Models.} The Random Phase Model~\cite{chan2018spectral} is a one-dimensional random quantum circuit~\cite{Nahum2017} acting on \(L\) qudits of local Hilbert-space dimension \(q\). A single RPM bilayer is \(\hat{U}_{\RPM}=\hat{U}_2 \hat{U}_1\), where \(\hat{U}_1=\bigotimes_{r=1}^{L}u_r\) is a layer of independent Haar-random one-site gates \(u_r\in\mathrm{CUE}(q)\), and \(\hat{U}_2=\prod_r\Theta_{r,r+1}\) is a layer of commuting two-site diagonal phase gates, containing Gaussianly distributed random phases. In the computational basis, we have
\(
[\Theta_{r,r+1}]^{cd}_{ab}
=
\delta_{ac}\delta_{bd}e^{i\phi^{(r)}_{ab}}\) and 
\(\phi^{(r)}_{ab}\in \mathcal N(0,\epsilon).
\)
The variance \(\epsilon\) controls the interaction strength. To verify the generality of our results, we also use the Random Hamiltonian Model (RHM)~\cite{Bensa2022phantom, huang2023outoftimeorder} and the Brick Wall Model (BWM)~\cite{chan2018solution, Nahum2017} with the corresponding symmetry constraints, defined explicitly in \cite{supplementary}. 

To impose parity inversion symmetry, gates related by \(r\leftrightarrow L-r+1\) are identified. The one-site gates are mirrored across the inversion axis, and the two-site phase gates are chosen in mirror-related pairs. Gates straddling the inversion axis are locally parity symmetric, which for the diagonal RPM gate means \(\phi_{ab}=\phi_{ba}\). 
The TRS RPM~\cite{wu2025} is obtained by imposing the corresponding reflection constraint in the time direction. We work in the \(\mathcal T^2=1\) class and choose a basis such that  \(\mathcal T\) is the complex conjugation. A random half-circuit \(w(t,L)\) is sampled from the above RPM ensemble and the TRS circuit is generated via transposition as \(\hat U(t,L)=w(t,L)^T w(t,L)\), so that \(\hat U^T=\hat U\) and \(\mathcal T \hat U(t)\mathcal T^{-1}=\hat U^\dagger(t)\). See explicit definition in \cite{supplementary}.

\textbf{Large-\(q\) exact TopSFF.} Using the Cayley-Hamilton relation, for the boundary vector 
\(\varphi=(\varphi_1,\varphi_2)^T\) supported in momentum sector \(k\), the TopSFF is exactly evaluated at large \(q\) as
\begin{equation}\label{eq:Kvarphi_general}
\overline{\Knormalized_{\topo}}
=
\varphi^T\widetilde{\boltz}(k)^{\Leff}\varphi
=
g \, u_{\Leff}
-
h\det\widetilde{\boltz}\,u_{\Leff-1},
\end{equation}
where
\(
h=\varphi_1^2+\varphi_2^2\),
and
\(g(k)=
\Ttilde_{00}\varphi_1^2
+
\bigl[\Ttilde_{01}+\Ttilde_{10}\bigr]\varphi_1\varphi_2
+
\Ttilde_{11}\varphi_2^2\).
The system size dependence is encoded in
\begin{equation}
u_{\Leff}(k)=
\begin{cases}
\dfrac{\lambda_+^{\Leff}(k)-\lambda_-^{\Leff}(k)}
{\lambda_+(k)-\lambda_-(k)},
& \DeltaDisc(k)\neq 0,\\
\Leff \! \left[\dfrac{\tr \Ttilde(k)}{2}\right]^{\! \Leff-1},
& \DeltaDisc(k)=0.
\end{cases}
\end{equation}
where \(\lambda_\pm\) and \(\DeltaDisc(k)\) are defined around \eqref{eq:Ttilde_eigenvalues}. For \(\DeltaDisc(k)\neq0\), the eigenvalues \(\lambda_\pm(k)\) are distinct. At the EP, \(\DeltaDisc(k)=0\), they coalesce, and the second line gives the corresponding Jordan-block contribution. Explicit exact large-\(q\) expressions for the generalized Boltzmann factor \(\Ttilde(k)\) in all momentum sectors are given in~\cite{supplementary}.

\textbf{\((\pi,\pi,0)\) momentum sector.}
For the momentum sectors \(\piedge \equiv(\pi,\pi,0)\) in even \(t\) and \((2\pi/t,-2\pi/t,0)\) in odd \(t\), the large-\(q\) generalized Boltzmann factor is a \(\PTsym\) dimer, and displays a \(\PTsym\) phase transition at finite interaction strength. For concreteness, we supply here the exact Boltzmann factor for \(\piedge\equiv(\pi,\pi,0)\) for even \(t\). The odd \(t\) case can be  obtained analogously~\cite{supplementary}.
To write the generalized Boltzmann factor at \(\pi,\pi,0\), define 
\(S_2(n):=(1-\mu^{-2n})/(1-\mu^{-2})\),
\(S_4(n):=(1-\mu^{-4n})/(1-\mu^{-4})\),
\(A_n:=\sum_{r=1}^n(\mu^{-2r}-1)^2
=\mu^{-4}S_4(n)-2\mu^{-2}S_2(n)+n\), and
\(C_n:=A_n-(\mu^{-2}-1)^2\). Then, the exact solution of \(\Ttilde\) is given by
\begin{equation} \nonumber
\begin{aligned}
\widetilde{\mathcal B}^{\piedge}_{00}={}&
1+\mu^{4t-4}
+(t-2)(\mu^{2t}+\mu^{4t-2})
\\
& 
-2(t-1)\mu^{4t+2}
+2t\mu^{4t-4}(t-1) S_2(t-1) \\
&
-2t(t-1)\mu^{4t}
+2\mu^{4t}C_t
+2\mu^{4t}(1-\mu^2),
\\[3pt]
\widetilde{\mathcal B}^{\piedge}_{01}
=
\widetilde{\mathcal B}^{\piedge}_{10}
={}&
i\mu^{4t}\Big[
(\mu^{-2t}-1)^2+(\mu^{-2}-1)^2
 \\
&
+
t\{(\mu^{-2t}-1)+(\mu^{-2}-1)
\\
& 
+(\mu^{-2}+\mu^{-4})(t-1) S_2(t-1)
-2(t-1)\} 
\\
&
+A_{t-1}+C_t
\Big],
\\[3pt]
\widetilde{\mathcal B}^{\piedge}_{11}={}&
-\mu^{4t}\Big[
(\mu^{-2t}-1)^2+2A_{t-1}
\\
&
+
t\{(\mu^{-2t}-1)
\\
& +2\mu^{-2}(t-1) S_2(t-1)-2(t-1)\}
\Big],
\end{aligned}
\end{equation}
with interaction strength parametrised as \(\mu \equiv e^{-\epsilon/2}\).

\textbf{Finite-\(q\) simulations and analysis.} At finite \(q\), we compute the TopSFF by two complementary methods. First, we estimate the TopSFF by sampling over random circuit realizations. Second, by constructing a transfer matrix after the ensemble average of the RPM tensor network, we exactly compute TopSFF for small \(t\) and arbitrary \(L\), free from sample-to-sample fluctuations~\cite{supplementary}.
Motivated by the exact large-\(q\) two-mode structure and the persistence of the \(b\)-, \(c\)-, and \(d\)-loop translational symmetries at finite \(q\), we diagnose the \(\mathcal{PT}\) transition using the large-\(q\) Cayley-Hamilton (CH) relation as an ansatz for finite-\(q\), \(\overline{\mathcal{K}_{\topo}}(\Leff+2)=\widehat \Trb \, \, \overline{\mathcal{K}_{\topo}}(\Leff+1)-\widehat \Detb \, \,\overline{\mathcal{K}_{\topo}}(\Leff)\). From finite-\(q\) TopSFF data, we extract the effective trace \(\widehat t \), determinant \(\widehat d \), discriminant
\(\widehat\Delta:=\widehat \Trb^{\,2}-4\widehat \Detb\), and eigenvalues
\(\widehat\lambda_{\pm}:=(\widehat \Trb \pm\sqrt{\widehat\Delta})/2\). 
In finite \(q\), the CH relation is overdetermined over a range of \(\Leff\). In practice, the extracted quantities \(\widehat \Trb= \widehat \Trb_{\Leff}\) and \(\widehat\Detb=\widehat\Detb_{\Leff}\) depend on \(\Leff\), and are interpreted as effective trace and determinant only when they are stable upon varying \(\Leff\). In SI~\cite{supplementary}, we show empirically that, for \(q\geq 4\), the effective two-mode CH relation is accurate over an extended region of the \(\epsilon\)-\(L\) plane including across the finite interaction strength EP \(\epsilon_\EP\) and in both chaotic and non-chaotic regions. Note conversely that two-mode CH validity does not imply chaos.
For the accessible even (odd) \(t\) cases, the TopSFF requires more than two modes when \(q<4\) (\(q<6\), respectively), a regime we leave for future study.
Lastly, remarkably, $q=2$ simulations of an independent model with non-diagonal coupling gates, the BWM, already exhibit the oscillatory and monotonic TopSFF behaviours predicted by the large-$q$ RPM analysis~\cite{supplementary}.

Can the \(\mathcal{PT}\) transition in the TopSFF be classically simulated? To test this, we exactly compute the TopSFF with a spatially extended swap defect in an RPM without time translation symmetry, where each random local unitary appears only twice in the time evolution. Although the TopSFF of this model can be reproduced by classically simulable Clifford/stabilizer circuits~\cite{aaronson2004improved}, it exhibits only an asymptotic \(\mathcal{PT}\) transition~\cite{supplementary}, lacking the finite-interaction-strength competition between Gaussian--non-Gaussian detuning and conversion found in the time translation symmetric case in Eq.~\eqref{eq:epsilon_EP}. 

\textbf{The second Thouless double scaling limit.} 
In addition to the exact universal TopSFF scaling obtained in the Thouless double-scaling limit~\cite{chan2021trans}, where \(t,L\to\infty\) with \(\ell:=L/L_{\Th}=Le^{-\epsilon t}\) held fixed, a complementary limit with \(\tau:=t/t_{\Th}=\epsilon t/\ln L\) held fixed yields the exact scaling function
\be\label{eq:trs_2thoulessform}
\mathscr F^{'\mathrm{trs}}_{\topo}\equiv \lim_{\substack{L,t \to \infty\\ \tau  \equiv  t/\tth}}  \Delta F_{\topo} = 
\begin{cases}
0 \quad  & \tau  \leq 1 \,,
\\
\epsilon \left(1- \tfrac{1}{\tau } \right) & \tau  > 1\,.
\end{cases}
\ee
 Thus, in this Thouless scaling limit, the universal scaling form has  a sharp crossover at $\tau =1$ or equivalently at $t = \tth$.
Finite-size convergence is parametrically slower in this second Thouless scaling limit than in the first~\cite{supplementary}. At fixed \(\ell=Le^{-\epsilon t}\), the first scaling form receives power-law corrections, \(t\Delta F_{\topo}=\mathscr F^{\mathrm{trs}}_{\topo}(\ell)+O(L^{-2})\). In contrast, at fixed \(\tau=\epsilon t/\ln L\), one has \(Le^{-\epsilon t}=L^{1-\tau}\), giving 
\(\Delta F_{\topo}=\mathscr F^{'\mathrm{trs}}_{\topo}(\tau)+O(1/\ln L)\) with a crossover width \(\delta\tau\sim1/\ln L\).

The two branches in Eq.~\eqref{eq:trs_2thoulessform} distinguish the intrinsically many-body regime from the RMT regime. For \(\tau<1\) with \(t<t_{\Th}\) and \(L > L_{\Th}(t)\), spatial domain walls proliferate and screen the defect insertion, giving \(\Delta F_{\topo}\to0\). This is the pre-Thouless regime responsible for the many-body deviations from RMT that produce the SFF bump. For \(\tau>1\) with \(t>t_{\Th}\) and \(L < L_{\Th}(t)\), spatial domain walls are dilute and the defect carries a finite interface free energy cost. This corresponds to the onset of RMT behaviour and the ramp, with \(\Delta F_{\topo}\to\epsilon\) at late times.

\section*{Acknowledgments.}
We are grateful to Subhajyoti Bid, Weitao Chen, Ilyoun Na, Hisanori Oshima, Rémy Dubertrand and, particularly, David Huse for helpful discussions. AC thanks John Chalker, Andrea De Luca, and Saumya Shivam for previous collaboration on related works. CYL acknowledges financial support by the Engineering and Physical Sciences Research Council [EP/Y023005/1] (PI: Remy Dubertrand). The numerical simulation was carried out using both Lancaster University's High End Computing facility and the Higgs HPC system at Northumbria University.

\bibliography{biblio.bib}

\onecolumngrid
\newpage

\clearpage

\appendix

\setcounter{equation}{0}
\setcounter{figure}{0}
\renewcommand{\thetable}{S\arabic{table}}
\renewcommand{\theequation}{S\thesection.\arabic{equation}}
\renewcommand{\thefigure}{S\arabic{figure}}
\setcounter{secnumdepth}{3}

\begin{center}
{\Large Supplementary Information\par}

\vspace{0.35cm}

{\Large \titleinfo \par}

\vspace{0.45cm}

{\large
Daniel Harkin\(^{1}\), Chun Y. Leung\(^{2,1}\), and Amos Chan\(^{3,1}\)
\par}

\vspace{0.25cm}

{\small
\(^{1}\)Department of Physics, Lancaster University, Lancaster LA1 4YB, United Kingdom\\
\(^{2}\)Department of Mathematics, Physics and Electrical Engineering,
Northumbria University, Newcastle upon Tyne NE1 8ST, United Kingdom\\
\(^{3}\)Department of Physics, University of Warwick, Coventry CV4 7AL, United Kingdom
\par}

\end{center}

\vspace{0.5cm}

\section*{Overview}

This Supplementary Information is organized as follows. 
Appendix~\ref{app:model} defines the random circuit models and their parity inversion, time reversal, and time translation symmetric variants. Appendices~\ref{app:tsff_rmt} to \ref{app:sff_twisted} collect exact results for TopSFF in RMT, SFF without topological defects, and SFFs with twisted boundary conditions. 
Appendix~\ref{app:tsff_toolbox} derives the exact solution of generalized Boltzmann factor for the TopSFF with a spatially extended swap defect at large \(q\), including momentum-space formulation, Lemma~\ref{app_lemma:leadingtsff}  that leading local diagrams  of temporal domain wall belong to the Mickey Mouse topological equivalence class, Proposition~\ref{prop:PT_classification_full} on \(\mathcal{PT}\) symmetry, and boundary states. 
Appendix~\ref{app:tsff_toolbox} derives the exact large-\(q\) generalized Boltzmann factor for the TopSFF with a spatially extended swap defect. It includes Lemma~\ref{app_lemma:leadingtsff} showing that the leading local temporal domain wall diagrams belong to the Mickey Mouse topological equivalence class, and Proposition~\ref{prop:PT_classification_full} classifying the \(\mathcal{PT}\) symmetric momentum sectors.
Appendix~\ref{app:tsff_exact_q_spatial} gives the exact large-\(q\) evaluation of the spatially extended TopSFF. 
Appendix~\ref{app:PT_mapping} interprets the TopSFF in terms of emergent \(\mathcal{PT}\) symmetric dimers. 
Appendix~\ref{app:tsff_ep} analyses the exceptional points and their signatures across the \(\mathcal{PT}\) transition appearing in TopSFF at large \(q\). 
Appendix~\ref{app:finite_q_exact} presents the exact finite-\(q\) evaluation of the TopSFF with a spatially extended swap defect, including an analytic calculation at \(t=1\). 
Appendix~\ref{app:tsff_tmat} studies TopSFF with spatially extended defects in systems without discrete time-translation symmetry. 
Appendix~\ref{app:temp_ext} derives the corresponding results for temporally extended defects.
Appendix~\ref{app:num_sim} presents finite-\(q\) numerical simulations that support the \(\PT\) transition in the TopSFF with spatially extended defects, as well as the universal scaling forms for temporally extended defects.


\setcounter{tocdepth}{3}
\tableofcontents

\resumetoc

\vspace{1cm}

\begin{figure}[ht!]
    \centering
    \includegraphics[width=0.9 \textwidth]{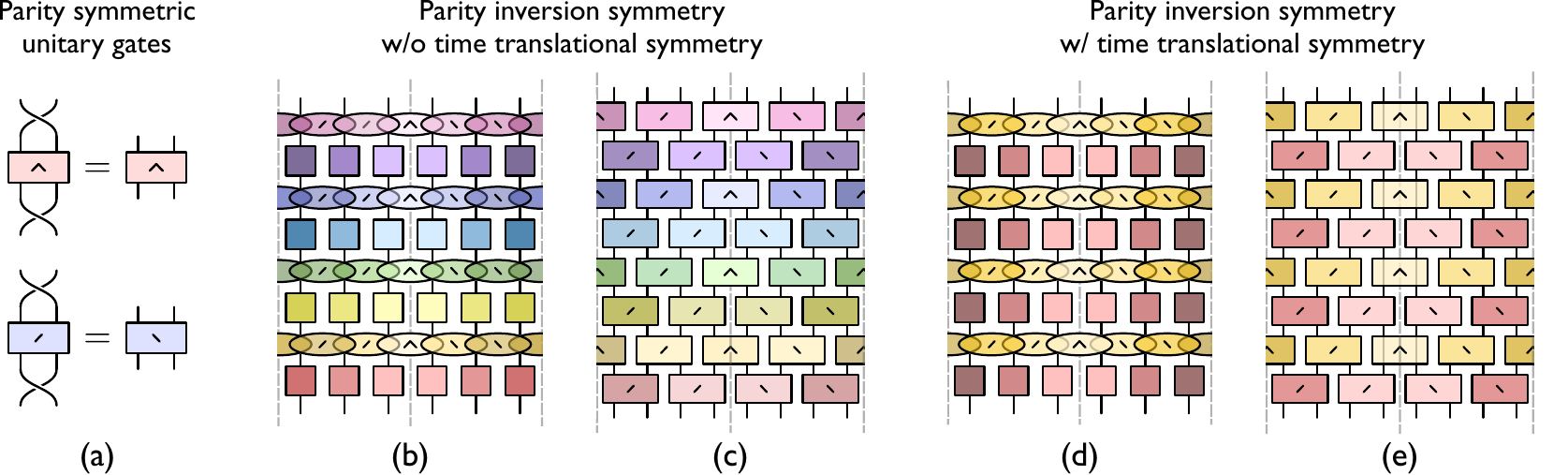}
    \caption{\textbf{Parity inversion symmetric models.} (a) Transformation property of gates that straddle (top) or do not straddle (bottom) across the inversion axis.
(b,c) Many-body quantum circuits with parity inversion symmetry and without time translational symmetry for the random phase model (b) and the random Hamiltonian model (c).
(d,e) Many-body quantum circuits with parity inversion symmetry and discrete time translational symmetry for the random phase model (d) and the random Hamiltonian model (e).
The grey dashed lines denote the inversion axis. Gates of the same color are identical. Periodic boundary conditions are imposed.   }
    \label{fig:parity_models}
\end{figure}

\section{Models}\label{app:model}
In this appendix, we define the one-dimensional random circuit ensembles used throughout this work. We first specify the temporal structure of the dynamics -- temporally random, globally time-reversal symmetric, and Floquet -- and then introduce the parity inversion, time reversal, and time translation symmetric Random Phase Model (RPM), Random Hamiltonian Model (RHM), and Brick Wall Model (BWM).
\subsection{Dynamics in time}\label{app:time_dyn}
We are interested in three classes of dynamics in time: (i) temporal random models, (ii) global time reversal symmetric models, and (iii) discrete time translational symmetric  (Floquet) models. Specifically, they are defined by the following unitary time evolution operators at time $t$,
\begin{IEEEeqnarray}{rrll} \label{app_eq:time_dyn}
\text{temporal random:}  \qquad  &  \uu_{ \mathrm{r}}(t;u) =& \,\prod_{t'=1}^t u(t') \, , 
\qquad  &  
\\
\text{Global TRS:} \qquad   & \uu_{ \mathrm{trs}}(t; w ) =& \, \,  w^{T} (t)  \, w(t) \, ,   & 
\\
\text{Discrete time translational invariant (Floquet):} \qquad   & \uu_{\mathrm{f}} (t; u) =& \, \,  u^t \,,   & 
 \end{IEEEeqnarray}
where $u$ and $w$ denote certain choices or distributions of unitary operators.

\subsection{Parity inversion symmetric models}\label{app:model_par_inv}
We consider one-dimensional quantum circuits $U$ that act on the Hilbert space of $L$ qudits with Hilbert space dimension $q^L$. These models are parity inversion symmetric in the sense that the quantum circuit commutes with the global swap operator $S$, i.e., $[U,S]=0$, where ${S}$ is defined by the action ${S}\ket{a_1, a_2 ,\dots , a_{L-1}, a_L} = \ket{a_L, a_{L-1} ,\dots , a_2, a_1}$. The universal signatures of parity inversion symmetry in quantum many-body chaotic systems are discussed in detail in a forthcoming manuscript~\cite{upcoming}.

 \subsubsection{Parity inversion symmetric Random Phase Model}\label{app:parity_ptRPM}
 The random phase model (RPM)~\cite{chan2018spectral} is a one-dimensional quantum circuit that acts on the Hilbert space of $L$ qudits with Hilbert space dimension $q^L$. The RPM without symmetries is known to be chaotic for $q \geq 3$, and not chaotic for $q=2$ \cite{chan2018spectral}. 
 In the parity inversion symmetric RPM, the parity inversion axis lies on the bond $(L/2, L/2+1)$ for the open boundary condition (obc), and on the bonds $(L/2, L/2+1)$  and $(L,1)$ for the periodic boundary condition (pbc). 
Specifically, the parity inversion symmetric RPM (p-RPM) $U_{\text{p-RPM}}$ [Fig.~\ref{fig:parity_models}] is defined by
\begin{IEEEeqnarray}{rrl} \label{app_eq:par_rpm_temprand}
\text{Temporal random:}  \qquad  & U_{\text{r-p-RPM}} = & \,\,  \uu_{ \mathrm{r}}(t;\mathcal{U}_{\text{p-RPM}})   \, ,
\\
\label{app_eq:par_rpm_floq}
\text{Discrete time translational invariant (Floquet):} \qquad   & U_{\text{f-p-RPM}} =& \,\, \uu_{\mathrm{f}} (t; \mathcal{U}_{\text{p-RPM}})\,,   
 \end{IEEEeqnarray}
where a single bi-layer is given by
 \begin{equation}
\label{app_eq:RPM_geom}
\begin{aligned}
\mathcal{U}_{\text{p-RPM}}(t,L) =\,  \mathcal{U}_2(t,L) \,  \mathcal{U}_1(t,L) 
\;, \qquad \qquad 
\mathcal{U}_1(t,L) = \bigotimes_{r =1}^L u(t,r)  \;,
\qquad 
\mathcal{U}_2(t,L) =  \bigotimes_{r =1} w(t,r)  \;.
\end{aligned}
\end{equation}
The gates within $\mathcal{U}_1(t,L)$ consist of  
\begin{equation}
\label{app_eq:gp_RPM_def}
\begin{aligned}
& u(t, r) \in \mathrm{CUE} \quad &\text{if } r\in [1,2,\dots, L/2] \;,
\\
& 
u(t, r) = S \,  u (t, L - r +1 )  \, S  \quad &\text{if } r\in [ L/2+1,L/2+2, \dots L] \;.
\end{aligned}
\end{equation}
where $u(t,r)$ acts on site $r$, and is randomly and independently drawn from the Circular Unitary Ensemble (CUE). The gates within $\mathcal{U}_2(t,L)$ consist of  
\begin{equation}
\label{app_eq:pRPM_def3}
\begin{aligned}
& [w(t, r)]_{ab}^{cd} =\delta_{ac} \delta_{bd}\exp[i \phi_{ab}(t,r)], \quad 
\quad  \phi_{ab}(t,r) \in \mathcal{N}(0, \epsilon)
\quad &\text{if } r\in [1,2,\dots, L/2-1] \;,
\\
&
[w(t, r)]_{ab}^{cd} =\delta_{ac} \delta_{bd}\exp[i \phi_{ab}(t,r)], \quad 
\quad  \phi_{ab}(t,r) \in \mathcal{N}(0, \epsilon) \text{ and }\phi_{ab} = \phi_{ba}\, ,
\quad  &\text{if } r= L/2, L  \;, 
\\
& 
w(t, r) = S \,  w(t, L - r )  \, S  \quad &\text{if } r\in [ L/2+1,
\dots L-1] \;,
\end{aligned}
\end{equation}
where $\phi_{\mu\nu}(t,r)$ are independent real Gaussian variables with mean zero and variance $\epsilon$. 
The diagonal random phase gates $w(t,r)$ with $r= L/2$ and $L$ are chosen such that $S w(t,r)S^{-1} = w(t,r)$. 
Again,  $S$ is the global swap operator defined by the action $S\ket{a_1, a_2 ,\dots , a_{L-1}, a_L} = \ket{a_L, a_{L-1} ,\dots , a_2, a_1}$, and $\ket{a_1, a_2 ,\dots , a_{L-1}, a_L}$ is the computational basis state with $a_i = 0, 1, \dots, q-1$. Note that for obc (pbc), the 2-site gate straddling across sites $L$ and $1$, namely $w(t,r=L)$, is removed (kept).

\subsubsection{Parity inversion symmetric Random Hamiltonian Model}
The random Hamiltonian model (RHM) is a one-dimensional quantum circuit that acts on the Hilbert space of $L$ qudits with Hilbert space dimension $q^L$. The RHM is utilized to numerically simulate generic quantum many-body chaotic dynamics at $q=2$. The RHM is defined similarly to the RPM, except that the random phase gate is replaced with a coupling gate with more independent parameters, leading to quantum chaotic dynamics. 
Specifically, the  parity inversion symmetric RHM (p-RHM)  is defined by
\begin{IEEEeqnarray}{rrl} 
\label{app_eq:rpm_parity_temp_rand}
\text{Temporal random:}  \qquad  & U_{\text{r-p-RHM}} = & \,\,  \uu_{ \mathrm{r}}(t;\mathcal{U}_{\text{p-RHM}})   \, ,
\\
\label{app_eq:rpm_parity_floq}
\text{Discrete time translational invariant (Floquet):} \qquad   & U_{\text{f-p-RHM}} =& \,\, \uu_{\mathrm{f}} (t; \mathcal{U}_{\text{p-RHM}})\,,   
 \end{IEEEeqnarray}
 where $\mathcal{U}_{\text{p-RHM}}$ is a bilayer unitary operator with the brickwall geometry given by
\be \label{app_eq:bilayer}
\uu_{\text{p-RHM}}(t,L)= \uu_2(t,L) \, \uu_1(t,L) \, , \qquad 
\qquad 
	\uu_m(t,L)=\bigotimes_{\substack{r \in 2 \mathbb{Z} + m\, \text{mod}2}}w(t,r) \;.
\ee
$w(t,r)$ is a two-site unitary quantum gate acting on sites $r$ and $r+1$. For periodic boundary condition (pbc), the product is over $2\mathbb{Z}+1 \equiv \{1,3,5,\dots, L-1\}$ or $2\mathbb{Z} \equiv \{2,4,6,\dots, L\}$. 
For open boundary condition (obc), the 2-site gate straddling across sites $L$ and $1$, namely $u(t,r=L)$, is removed, and we have a product over  $2\mathbb{Z} \equiv \{2,4,6,\dots, L-2\}$. For simplicity, we only consider quantum circuits of even system sizes, i.e. $L\in 2\mathbb{Z}$. $w(t,r)$ is defined by 
\begin{equation}
\label{app_eq:pRHM_def3}
\begin{aligned}
& w(t, r) =\exp{ \left[ i\sum_{\mu, \nu=0, 1, 2, 3} a_{\mu \nu} \sigma_\mu\otimes\sigma_\nu \right] }  \big[ u(t,r)\otimes  u(t,r+1) \big]
, \quad 
\quad &\text{if } r\in [1,2,\dots, L/2-1] \,, 
\\
& 
\hspace{2.5cm}  \text{with }
a_{\mu \nu} \in \mathcal{N}(0, \eta)\, , \quad 
u(t, r) \in \mathrm{CUE}  \,,
\\
&
w(t, r) =\exp{ \left[ i\sum_{\mu, \nu=0, 1, 2, 3} b_{\mu \nu} \sigma_\mu\otimes\sigma_\nu \right] }  \big[ u(t,r)\otimes  u(t,r) \big]
\, , \quad 
 &\text{if } r= L/2, L  \;, 
 \\
&  \hspace{2.5cm} \text{with }
b_{\mu \nu} \in \mathcal{N}(0, \eta)\, \,  \text{ and }\, \,  b_{\mu \nu}= b_{ \nu \mu}\, ,
\quad 
u(t, r) \in \mathrm{CUE}  \,,
\\
& 
w(t, r) = S \,  w(t, L - r )  \, S  \quad &\text{if } r\in [ L/2+1,
\dots L-1] \;,
\end{aligned}
\end{equation}
where $a_{\mu \nu}$ and $b_{\mu \nu}$ are independent real Gaussian variables with mean zero and variance $\eta$.
For computational efficiency, we adopt RHM models in which, for each realization of $U_{\text{r-p-RHM}}$ and $U_{\text{f-p-RHM}}$, only a single set of $a_{\mu \nu}$ and  a single set of $b_{\mu \nu}$ are sampled and used across both space and time.
Again,  note that for obc (pbc), the 2-site gate straddling across sites $L$ and $1$, namely $w(t,r=L)$, is removed (kept). Note that for sites without local parity inversion symmetry, i.e. for $r \notin L/2$ and $L$, the generator of $w(t,r)$ can equivalently be written as the Gaussian Unitary Ensemble (GUE).

\subsubsection{Parity inversion symmetric Brick Wall Model}\label{app:model_bwm}
The parity inversion symmetric Brick-Wall Model (BWM) has the same brick wall geometry, boundary conditions, and temporal ensembles as the parity inversion symmetric RHM, but its two-site gates are sampled directly from the Haar measure. It is defined by
\begin{IEEEeqnarray}{rrl}
\text{Temporal random:}\qquad
& U_{\mathrm{r\text{-}p\text{-}BWM}}
=&\,\uu_{\mathrm r}(t;\mathcal U_{\mathrm{p\text{-}BWM}}),
\\
\text{Discrete time translational invariant (Floquet):}\qquad
& U_{\mathrm{f\text{-}p\text{-}BWM}}
=&\,\uu_{\mathrm f}(t;\mathcal U_{\mathrm{p\text{-}BWM}}),
\end{IEEEeqnarray}
where
\begin{equation}
\mathcal U_{\mathrm{p\text{-}BWM}}(t,L)
=
\mathcal U_2(t,L)\mathcal U_1(t,L),
\qquad
\mathcal U_m(t,L)
=
\bigotimes_{\substack{r\in 2\mathbb Z+m\,\mathrm{mod}\,2}}
w(t,r).
\end{equation}
For \(r=1,\ldots,L/2-1\), the gates \(w(t,r)\in\mathrm{CUE}(q^2)\) are independent Haar-random two-site unitaries. Parity inversion fixes the remaining gates through
\begin{equation}
w(t,r)=S\,w(t,L-r)\,S,
\qquad r=L/2+1,\ldots,L-1,
\end{equation}
while gates centred on an inversion axis, \(r=L/2,L\), are sampled from the locally parity-symmetric Haar ensemble satisfying
\begin{equation}
[w(t,r),S]=0.
\end{equation}
As in the p-RHM, the gate at \(r=L\) is retained for periodic boundary conditions and removed for open boundary conditions.
For an analytical treatment of this model, see Ref.~\cite{upcomingharkin}.

\subsection{Time reversal symmetric models}\label{model:trs}
We consider one-dimensional time reversal symmetric quantum circuits $U$ that act on the Hilbert space of $L$ qubits with Hilbert space dimension $q^L$ with $q=2$. We adopt the framework and models of Ref.~\cite{wu2025} to analyze quantum many-body chaos in  time reversal symmetric systems. 
A quantum system described by a Hamiltonian $H(t)$ at time $t$ is {time reversal symmetric} (TRS) if $H(t) = \TT H(-t) \TT^{-1}$, where $\TT$ is an antiunitary operator. A quantum system described by a unitary operator $U(t)$ is TRS if $\TT U(t) \TT^{-1} = U^\dagger(t)$.
It can be shown that the antiunitary operator satisfies $\TT^2=\pm 1$, where the plus and minus signs correspond to the cases of systems with TRS with integer and half-integer spins respectively. Here we focus on dynamical and spectral properties of TRS quantum many-body systems with $\TT^2 = 1$. 
For any unitary $u$ satisfying equation $\mathcal{T} u \mathcal{T}^{-1}= u^\dagger$, in the basis where $\mathcal{T}$ is complex conjugation operator,  $u= u^T$ is symmetric. Therefore, $u$ always has a spectral decomposition $u = \mathcal{S} \Lambda \mathcal{S}^{T}$ where the superscript $T$ denotes transposition, $\Lambda$ is the diagonal matrix of eigenvalues, and the similarity matrix $\mathcal{S}$ satisfies $\mathcal{S}^{T} \mathcal{S} = \mathbb{1}$. In other words,  we can always decompose $u = v v^\mathrm{T} $ and we call $v = \mathcal{S} \sqrt{\Lambda} $ a half gate of $u$. As an example, if $u$ is drawn from the Circular Orthogonal Ensemble (COE), then it has a natural decomposition $u = v v^{T} $ with $v$ drawn from the CUE, and so $v$ from CUE is a half gate of $u$ from COE. In the following, we will use half gates at the first and last time steps of quantum circuits.
Lastly, as discussed in~\cite{wu2025}, if the many-body quantum circuit $U$ is TRS, then $V$, the spacetime dual of $U$, commutes with the global swap operator $S$ in the spatial direction, i.e. $[V,S]=0$, where ${S}$ acts on the dual Hilbert space.

\subsubsection{TRS Random Phase Model}\label{app_sec:trs_rpm}
\begin{figure*}[htbp]
    \centering
    \includegraphics[width=0.7\textwidth]{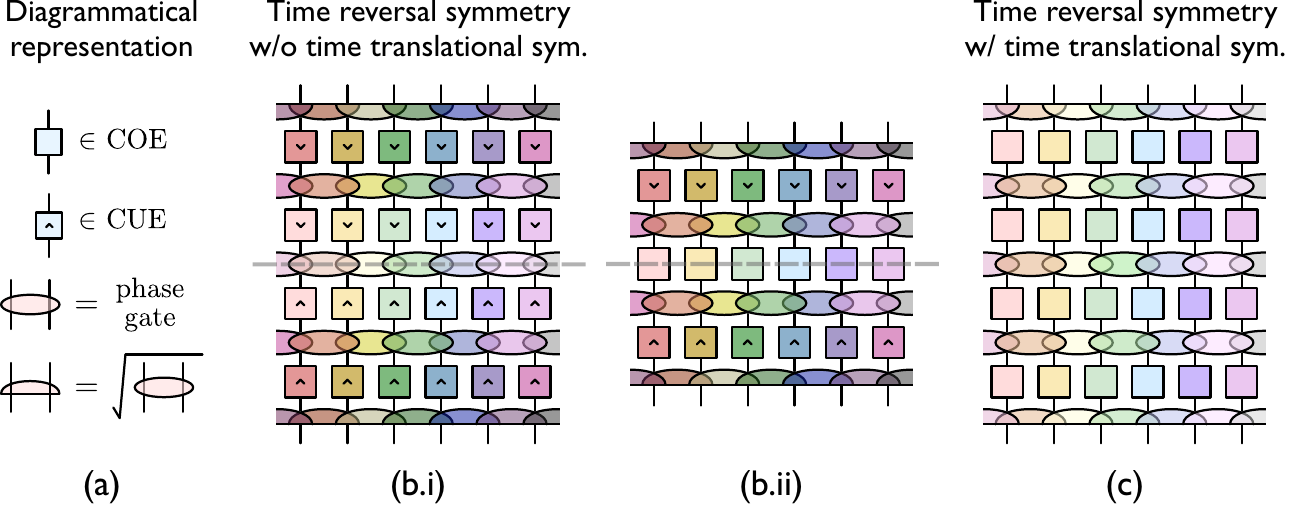}
    \caption{\textbf{TRS Random Phase Models.} Gates with the same color are identical. 
    (a) Dictionary of the diagrammatic representation. (b) Illustrations of the t-RPM without time translational symmetry at (b.i) $t=4$ and (b.ii) $t=3$. 
    Note that gates reflected across the time reversal axes (dashed lines) are identified under transposition. (c) Illustrations of t-RPM with discrete time translational symmetry at $t=4$. 
    }
    \label{sm_fig:model_rpm}
\end{figure*}
The random phase model (RPM)~\cite{chan2018spectral} is a one-dimensional quantum circuit that acts on the Hilbert space of $L$ qudits with Hilbert space dimension $q^L$. The model is composed of  $q$-by-$q$  one-site gates
   \begin{align}
 	u^{(r)}(s) \quad  \in \text{CUE} \text{ or } \text{COE}\, ,
\end{align} 
acting on site $r$, where CUE and COE stand for Circular Unitary Ensemble and the Circular Orthogonal Ensemble respectively; and $q^2$-by-$q^2$  two-site gates, 
\begin{align}
    [\Theta^{(r,r+1)}(s, \epsilon)]_{a_r a_{r+1}, b_r b_{r+1}}=\delta_{a_r, b_r} \delta_{a_{r+1}, b_{r+1}}\exp[ i \varphi^{(r)}_{a_r,a_{r+1}}(s,\epsilon )] \,,
\end{align}
coupling neighbouring sites via a diagonal random phase ($a_r = 1,2\ldots, q$). Each coefficient $\varphi_{a_r,a_{r+1}}^{(r)}(s)$ is an independent Gaussian random real variable with mean zero and variance $\epsilon$, which controls the coupling strength between neighbouring spins.

Then, we can define monolayers of one-site gates, of two-site gates and of two-site half gates  as 
 \begin{IEEEeqnarray}{rrl} \label{app_eq:rqc_cue_og}
\text{1-site gate monolayer:} \qquad   & w_{\text{h}}(s;u) \, &  =  \bigotimes_{r=1}^L u^{(r)}(s) \, , \, 
\\
\text{2-site gate monolayer:}  \qquad  &   w_{\text{ph}}(s;\epsilon) \, &= \prod_{r=1}^L \Theta^{(r, r+1)}(s,\epsilon)  \,,  
\\
\text{RPM bilayer:}  \qquad  &   w_{\text{bl}}(s, \epsilon;u) \, &= 
 w_{\text{ph}}(s;\epsilon ) w_{\text{h}}(s;u)
 \\
\text{Shifted RPM bilayer:}  \qquad  &   w_{\text{sbl}}(s, \epsilon;u) \, &= 
 w_{\text{ph}}(s;\epsilon/4 ) w_{\text{h}}(s;u)  w_{\text{ph}}(s;\epsilon/4 ) 
 \end{IEEEeqnarray}
 where the variable $s$ allows the freedom for different bilayer labelled by $s$ to be independent of each other.

 Note that in the shifted TRS, the first and the third layers are drawn from the same ensemble. The variances in these layers are chosen such that $w_{\text{ph}}(s;\epsilon/4 )^2 = w_{\text{ph}}(s;\epsilon)$.

Now we define RPM with various types of TRS. 
For the RPM with global TRS without discrete time translational symmetry, we define
\begin{IEEEeqnarray}{rl}
w_{\text{g-trs}}(t,L) \; &=
\begin{cases} 
w_{\text{h}}((t-1)/2+1;u_{\mathrm{CUE}})
\left[ \prod_{s=1}^{(t-1)/2} 
w_{\mathrm{bl}}(s, \epsilon; u_{\text{CUE}}) \right] 
w_{\text{ph}}((t-1)/2+1;\epsilon/4 ) \,, \quad \qquad \qquad & t \text{ odd,} 
\\
 w_{\text{ph}}(t/2;\epsilon/4 )
  w_{\text{h}}(t/2;u_{\mathrm{CUE}})
\left[ \prod_{s=1}^{t/2-1} 
w_{\mathrm{bl}}(s, \epsilon; u_{\text{CUE}})  \right]  
w_{\text{ph}}(0;\epsilon/4 )   \,,  \qquad \qquad  & t \text{ even,}
\end{cases}
 \end{IEEEeqnarray}
 which governs the first half of the global TRS circuits. The entire global TRS circuits are then generated by extending the above circuits with their transpositions, as defined below.
The Floquet TRS RPM can be defined by repeated action of a shifted RPM bilayer. Together, we define 
\begin{IEEEeqnarray}{rrl} \label{app_eq:rqc_coe}
\text{Global TRS:}  \qquad  &  U_{\text{g-trs-RPM}}(t,L) \; & = \uu_{\text{trs}} [w_{\text{g-trs}} (t , L)] \, ,
\label{app_eq:global_trs_rpm}
\\
\text{Floquet TRS:}  \qquad  &  U_{\text{f-trs-RPM}}(t,L) \; & = \uu_{\text{f}} [t; w_{\text{sbl}}( 1 ,\epsilon ; u_{\mathrm{COE}})]\, . \label{app_eq:floq_trs_rpm} 
 \end{IEEEeqnarray}
See Fig.~\ref{sm_fig:model_rpm} for illustrations of the TRS RPMs.

\subsubsection{TRS Random Hamiltonian Model}
\begin{figure*}[htbp]
    \centering
    \includegraphics[width=0.7\textwidth]{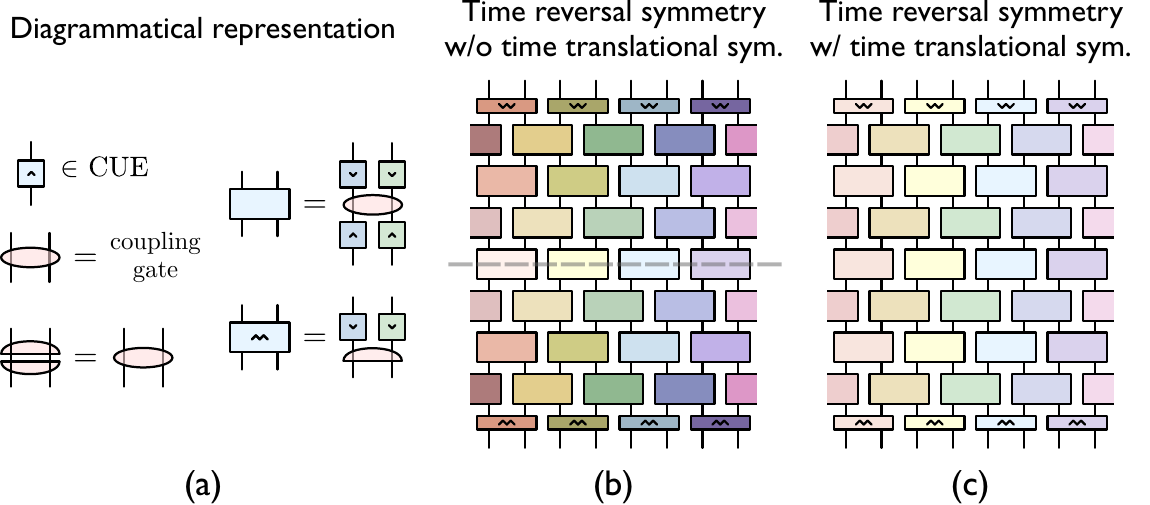}
    \caption{\textbf{TRS Random Hamiltonian Models.} Gates with the same color are identical. 
    (a) Dictionary of the diagrammatic representation. (b) Illustrations of the t-RHM (b)  without and (c) with discrete time translational symmetry at $t=4$. 
    Note that gates reflected across the time reversal axes (dashed lines) are identified under transposition.
    }
    \label{sm_fig:model_rhm}
\end{figure*}
The random Hamiltonian model (RHM) is a one-dimensional quantum circuit that acts on the Hilbert space of $L$ qudits with Hilbert space dimension $q^L$. The RHM is utilized to numerically simulate generic quantum many-body chaotic dynamics at $q=2$. The TRS RHM is defined similarly to the t-RPM, except that the random phase gate is replaced with a coupling gate with more independent parameters, leading quantum chaotic dynamics at $q=2$. 
Due to the non-commuting interaction among the coupling gates, we construct RHM with a brick-wall geometry composed of two-site gates acting on the $r$-th and $(r+1)$-th qudits, 
\begin{equation}
\label{app_eq:tRHM_def3}
\begin{aligned}
& u_{\text{RHM}}(t, r) =
\bigg( \left[u^{(r)}_1(t) \right]^T \otimes \left[u^{(r)}_2(t)\right]^T \bigg)
\exp{ \left( i\sum_{\mu, \nu=0, 1, 2, 3} a_{\mu \nu} \sigma_\mu\otimes\sigma_\nu \right) }  
\bigg( u^{(r)}_1(t) \otimes u^{(r)}_2(t) \bigg)
\\
& \quad u_1^{(r)}(t),  u_2^{(r)}(t) \in \text{CUE}\;,  \quad \quad 
 a_{\mu \nu} \in \mathcal{N}(0, \eta)
\;, \quad \quad 
 a_{\rho 2}= a_{2 \rho}=0 \text{ for }\rho =0,1,3
\;.
\end{aligned}
\end{equation}
where $a_{\mu \nu}$ are independent real Gaussian variables with mean zero and variance $\eta$.  $u_1$ and $u_2$ are $q$-by-$q$ unitaries drawn from the CUE. 
The constraints on $a_{\rho 2}$ and $a_{2\rho}$ arise as follows: Since we are focusing on $\mathcal{T}^2=1$, we can choose a basis such that the anti-unitary symmetry operator can be written as $\mathcal{T}=\mathcal{K}$, the complex conjugation operator. In this basis, the generators of the exponential in Eq.~\eqref{app_eq:tRHM_def3}  are real. Since \(\sigma_{2}\) is purely imaginary, \(\mathcal{T}\) invariance requires the coefficient for the basis elements with an odd number of \(\sigma_{2}\) to vanish, leading to $a_{\rho 2}=a_{2\rho}=0$ for  $\rho=0,1,3$.
As before, for computational efficiency, we adopt RHM models in which, for each realization in the ensemble, only a single set of $a_{\mu \nu}$ is sampled and used across both space and time.

As discussed in the beginning of the section, for any unitary $u$ satisfying equation $\mathcal{T} u \mathcal{T}^{-1}= u^\dagger$, there always exists a basis such that $u$ is symmetric, and we can decompose $u = v v^\mathrm{T}$ with a half gate $v$. In the definitions below, for each $u_{\text{RHM}}(t, r)$, we will use its half gate $v_{\text{RHM}}(t, r)$  satisfying $u_{\text{RHM}}(t, r)= v_{\text{RHM}} v^T_{\text{RHM}}$.
 Specifically, we respectively define a monolayer and a bilayer of two-site gates in the brick-wall geometry with even system sizes as
 \begin{IEEEeqnarray}{rrl} \label{app_eq:rqc_cue}
\text{Brick-wall monolayer:} \qquad   & w_{m}(s;u) \, &  =  \bigotimes_{\substack{j \in 2 \mathbb{Z} + m\, \text{mod}2}}u^{(r,r+1)}(s) \, , \, 
\\ \label{app_eq:rqc_cue2}
\text{Brick-wall bilayer:}  \qquad  &   w_{\text{bl}}(s;u_\mathrm{RHM}) \, &= w_2(s;u_\mathrm{RHM}) \, w_1(s;u_\mathrm{RHM}) \,,  
\\ \label{app_eq:rqc_cue3}
\text{Shifted TRS brick-wall bilayer:}  \qquad  &   w_{\text{sbl}}(s) \, &= w_1(s, v_{\mathrm{RHM}})^T \, w_2(s;u_\mathrm{RHM}) \, w_1(s;v_\mathrm{RHM}) \,,  
 \end{IEEEeqnarray}
 where the variable $s$ allows the freedom for different bilayer labelled by $s$ to be independent of each other. Note that in the shifted TRS, by construction, the first and the third layers are drawn from the same ensemble.

 Now we define the TRS RHM with and without discrete time translational symmetry. To define the latter, we define 
\begin{IEEEeqnarray}{rl}
w_{\text{trs}}(t,L) \; &= 
w_{t}(t; v_{\text{RHM}})
\left[ \prod_{s=2}^{t-1} 
w_{s}(s; u_{\text{RHM}}) 
\right]
w_{1}(1; v_{\text{RHM}})  \,,
 \end{IEEEeqnarray}
 which governs the first half of the global TRS circuits. The entire global TRS circuits are then generated by extending the above circuits with their transpositions, as defined below. 
 Note that the first and last layers of $w_{\text{g-TRS}}$ are drawn from half gates, such that, upon the extension with transposition, we can construct a full gate across the time reversal axis (dashed lines in Fig.~\ref{sm_fig:model_rhm}), using the construction, $u_{\mathrm{RHM}}= v_{\mathrm{RHM}} v^T_{\mathrm{RHM}}$.
To define the Floquet TRS RHM, we can simply repeatedly act a shifted brick-wall bilayer. Together, we define for both odd and even $t$,
\begin{IEEEeqnarray}{rrl} 
\text{Global TRS RHM:}  \qquad  &  U_{\text{g-trs-RHM}}(t,L) \; & = \uu_{\text{trs}} [w_{\text{TRS}} (t , L)] \, , \label{app_eq:rhm_gtrs}
\\
\text{Floquet TRS RHM:}  \qquad  &  U_{\text{f-trs-RHM}}^{\text{F}}(t,L) \; & = \uu_\text{f} [t; w_{\text{sbl}} ( 1 )] \, . 
 \end{IEEEeqnarray}
Again,  note that for obc (pbc), the 2-site gate straddling across sites $L$ and $1$, namely $w(t,r=L)$, is removed (kept). See Fig.~\ref{sm_fig:model_rhm} for illustrations of the TRS RHMs.

\subsection{Time translational symmetric only models}\label{app:model_time_trans}
We now define quantum circuits with discrete time translational symmetry, without imposing parity inversion symmetry or time reversal symmetry. In these models, a unitary circuit layer, which may itself consist of multiple sublayers, is sampled once and then repeated periodically in time.

\subsubsection{Random Phase Model}
The random phase model (RPM)~\cite{chan2018spectral} is a one-dimensional quantum circuit that acts on the Hilbert space of $L$ qudits with Hilbert space dimension $q^L$. The RPM without additional symmetries is known to be quantum chaotic for $q\ge 3$, and not quantum chaotic for $q=2$~\cite{chan2018spectral}. 
The Floquet RPM is defined by repeated action of a single bilayer unitary operator,
\begin{IEEEeqnarray}{rrl}
\label{app_eq:floqonly_rpm}
\text{Discrete time translational invariant (Floquet):} \qquad
& U_{\text{f-RPM}}(t,L) =& \,\, \uu_{\mathrm{f}}(t;\mathcal{U}_{\text{RPM}})\,,
\end{IEEEeqnarray}
where a single bilayer is given by
\begin{equation}
\label{app_eq:floqonly_rpm_geom}
\begin{aligned}
\mathcal{U}_{\text{RPM}}(L)
=
\mathcal{U}_2(L)\,\mathcal{U}_1(L)\,,
\qquad
\mathcal{U}_1(L)=\bigotimes_{r=1}^{L} u(r)\,,
\qquad
\mathcal{U}_2(L)=\prod_{r=1}^{L} w(r)\, .
\end{aligned}
\end{equation}
Here $u(r)$ is a one-site unitary gate acting on site $r$, drawn independently from the Circular Unitary Ensemble (CUE),
\begin{equation}
\label{app_eq:floqonly_rpm_u}
u(r)\in \mathrm{CUE}\,,
\qquad r=1,2,\dots,L\,,
\end{equation}
and $w(r)$ is a two-site diagonal random phase gate acting on sites $(r,r+1)$, defined by
\begin{equation}
\label{app_eq:floqonly_rpm_w}
[w(r)]_{ab}^{cd}
=
\delta_{ac}\delta_{bd}\exp\!\big[i\phi_{ab}(r)\big]\,,
\qquad
\phi_{ab}(r)\in \mathcal{N}(0,\epsilon)\,.
\end{equation}
The real Gaussian variables $\phi_{ab}(r)$ are independent with mean zero and variance $\epsilon$, which controls the coupling strength between neighbouring sites.
For pbc, the product in $\mathcal{U}_2(L)$ includes the gate straddling sites $(L,1)$. For obc, this gate is removed. 

\subsubsection{Random Hamiltonian Model}
The random Hamiltonian model (RHM) is a one-dimensional quantum circuit that acts on the Hilbert space of $L$ qubits with Hilbert space dimension $q^L$ and $q=2$. The RHM is used to numerically simulate generic quantum many-body chaotic dynamics at $q=2$. In the absence of additional symmetries, the Floquet RHM is defined by repeated action of a brick-wall bilayer,
\begin{IEEEeqnarray}{rrl}
\label{app_eq:floqonly_rhm}
\text{Discrete time translational invariant (Floquet):} \qquad
& U_{\text{f-RHM}}(t,L) =& \,\, \uu_{\mathrm{f}}(t;\mathcal{U}_{\text{RHM}})\,,
\end{IEEEeqnarray}
where
\begin{equation}
\label{app_eq:floqonly_rhm_geom}
\mathcal{U}_{\text{RHM}}(L)=\uu_2(L)\,\uu_1(L)\,,
\qquad
\uu_m(L)=\bigotimes_{\substack{r\in 2\mathbb{Z}+m\,\mathrm{mod}\,2}} w(r)\,.
\end{equation}
Here $w(r)$ is a two-site unitary gate acting on sites $(r,r+1)$, of the form
\begin{equation}
\label{app_eq:floqonly_rhm_w}
w(r)
=
\exp\!\left[
i\sum_{\mu,\nu=0}^{3} a_{\mu\nu}(r)\,\sigma_\mu\otimes \sigma_\nu
\right]
\big[u(r)\otimes u(r+1)\big]\,,
\end{equation}
with
\begin{equation}
\label{app_eq:floqonly_rhm_params}
a_{\mu\nu}(r)\in \mathcal{N}(0,\eta)\,,
\qquad
u(r)\in \mathrm{CUE}\,.
\end{equation}
The coefficients $a_{\mu\nu}(r)$ are independent real Gaussian variables with mean zero and variance $\eta$, while the one-site gates $u(r)$ are independently drawn from the CUE. The parameter $\eta$ controls the interaction strength between neighbouring qubits.
For pbc, the products in $\uu_1(L)$ and $\uu_2(L)$ are over $2\mathbb{Z}+1\equiv \{1,3,5,\dots,L-1\}$ and $2\mathbb{Z}\equiv \{2,4,6,\dots,L\}$ respectively. For obc, the two-site gate straddling sites $(L,1)$ is removed, and the product over even bonds becomes $2\mathbb{Z}\equiv \{2,4,6,\dots,L-2\}$. For simplicity, we consider even system sizes, i.e. $L\in 2\mathbb{Z}$.
As in the Floquet RPM, the same sampled bilayer $\mathcal{U}_{\text{RHM}}(L)$ is repeated at each Floquet period.

\section{TopSFF for RMT }\label{app:tsff_rmt}
In this appendix, we compute the TopSFF and related observables in random matrix theory (RMT) for two purposes: 
(i) to establish zero-dimensional RMT benchmarks for quantum many-body chaotic systems in the late-time or large-system-size regimes, and 
(ii) to elucidate the intrinsically many-body character of the nontrivial TopSFF behaviour discussed in the main text.
We first derive the ensemble-averaged TopSFF for a general defect operator acting on the doubled Hilbert space for unitary invariant random matrix ensembles. We then specialize this result to the Gaussian unitary ensemble (GUE) and circular unitary ensemble (CUE), which describe universal spectral correlations of quantum chaotic systems at late times~\cite{bohigas1984characterization}, and to the non-Hermitian Ginibre unitary ensemble (GinUE), which emerges as a universal description of many-body chaotic systems in the large-system-size limit~\cite{Shivam_2023}.
We note that the RMT TopSFF considered here is not ``topological'' in the sense of the main text, since the random matrices are not required to commute with the defect operator. Nevertheless, it provides a useful zero-dimensional benchmark for the corresponding many-body observables.

We finally compare the Mickey Mouse contraction topology underlying the spatially extended TopSFF with its zero-dimensional CUE counterpart. We show that the leading Gaussian and non-Gaussian Mickey Mouse contributions cancel at large $N$, and derive the corresponding exact finite-$N$ result. Thus, although the same contraction topology is present in RMT, its nontrivial Gaussian--non-Gaussian dynamics is absent in the zero-dimensional problem. This contrast highlights the intrinsically many-body origin of the TopSFF behaviour associated with propagating temporal domain walls, including the emergent $\mathcal{PT}$ transition.

\subsection{TopSFF for Unitary invariant ensemble}
We first formulate the RMT TopSFF for a general defect operator $\mathcal{D}$ acting on the doubled Hilbert space. For an $N$-dimensional Hilbert space, define
\be
\label{app_eq:general_doubled_defect_rmt}
K^{\mathrm{top}}_{\mathrm{RMT}}(\mathcal{D};X)
:=
\Tr \left[
\mathcal{D}\left(X\otimes X^*\right)
\right],
\ee
where $\mathcal{D}$ is an arbitrary $N^2$-by-$N^2$ operator.
Let the ensemble of $X$ is invariant under unitary conjugation,
$X\mapsto VXV^\dagger$.
Consequently, $\overline{X\otimes X^*}$ commutes with $V\otimes V^*$ and decomposes into the singlet and adjoint sectors as
\be
\overline{X\otimes X^*}
=
s_X P_0+r_X P_{\mathrm{ad}},
\qquad
P_0:=\frac{|\Omega\rangle\langle\Omega|}{N},
\qquad
P_{\mathrm{ad}}:= \iden -P_0,
\qquad
|\Omega\rangle:=\sum_{a=1}^N |a\rangle\otimes |a\rangle .
\ee
Define the two invariants as
\be
\overline{K_X}
:=
\overline{\Tr X \Tr X^\dagger}
= \overline{\Tr X \Tr X^\dagger},,
\qquad
\overline{R_X}
:=
\overline{\Tr(XX^\dagger)},
\ee
where $\overline{K_X}$ is the conventional SFF.
Using
$\Tr\overline{(X\otimes X^*)}=\overline{K_X}$
and
$\Tr[P_0\overline{(X\otimes X^*)}]=\overline{R_X}/N$,
we obtain
\be
s_X=\frac{\overline{R_X}}{N},
\qquad
r_X=
\frac{\overline{K_X}-\overline{R_X}/N}{N^2-1}.
\ee
The RMT TopSFF for an arbitrary doubled-space defect is therefore
\be
\label{app_eq:rmt_master_doubled_defect}
\overline{K^{\mathrm{top}}_{\mathrm{RMT}}(\mathcal{D};X)}
=
\frac{\overline{R_X}}{N}
\Tr(\mathcal{D}P_0)
+
\frac{\overline{K_X}-\overline{R_X}/N}{N^2-1}
\Tr\left[\mathcal{D}P_{\mathrm{ad}}\right].
\ee
Equivalently,
\be
\label{app_eq:rmt_master_doubled_defect_expanded}
\overline{K^{\mathrm{top}}_{\mathrm{RMT}}(\mathcal{D};X)}
=
\frac{N\overline{K_X}-\overline{R_X}}
{N(N^2-1)}
\Tr\mathcal{D}
+
\frac{N\overline{R_X}-\overline{K_X}}
{N(N^2-1)}
\langle\Omega|\mathcal{D}|\Omega\rangle .
\ee
Even though $\mathcal{D}$ may be a general non-factorizable operator on the doubled Hilbert space, the ensemble-averaged RMT TopSFF depends only on the two defect invariants
$\Tr\mathcal{D}$ and $\langle\Omega|\mathcal{D}|\Omega\rangle$.

\subsection{TopSFF for Gaussian and circular ensembles}

We now apply Eq.~\eqref{app_eq:rmt_master_doubled_defect} to the Gaussian and circular unitary ensembles.
The time evolution is defined by
\be
\ba
U_{\text{r-h-RMT}}(t)
:=&
\prod_{\tau=1}^{t}u(\tau),
\qquad & \text{Random},
\\
U_{\text{f-h-RMT}}(t)
:=& \, 
u^t,
\qquad & \text{Floquet},
\ea
\ee
where $u(\tau)$ and $u$ may be drawn from the circular unitary ensemble (CUE), or
$u(\tau)=\exp[iH(\tau)]$ and $u=\exp(iH)$ with $H$ drawn from the Gaussian unitary ensemble (GUE).
We use ``h-RMT'' to denote ensembles of unitary or Hermitian matrices, as opposed to the non-Hermitian ensembles considered below.
Since $U(t)$ is unitary, we have 
$\overline{R_{\mathrm{h-RMT}}(t)}
=
\overline{\Tr[U(t)U^\dagger(t)]}
=N$ and the regular SFF
$\overline{K_{\mathrm{h-RMT}}(t)}
:=
\overline{\Tr[U(t)]\Tr[U^\dagger(t)]}$.
Eq.~\eqref{app_eq:rmt_master_doubled_defect_expanded} gives
\be
\overline{K^{\mathrm{top}}_{\mathrm{h-RMT}}(\mathcal{D};t)}
=
\frac{\overline{K_{\mathrm{h-RMT}}(t)}-1}{N^2-1}
\Tr\mathcal{D}
+
\frac{N^2-\overline{K_{\mathrm{h-RMT}}(t)}}
{N(N^2-1)}
\langle\Omega|\mathcal{D}|\Omega\rangle .
\label{app_eq:general_tsff_h_rmt}
\ee
In particular, if the defect has a factorized form
$
\mathcal{D}
=
D_1^\alpha\otimes(D_2^\beta)^T$, with
$\alpha,\beta\in\{0,1\}$,  then Eq.~\eqref{app_eq:general_tsff_h_rmt} reduces to
\be
\label{app_eq:factorized_tsff_h_rmt}
\begin{aligned}
\overline{
K^{\mathrm{top}}_{\mathrm{h-RMT}}
(D_1,D_2;\alpha,\beta;t)}
=
\frac{
\tr D_1^\alpha\,\tr D_2^\beta
-
N^{-1}\tr(D_1^\alpha D_2^\beta)}
{N^2-1}
\, \overline{K_{\mathrm{h-RMT}}(t)}
+
\frac{
N\tr(D_1^\alpha D_2^\beta)
-
\tr D_1^\alpha\,\tr D_2^\beta}
{N^2-1}.
\end{aligned}
\ee
For a Haar-random CUE matrix, including a product of independent CUE matrices,
$\overline{K_{\mathrm{h-RMT}}}=1$, and hence
\be
\overline{K^{\mathrm{top}}_{\mathrm{r-CUE}}(\mathcal{D})}
=
\frac{1}{N}
\langle\Omega|\mathcal{D}|\Omega\rangle,
\ee
which for the factorized defect becomes
\be
\overline{
K^{\mathrm{top}}_{\mathrm{r-CUE}}
(D_1,D_2;\alpha,\beta)}
=
\frac{1}{N}
\tr(D_1^\alpha D_2^\beta).
\ee
For Floquet CUE and GUE evolution, $\overline{K_{\mathrm{h-RMT}}(t)}$ in
Eq.~\eqref{app_eq:factorized_tsff_h_rmt} is respectively the ordinary CUE or GUE SFF.

\subsection{TopSFF for Ginibre ensembles}

We next consider the non-Hermitian Ginibre unitary ensemble.
Let $G$ denote an $N$-by-$N$ complex Ginibre matrix with probability distribution
$P(G)\propto\exp[-N\Tr(GG^\dagger)]$, such that
\be
\overline{G_{ab}G^*_{cd}}
=
\frac{1}{N}\delta_{ac}\delta_{bd},
\qquad
\overline{G_{ab}}=0.
\ee
We define
\be
\ba
G_{\text{r-nh-RMT}}(L)
:=&
\prod_{\ell=1}^L G(\ell),
\qquad & \text{Random},
\\
G_{\text{f-nh-RMT}}(L)
:=&
G^L,
\qquad & \text{Floquet},
\ea
\ee
where the matrices $G(\ell)$ are independently drawn in the random case.
We use ``nh-RMT'' to denote non-Hermitian random matrix ensembles.

For the Floquet Ginibre ensemble, define
\be
\overline{K_{\mathrm{nh-RMT}}(N,L)}
:=
\overline{\left|\Tr G^L\right|^2},
\qquad
\overline{R_{\mathrm{nh-RMT}}(N,L)}
:=
\overline{\Tr\left[G^L(G^\dagger)^L\right]}.
\ee
Applying the master formula~\eqref{app_eq:rmt_master_doubled_defect_expanded}, we obtain
\be
\begin{aligned}
\label{app_eq:general_tsff_ginibre}
\overline{
K^{\mathrm{top}}_{\mathrm{nh-RMT}}(\mathcal{D};L)}
={}&
\frac{
N\overline{K_{\mathrm{nh-RMT}}(N,L)}
-
\overline{R_{\mathrm{nh-RMT}}(N,L)}
}
{N(N^2-1)}
\Tr\mathcal{D}
\\
&+
\frac{
N\overline{R_{\mathrm{nh-RMT}}(N,L)}
-
\overline{K_{\mathrm{nh-RMT}}(N,L)}
}
{N(N^2-1)}
\langle\Omega|\mathcal{D}|\Omega\rangle .
\end{aligned}
\ee
For the factorized defect
$\mathcal{D}=D_1^\alpha\otimes(D_2^\beta)^T$,
this becomes
\be
\label{app_eq:factorized_tsff_ginibre}
\begin{aligned}
\overline{
K^{\mathrm{top}}_{\mathrm{nh-RMT}}
(D_1,D_2;\alpha,\beta;L)}
={}&
\frac{
N\tr D_1^\alpha\,\tr D_2^\beta
-
\tr(D_1^\alpha D_2^\beta)}
{N(N^2-1)}
\overline{K_{\mathrm{nh-RMT}}(N,L)}
\\
&+
\frac{
N\tr(D_1^\alpha D_2^\beta)
-
\tr D_1^\alpha\,\tr D_2^\beta}
{N(N^2-1)}
\overline{R_{\mathrm{nh-RMT}}(N,L)}.
\end{aligned}
\ee
As a consistency check, setting $D_1=D_2=\iden$ gives
\be
\overline{
K^{\mathrm{top}}_{\mathrm{nh-RMT}}
(\iden, \iden ;1,1;L)}
=
\overline{K_{\mathrm{nh-RMT}}(N,L)},
\ee
as required.
For the random Ginibre product, we have
\be
\overline{
K^{\mathrm{top}}_{\mathrm{r-nh-RMT}}(\mathcal{D};L)}
=
\Tr(\mathcal{D}P_0)
=
\frac{1}{N}\langle\Omega|\mathcal{D}|\Omega\rangle.
\ee
For a factorized defect, this reduces to
\be
\overline{
K^{\mathrm{top}}_{\mathrm{r-nh-RMT}}
(D_1,D_2;\alpha,\beta;L)}
=
\frac{1}{N}
\tr(D_1^\alpha D_2^\beta).
\ee

\subsection{Mickey Mouse observable $\Tr U^{2t}(\Tr U^{*t})^2$ vanishes in large-$N$ RMT}
To elucidate the intrinsically many-body character of the TopSFF with a spatially extended topological defect involving Mickey Mouse diagrams, we compare it with its zero-dimensional random-matrix counterpart, defined through the correlator
\begin{equation}
M(t,N)
:=
\Tr U^{2t}
\left(\Tr U^{*t}\right)^2,
\qquad
U\in\mathrm{CUE}(N).
\label{eq:RMT_three_trace}
\end{equation}
Since $U$ is unitary, $\Tr U^{*t}=\Tr U^{-t}$. Diagrammatically, Eq.~\eqref{eq:RMT_three_trace} contains one temporal loop of length $2t$ and two loops of length $t$, and therefore has the same temporal topology as the Mickey Mouse configurations discussed in the main text. We first show that, at large $N$, the leading Gaussian and non-Gaussian Mickey Mouse contributions discussed in the main text cancel. We then derive the exact finite-$N$ expression for $\overline{M(t,N)}$ in the CUE. These results highlight the intrinsically many-body origin of the nontrivial TopSFF behaviour, including the emergent $\mathcal{PT}$ transition, which is absent in its zero-dimensional RMT counterpart.

\subsubsection{Large-$N$ Gaussian--non-Gaussian cancellation}
\label{app:circular_unitary_mickey_mouse}

Let $\alpha=(1\,2\,\cdots\,2t)$ and $\beta=(1\,2\,\cdots\,t)(t+1\,t+2\,\cdots\,2t)$ denote the cycle structures associated with the long and short temporal loops, respectively. The correlator can be expanded as
\begin{equation}
\overline{M(t,N)}
=
\sum_{\sigma,\pi\in S_{2t}}
N^{C\left(
\alpha^{-1}\sigma^{-1}\beta\sigma\pi
\right)}
\Wg(\pi; N),
\label{eq:RMT_three_trace_Wg}
\end{equation}
where $C(\rho)$ denotes the number of cycles of permutation $\rho$, while $\pi=\sigma^{-1}\tau$ specifies the Weingarten class.

The Gaussian sector corresponds to $\pi=e$, where $e$ denotes the identity permutation, or equivalently $\sigma=\tau$. Its Weingarten function is
\(
\Wg(\{1^{2t}\};N)=N^{-2t}+O(N^{-2t-2}),
\)
where $\{c_i^{n_i}\}$ denotes a cycle type containing $n_i$ cycles of length $c_i$. The leading Gaussian Mickey Mouse contractions contain $2t-1$ index loops and hence contribute at order $N^{-1}$. There are $2t^3$ such contractions. The leading non-Gaussian sector has $\pi$ equal to a single transposition, with $\Wg(\{2,1^{2t-2}\};N)=-N^{-(2t+1)}+O(N^{-(2t+3)})$. Although its Weingarten factor is suppressed by one additional power of $N^{-1}$, the contraction contains one additional index loop and therefore also contributes at order $N^{-1}$. As discussed in the main text, there are $2t^3$ such diagrams. Thus
\begin{equation}
\overline{M^{\mathrm{G}}}
=
\frac{2t^3}{N}+O(N^{-3}),
\qquad
\overline{M^{\mathrm{nG}}}
=
-\frac{2t^3}{N}+O(N^{-3}), 
\qquad
\overline{M^{\mathrm{G}}}+\overline{M^{\mathrm{nG}}}=O(N^{-3}).
\label{eq:RMT_G_nG}
\end{equation}
Therefore, the cancellation is exact in the large $N$ limit
\begin{equation}
\overline{
\Tr U^{2t}
\left(\Tr U^{*t}\right)^2
}
=0.
\label{eq:RMT_three_trace_exact_zero}
\end{equation}
Thus, the Mickey Mouse contraction topology is present in zero-dimensional CUE, but it vanishes in large $N$ the 0D RMT problem. Spatial interactions assign different propagation and conversion amplitudes to the Gaussian and non-Gaussian sectors, lifting the zero-dimensional cancellation. This distinction underlies the nontrivial Gaussian--non-Gaussian dynamics, including the emergent $\mathcal{PT}$ transition, found in the TopSFF many-body problem.

\subsubsection{Finite-$N$ RMT computation}

We now compute $M(t,N)$ exactly in finite $N$, which in particular demonstrate exact cancellation. Diagonalising the unitary as $U=W\Lambda W^\dagger$, with $\Lambda=\mathrm{diag}(z_1,\ldots,z_N)$ and $z_a=e^{i\theta_a}$, the correlator becomes
\[
\overline{M(t,N)}
=
\overline{
\sum_{a,b,c=1}^{N}
z_a^{2t}z_b^{-t}z_c^{-t}
}.
\]
Separating the sum according to coincidences among $a,b,c$ gives
\begin{equation}
\overline{M(t,N)}
=
2 \overline{K(t)}+\overline{K(2t)}-2N+\overline{C_3(t)},
\label{eq:mickey_decomposition}
\end{equation}
where $K(t):={\Tr U^t\Tr U^{-t}}=\min(t,N)$ is the regular SFF for the CUE, and
$C_3(t):={\sum_{a \neq b \neq c}z_a^{2t}z_b^{-t}z_c^{-t}}$.

The CUE eigenvalues form a determinantal point process, with three-level correlator
\begin{equation}
{R_3(\theta_1,\theta_2,\theta_3)}
=
{
\sum_{a \neq b \neq c}
\delta(\theta_1-\theta_a)
\delta(\theta_2-\theta_b)
\delta(\theta_3-\theta_c)
},
\end{equation}
whose ensemble average can be evaluated in terms of the kernel $\mathscr{D}_N(\theta_i,\theta_j)$,
\begin{equation}
\overline{R_3(\theta_1,\theta_2,\theta_3)}
=
\det_{i,j=1,2,3}\mathscr{D}_N(\theta_i,\theta_j),
\qquad 
\mathscr{D}_N(\theta,\phi)
=
(2\pi)^{-1}\sum_{n=0}^{N-1}e^{in(\theta-\phi)}.
\end{equation}
We write $C_3(t)$ as 
\begin{equation}
\overline{C_3(t)}
=
\int_0^{2\pi}
d\theta_1d\theta_2d\theta_3\,
e^{i(2t\theta_1-t\theta_2-t\theta_3)}
\overline{R_3(\theta_1,\theta_2,\theta_3)}.
\end{equation}
Upon expanding the determinant over $S_3$, the identity permutation and the three transpositions vanish via $\int_0^{2\pi} d\theta \, e^{i m \theta} = 2\pi \delta_{m,0}$, because at least one non-zero Fourier charge remains unmatched. Only the two three-cycles contribute. For the cycle $(123)$, expanding the kernels introduces $n_1,n_2,n_3\in\{0,\ldots,N-1\}$, while the angular integrations over $\theta$-s impose $n_2=n_1+t$ and $n_3=n_1+2t$. 
There are therefore $\max(N-2t,0)$ allowed values of $n_1$. The second three cycle gives the same contribution, yielding
\begin{equation}
C_3(t)=2\max(N-2t,0).
\label{eq:mickey_C3}
\end{equation}
Substitution into Eq.~\eqref{eq:mickey_decomposition} gives the exact finite-$N$ result
\begin{equation}
\overline{
\Tr U^{2t}
\left(\Tr U^{*t}\right)^2
}
=
2\min(t,N)+\min(2t,N)-2N+2(N-2t)_+
=
\begin{cases}
N, & N\leq t,\\[1mm]
2t-N, & t<N<2t,\\[1mm]
0, & N\geq 2t.
\end{cases}
\label{eq:mickey_exact_piecewise}
\end{equation}
In particular, for $N\geq 2t$, $\overline{M(t,N)}$ is exactly zero, reproducing Gaussian--non-Gaussian  cancellation in large $N$ in Eq.~\eqref{eq:RMT_three_trace_exact_zero}.

\section{Review of exact solutions of SFF without topological defects}\label{app:review_sff}

In this appendix, we collect exact large-\(q\) results for the ordinary SFF, without topological defect insertions, in the three symmetry settings relevant to this work: parity inversion symmetry, time reversal symmetry, and discrete time translation symmetry.

\subsection{Parity inversion symmetry}
Here we review the analytical solutions of SFF in the parity inversion symmetric many-body quantum chaotic models derived in~\cite{upcoming}. The approach can be used to evaluate $d$-dimensional parity inversion symmetric RPM models defined in \ref{app:parity_ptRPM}. In particular, for one-dimensional parity inversion symmetric RPM with periodic boundary condition, the SFF can be evaluated exactly in large-$q$ limit as 
\begin{equation} \label{eq:regular_sff_parity}
\begin{aligned}
 \overline{K_{\text{r-p-RPM}}(t,L)}
     =
  2 (1 + e^{-2 \epsilon t} )^{\Leff}  \, .
\end{aligned}
\end{equation}
where $\Leff$ counts the number of bonds in the folded picture as described in the main text, with $\Leff= L/2-1$ for even $L$, and $\Leff= (L-1)/2-1$ for odd $L$. $\epsilon$ is the variance of the random phases in the RPM, which controls the coupling strength of the many-body system.
For Floquet parity inversion symmetric RPM with periodic boundary condition we have 
\be\label{app_eq:sff_parity}
\ba
 &  \overline{K_{\text{f-p-RPM}}(t,L)}   =  (2 t-2) (e^{-\epsilon t}-1)^2 \left(e^{-\epsilon t}-1\right)^{2 \Leff} 
 \\
 & \quad 
 +
 2 [(t-1) e^{-\epsilon}+1]^2  \left[\left(2 t^2-2 t+1\right) e^{-2\epsilon t}+2 (t-1) e^{-\epsilon t}+1\right]^{\Leff} \,. 
\ea
\ee
where $\Leff$ and $\epsilon$ are defined as above.

\subsection{TRS}
Here we review the analytical solutions of SFF in the TRS many-body quantum chaotic models derived in~\cite{wu2025}. The approach can be used to evaluate $d$-dimensional TRS RPM models defined in \ref{app_sec:trs_rpm}. In particular, for one-dimensional global TRS RPM with periodic boundary condition, the SFF can be evaluated exactly in large-$q$ limit as 
\begin{equation} \label{eq:sff_gtrs_nomag}
\begin{aligned}
    \overline{K_{\text{trs-RPM}}(t,L)}
   =
   \left[1+e^{-\epsilon \left( t-1-\delta_{0,t \,(\text{mod}\,{2})}\right)}\right]^L
   +
   \left[1-e^{-\epsilon \left( t-1-\delta_{0,t \,(\text{mod}\,{2})}\right)}\right]^L .
\end{aligned}
\end{equation}
where $\epsilon$ is the variance of the random phases in the RPM.

\subsection{Time translational symmetry}
Here we review the analytical solutions of SFF in  many-body quantum chaotic models with discrete time translational symmetry (but without other symmetries) in~\cite{chan2018spectral}. The approach can be used to evaluate $d$-dimensional RPM models with time translational symmetry defined in \ref{app:model_time_trans}. In particular, for one-dimensional RPM with periodic boundary condition, the SFF can be evaluated exactly in large-$q$ limit as 
\begin{equation} \label{eq:sff_time_Trans}
\begin{aligned}
    \overline{K_{\text{RPM}}(t,L)}
   =(t -1)(1 -e^{-\epsilon t})^L + [1 + (t - 1)e^{-\epsilon t}]^L
\end{aligned}
\end{equation}
where, again, $\epsilon$ is the variance of the random phases in the RPM.

\section{SFF with twisted boundary conditions}\label{app:sff_twisted}
For completeness, we contrast the TopSFF with the SFF under twisted boundary
conditions in space or in time. A twisted boundary condition inserts the same defect in both \(U\) and \(U^{*}\), leaving the two world-sheets topologically matched.  As a result, these observables do not isolate the domain-wall sectors
responsible for the universal scaling forms and emergent \(\mathcal{PT}\) transition discussed in the main text.

\subsection{Spatially extended twisted boundary condition}

We consider the SFF of a generic quantum many-body chaotic system with a spatially extended twisted boundary condition,
\(
K^{\mathrm{twisted}}_{\mathrm{parity}}
=
\left|\Tr[\hat S\hat U_{\mathrm{t}}(t,L)]\right|^2 .
\)
Here \(\hat S\) is the global swap operator acting in the time direction. The generic chaotic system is realised by the RPM with parity inversion and discrete time translation symmetries, defined in Eq.~\eqref{app_eq:floq_trs_rpm}, which satisfies
\([\hat S,\hat U_{\mathrm{t}}(t,L)]=0\). 
The tensor network for \(K^{\mathrm{twisted}}_{\mathrm{parity}\text{-}\mathrm{RPM}}\) consists of two space-time world-sheets with identical topology. Under the folding and unwrapping procedure of App.~\ref{app:tsff_toolbox}, the local diagrammatics contains two long loops, from \(\Tr[\hat S\hat U^t]\) and \(\Tr[\hat S\hat U^{*t}]\), but without Mickey Mouse structure or associated temporal domain walls of the type described in the main text and App.~\ref{app:leading_local_mm_diagram}. Applying the methods of App.~\ref{app:tsff_toolbox}, one computes exactly in the large-\(q\) limit
\begin{equation}
\overline{K^{\mathrm{twisted}}_{\mathrm{parity}\text{-}\mathrm{RPM}}}
=
\mathcal{Z}_{2t\text{-}\mathrm{Potts}}
=
2t\left(2t-1+e^{-2\epsilon t}\right)^{L/2-1},
\end{equation}
where
\(Z_{2t\text{-}\mathrm{Potts}}=\phi^T\boltz^{L/2-1}\phi\)
is the partition function of a \(2t\)-state Potts model with open boundary conditions. Here
\(
[\boltz]_{mn}
=
1+\delta_{mn}\left(e^{-2\epsilon t}-1\right)\) with 
\(m,n=1,\dots,2t\),
and 
\(
\phi=(1,1,\dots,1)^T
\).
This observable closely resembles the SFF of the parity-invariant RPM with open boundary conditions, cf.~Eq.~\eqref{app_eq:sff_parity}, and does not isolate temporal domain walls responsible for the universal scaling captured in TopSFF.

\subsection{Temporally extended twisted boundary condition}
We consider the SFF of a generic quantum many-body chaotic system with a temporally extended twisted boundary condition,
\(K^{\mathrm{twisted}}_{\mathrm{trs}}=|\Tr[\hat S\hat U_{\mathrm{s}}(t,L)]|^2\).
Here \(\hat S\) is the global swap operator acting in the spatial direction, and
\(\hat U_{\mathrm{s}}(t,L):=\prod_{i=1}^{L}\hat V_i(t)\) is generated by the dual transfer matrices \(\hat V_i(t)\) [Fig.~\ref{fig:tdw_mickey_mouse}(a), purple] of the RPM defined in Eq.~\eqref{app_eq:global_trs_rpm}. The RPM satisfies \([\hat V_i(t),\hat S]=0\), and therefore has global time reversal symmetry. The tensor network for \(K^{\mathrm{twisted}}_{\mathrm{trs}}\) consists of two space-time world-sheets with identical topology. Using the methods described in App.~\ref{app:tsff_toolbox} and \cite{wu2025}, one finds in the large-\(q\) limit that
\begin{equation}\label{app_eq:sff_twisted_trs}
\overline{K^{\mathrm{twisted}}_{\mathrm{trs}\text{-}\mathrm{RPM}}}
=
\overline{K_{\mathrm{trs}\text{-}\mathrm{RPM}}}\, ,
\end{equation}
where the analytical expression of \(\overline{K_{\mathrm{trs}\text{-}\mathrm{RPM}}}\) is given in \eqref{eq:sff_gtrs_nomag} and \cite{wu2025}. The exactness of the equality in \eqref{app_eq:sff_twisted_trs} is model-dependent, and generally relies on the defect operator commuting with all symmetries of the quantum many-body system. Nevertheless, we expect Eq.~\eqref{app_eq:sff_twisted_trs} to hold approximately in generic chaotic systems, and the SFF of twisted boundary condition does not isolate spatial domain walls responsible for the universal scaling captured in TopSFF.

\section{Generalized Boltzmann factor of TopSFF with spatially extended defects}\label{app:tsff_toolbox}

Here we analytically evaluate the ensemble average of TopSFF 
\begin{equation}\label{app_eq:tsff_floq_par_def}
    K_{\topo}(t,L)= \mathrm{Tr}[ \hat{\mathcal{D}} (\hat{U}\otimes \hat{U}^*) ] = \Tr [\hat{S} \hat{U}] \Tr[\hat{U}^*]  \, ,   \qquad [\hat{\mathcal{D}}, (\hat{U}\otimes \hat{U}^*)] =0\, , 
\end{equation}
where the spatially extended topological defect is $\hat{\mathcal{D}}=\hat{S}\otimes\iden$.
Here $\hat{S}$ is the global swap operator, which reverses the ordering of the $L$ qudits according to $\hat{S}\ket{a_1,a_2,\dots,a_{L-1},a_L}=\ket{a_L,a_{L-1},\dots,a_2,a_1}$, with $a_i=1,\dots,q$. We analyse $K_{\topo}$ in a generic many-body quantum chaotic system satisfying $[\hat{\mathcal{D}},\hat{U}\otimes \hat{U}^*]=0$, i.e. the system possesses parity inversion symmetry.
Such a system is minimally realised by the Random Phase Model (RPM) with parity inversion symmetry and discrete time translation symmetry, defined in App.~\ref{app:parity_ptRPM}, i.e., $\hat{U}=\hat{U}_{\mathrm{f\text{-}p\text{-}RPM}}$.
The local Hilbert-space dimension is denoted by $q$.
Our calculation uses the diagrammatic representation of random quantum circuits introduced in Ref.~\cite{chan2018solution}.
We provide exact analytic results in the large-\(q\) limit, exact analytical results for finite \(q\) and small \(t\), and numerical simulations at finite \(q\) and small \(t\).
For simplicity, the derivation below is presented for even system size $L$, although the method applies equally to both even and odd $L$.

In Appendix \ref{app:leading_local_mm_diagram}, we identify the leading local
TopSFF diagrams in the large-\(q\) expansion. These diagrams form the
topological equivalence class of the Mickey Mouse diagram and are
parametrized by \(\rho=(a,b,c,d)\), where \(a=0,1\) distinguishes
Gaussian and non-Gaussian temporal domain walls.
In \ref{app:boltz_pos_space}, we derive the generalized Boltzmann factor
\(\boltz\) underlying TopSFF in position space, and in
\ref{app:boltz_pos_prop} we describe its main properties, including
translation symmetries, exchange symmetries, and reality constraints.
In \ref{app:boltz_mom_space}, we Fourier transform the generalized
Boltzmann factor, and in \ref{app:boltz_mom_prop} we discuss its
momentum space properties, including its trace, determinant, eigenvalue
structure, and special momentum sector features.
In \ref{app:pt_sym}, we prove that, for generic interaction strength,
\(\tilde{\boltz}(k_b,k_c,k_d)\) is \(\mathcal{PT}\) symmetric precisely
when \(k_b+k_c+k_d\equiv 0 \pmod{\pi}\). In \ref{app:boundary_state}, we evaluate the boundary states in the folded representation in both position and momentum space.

\subsection{Mickey Mouse diagrams}\label{app:leading_local_mm_diagram}
\textbf{Tensor network, folding and unwrapping.} We represent the TopSFF diagrammatically using the tensor network representation, see e.g.~\cite{Or_s_2014}. 
The quantum circuit \(U_{\text{f-p-RPM}}\) in Eq.~\eqref{app_eq:par_rpm_temprand} is represented as in Fig.~\ref{fig:parity_models}(b), where time runs vertically from bottom to top and space horizontally from left to right. Each line denotes the  worldline of a single qudit, each square denotes a local unitary gate drawn from the circular unitary ensemble (CUE), and each oval denotes a two-site coupling gate. The trace operation in the TopSFF is represented by  forming a close loop from the qudit worldline at each site. 
The TopSFF is represented schematically in Fig.~\ref{fig:folded_v1}(a) using two world-sheets:\(\Tr[\hat{S}\hat{U}(t)]\), shown in white, and \(\Tr[\hat{U}^{\dagger}(t)]\), shown in grey.
The open legs at the top are identified with the corresponding legs at the bottom, thereby forming the trace.
The parity-inversion axes are shown in red and blue, with axes of the same colour identified.

We fold the world-sheet in half along the parity inversion axis as illustrated in Fig.~\ref{fig:folded_v1}(b), so that gates related by the parity inversion symmetry are stacked on top of each other after folding.
Specifically, each world-sheet, $\Tr_{\mathrm{t}}[\hat{S} \hat{U}(t,L)] $ or $\Tr_{\mathrm{t}}[\hat{U}(t,L)]$, is divided into two halves along the inversion axes, labeled 1 to 4. Folding according to Fig.~\ref{fig:folded_v1} (a,b) produces a representation in which every space-time coordinate contains two identical unitary gates together with their two complex conjugates.
In this folded picture, we construct an effective spatial chain by grouping all gates at sites $i$ and $L-i$ from both $\Tr_{\mathrm{t}}[\hat{S}\hat{U}(t,L)]$ and $\Tr_{\mathrm{t}}[\hat{U}(t,L)]$. For even $L$, the folded chain has length $L/2$.
The global swap operator now realizes a local operation exchanging the replica labelled by 1 and 3 in Fig.~\ref{fig:folded_v1}. In the folded representation, the swap operator locally commutes with the parity symmetric many-body quantum systems. 
\begin{figure}
    \centering
    \includegraphics[width=0.99\textwidth]{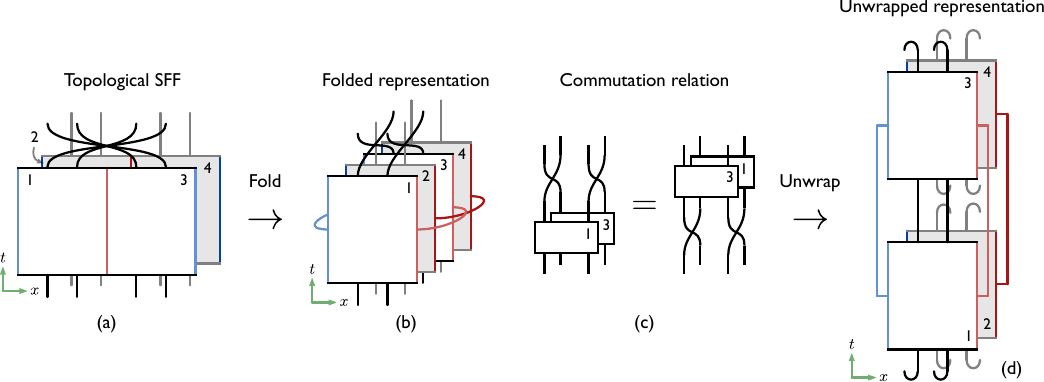}
    \caption{\textbf{Folding and unwrapping of TopSFF with spatially extended topological defects.} 
    (a) TopSFF $K_{\topo}(t,L)= \mathrm{Tr}[ \hat{\mathcal{D}} (\hat{U}\otimes \hat{U}^*) ] = \Tr [\hat{S} \hat{U}] \Tr[\hat{U}]$  is represented using two world-sheets: $\Tr[\hat{S} \hat{U}(t)]$ in white, and  $\Tr[\hat{U}(t)^\dagger]$ in grey. The open legs at the top are identified with the corresponding legs at the bottom, forming a trace. 
    The parity inversion axes are shown in reds and blues, with the axes of the same colors identified. 
    (b) For parity symmetric one-dimensional many-body quantum circuits, the folded representation stacks identical gates related by parity inversion symmetry in the world-sheets. 
    (c) In the folded representation, gates locally commute with the swap operator, which acts as the topological defect. 
    (d) We further unwrap the folded world-sheets in (b), yielding an equivalent representation in which each folded site contains three loops.
    }
    \label{fig:folded_v1}
\end{figure}

\begin{figure}
    \centering
    \includegraphics[width=0.99\textwidth]{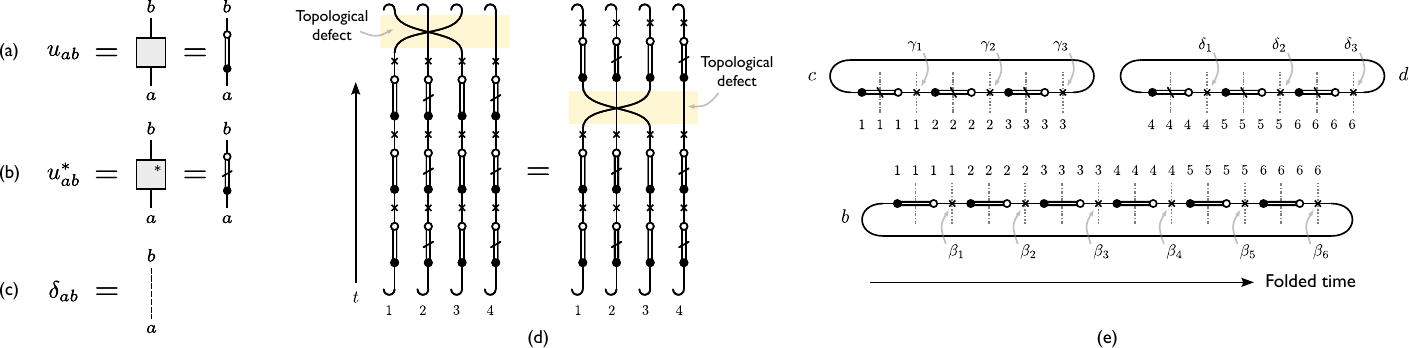}
    \caption{\textbf{Diagrammatic representation of the TopSFF before averaging} in the presence of a spatially extended topological defect.
For parity-symmetric quantum many-body systems, we use the graphical notation for
(a) unitary gates, (b) complex conjugated unitary gates, and (c) Kronecker-delta contractions.
Panel (d) shows the local tensor network diagrams contributing to the TopSFF at $t=3$ in Fig.~\ref{fig:folded_v1}(b), with the spatially extended topological defect highlighted in yellow.
Panel (e) shows the corresponding local diagrams obtained from unwrapping the folded representation in Fig.~\ref{fig:folded_v1}(d).
    }
    \label{fig:before_avg}
\end{figure}

\begin{figure}
    \centering
    \includegraphics[width=0.99\textwidth]{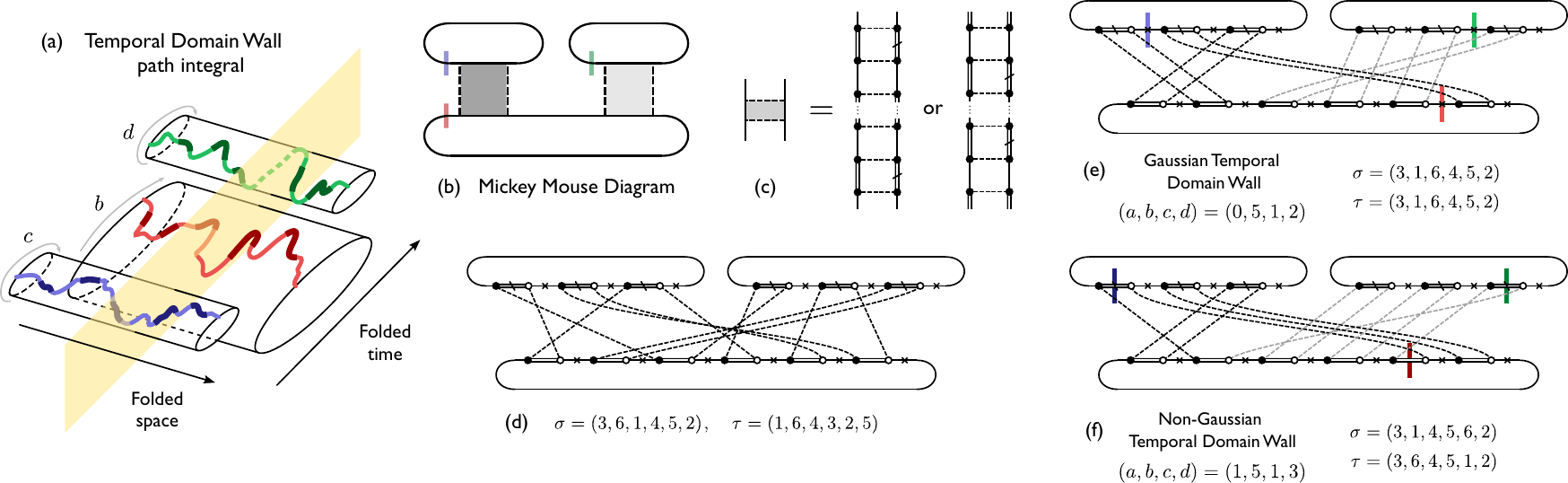}
\caption{
\textbf{Diagrammatic representation of the TopSFF after averaging} in the presence of a spatially extended topological defect for parity-symmetric quantum many-body systems.
(a) Ensemble averaging of TopSFF leads to an effective statistical mechanical problem of temporal domain walls, which lives on the loops labelled by $b$, $c$, and $d$.
(b) The leading local diagrams are the ``Mickey Mouse'' diagram, where the dashed ribbons represent a series of contractions as illustrated in (c).
(d) A representative of an ensemble averaged local diagram, labelled by the permutation data \(\sigma\) and \(\tau\).
In the limit of large local Hilbert space dimension $q$, the leading diagrams are (e) Gaussian temporal domain wall diagram ($a=0$) and 
(f) non-Gaussian temporal domain wall diagram ($a=1$).
}
    \label{fig:after_avg}
\end{figure}

\textbf{Ensemble average of Haar-random unitaries.}
The RPM ensemble average is performed in two steps: first, by integrating over the Haar-random unitary gates, or equivalently over the circular unitary ensemble (CUE), and second, by averaging over the random phases in the coupling gates.
For \(q \times q\) unitary matrices \(u\) drawn from the Haar measure on the unitary group, the ensemble average is given by
\be\label{eq:Weingarten_raw_form}
\overline{u_{a_1 b_1}\cdots u_{a_n b_n} 
        u^{\ast}_{c_1 d_1}\cdots u^{\ast}_{c_n d_n} }
    = \sum_{\sigma,\tau \in S_n} 
      \mathrm{Wg}(\sigma^{-1}\tau,q) 
      \prod_{j=1}^n 
        \delta_{a_j, c_{\sigma(j)}} 
        \delta_{b_j, d_{\sigma'(j)}} \, .
\ee
Here \(S_n\) is the symmetric group on \(n\) elements. The Weingarten function \(\mathrm{Wg}\) depends only on the cycle structure \(\{c_1,\dots,c_k\}\) of the permutation \(\sigma^{-1}\tau\), where \(c_\ell\) denotes the length of the \(\ell\)-th cycle. For example, for \(n=1\), one finds \(\mathrm{Wg}(\{1\},q)=1/q\), while for \(n=2\), the two possible cycle structures give \(\mathrm{Wg}(\{1,1\},q)=1/(q^2-1)\) and \(\mathrm{Wg}(\{2\},q)=-1/[q(q^2-1)]\).
Explicit expressions for \(\mathrm{Wg}\) can be obtained recursively~\cite{Brouwer1996}.
Importantly, the asymptotic behaviour of the Weingarten function depends only on the total number of cycles of \(\sigma^{-1}\tau\), 
\(\wg[\{c_1,c_2,\dots,c_k\};q]=O(q^{k-2\sum_{i=1}^k c_i})\).
We call a permutation pair \((\sigma,\tau)\) \textit{Gaussian} if all loop lengths of $\sigma^{-1}\tau$ are one, i.e. \(c_m=1\) for all \(m\), and \textit{non-Gaussian} otherwise.

The ensemble average over Haar-random unitary matrices admits a natural diagrammatic representation~\cite{Brouwer1996} as illustrated in Fig.~\ref{fig:before_avg}, which is used to analyse many-body quantum chaotic systems in~\cite{chan2018solution}. 
Specifically, from the tensor network diagrammatic representation of observables like TopSFF, each unitary gate drawn from the CUE is replaced by a pair of filled and empty dots, connected by double lines, as shown in Fig.~\ref{fig:before_avg}(a). Its complex conjugate is similarly represented except that a slash is added to distinguish the conjugation as in Fig.~\ref{fig:before_avg}(b). Delta functions in \eqref{eq:Weingarten_raw_form} are represented by dashed lines in Fig.~\ref{fig:before_avg}(c), which we will refer to as contractions. The random phase gates, denoted by crosses,  couple a local diagram at a given site with another local diagram at another site. Together, in the folded representation, the local diagrams in a single effective site are given by Fig.~\ref{fig:before_avg}(d) after folding, and by Fig.~\ref{fig:before_avg}(e) after unwrapping.

The permutation pair \((\sigma,\tau)\in S_{2t}\times S_{2t}\) associated with each summand in Eq.~\eqref{eq:Weingarten_raw_form} has a direct diagrammatic interpretation: \(\sigma\) specifies the contractions between the black dots of the unitary gates and those of the complex-conjugated unitary gates, while \(\tau\) specifies the corresponding contractions between the white dots. We label the black and white dots of the unitary gates along the \(b\)-loop by \(1,2,\ldots,2t\), as shown in Fig.~\ref{fig:before_avg}(e). The black and white dots of the complex-conjugated unitary gates are labelled by \(1,\ldots,t\) along the \(c\)-loop and by \(t+1,\ldots,2t\) along the \(d\)-loop. With this convention, the permutation \(\sigma\) maps the \(m\)-th black dot of the unitary gates to the \(\sigma(m)\)-th black dot of the complex-conjugated unitary gates, while \(\tau\) analogously maps the \(m\)-th white dot to the \(\tau(m)\)-th white dot. Hereafter, we write permutations in one-line notation, e.g. \(\sigma=(\sigma(1),\ldots,\sigma(2t))\in S_{2t}\), and similarly for \(\tau\).

A local TopSFF diagram is algebraically represented by a permutation pair
\((\sigma,\tau)\in S_{2t}\times S_{2t}\), where \(\sigma\) specifies the
contractions of black dots and \(\tau\) specifies the contractions of white dots. There is a natural topological equivalence relation~\cite{wu2025} among subsets of diagrams with identical scaling order in the local Hilbert-space dimension \(q\). Intuitively, this relation is obtained by first suppressing the distinction between black and white dots, and then identifying diagrams related by cyclic rotations along the \(b\)-, \(c\)-, and \(d\)-loops. 

\begin{definition}
\label{app_def:topo}
\textbf{Topological equivalence classes of TopSFF diagrams.}
Given a local TopSFF diagram represented by
\((\sigma,\tau)\in S_{2t}\times S_{2t}\), we construct a single uncoloured
contraction permutation \(\Pi_{\sigma,\tau}\in S_{4t}\) as follows. In
Fig.~\ref{fig:before_avg}(e), we order the \(4t\) endpoints on the \(b\)-loop
by interlacing black and white dots, so that the \((2m-1)\)-st and \(2m\)-th
endpoints correspond to the black and white dots at position \(m\), respectively.
We use the same convention for the \(c\)- and \(d\)-loops, with the first
\(2t\) endpoints belonging to the \(c\)-loop and the last \(2t\) endpoints
belonging to the \(d\)-loop. The uncoloured contraction permutation associated
with \((\sigma,\tau)\) is then defined by
\begin{equation}
\Pi_{\sigma,\tau}(2m-1)=2\sigma(m)-1,\qquad
\Pi_{\sigma,\tau}(2m)=2\tau(m),
\qquad m=1,\ldots,2t ,
\end{equation}
i.e., \(\Pi_{\sigma,\tau}\) records the contraction pattern after the
black/white distinction has been removed. Let \([x]_n:=1+\operatorname{Mod}(x-1,n)\). We define cyclic rotations of the
uncoloured endpoints by
\begin{equation}
R_b^{(r)}(i):=[i+r]_{4t},
\end{equation}
for the \(b\)-loop, and
\begin{equation}
R_c^{(s)}(i):=
\begin{cases}
[i+s]_{2t}, & 1\leq i\leq 2t,\\
i, & 2t+1\leq i\leq 4t,
\end{cases}
\qquad
R_d^{(u)}(i):=
\begin{cases}
i, & 1\leq i\leq 2t,\\
2t+[i-2t+u]_{2t}, & 2t+1\leq i\leq 4t,
\end{cases}
\end{equation}
for the \(c\)- and \(d\)-loops, respectively.
Two local TopSFF diagrams \((\sigma,\tau)\) and \((\sigma',\tau')\) are said to
belong to the same \textit{topological equivalence class} if their uncoloured
contraction permutations are related by cyclic rotations of the three loops,
i.e. if there exist integers \(r,s,u\) such that
\begin{equation}
\Pi_{\sigma',\tau'}
=
R_c^{(s)}
\circ
R_d^{(u)}
\circ
\Pi_{\sigma,\tau}
\circ
R_b^{(r)} .
\end{equation}
Here composition is understood from right to left. Equivalently, the
topological class of \((\sigma,\tau)\) is the orbit of
\(\Pi_{\sigma,\tau}\) under independent cyclic rotations of the \(b\)-,
\(c\)-, and \(d\)-loops.
\end{definition}

This definition preserves the cyclic ordering of endpoints on each loop, but
suppresses the black/white distinction. In particular, Gaussian and
non-Gaussian contraction patterns may belong to the same topological equivalence
class. Note that, depending on the symmetries of the ensemble, additional
operations may also be included, such as generalizations of relative twists between
\(\Tr[\hat U(t)]\) and \(\Tr[\hat U^\dagger(t)]\) in Ref.~\cite{wu2025}, in which topological equivalence classes in systems with time reversal
symmetry are defined.

\vspace{0.5cm}
\noindent\textit{Example.}
For panels (e) and (f) in Fig.~\ref{fig:after_avg}, we have \(t=3\).
The Gaussian diagram in panel (e) has
\(\sigma_{(e)}=\tau_{(e)}=(3,1,6,4,5,2)\), while the non-Gaussian diagram
in panel (f) has
\(\sigma_{(f)}=(3,1,4,5,6,2)\) and
\(\tau_{(f)}=(3,6,4,5,1,2)\).
The corresponding uncoloured contraction permutations, written in one-line
notation, are
\(\Pi_{(e)}=(5,6,1,2,11,12,7,8,9,10,3,4)\) and
\(\Pi_{(f)}=(5,6,1,12,7,8,9,10,11,2,3,4)\).
They are related by the explicit cyclic rotations
\(\Pi_{(f)}=R_c^{(5)}\circ R_d^{(1)}\circ \Pi_{(e)}\circ R_b^{(1)}\).
Thus panels (e) and (f) lie in the same topological equivalence class once the
black/white distinction is forgotten. The fact that panel (e) has
\(\sigma=\tau\), whereas panel (f) has \(\sigma\neq\tau\), reflects the distinction between Gaussian and non-Gaussian temporal domain walls, but they both belong to the Mickey Mouse topological equivalence class.

\textbf{Leading local diagrams -- Mickey Mouse diagrams.} 
Let \(a\in\{0,1\}\), \(b\in\{1,\ldots,2t\}\), and \(c,d\in\{1,\ldots,t\}\). We use the one-based modulo convention \([x]_n:=1+\operatorname{Mod}(x-1,n)\), so that \([x]_n\in\{1,\ldots,n\}\). Let \(\rho_n\) denote the right cyclic shift on an \(n\)-tuple, \(\rho_n(x_1,x_2,\ldots,x_n):=(x_n,x_1,\ldots,x_{n-1})\), and let \(\rho_n^r\) denote its \(r\)-fold composition, with powers to be modulo \(n\). For two tuples \(X=(x_1,\ldots,x_m)\) and \(Y=(y_1,\ldots,y_n)\), we define their concatenation by \(X\Vert Y:=(x_1,\ldots,x_m,y_1,\ldots,y_n)\).
Denote the $c$-fold right cyclic shift on the tuple $(1,2,\dots, t)$, and $d$-fold right cyclic shift on the tuple $(t+1,t+2,\dots, 2t)$ respectively as
\[
f_c:=\rho_t^{\,c-1}(1,2,\ldots,t),
\qquad
f_d:=\rho_t^{\,d-1}(t+1,t+2,\ldots,2t),
\]
and the $b$-fold-shifted and concatenated object as 
\[
P(b,c,d)
:=
\rho_{2t}^{\,b-1}\bigl(f_c\Vert f_d\bigr) \in S_{2t}.
\]

\begin{lemma}
\label{app_lemma:leadingtsff}
\textbf{Leading local topological spectral form factor (TopSFF) diagrams for generic  many-body quantum chaotic systems with spatially extended topological defects.} Consider the ensemble average of the TopSFF $K_{\topo}(t,L)= \mathrm{Tr}[ \hat{\mathcal{D}} (\hat{U}\otimes \hat{U}^*) ] = \Tr [\hat{S} \hat{U}] \Tr[\hat{U}]$, where the spatially extended topological defect is $\hat{\mathcal{D}}=\hat{S}\otimes\iden$ with global swap operator $\hat{S}$. For the parity-inversion-symmetric Random Phase Model with discrete time translation symmetry, defined in App.~\ref{app:parity_ptRPM}, the leading local TopSFF diagrams in the large-\(q\) expansion are the permutation pairs
\((\sigma_{abcd},\tau_{abcd})\in S_{2t}\times S_{2t}\) given by
\begin{equation}
\begin{aligned}\label{app_eq:mickey_mouse_permutations}
\sigma_{abcd}=& \, P(b,c,d),
\\
\tau_{abcd}
=& 
\begin{cases}
P(b,c,d),
& a=0,\\[1mm]
P\!\left([b-1]_{2t},[c+1]_{t},[d+1]_{t}\right),
& a=1.
\end{cases}    
\end{aligned}
\end{equation}
with \(a\in\{0,1\}\), \(b\in\{1,\ldots,2t\}\), and \(c,d\in\{1,\ldots,t\}\). There are in total $4t^3$ leading TopSFF diagrams. These leading local TopSFF diagrams are topologically equivalent to the Mickey Mouse diagram shown in Fig.~\ref{fig:after_avg}(b), represented by the equivalence class of the reference diagram
\begin{equation}
(\sigma_{0111},\tau_{0111}) .
\end{equation}
\end{lemma}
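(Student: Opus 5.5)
Each local diagram $(\sigma,\tau)$ at a folded site will be assigned a power of $q$, and the lemma reduces to maximising that power over $S_{2t}\times S_{2t}$. Then we check that the maximisers are exactly the pairs in Eq.~\eqref{app_eq:mickey_mouse_permutations}.

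\textbf{Power counting.} At a folded site the $b$-loop carries $2t$ copies of $u$ and the $c$- and $d$-loops carry $2t$ copies of $u^*$. So the Haar average \eqref{eq:Weingarten_raw_form} is over $n=2t$ pairs. A summand $(\sigma,\tau)$ contributes $q^{N_{\mathrm{loop}}(\sigma,\tau)}\,\Wg(\sigma^{-1}\tau,q)$. Here $N_{\mathrm{loop}}$ is the number of closed index loops formed by the contractions together with the worldlines of the unwrapped diagram in Fig.~\ref{fig:before_avg}(e). Concretely, on the $b$-loop the white dot at position $m$ connects to the black dot at $m+1$, with the trace closing $2t\to 1$. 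On the $c$- and $d$-loops the analogous steps are cyclic with period $t$.

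I will therefore write $N_{\mathrm{loop}}=C(\gamma_b^{-1}\sigma^{-1}\gamma_{cd}\tau)$. Here $\gamma_b=(1\,2\cdots 2t)$, $\gamma_{cd}=(1\cdots t)(t+1\cdots 2t)$, and $C$ counts cycles, exactly as in Eq.~\eqref{eq:RMT_three_trace_Wg}. Using $\Wg(\pi;q)=O(q^{C(\pi)-4t})$, the exponent is $E(\sigma,\tau)=C(\gamma_b^{-1}\sigma^{-1}\gamma_{cd}\tau)+C(\sigma^{-1}\tau)-4t$.

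The random phase gates multiply each diagram by an $O(1)$ factor at large $q$ and do not change the $q$-counting. Their effect is only to couple neighbouring sites through $\mathcal B$. So the leading local diagrams are the maximisers of $E$.

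\textbf{Upper bound.} I will use the triangle inequality for the Cayley distance $|\pi|=2t-C(\pi)$, giving
\[
|\gamma_b^{-1}\gamma_{cd}|\le|\gamma_b^{-1}\sigma^{-1}\gamma_{cd}\tau|+|\sigma^{-1}\tau|+|\text{conjugation defect}|.
\]
More cleanly, I will write $\gamma_b^{-1}\sigma^{-1}\gamma_{cd}\tau=\gamma_b^{-1}(\sigma^{-1}\gamma_{cd}\sigma)(\sigma^{-1}\tau)$. Then
\[
|\gamma_b^{-1}(\sigma^{-1}\gamma_{cd}\sigma)|\le|\gamma_b^{-1}\sigma^{-1}\gamma_{cd}\tau|+|\sigma^{-1}\tau|.
\]
The left side is at least $|\gamma_b|-|\gamma_{cd}|=(2t-1)-(2t-2)=1$, because conjugation preserves cycle type. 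This yields $E\le -1$, i.e.\ order $q^{-1}$, consistent with the RMT count in Eq.~\eqref{eq:RMT_G_nG}.

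Equality requires two things. First, $\sigma^{-1}\gamma_{cd}\sigma$ must lie on a geodesic from $e$ to $\gamma_b$, i.e.\ be a non-crossing two-block refinement of the $2t$-cycle into two arcs of length $t$. Second, $\sigma^{-1}\tau$ must lie on the geodesic between $\gamma_b^{-1}\sigma^{-1}\gamma_{cd}\sigma$ and $e$. The first element is a single transposition, so $\sigma^{-1}\tau\in\{e\}\cup\{\text{that transposition}\}$.

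\textbf{Classification.} The non-crossing condition means $\sigma$ maps two consecutive arcs of the $b$-loop, of length $t$ each, cyclic-order-preservingly onto the $c$- and $d$-loops. The free data are the starting point $b\in\{1,\dots,2t\}$ on the $b$-loop and independent rotations $c,d\in\{1,\dots,t\}$ on each small loop. This is precisely $\sigma=P(b,c,d)$.

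The two choices for $\sigma^{-1}\tau$ give $a=0$ ($\tau=\sigma$, Gaussian) and $a=1$. In the $a=1$ case, the transposition swaps the two arc-boundary endpoints. Composing gives $\tau=P([b-1]_{2t},[c+1]_t,[d+1]_t)$, which I will verify by direct index chasing.

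Counting gives $2\cdot 2t\cdot t^2=4t^3$ diagrams. Topological equivalence to $(\sigma_{0111},\tau_{0111})$ under Definition~\ref{app_def:topo} then follows as in the worked example: the rotations $R_b,R_c,R_d$ shift $b,c,d$. For $a=1$ the explicit relation $\Pi_{(f)}=R_c^{(\cdot)}R_d^{(\cdot)}\Pi_{(e)}R_b^{(1)}$ generalises, since after uncolouring the $a=1$ diagram is the $a=0$ diagram rotated by one endpoint.

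\textbf{Main obstacle.} The hard step is the equality case of the geodesic argument. I must show that no other $\sigma$ achieves $C(\gamma_b^{-1}\sigma^{-1}\gamma_{cd}\sigma)=2$ with both small cycles non-crossing. I also need that the subsequent $\tau$-choice is exactly the single transposition and not another element with the same cycle count.

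I would first settle this via the standard non-crossing-partition characterisation of geodesics under $\gamma_b$. This characterisation says the product of $\gamma_b^{-1}$ with a two-cycle-type permutation has two cycles only if the two cycles are consecutive arcs. I would cross-check by brute enumeration at $t=2,3$. A secondary care point is confirming that the RPM phase average cannot promote subleading diagrams, since it is $q$-independent.
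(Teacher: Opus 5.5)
Your proof is correct, but it takes a genuinely different route from the paper's.

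\textbf{The paper's route.} The paper argues diagrammatically, importing the contraction-addition argument of Chan, De Luca and Chalker. Every admissible diagram must contain at least one $b$--$c$ and one $b$--$d$ inter-loop contraction, and this caps the local order at $O(q^{-1})$. The only order-preserving further additions are ladders, so the maximisers are Mickey Mouse diagrams. These are then enumerated by locating the unique temporal domain wall, whose black/white parity is $a$, and applying the $2t\cdot t\cdot t$ independent loop rotations.

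\textbf{Your route.} You work entirely in $S_{2t}$. Set $x=\gamma_b^{-1}\sigma^{-1}\gamma_{cd}\tau$ and $y=\sigma^{-1}\tau$, with right-to-left composition; in that convention your loop count is indeed correct. Then $E=-(|x|+|y|)$. Since $xy^{-1}=\gamma_b^{-1}\sigma^{-1}\gamma_{cd}\sigma$ is an odd permutation, $E\le-1$. Equality holds iff $\{x,y\}=\{e,T\}$, where $T=\gamma_b^{-1}\sigma^{-1}\gamma_{cd}\sigma$ is a transposition.

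\textbf{What each buys.} Your argument gives a self-contained, fully rigorous classification:
\begin{itemize}
\item the Gaussian/non-Gaussian dichotomy is simply which of $x,y$ carries $T$, i.e.\ Weingarten class $\{1^{2t}\}$ versus $\{2,1^{2t-2}\}$;
\item the count $4t^3=2\times t\times 2t^2$ comes from the $t$ admissible splittings and the centraliser of $\gamma_{cd}$, of order $2t^2$;
\item the non-Gaussian partner is obtained mechanically as $\tau=\sigma T$;
\item it shows why exactly the same contractions appear in the zero-dimensional CUE sum \eqref{eq:RMT_three_trace_Wg}.
\end{itemize}
The paper's route, by contrast, directly produces the ladder and domain-wall picture that is used afterwards to construct $\boltz$.

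\textbf{Your ``main obstacle''.} This step is elementary and needs no noncrossing-partition machinery. The condition $|\gamma_b^{-1}\gamma'|=1$ means $\gamma'=\gamma_b(i\,j)$. That permutation is the product of the two cyclic arcs $(i,j+1,\dots,i-1)$ and $(j,i+1,\dots,j-1)$, and both have length $t$ iff $j\equiv i+t \pmod{2t}$. Explicitly, for $\sigma=P(b,c,d)$ one gets $T=([b-1]_{2t}\;[b+t-1]_{2t})$. Then $\sigma T=P([b-1]_{2t},[c+1]_t,[d+1]_t)$, as claimed.

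Finally, drop your first displayed inequality with the undefined ``conjugation defect''; the rewritten version is all you need.
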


\noindent \textit{Examples.} A Gaussian leading local TopSFF diagram of Floquet TRS RPM at $t=3$ is given by Fig.~\ref{fig:after_avg} (e) with $(a,b,c,d)= (0,5,1,2)$ and $\sigma = \tau = (3,1,6,4,5,2)$. A non-Gaussian leading local TopSFF diagram of Floquet TRS RPM at $t=3$ is given by Fig.~\ref{fig:after_avg} (f) with $(a,b,c,d)= (1,5,1,3)$, $\sigma = (3,1,4,5,6,2)$, and $\tau = (3,6,4,5,1,2)$. An example of non-leading local TopSFF diagrams is given by Fig.~\ref{fig:after_avg} (d).

\vspace{0.5cm}
\begin{proof}
The proof utilizes techniques developed in \cite{chan2018solution}. The representative diagrams shown in Fig.~\ref{fig:after_avg}(e,f) are of order \(O(q^{-1})\), so the leading local diagrams are at least of order \(O(q^{-1})\). From the diagram before averaging in Fig.~\ref{fig:before_avg}, observe that any admissible local TopSFF diagram must contain at least two inter-loop contractions: one connecting the \(b\)-loop to the \(c\)-loop, and one connecting the \(b\)-loop to the \(d\)-loop. By the contraction-addition argument of Ref.~\cite{chan2018solution}, this implies that every local TopSFF diagram is of order at most \(O(q^{-1})\). It remains to identify the diagrams that saturate this bound. Starting from the minimal inter-loop contractions, every subsequent contraction addition must preserve the \(q\)-order. The only order-preserving additions are of ladder type, as illustrated in Fig.~\ref{fig:after_avg}(b): consecutive endpoints on the \(c\)-loop must be contracted to consecutive endpoints on the \(b\)-loop, and similarly consecutive endpoints on the \(d\)-loop must be contracted to consecutive endpoints on the \(b\)-loop. Hence every leading local TopSFF diagram lies in the topological equivalence class of the Mickey Mouse diagram in Fig.~\ref{fig:after_avg}(b), equivalently in the same topological class as the representative diagrams in Fig.~\ref{fig:after_avg}(e,f).
To enumerate all leading TopSFF diagrams in the topological equivalence class of the Mickey Mouse diagram, we consider the contraction $\Pi_{\sigma,\tau}$ defined in \ref{app_def:topo}. 
Given a Mickey Mouse diagram, $\Pi_{\sigma,\tau}$, we observe that there is a
unique adjacent pair of endpoints $(i,i-1)$ such that the dots $i$ and $i-1$ from the \(b\)-loop are contracted to a dot from the \(c\)-loop, and a dot from the \(d\)-loop respectively. Equivalently, we identify $i$-th dot such that $1\leq \Pi_{\sigma,\tau}(i)\leq 2t$ and $2t+1\leq \Pi_{\sigma,\tau}(i-1)\leq 4t$. $i$ can either be even (a black dot) or odd (a white dot), which we label as $a=0$ and $a=1$. See Fig.~\ref{fig:after_avg}(e,f) for examples with $a=0$ and $a=1$ respectively. For each value of \(a\), independent cyclic rotations of the three loops produce further valid leading diagrams: there are \(2t\) rotations of the
\(b\)-loop, \(t\) rotations of the \(c\)-loop, and \(t\) rotations of the
\(d\)-loop. Therefore the total number of leading local TopSFF diagrams is
\(2\times (2t)\times t\times t = 4t^3\). Using the notation \(a\in\{0,1\}\), \(b\in\{1,\ldots,2t\}\), and \(c,d\in\{1,\ldots,t\}\), these rotations give precisely the permutation pairs \((\sigma_{abcd},\tau_{abcd})\) stated in
Eq.~\eqref{app_eq:mickey_mouse_permutations}. This completes the proof.
\end{proof}

\textbf{Temporal domain walls (tDW).} 
Given a Mickey Mouse diagram represented by
the permutation \(\Pi_{\sigma,\tau}\), there is a unique
adjacent pair of endpoints \((i,[i-1]_{4t})\) along the \(b\)-loop such that the
endpoint \(i\) is contracted to the \(c\)-loop, while the endpoint
\([i-1]_{4t}\) is contracted to the \(d\)-loop. Equivalently, \(i\) is
identified by \(1\leq \Pi_{\sigma,\tau}(i)\leq 2t\) and \(2t+1\leq \Pi_{\sigma,\tau}([i-1]_{4t})\leq 4t\). With the convention in Fig.~\ref{fig:before_avg}(e), $i$ can either be even (a black dot) or odd (a white dot), which we label as $a=0$ and $a=1$. We can associate a domain wall at the bond $(i,[i-1]_{4t})$, and we call this domain wall a \textit{temporal domain wall} (tDW), since it separates two domains along folded time. Recall that we call a permutation pair \((\sigma,\sigma')\) Gaussian if all loop lengths of $\sigma^{-1}\tau$ are one, i.e. \(c_m=1\) for all \(m\), and non-Gaussian otherwise. The index \(a\) determines the Gaussian or non-Gaussian character of the tDW diagram:
\begin{itemize}
\item \textit{Gaussian tDW} \((a=0)\): the diagrams
    \(\{(\sigma_{0bcd},\tau_{0bcd})\}\), where the associated Weingarten
    factor is \(\wg(\{1^{2t}\};q)\) and positive at the leading order in the local Hilbert space dimension \(q\).
    \item \textit{Non-Gaussian tDW} \((a=1)\): the diagrams
    \(\{(\sigma_{1bcd},\tau_{1bcd})\}\), where the associated Weingarten
    factor is \(\wg(\{1^{2t-2},2\};q)\) and negative at the leading order in the local    Hilbert space dimension \(q\).
\end{itemize}
Representative examples of a Gaussian tDW diagram with \(a=0\) and a
non-Gaussian tDW diagram with \(a=1\) are shown in
Fig.~\ref{fig:after_avg}(e,f), respectively, where the tDW
on the \(b\)-loop are indicated by red lines.

\subsection{Boltzmann factor $\boltz$ in position space}\label{app:boltz_pos_space}
We now consider the average over two-site random coupling gates, acting on two neighbouring large-\(q\) TopSFF configurations
\(\rho_1=(a_1,b_1,c_1,d_1)\) and \(  \rho_2=(a_2,b_2,c_2,d_2)\). Here \(a_i\in\{0,1\}\), \(b_i=1,\ldots,2t\), and \(c_i,d_i=1,\ldots,t\).  The label \(a_i=0\) denotes a Gaussian tDW, while \(a_i=1\) denotes a non-Gaussian tDW.  We use the  modulo convention
\(
    [m]_n:=1+\operatorname{Mod}(m-1,n)
\).
Each configuration \(\rho=(a,b,c,d)\) is associated with an ordered contraction list
\(
    \mathcal C_{bcd}
    =
    \{(m,n)\}
\). This is an ordered list of pairs
\((m,n)\), where \(m\in\{1,\ldots,2t\}\) labels an endpoint on the
\(b\)-loop, indicated by the dotted lines in Fig.~\ref{fig:before_avg}(e),
while \(n\) labels the corresponding endpoint on the combined \(c\)- and
\(d\)-loops. These identifications arise from the delta functions in
\eqref{eq:Weingarten_raw_form}. Explicitly, the contraction pattern
associated with \(\rho=(a,b,c,d)\) is given by
\begin{align}
    \mathcal C_{bcd}
    :=
    &\left\{
    \left([b-1+i]_{2t},[c-1+i]_{t}\right)
    \,:\,
    i=1,\ldots,t
    \right\}
    \nonumber\\
    &\cup
    \left\{
    \left([b-1+i+t]_{2t},t+[d-1+i]_{t}\right)
    \,:\,
    i=1,\ldots,t
    \right\}.
    \label{eq:config_list_abcd}
\end{align}
Given \(\mathcal C_{bcd}\), we define the associated
permutation \(\pi_{\rho}\in S_{2t}\) by 
\(\pi_{\rho}(m)=n\) defined from 
\(    (m,n)\in \mathcal C_{bcd}\). 
The distinction between Gaussian and non-Gaussian tDW enters in that for $a=0$, colour variables where the two ladder contractions in the Mickey Mouse diagram separate, i.e., $[b-1+2t]_{2t}$ and $[b-1+t]_{2t}$, need to be identified. This identification is algebraically written according to the rule
\begin{equation}
    R_{a=0,b}^{Z}:
    \qquad
    Z_{[b-1+2t]_{2t}}
    \longmapsto
    Z_{[b-1+t]_{2t}},
    \label{eq:replacement_rule_a0}
\end{equation}
where \(Z\) denotes the formal colour variables associated with the folded site under consideration.  For \(a=1\), no such replacement is made, i.e., \(R_{a=1,b}^{Z}=\varnothing \).
Now consider two neighbouring configurations
\(\rho_1=(a_1,b_1,c_1,d_1)\) and \(\rho_2=(a_2,b_2,c_2,d_2)\).
We introduce the explicit colour labels \(X_m\) for the \(b\)-loop endpoints of the first site and \(Y_m\) for those of the second site, with \(m=1,\ldots,2t\). The phase entering the two-site average is
\begin{equation}
    \Phi_{\rho_1,\rho_2}
    =
    \left.
    \left(
           \sum_{m=1}^{2t}
    \theta_{X_m,Y_m}
    -
        \sum_{m=1}^{2t}
    \theta_{
        X_{\pi_{\rho_1}^{-1}(m)},
        Y_{\pi_{\rho_2}^{-1}(m)}}
    \right)
    \right|_{R^{X}_{a_1,b_1},\,R^{Y}_{a_2,b_2}} 
    ,
    \label{eq:phase_difference_rho}
\end{equation}
where after the contractions are imposed, the endpoint \(m\) on the combined
\(c\cup d\) loops is pulled back to the \(b\)-loop by the inverse permutations \(\pi_{\rho_1}^{-1}\) and \(\pi_{\rho_2}^{-1}\), and  we apply the rules $R^X$ and $R^Y$ after the pullback. Now expand
\(
    \Phi_{\rho_1,\rho_2}
    =
    \sum_{r,s}
    n_{rs}(\rho_1,\rho_2)\,
    \theta_{X_r,Y_s}\), 
    with 
    \(n_{rs}(\rho_1,\rho_2)\in\mathbb Z\). Since the \(\theta_{X_r,Y_s}\) are independent real Gaussian variables
with variance \(\epsilon\), we have 
\begin{equation}
\overline{  e^{i\Phi_{\rho_1,\rho_2}}}
    =
    \exp\left[
        -\frac{\epsilon}{2}
        \sum_{r,s} n_{rs}(\rho_1,\rho_2)^2
    \right].
    \label{eq:gaussian_phase_average_rho}
\end{equation}
Combining the two-site phase average with the one-site Haar averages gives the generalized Boltzmann factor
\begin{equation}
    \boltz({\rho_1, \rho_2})
    = q^{2t}
    \sqrt{ \wg(\sigma_1^{-1}\tau_1;q)}\,
    \exp\left[
        -\frac{\epsilon}{2}
        \sum_{r,s} n_{rs}(\rho_1,\rho_2)^2
    \right]
    \sqrt{\wg\!\left(\sigma_2^{-1}\tau_2;q\right)},
    \label{eq:boltz_finite_q_loop_colour_boxed}
\end{equation}
where we follow the convention in \eqref{app_eq:k_top_def_floq}, in which a trivial factor of \(q^{-1}\) is extracted outside \(\boltz\).
For example, when \(t=2\), \(\rho_1=(0,4,2,2)\) and
\(\rho_2=(1,3,2,1)\), the above construction gives
\(\Phi_{\rho_1,\rho_2}
=
\theta_{X_1,Y_3}
-
\theta_{X_1,Y_4}
-
\theta_{X_4,Y_3}
+
\theta_{X_4,Y_4}\).
Hence, the generalized Boltzmann factor gives
\(\boltz_{\rho_1,\rho_2}=i e^{-2\epsilon}\).

With this construction, the TopSFF \eqref{app_eq:tsff_floq_par_def} for  \(U_{\text{f-p-RPM}}\) in Eq.~\eqref{app_eq:par_rpm_temprand} can be evaluated exactly in the large-$q$ limit as  
\be\label{app_eq:k_top_def_floq}
\ba
\overline{K}_{\spatdef} & \, =:   q^{-\Leff -1} \overline{\Knormalized}_{\spatdef} + O(q^{-{\Leff} -2}),
\\
\overline{\mathcal{K}}_{\spatdef}  
& \, =  
\phi^{T}\boltz^{\Leff}\phi
:=
\sum_{\rho_1 ,\rho_2}\phi(\rho_1)\left[\boltz^{\Leff}\right]_{\rho_1 \rho_2}\phi(\rho_2),
\ea
\ee
where $\boltz$ and $\phi$ are respectively the generalized Boltzmann factor encoding for many-body interactions and the  boundary state under the folded representation. The effective length is $\Leff = L/2-1$ for even $L$. For odd $L$, TopSFF can be evaluated analogously.
The generalized Boltzmann factor $\boltz(\rho_1, \rho_2)$ depends only on the relative coordinates between $\rho_1$ and $\rho_2$ along the $b$-, $c$- and $d$-loops. Define the relative coordinates 
$\Delta b := (b_2 - b_1) \pmod {2t}$, 
$\Delta c := (c_2 - c_1) \pmod {t}$, and 
$\Delta d := (d_2 - d_1) \pmod {t}$. 
Translational invariance implies that the Boltzmann factor can be written as
\be\label{eq:boltz_pos_space}
\ba
\boltz
(\rho_1, \rho_2) 
 =&\; \boltz_{\text{G-G}}(\rho_1, \rho_2)
  + 
  \boltz_{\text{G-nG}}(\rho_1, \rho_2)
  + 
  \boltz_{\text{nG-nG}}(\rho_1, \rho_2) 
 \\
= &\;
\left[ \boltz_{a_1 a_2} (\Delta b, \Delta c, \Delta d) \right]_{a_1, a_2 = 0,1}
=
\begin{bmatrix}
   \boltz_{00}  & \boltz_{01} 
   \\
   \boltz_{10}  & \boltz_{11} 
\end{bmatrix}
.
\ea
\ee
where the block in the Boltzmann factor  has the following  interpretation:
\begin{itemize}
\item $\boltz_{\text{G-G}}(\rho_1, \rho_2)$: Amplitudes for the Gaussian tDW to remain as Gaussian tDW,
\item $\boltz_{\text{nG-nG}}(\rho_1, \rho_2)$: Amplitudes for the non-Gaussian tDW to remain as non-Gaussian tDW,
\item $\boltz_{\text{G-nG}}(\rho_1, \rho_2)$: Amplitudes for the DW type conversion.
\end{itemize}

In the nG-nG sector with $a_1=a_2=1$, the Boltzmann factor accounts for the many-body interactions between two non-Gaussian tDWs, and is given by
\begin{equation}\label{eq:boltz_ngng}
\begin{aligned}
\boltz_{\text{nG-nG}}(\rho_1,\rho_2)
=& \; -\delta_{a_1,1}\delta_{a_2,1}
\mu^{4t - 2 \,\xi \eta},
\\
\eta(\rho_1,\rho_2) =& \; \eta(\Delta b)
:= \left| t - \Delta b \right|,
\\
\xi(\rho_1,\rho_2) =& \; \xi(\Delta b, \Delta c, \Delta d) 
=  \delta^{(t)}_{\Delta b,\Delta c}  + \delta^{(t)}_{\Delta b,\Delta d}.
\end{aligned}
\end{equation}
where $\eta$ is the cyclic distance on the length-$2t$ cycle up to a shift, and $\delta^{(t)}_{x,y} = 1$ if $x \equiv y \ (\mathrm{mod}\ t)$ and $0$ otherwise. $\mu := e^{-\epsilon/2}$ parameterizes the interaction strength, where $\epsilon$ is the variance of the random phases in the RPM.

In the G-nG sector with $a_1 + a_2 =1$,  the Boltzmann factor for the mixed Gaussian-non-Gaussian sector is given by 
\begin{equation}\label{eq:boltz_gng}
\begin{aligned}
\boltz_{\mathrm{G\text{-}nG}}(\rho_1,\rho_2)
=& \;
\delta_{a_1+a_2,1} \,
i \mu^{4t-2\,\xi\left(\eta
+\Theta\left( \left[(a_2-a_1)\Delta b \right]\pmod{2t} - t \right) \right)}, 
\end{aligned}
\end{equation}
Here $\Theta$ is the Heaviside step function with the convention $\Theta(0)=1$.

In the G-G sector with $a_1 = a_2 =0$, the Boltzmann factor for the Gaussian-Gaussian sector is given by
\begin{equation}\label{eq:boltz_gg}
\begin{aligned}
\boltz_{\mathrm{G\text{-}G}}(\rho_1,\rho_2)
=& \;    \delta_{a_1,0}\,\delta_{a_2,0}\Big[
\delta_{\Delta b,0}\,\mu^{E_0}
+\delta_{\Delta b,t}\,\mu^{E_t}
+\big(1-\delta_{\Delta b,0} - \delta_{\Delta b,t}\big)\,
\mu^{E_\times}
\Big],
\\
E_0
=& \; 2t\big[(1-\delta_c)+(1-\delta_d)\big]
+2\alpha(1-\delta_c)(1-\delta_d),
\\
E_t
=&\; 4t-2(\delta_c+\delta_d)
+2\alpha(1-\delta_c)(1-\delta_d),
\\
E_\times
=& \;
4t + 2\alpha \, \delta_c \delta_d  -(\eta+1)2\xi
\end{aligned}
\end{equation}
where we defined $\delta_b := \delta_{b_1,b_2} = \delta_{\Delta b,0}$, and similarly for $\delta_c$ and $\delta_d$.

\subsection{Properties of $\boltz$}\label{app:boltz_pos_prop}
The generalized Boltzmann factor $\boltz$  has the following properties:
\begin{itemize}
    \item 
\textbf{$\mathcal{PT}$ symmetry.}
The Boltzmann factor $\boltz$ obeys an antiunitary $\mathcal{PT}$ symmetry,
\begin{equation}
(\mathcal{PT})\,\boltz\,(\mathcal{PT})^{-1}=\boltz
\qquad\Longleftrightarrow\qquad
\mathcal P\boltz^{*} \mathcal P^{-1}=\boltz ,
\label{eq:PT_def_T_matrix}
\end{equation}
where $\mathcal{T}$ denotes the complex conjugation in the $\{|a,b,c,d\rangle\}$ basis, and $\mathcal P$ is a unitary involution which acts as the Pauli-$z$ matrix on the $a$-space,
\begin{equation}
\mathcal P\ket{a,b,c,d} := (-1)^a\,\ket{a,b,c,d},
\qquad\Longleftrightarrow\qquad
\mathcal P = \sigma_z \otimes \iden_{2t}\otimes \iden_t\otimes \iden_t.
\label{eq:P_def_sigma_z}
\end{equation} 
This symmetry condition
$\mathcal P\boltz^*\mathcal P^{-1}=\boltz$ is equivalent to $\boltz_{00}$ and $\boltz_{11}$ being real while $\boltz_{01}$ and
$\boltz_{10}$ being purely imaginary, which is satisfied by G-G, G-nG, and nG-nG Boltzmann factors. In summary, the Boltzmann factor $\boltz$ has an antiunitary $\mathcal{PT}$-type symmetry with complex conjugation $\mathcal T$ and $\mathcal P=\sigma_z\otimes \iden$.

\item \textbf{Reality of $\tr\boltz$ and $\det\boltz$, and spectral pairing.}
From the $\mathcal{PT}$ constraint $\mathcal P\boltz^{*}\mathcal P^{-1}=\boltz$ and the facts that trace and determinant are invariant under similarity transformations, we obtain the reality condition of the trace and determinant of the Boltzmann factor,
\begin{equation}
\tr\boltz
=\tr(\boltz^{*})
=(\tr\boltz)^{*},
\qquad
\det\boltz
=\det(\boltz^{*})
=(\det\boltz)^{*}.
\end{equation}
As a consequence of the reality of $\tr\boltz$ and $\det\boltz$, the characteristic polynomial has real coefficients, so the spectrum is closed under complex conjugation, i.e. eigenvalues are real or come in conjugate pairs. Alternatively, since $\boltz$ obeys the $\mathcal{PT}$ constraint, if $\boltz v=\lambda v$, then taking complex conjugation gives $\boltz^{*}v^{*}=\lambda^{*}v^{*}$, and multiplying by $\mathcal P$ yields
\begin{equation}
\boltz\,(\mathcal P v^{*})
=\mathcal P\,\boltz^{*}v^{*}
=\lambda^{*}\,(\mathcal P v^{*}).
\label{eq:pairing_evalues}
\end{equation}
Hence $\lambda^{*}$ is also an eigenvalue of $\boltz$. Therefore the spectrum is
closed under complex conjugation, i.e., eigenvalues are either real or occur in
complex-conjugate pairs. 

\item \textbf{Exchange symmetry under $c\leftrightarrow d$.}
In the position space representation, the domain wall physics is symmetric under the exchange of $c$ and $d$. Consequently, we have
\begin{equation}
{\boltz}_{a_1 a_2}(\Delta b, \Delta c, \Delta d) = 
{\boltz}_{a_1 a_2}(\Delta b, \Delta d, \Delta c) \,.
\label{eq:boltz_cd_exchange_position}
\end{equation}

\item \textbf{Reality of $\overline{\Knormalized}_{\spatdef}$.} 
Recall that  $\overline{\mathcal{K}}_{\spatdef} = \phi^T \boltz^{\Leff} \phi$.
If the boundary states are $\mathcal{PT}$ invariant in the sense
\begin{equation}
\mathcal P \mathcal{T} \phi= \mathcal P \phi^{*} =\phi,
\qquad
\phi^{T} (\mathcal{PT})^{-1} = \phi^{\dagger}\mathcal P=\phi^T.
\label{eq:PT_invariant_boundaries}
\end{equation}
Then, using $\mathcal P\boltz^{*}\mathcal P^{-1}=\boltz$,
\begin{align}
\overline{\Knormalized}_{\spatdef}^{*}
&=\phi^\dagger \left(\boltz^{\Leff}\right)^{*}{\phi}^{*}
= \phi^\dagger  \mathcal P\boltz^{\Leff}\mathcal P {\phi}^{*}
={\phi}^T\boltz^{\Leff}{\phi}
=\overline{\Knormalized}_{\spatdef}.
\label{eq:amplitude_real}
\end{align}
Therefore, $\overline{\Knormalized}_{\spatdef}\in\mathbb{R}$ whenever the boundary states are $\mathcal{PT}$ invariant.

\end{itemize}

\subsection{Boltzmann factor $\widetilde{\boltz}$ in momentum space}\label{app:boltz_mom_space}
Utilizing the translational invariance of  ${\boltz}$, we can express the TopSFF in terms of the generalized Boltzmann factor in the momentum space as
\begin{equation} \label{eq:master-unitary} 
\overline{\Knormalized}_{\spatdef}=
\phi^T \boltz^{\Leff} \phi = \sum_{k_b,k_c,k_d} \widetilde{\phi}(-k)^T\, \widetilde{\boltz}(k)^{\Leff}\, \widetilde{\phi}(k).
\end{equation}
 The discrete Fourier transform of $\ket{\phi_{\mathrm i}}$ and ${\boltz}$ are defined as
\begin{equation}
\label{eq:four_trans_def}
\begin{aligned}
\widetilde{\phi}(a;k_b,k_c,k_d)
:=& \;
\frac{1}{\sqrt{2t^{3}}}
\sum_{b=0}^{2t-1}\sum_{c=0}^{t-1}\sum_{d=0}^{t-1}
e^{i(k_b b+k_c c+k_d d)}\,\phi(a,b,c,d), 
\\
\widetilde{\boltz}_{a_1 a_2}(\mathbf{k})
:=& \;
\sum_{\Delta b=0}^{2t-1}
\sum_{\Delta c=0}^{t-1}
\sum_{\Delta d=0}^{t-1}
\boltz_{a_1 a_2}(\Delta b,\Delta c,\Delta d)\,
e^{-i(k_b \Delta b + k_c \Delta c + k_d \Delta d)}.
\end{aligned}
\end{equation}
where \(k_b = 2\pi n_b/2t\) with \(n_b=0,1,\dots,2t-1\), 
\(k_c = 2\pi n_c/ t\) with \(n_c=0,1,\dots,t-1\), and 
\(k_d = 2\pi n_d/t \) with \(n_d=0,1,\dots,t-1\). 

Now we perform the Fourier transform of  $\boltz$ in Eq.~\eqref{eq:boltz_pos_space} to obtain $\widetilde{\boltz}(k_b,k_c,k_d)$  for general momenta.
Recall that many sectors depend on $(\Delta c,\Delta d)$ through
$\xi(\Delta b,\Delta c,\Delta d)=\delta^{(t)}_{\Delta b,\Delta c}+\delta^{(t)}_{\Delta b,\Delta d}$ with 
$\delta^{(t)}_{x,y}=1$ if  $x\equiv y\pmod t$,
and on $\Delta b$ through some integer-valued function $f(\Delta b)$. Fix $\Delta b$ and define $r:=\Delta b\pmod t\in\{0,\dots,t-1\}$, so that
$\delta^{(t)}_{\Delta b,\Delta c}=\delta_{\Delta c,r}$ and $\delta^{(t)}_{\Delta b,\Delta d}=\delta_{\Delta d,r}$.
Setting $s:=\mu^{-2 f(\Delta b)}$, we can compute the Fourier transform over $\Delta b$ and $\Delta c$ as 
\begin{equation}\label{eq:F_form}
\mathcal{F}(k_c,k_d;r,s)
:=
\sum_{\Delta c,\Delta d=0}^{t-1}
e^{-i(k_c\Delta c+k_d\Delta d)} \mu^{-2\xi f(\Delta b)}
=
\Big[S_t(k_c)+(s-1)e^{-ik_cr}\Big]\Big[S_t(k_d)+(s-1)e^{-ik_dr}\Big],
\end{equation}
where $S_t(k):=\sum_{x=0}^{t-1}e^{-ikx}=t\delta_{k,0\,(\mathrm{mod}\ 2\pi)}$. 

As a first step, we perform the Fourier transform over $c$ and $d$. For the nG-nG sector with $a_1=a_2=1$, we perform the Fourier transform on Eq.~\eqref{eq:boltz_ngng}, giving
\begin{equation}
\widetilde{\boltz}_{11}(k_b,k_c,k_d)
=
-\mu^{4t}\sum_{\Delta b=0}^{2t-1}
e^{-ik_b\Delta b}\,
\mathcal{F}\!\left(k_c,k_d;\ r(\Delta b),\ \mu^{-2\eta(\Delta b)}\right).
\label{eq:T11_general_compact}
\end{equation}
For the G-nG sector with $a_1+a_2=1$, introduce the convention $p:=a_2-a_1\in\{+1,-1\}$, so that $p=1$ corresponds to $(a_1,a_2)=(0,1)$, and $p=-1$ corresponds to $(a_1,a_2)=(1,0)$.
Recall that $\Gamma_p(\Delta b):=\Theta\big[(p\Delta b)\pmod{2t} - t\big]$ with $\Theta(0)=1$, and $f_p(\Delta b):=\eta(\Delta b)+\Gamma_p(\Delta b)$.
Then we can perform the Fourier transform on Eq.~\eqref{eq:boltz_gng}, giving
\be
\widetilde{\boltz}_{a_1 a_2}(k_b,k_c,k_d)
=
i\mu^{4t}\sum_{\Delta b=0}^{2t-1}
e^{-ik_b\Delta b}\,
\mathcal{F}\!\left(k_c,k_d;\ r(\Delta b),\ \mu^{-2 f_{p}(\Delta b)}\right),
\label{eq:T01_general_compact}
\ee
To perform the Fourier transform on Eq.~\eqref{eq:boltz_gg}, we consider the Fourier contributions for $\Delta b=0$, $\Delta b=t$, and $\Delta b\neq t$ separately. 
For $\Delta b=0$ and $\Delta b=t$, the Fourier contributions are 
\be
\begin{aligned}
\mathcal{G}_0(k_c,k_d)
&:=
\sum_{\Delta c,\Delta d=0}^{t-1}e^{-i(k_c\Delta c+k_d\Delta d)}\mu^{E_0}
\\
& =  
1+\mu^{2t}\big[(S_t(k_c)-1)+(S_t(k_d)-1)\big]
+\mu^{4t+2\alpha}(S_t(k_c)-1)(S_t(k_d)-1),
\\
\mathcal{G}_t(k_c,k_d)
&:=
\sum_{\Delta c,\Delta d=0}^{t-1}e^{-i(k_c\Delta c+k_d\Delta d)}\mu^{E_t}
\\
&=
\mu^{4t-4}
+\mu^{4t-2}\big[(S_t(k_c)-1)+(S_t(k_d)-1)\big]
+\mu^{4t+2\alpha}(S_t(k_c)-1)(S_t(k_d)-1).
\label{eq:Gt_general}
\end{aligned}
\ee
For $\Delta b\neq 0,t$, the Boltzmann factor becomes  $\mu^{E_\times} = \mu^{4t}\mu^{2\alpha \delta_c \delta_d}\mu^{-2(\eta + 1)\xi}$. We can then utilize \eqref{eq:F_form} with $f_\times(\Delta b):=\eta(\Delta b)+1$, and the Fourier transform can be evaluated as
\begin{equation}
\mathcal{G}_\times(\Delta b;\,k_c,k_d)
:=
\sum_{\Delta c,\Delta d}e^{-i(k_c\Delta c+k_d\Delta d)}\mu^{E_\times}
=
\mu^{4t}\Big[
\mathcal{F}\!\left(k_c,k_d;\ r(\Delta b),\ \mu^{-2 f_\times(\Delta b)}\right)
+(\mu^{2\alpha}-1)
\Big].
\label{eq:Gx_general}
\end{equation}
Together, we have
\begin{equation}
\widetilde{\boltz}_{00}(k_b,k_c,k_d)
=
\mathcal{G}_0(k_c,k_d)
+e^{-ik_b t}\,\mathcal{G}_t(k_c,k_d)
+\sum_{\substack{\Delta b=0\\ \Delta b\neq 0,t}}^{2t-1}
e^{-ik_b\Delta b}\,
\mathcal{G}_{\times}(\Delta b;\,k_c,k_d).
\label{eq:T00_general_compact}
\end{equation}

Next, we evaluate the remaining sum over $\Delta b$. We introduce $x:=\mu^{-2}$. It is convenient to define the following geometric sums
\begin{equation}
\begin{aligned}
\mathsf{E}(k):=& \, \sum_{r=1}^{t-1}e^{-ikr}
=
\begin{cases}
\frac{e^{-ik}\left[1-e^{-ik(t-1)}\right]}{1-e^{-ik}}, & k\not\equiv 0\ (\mathrm{mod}\ 2\pi),\\
t-1,& k\equiv 0\ (\mathrm{mod}\ 2\pi),
\end{cases}
\\
\mathsf{G}(k;y):=& \, \sum_{r=1}^{t-1}y^r e^{-ikr}
=
\frac{y e^{-ik}\big(1-(y e^{-ik})^{t-1}\big)}{1-y e^{-ik}},
\\
\mathsf{J}(k;\alpha,\beta;\,m)
:=& \, 
\sum_{r=1}^{t-1}\Big(x^{\alpha r+\beta}-1\Big)^{m} e^{-ikr},
\end{aligned}
\end{equation}
where $\alpha\in\{+1,-1\}$, $\beta\in\mathbb{Z}$ and  $m\in\{1,2\}$. We further define
\begin{align}
J_{1}(k)
&:=\mathsf{J}(k;\,+1,0;\,1)=\sum_{r=1}^{t-1}(x^{r}-1)e^{-ikr}
=\mathsf{G}(k;x)-\mathsf{E}(k),
\\
J_{2}(k)
&:=\mathsf{J}(k;\,+1,0;\,2)=\sum_{r=1}^{t-1}(x^{r}-1)^2e^{-ikr}
=\mathsf{G}(k;x^2)-2\mathsf{G}(k;x)+\mathsf{E}(k),
\\
\check{J}_{1}(k)
&:=\mathsf{J}(k;\,-1,t;\,1)=\sum_{r=1}^{t-1}(x^{t-r}-1)e^{-ikr}
=e^{-ikt}\Big(\mathsf{G}(-k;x)-\mathsf{E}(-k)\Big),
\\
\check{J}_{2}(k)
&:=\mathsf{J}(k;\,-1,t;\,2)=\sum_{r=1}^{t-1}(x^{t-r}-1)^2e^{-ikr}
=e^{-ikt}\Big(\mathsf{G}(-k;x^2)-2\mathsf{G}(-k;x)+\mathsf{E}(-k)\Big),
\\
J_{1}^{+}(k)
&:=\mathsf{J}(k;\,+1,1;\,1)=\sum_{r=1}^{t-1}(x^{r+1}-1)e^{-ikr}
=x\,\mathsf{G}(k;x)-\mathsf{E}(k),
\\
J_{2}^{+}(k)
&:=\mathsf{J}(k;\,+1,1;\,2)=\sum_{r=1}^{t-1}(x^{r+1}-1)^2e^{-ikr}
=x^2\mathsf{G}(k;x^2)-2x\,\mathsf{G}(k;x)+\mathsf{E}(k),
\\
\check{J}_{1}^{+}(k)
&:=\mathsf{J}(k;\,-1,t+1;\,1)=\sum_{r=1}^{t-1}(x^{t-r+1}-1)e^{-ikr}
=x\,e^{-ikt}\mathsf{G}(-k;x)-\mathsf{E}(k),
\\
\check{J}_{2}^{+}(k)
&:=\mathsf{J}(k;\,-1,t+1;\,2)=\sum_{r=1}^{t-1}(x^{t-r+1}-1)^2e^{-ikr}
=x^2 e^{-ikt}\mathsf{G}(-k;x^2)-2x e^{-ikt}\mathsf{G}(-k;x)+\mathsf{E}(k).
\end{align}
with all sums running from $r=1,\dots,t-1$.  For the G-nG sector, it is convenient to introduce the $p$-dependent combinations 
\begin{align}
J_m^{(p)}(k) &:= \frac{1+p}{2}\,J_m(k) + \frac{1-p}{2}\,J_m^{+}(k),
\\
\check{J}_m^{(p)}(k) &:= \frac{1+p}{2}\,\check{J}_m^{+}(k) + \frac{1-p}{2}\,\check{J}_m(k).
\end{align}
Direct evaluations of the Fourier transforms in $b$ space give the following closed expressions. 
Define the combined phases $k_{bc}:=k_b+k_c$,  $k_{bd}:=k_b+k_d$, $k_{bcd}:=k_b+k_c+k_d$. Denote the summation of phases as $S_c:=S_t(k_{c})$, $S_d:=S_t(k_{d})$ where $S_t(k):=\sum_{x=0}^{t-1}e^{-ikx}=t\delta_{k,0\ (\mathrm{mod}\ 2\pi)}$. 
For $a_1 = a_2 = 0$,
\begin{equation}
\begin{aligned}
&\widetilde{\boltz}_{00}(k_b,k_c,k_d)
=
\,\,\mathcal{G}_0(k_c,k_d)
+(-1)^{n_b}\,\mathcal{G}_t(k_c,k_d)
\\
&\qquad +\mu^{4t}\Big[
S_c S_d\,(1+(-1)^{n_b})\,\mathsf{E}(k_b)
+ S_c\big(\check{J}_{1}^{+}(k_{bd})+(-1)^{n_b} J_{1}^{+}(k_{bd})\big)
+ S_d\big(\check{J}_{1}^{+}(k_{bc})+(-1)^{n_b} J_{1}^{+}(k_{bc})\big)
\\
&\qquad 
+\big(\check{J}_{2}^{+}(k_{bcd})+(-1)^{n_b} J_{2}^{+}(k_{bcd})\big)
\Big]
+\mu^{4t}(\mu^{2\alpha}-1)\Big(2t\,\delta_{n_b,0}-(1+(-1)^{n_b})\Big).
\end{aligned}
\label{eq:T00_general_closed_J}
\end{equation}
For $a_1 = a_2 =1$,
\begin{equation}
\begin{aligned}
\widetilde{\boltz}_{11}(k_b,k_c,k_d)
= -\mu^{4t}\Big[
&S_c S_d\,(2t\,\delta_{n_b,0})
\\
&+ S_c\Big( (\mu^{-2t}-1)+\check{J}_{1}(k_{bd})+(-1)^{n_b} J_{1}(k_{bd})\Big)
\\
&+ S_d\Big( (\mu^{-2t}-1)+\check{J}_{1}(k_{bc})+(-1)^{n_b} J_{1}(k_{bc})\Big)
\\
&+ \Big( (\mu^{-2t}-1)^2+\check{J}_{2}(k_{bcd})+(-1)^{n_b} J_{2}(k_{bcd})\Big)
\Big].
\end{aligned}
\label{eq:T11_general_closed_J}
\end{equation}
Lastly, for $a_1 + a_2 = 1$,
\begin{equation}
\begin{aligned}
\widetilde{\boltz}_{a_1 a_2}(k_b,k_c,k_d)
= i\mu^{4t}\Big[
&S_c S_d\,(2t\,\delta_{n_b,0})
\\
&+ S_c\Big( (\mu^{-2t}-1)+(-1)^{n_b}(\mu^{-2}-1)
+\check{J}_{1}^{(p)}(k_{bd})+(-1)^{n_b} J_{1}^{(p)}(k_{bd})\Big)
\\
&+ S_d\Big( (\mu^{-2t}-1)+(-1)^{n_b}(\mu^{-2}-1)
+\check{J}_{1}^{(p)}(k_{bc})+(-1)^{n_b} J_{1}^{(p)}(k_{bc})\Big)
\\
&+ \Big( (\mu^{-2t}-1)^2+(-1)^{n_b}(\mu^{-2}-1)^2
+\check{J}_{2}^{(p)}(k_{bcd})+(-1)^{n_b} J_{2}^{(p)}(k_{bcd})\Big)
\Big],
\end{aligned}
\label{eq:T_mixed_general_closed_Jp}
\end{equation}

\subsection{Properties of $\widetilde{\boltz}$}\label{app:boltz_mom_prop}

$\widetilde{\boltz}(k_b,k_c,k_d)$ has the following properties, some of which are representation independent and overlap with Appendix~\ref{app:boltz_pos_prop}, but we include them here for completeness. $\widetilde{\boltz}(k_b,k_c,k_d)$ possesses $\mathcal{PT}$ symmetry at certain momentum points, which we describe in the next section.

\begin{itemize}
\item \textbf{Class AI time reversal symmetry (TRS).}
For any real $y$ and any $k$, the geometric sums satisfy $\mathsf{E}(k)^*=\mathsf{E}(-k)$, and
$\mathsf{G}(k;y)^*=\mathsf{G}(-k;y)$.
Consequently, for $m=1,2$, we have $J_m(k)^* = J_m(-k)$, and similarly for $\check{J}_m$, $J_m^{+}$, $\check{J}_m^{+}$, $J_m^{(p)}$, and $\check{J}_m^{(p)}$. The diagonal sectors of $\widetilde{\boltz}_{a_1 a_2}(k_b, k_c, k_d)$ are real linear combinations of the above geometric sums, whereas its off-diagonal sectors have an overall factor $i$. Therefore, $\widetilde{\boltz}_{a_1 a_2}(k_b, k_c, k_d)$ satisfy the following antiunitary symmetry,
\begin{equation}
(\mathcal{PT}) \widetilde{\boltz}(k_b,k_c,k_d) (\mathcal{PT})^{-1}
:=
\sigma_z\,\widetilde{\boltz}(k_b,k_c,k_d)^{*}\,\sigma_z
=
\widetilde{\boltz}(-k_b,-k_c,-k_d)\,,
\label{eq:antiu_constraint}
\end{equation}
where $\sigma_z$ is the Pauli-$z$ matrix acting on the $a$-space, and $(\cdot)^*$ denotes complex conjugation. This symmetry transformation is a class AI TRS symmetry transformation since 
\be
(\mathcal{PT})^2 = 1\, .
\ee
Relatedly, at a given momentum sector $(k_b,k_c,k_d)$, $\mathcal{PT}$ symmetry of $\widetilde{\boltz}(k_b,k_c,k_d)$ is  $\widetilde{\boltz}(k_b,k_c,k_d)=\sigma_z\, \widetilde{\boltz}^{*}(k_b,k_c,k_d)\, \sigma_z$, with $\mathcal{P}$ set to be $\sigma_z$ acting on $a$-space, and $\mathcal{T}$ set to be complex conjugation. It is clear that $\mathcal{PT}$ symmetric sectors include the sectors satisfying $(k_b,k_c,k_d)= (-k_b,-k_c,-k_d)$. We will discuss the precise criteria in the next section.

\item \textbf{Spectral pairing under momentum inversion.}
The spectral consequence of Eq.~\eqref{eq:antiu_constraint} is that eigenvalues are paired between momentum-inverted sectors. Suppose that
$\widetilde{\boltz}(k_b,k_c,k_d) v= \lambda(k_b,k_c,k_d)v$ .
Taking the complex conjugate and then multiplying by $\sigma_z$, one finds
\begin{align}
\widetilde{\boltz}(-k_b,-k_c,-k_d)\,(\sigma_z v^*)
=
\sigma_z
\widetilde{\boltz}(k_b,k_c,k_d)^*
v^*
\nonumber
=
\lambda(k_b,k_c,k_d)^*
(\sigma_z v^*) .
\end{align}
Therefore
\begin{equation}
\lambda(k_b,k_c,k_d)\in
\mathrm{Spec}\,\widetilde{\boltz}(k_b,k_c,k_d)
\quad \Longrightarrow \quad
\lambda(k_b,k_c,k_d)^*
\in
\mathrm{Spec}\,\widetilde{\boltz}(-k_b,-k_c,-k_d).
\end{equation}

\item \textbf{Exchange symmetry under $c\leftrightarrow d$.}
In the position space representation, the domain wall physics is symmetric under the exchange of $c$ and $d$. Consequently, $\widetilde{\boltz}$ is symmetric under exchanging $k_c$ and
$k_d$, i.e.,
\begin{equation}
\widetilde{\boltz}_{a_1 a_2}(k_b,k_c,k_d)=\widetilde{\boltz}_{a_1 a_2}(k_b,k_d,k_c) \, . 
\label{eq:boltz_cd_exchange}
\end{equation}

\item \textbf{Reality of $\det\widetilde{\boltz}(k)$ and $\tr\widetilde{\boltz}(k)$, and spectral pairing within each momentum sector}.
We observe that in the $a$-space, $\widetilde{\boltz}(k)$ can be brought to a generalized $\mathcal{PT}$ dimer form, 
\begin{equation}
\widetilde{\boltz}(k)=
\begin{pmatrix}
A(k) & i\,B(k)\\
i\,C(k) & D(k)
\end{pmatrix},
\qquad
A(k),B(k),C(k),D(k)\in\mathbb{R},
\qquad
\Longrightarrow 
\qquad
\det\widetilde{\boltz}(k), \tr\widetilde{\boltz}(k)
\in\mathbb{R},
\label{eq:PT_dimer_general_form}
\end{equation}
i.e. $\det \widetilde{\boltz}(k)$ is always real. The characteristic polynomial of $\widetilde{\boltz}$ is
$\chi(\lambda)=\lambda^2-(\tr\widetilde{\boltz})\,\lambda+\det\widetilde{\boltz}$, so if $\tr\widetilde{\boltz},\det\widetilde{\boltz}\in\mathbb{R}$, then $\chi(\lambda)$ has real coefficients. Hence the eigenvalues are either both real or form a complex-conjugate pair. Define the discriminant $\Delta := (\tr\widetilde{\boltz})^2-4\det\widetilde{\boltz}$. Then the eigenvalues are
\be
\lambda_{\pm}
=\frac{\tr\widetilde{\boltz}\pm\sqrt{\Delta}}{2}.
\ee
Therefore, we have
\begin{itemize}
\item If $\Delta>0$, then $\lambda_{\pm}\in\mathbb{R}$ are distinct.
\item If $\Delta=0$, then $\lambda_{+}=\lambda_{-}=\tfrac12\tr\widetilde{\boltz}$ is degenerate. This is the candidate locus for an exceptional point (EP), which occurs when the matrix is not diagonalizable at the degeneracy.
\item If $\Delta<0$, then $\lambda_{\pm}$ form a complex-conjugate pair.
\end{itemize}

\item \textbf{Bragg peaks and parity in $k_b$.}
The momentum dependence enters $\widetilde{\boltz}$ only through the elementary phases
$k_b,k_c,k_d$ and the combined momenta
$k_{bc}:=k_b+k_c$, $k_{bd}:=k_b+k_d$, and $k_{bcd}:=k_b+k_c+k_d$,
together with the Bragg factors
\be
S_c:=S_t(k_c),\qquad S_d:=S_t(k_d),\qquad
S_t(k):=\sum_{x=0}^{t-1}e^{-ikx}=t\,\delta_{k,0 \pmod{2\pi}}.
\ee
Since $S_t(k)$ vanishes for generic $k$ but jumps to $t$ at reciprocal-lattice
momenta $k\equiv 0 \pmod{2\pi}$, $\widetilde{\boltz}(k_b,k_c,k_d)$
exhibits Bragg-peak-like features in $(k_c,k_d)$.

Furthermore, recall that we defined $k_b=2\pi n_b/2t$, so that $(-1)^{n_b}=e^{-ik_b t}$. The dependence on the parity of $n_b$ is equivalently a dependence on the half translation phase $e^{-i k_b t}$ along the length-$2t$ loop in $b$. Many contributions in TopSFF appear with $1\pm(-1)^{n_b}$, thereby splitting the $k_b$-dependence into two parity sectors.

\end{itemize}

\subsection{$\mathcal{PT}$ symmetry of $\widetilde{\boltz}$}\label{app:pt_sym}

\begin{figure}
    \centering
    \includegraphics[width=0.5\linewidth]{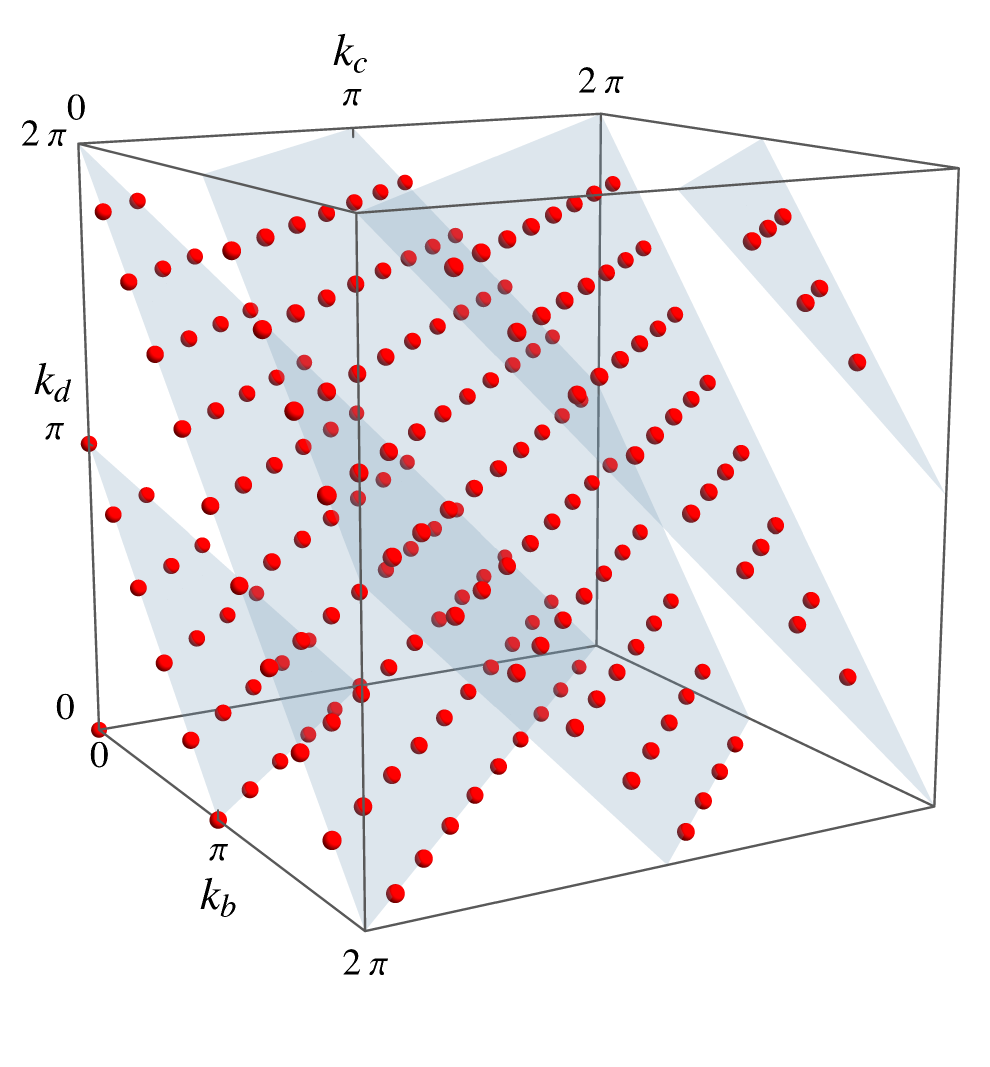}
    \caption{\textbf{EP planes.} Momentum sectors (red dots) at which the generalized Boltzmann factor $\Ttilde$ has $\mathcal{PT}$ symmetry according to Eq.~\eqref{app_eq:pt_sym}.}
    \label{app_fig:ep_mom_sectors}
\end{figure}

Set $\mathcal{P}$ to be $\sigma_z$ acting on $a$-space, and $\mathcal{T}$ to be complex conjugation. $\widetilde{\boltz}(k_b,k_c,k_d)$ is $\mathcal{PT}$ symmetric if 
\begin{equation}\label{app_eq:pt_sym_con_mom}
\widetilde{\boltz}(k_b,k_c,k_d)=\sigma_z\, \widetilde{\boltz}^{*}(k_b,k_c,k_d)\, \sigma_z.
\end{equation}
Firstly, note that the momentum points  
$(k_b,k_c,k_d)=(-k_b,-k_c,-k_d)$ 
and
$(k_b,k_c,k_d)=(-k_b,-k_d,-k_c)$ lead to $\mathcal{PT}$ symmetry due to the antiunitary symmetry \eqref{eq:antiu_constraint} and exchange symmetry \eqref{eq:boltz_cd_exchange}. We will show that this is a subset of the full $\mathcal{PT}$ symmetric momentum sectors.

\begin{proposition}
\label{prop:PT_classification_full}
Let $\mathcal P=\sigma_z$ act on the $a$-space, and let $\mathcal T$ denote complex conjugation.
For the Boltzmann factor $\widetilde{\boltz}(k_b,k_c,k_d)$ in Eqs.~\eqref{eq:T00_general_closed_J}--\eqref{eq:T_mixed_general_closed_Jp},
one has for generic $\mu$,
\begin{equation}\label{app_eq:pt_sym}
\widetilde{\boltz}(k_b,k_c,k_d)\ \text{is $\mathcal{PT}$ symmetric}
\iff
k_b+k_c+k_d\equiv 0 \pmod{\pi}.
\end{equation}
\end{proposition}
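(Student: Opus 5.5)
The plan is to reduce the matrix condition \eqref{app_eq:pt_sym_con_mom} to scalar reality conditions. With $\mathcal P=\sigma_z$, the identity $\widetilde{\boltz}=\sigma_z\widetilde{\boltz}^*\sigma_z$ is equivalent to $\widetilde{\boltz}_{00},\widetilde{\boltz}_{11}\in\mathbb R$ together with $\widetilde{\boltz}_{01},\widetilde{\boltz}_{10}\in i\mathbb R$. The prefactors in Eqs.~\eqref{eq:T00_general_closed_J}--\eqref{eq:T_mixed_general_closed_Jp} are $\pm\mu^{4t}$, real, and $i\mu^{4t}$, purely imaginary. The statement therefore becomes: every bracketed combination is real if and only if $k_{bcd}\equiv 0\pmod\pi$. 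I will treat $x=\mu^{-2}$ as a formal variable. For generic $\mu$, the monomials $x^n$ appearing are linearly independent over $\mathbb C$. Hence a combination is real for generic $\mu$ if and only if the coefficient of each power $x^n$ is real, and I will work coefficient by coefficient.

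\textbf{Sufficiency.} I will organize the brackets by Bragg factors. The $S_cS_d$ terms carry only $\delta_{n_b,0}$ and real constants, so they are real automatically. The $S_c$ terms survive only when $k_c\equiv 0$, in which case their argument is $k_{bd}=k_{bcd}$; likewise the $S_d$ terms live at $k_{bc}=k_{bcd}$. The remaining term already has argument $k_{bcd}$. Every nontrivial phase therefore enters through a single momentum $k:=k_{bcd}$ and the sign $s:=(-1)^{n_b}=e^{-ik_bt}=e^{-ikt}$, using $k_ct,k_dt\in2\pi\mathbb Z$. Each $J$-type sum is a finite sum $\sum_{r=1}^{t-1}c_r(x)e^{-ikr}$ with real polynomial coefficients $c_r(x)$. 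If $k\equiv 0\pmod\pi$, then $e^{-ikr}=\pm1$ for every integer $r$, so each $\mathsf E,\mathsf G,J,\check J,J^{\pm},J^{(p)}$ is real. The explicit prefactors $e^{-ikt}=s$ and $(-1)^{n_b}$ are also real, and so each bracket is real. This direction is essentially immediate once the momentum dependence has been funnelled into $k_{bcd}$ and $s$.

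\textbf{Necessity.} This is the step I expect to be the main obstacle. Cancellations are possible: a pair $\check J+sJ$ can be real even for non-real phases, because $\check J(k)=e^{-ikt}\overline{J(k)}$ and $e^{ikt}=s$. The sufficiency argument only shows reality when the phases are real, so it does not rule out such cancellations elsewhere. I will therefore pick one entry where no cancellation can occur and extract a single coefficient. I plan to use $\widetilde{\boltz}_{01}$, taken in the generic sector $S_c=S_d=0$, and read off the coefficient of the lowest nontrivial power of $x$. In $J_2^{(p)}+\check J_2^{(p)}$ the power $x^{1}$, or $x^{2}$ after the shift in $J^{+}$, arises from $r=1$ in one sum and $r=t-1$ in the other. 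The resulting coefficient is a short combination such as $e^{-ik}+c\,e^{-ik(t-1)}$, with an explicit real $c$ that depends on $p$ and on $s$. I will then compute its imaginary part directly, $-\sin k-c\sin(k(t-1))$, and use $e^{-ikt}=s$ to rewrite $\sin(k(t-1))=-s\sin k$. The aim is to show that for this particular entry the effective $c$ differs from the value $s$ at which a cancellation would occur. In that case the imaginary part is a nonzero multiple of $\sin k$, and reality forces $\sin k=0$, which is exactly $k_{bcd}\equiv 0\pmod\pi$.

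If this coefficient turns out to be degenerate, for example because the $J^{+}$ versus $J$ asymmetry does not survive in the $x^1$ coefficient, I will fall back on the $x^{1}$ or $x^{2}$ coefficient of $\widetilde{\boltz}_{00}$. There the $J_2^{+}$ and $\check J_2^{+}$ terms come with the explicit non-conjugated $\mathsf E(k)$ piece. Reality of a quantity like $(1+s)\mathsf E(k)$ is then checked via the reflection $r\to t-r$, and it fails unless $k\in\pi\mathbb Z$. Finally, I will note separately that the edge cases $t=1,2$, where the sums over $r$ are empty or have a single term, must be verified by hand.
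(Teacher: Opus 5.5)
Your sufficiency direction is the paper's argument: route all phases through $k_{bcd}$, using $S_c\neq0\Rightarrow k_{bd}=k_{bcd}$ and similarly for $S_d$, and observe that every phase is then $\pm1$. One small slip: the $S_cS_d$ term of $\widetilde{\boltz}_{00}$ is $(1+(-1)^{n_b})\mathsf E(k_b)$, not only real constants. This is harmless, because $S_cS_d\neq0$ forces $k_b=k_{bcd}$.

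\textbf{Gap in necessity.} You restrict to $S_c=S_d=0$ and read off the \emph{lowest} power of $x=\mu^{-2}$. In that sector the plan works, and more simply than you expect. For $p=+1$ the sum $\check J_2^{+}$ has no non-constant power below $x^2$. So the $x^1$ coefficient of the $\widetilde{\boltz}_{01}$ bracket is just $-2s(1+e^{-ik_{bcd}})$, i.e.\ your $c=0$.

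The proposition, however, concerns every sector, and you never treat those with $k_c=0$ or $k_d=0$ but $k_{bcd}\not\equiv0\pmod\pi$. There the $S_c$ and $S_d$ brackets also contain $x^1$ terms, namely $(-1)^{n_b}(\mu^{-2}-1)$ and the $r=1$ term of $J_1$. These shift the $x^1$ coefficient as follows:
\begin{itemize}
\item to $(2t-2)\,s\,(1+e^{-ik_{bcd}})$ when $k_c=k_d=0$;
\item to $(t-2)\,s\,(1+e^{-ik_{bcd}})$ when exactly one of $k_c,k_d$ vanishes.
\end{itemize}
The second expression is identically zero at $t=2$. For instance, at $(k_b,k_c,k_d)=(\pi/2,0,\pi)$ the full bracket of $\widetilde{\boltz}_{01}$ is $(1+i)\,x^2(x^2-1)$, and $\widetilde{\boltz}_{10}$ degenerates in the same way. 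So your test is inconclusive in sectors such as $(\pi/2,0,\pi)$ and $(3\pi/2,\pi,0)$, which the paper singles out as bifurcating without $\mathcal{PT}$ symmetry. Your edge-case hedge about empty or single-term sums does not address this Bragg-sector contamination.

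The fallback does not rescue it either. On the momentum lattice $(1+s)\mathsf E(k)$ is always real: $s=1$ forces $k\in 2\pi\mathbb Z/t$, hence $\mathsf E(k)\in\{-1,t-1\}$.

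\textbf{How the paper avoids this.} The paper reads off the \emph{top} power $x^{2t}$ instead. The $S_c$ and $S_d$ brackets have degree at most $t$ in $x$, and the $S_cS_d$ term is constant. So the $x^{2t}$ coefficient of the $\widetilde{\boltz}_{01}$ bracket comes only from the $S$-free term $(\mu^{-2t}-1)^2+(-1)^{n_b}(\mu^{-2}-1)^2+\check J_2^{+}(k_{bcd})+(-1)^{n_b}J_2(k_{bcd})$. It equals $1+e^{-ik_{bcd}}$, contributed by $(\mu^{-2t}-1)^2$ and the $r=1$ term of $\check J_2^{+}$.

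This coefficient is the same in every sector and admits no cancellation. It is non-real exactly when $\sin k_{bcd}\neq0$. Either switch to it, or supplement your $x^1$ argument with the $x^2$ or $x^{2t}$ coefficient in the $t=2$ Bragg sectors, and the gap closes. At $t=1$ necessity is vacuous, since $k_c=k_d=0$ and $k_b\in\{0,\pi\}$.
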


\begin{proof}
The $\mathcal{PT}$ symmetry condition \eqref{app_eq:pt_sym_con_mom} is equivalent to 
\begin{equation}
\widetilde{\boltz}_{00},\widetilde{\boltz}_{11}\in \mathbb R,
\qquad
\widetilde{\boltz}_{01},\widetilde{\boltz}_{10}\in i\mathbb R.
\label{eq:PT_entrywise_condition}
\end{equation}
First, we consider the nG-nG sector, i.e., $\widetilde{\boltz}_{11}$ in Eq.~\eqref{eq:T11_general_closed_J}. It is clear that the only possibly non-real factors are 
\be
\check{J}_m(q)+(-1)^{n_b}J_m(q).
\ee
Recall the convention $k_b=\pi n_b/t, k_c=2\pi n_c/t,
k_d=2\pi n_d/t$, and that every momentum argument appearing in the geometric sums is of the form $q \in \{\,k_b,\ k_b+k_c,\ k_b+k_d,\ k_b+k_c+k_d\,\}
= k_b+\frac{2\pi m}{t}$
for some integer $m$. Therefore, the phase factor
$e^{-iqt}
=
e^{-ik_bt}e^{-i2\pi m}
=
e^{-i\pi n_b}
=
(-1)^{n_b}$ 
becomes a sign. 
Using
\begin{equation}
\mathsf E(q)^*=\mathsf E(-q),\qquad
\mathsf G(q;y)^*=\mathsf G(-q;y).
\end{equation}
We write
\begin{equation}
\check{J}_m(q)
=
e^{-iqt}J_m(-q)
=
(-1)^{n_b}J_m(q)^*,
\qquad m=1,2.
\end{equation}
It follows that
\begin{equation}
\check{J}_m(q)+(-1)^{n_b}J_m(q)
=
2(-1)^{n_b}\left[ \check{J}_m(q) + \check{J}_m(q)^* \right]
=2(-1)^{n_b}\Re J_m(q)\in\mathbb R,
\label{eq:J_real_combo}
\end{equation}
is real. As a consequence, the nG-nG sector is always real, because Eq.~\eqref{eq:T11_general_closed_J} only contains real constants, the real factors $S_c,S_d$, and the combinations \eqref{eq:J_real_combo}. Hence
\begin{equation}
\widetilde{\boltz}_{11}(k)\in\mathbb R 
\qquad\text{for all momentum sectors.}
\label{eq:T11_always_real}
\end{equation}

Next we consider the G-G and G-nG sectors, and the $\Leftarrow$ direction of the statement \eqref{app_eq:pt_sym}. The potentially imaginary factors come from the sums $J_m^+$ in combinations like
\begin{equation}\label{app_eq:11sector_Jcomb}
\check{J}_m^+(q)+(-1)^{n_b}J_m^+(q),
\end{equation}
 which are not generically real for arbitrary $q$. Observe that if
\begin{equation}
q \equiv 0 \pmod{\pi}
\label{eq:q_real_condition}
\end{equation}
then every phase $e^{-iqr}$ is equal to $\pm 1$, and therefore
\begin{equation}
\mathsf E(q),\ \mathsf G(q;y),\ J_m(q),\ J_m^+(q),\ \check{J}_m(q),\ \check{J}_m^+(q)\in\mathbb R.
\end{equation}
In that case, $\Ttilde_{00}$ is real, and $\Ttilde_{01},\Ttilde_{10}$ are purely imaginary, so \eqref{app_eq:pt_sym_con_mom} holds. 
Next, it is enough to impose \eqref{eq:q_real_condition} on 
\begin{equation}
k_{bcd}:=k_b+k_c+k_d\equiv 0 \pmod{\pi},
\end{equation}
because every potentially non-real term of the form \eqref{app_eq:11sector_Jcomb} is either already evaluated at \(k_{bcd}\), or is multiplied by \(S_c\) or \(S_d\), which force \(k_c=0\) or \(k_d=0\), and hence reduce its argument to \(k_{bd}:=k_b+k_d\) or \(k_{bc}:=k_b+k_c\). This proves the $\Leftarrow$ direction of the statement.

Next, we prove the $\Rightarrow$ direction for generic non-fine-tuned $\mu$. 
It is enough to inspect the mixed sector with $p=+1$, namely $\widetilde{\boltz}_{01}$ in Eq.~\eqref{eq:T_mixed_general_closed_Jp}. 
The momentum $k_{bcd}:=k_b+k_c+k_d$
appears in Eq.~\eqref{eq:T_mixed_general_closed_Jp} without any prefactor $S_c$ or $S_d$, and is therefore always present through the term
\begin{equation}
\mathcal C(k_{bcd})
:=
(\mu^{-2t}-1)^2+(-1)^{n_b}(\mu^{-2}-1)^2+\check{J}_2^{+}(k_{bcd})+(-1)^{n_b}J_2(k_{bcd}).
\label{eq:Cq_def}
\end{equation}
Using the sum definitions of $J_2(k_{bcd})$ and $\check{J}_2^{+}(k_{bcd})$, we obtain
\begin{equation}
\mathcal C(k_{bcd})
=
(\mu^{-2t}-1)^2+(-1)^{n_b}(\mu^{-2}-1)^2
+\sum_{r=1}^{t-1}
\Big[(\mu^{-2(t-r+1)}-1)^2+(-1)^{n_b}(\mu^{-2r}-1)^2\Big]e^{-ik_{bcd}r}.
\label{eq:Cq_expanded}
\end{equation}
As a polynomial in $\mu^{-2}$, the coefficient of the highest power $(\mu^{-2})^{2t}$ is
\begin{equation}
1+e^{-ik_{bcd}},
\label{eq:Cq_leading}
\end{equation}
because in \eqref{eq:Cq_expanded}, the first term $(\mu^{-2t}-1)^2$ contributes $(\mu^{-2})^{2t}$, and the $r=1$ term of the third term (corresponding to $\check{J}_2^{+}(k_{bcd})$) contributes an additional $e^{-ik_{bcd}}(\mu^{-2})^{2t}$. 
Hence, if \( k_{bcd}\not\equiv 0,\pi \pmod{2\pi}\),
then $1+e^{-i k_{bcd}}\notin \mathbb R$, and therefore $\mathcal C(k_{bcd})$ is not identically real as a function of $\mu^{-2}$. 
Equivalently, for fixed such $k_{bcd}$, the condition $\mathcal C(k_{bcd})\in\mathbb R$ can hold only on a non-generic subset of parameter values $\mu$.

Now fixed-block $\mathcal{PT}$ symmetry requires, by \eqref{eq:PT_entrywise_condition}, that
\(\widetilde{\boltz}_{01}\in i\mathbb R\).
Since $\widetilde{\boltz}_{01}=i\mu^{4t}[\cdots+\mathcal C(k_{bcd})]$, this forces the bracket to be real. Therefore, for generic $\mu$,
\(k_{bcd}=k_b+k_c+k_d \equiv 0 \pmod{\pi}\).
Combined with the sufficiency condition above, this shows that the generic fixed-block $\mathcal{PT}$-symmetric sectors are those satisfying $k_{bcd}=k_b+k_c+k_d\equiv 0 \pmod{\pi}$.
\end{proof}

\subsection{Boundary states}\label{app:boundary_state}
The TopSFF with a spatially extended topological defect is given in Eq.~\eqref{app_eq:k_top_def_floq}. In this section, we describe the boundary states in the folded representation, both in position space and in momentum space. Away from the parity-inversion axes, the bulk gates are chosen to obey discrete time translation symmetry and to be compatible with the global swap defect / parity inversion symmetry. Along the parity inversion axes, the parity inversion symmetric couplings determine the form of the boundary states entering Eq.~\eqref{app_eq:k_top_def_floq}. By engineering the couplings along these axes (gates along dashed lines in Fig.~\ref{fig:parity_models}), we gain access to different momentum sectors of the generalized Boltzmann factor $\widetilde{\boltz}$, which in turn allows us to identify the universal signatures of tDW in TopSFF.

\begin{figure}
    \centering
    \includegraphics[width=0.65\linewidth]{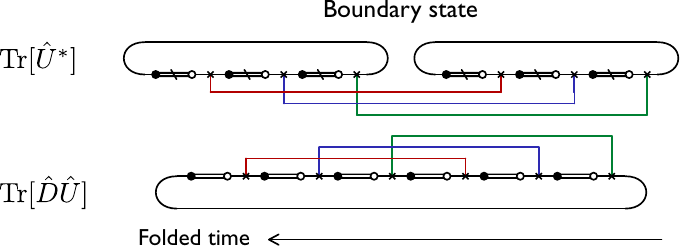}
    \caption{\textbf{Boundary states of TopSFF} are vectors defined on the edge of the emergent system in the folded representation, and encode the presence or absence of coupling across the parity-inversion axes. In this illustration, the random-phase gates are shown in colour, coupling degrees of freedom at $t=1$ (red), $t=2$ (blue),and $t=3$ (green).}
    \label{fig:boundary_state_bonds}
\end{figure}

\subsubsection{Discrete time translational symmetry and parity inversion symmetry}\label{app:bound_state_floq}
In the most natural setting, the coupling gates along the parity inversion axes are drawn analogously to the coupling gates away from the parity inversion axes. More precisely, the coupling gates of the RPM are drawn as described by \eqref{app_eq:pRPM_def3} in \ref{app:model_par_inv} subject to the requirements of (i) discrete time translational symmetry, i.e., $w(\tau,r) = w(\tau',r)$ for all $\tau$ and $\tau'$, and (ii) parity inversion symmetry $\mathrm{SWAP} \, w(\tau,r)  \, \mathrm{SWAP}  = w(\tau,r)$, where $\mathrm{SWAP}$ denotes the swap operator acting on the same Hilbert space as $w(\tau,r)$. This model is illustrated in Fig.~\ref{fig:parity_models_axes}(b). By performing the ensemble averages over the random phase gates as illustrated in Fig.~\ref{fig:boundary_state_bonds}, the boundary state $\phi$ is found to be
\begin{equation}\label{eq:bc_default_case}
\phi(a,b,c,d)  = 
i^{\delta_{a,1}} \left[ 
\mu^{2t}  + {(1-\mu^{2t})\delta_{c,d}} 
\right]\,.
\end{equation}
 The Fourier transform of $\phi$ defined in Eq.~\eqref{eq:four_trans_def} can be computed as 
\begin{equation}\label{eq:floquet_bd_state}
\widetilde{\phi}(a;k_b,k_c,k_d)
=
i^{\delta_{a,1}}
\left[
\sqrt{2t^3}\mu^{2t}\,\delta_{n_b,0}\delta_{n_c,0}\delta_{n_d,0}
+
\sqrt{2t}(1-\mu^{2t})\,\delta_{n_b,0}\delta_{n_c+n_d,\,0 \pmod{t}}
\right].
\end{equation}
where \(k_b = 2\pi n_b/ 2t\) with \(n_b=0,1,\dots,2t-1\), 
\(k_c = 2\pi n_c/ t\) with \(n_c=0,1,\dots,t-1\), and 
\(k_d = 2\pi n_d/t \) with \(n_d=0,1,\dots,t-1\).
Note that imaginary factors arising from Weingarten calculus have been absorbed into the boundary state, and that $\widetilde{\phi}$ has non-zero support only if \(k_b=0\) and \(k_d\equiv -k_c\ (\mathrm{mod}\ 2\pi)\). The tensor network representation of TopSFF with the boundary condition \eqref{eq:bc_default_case} is illustrated in Fig.~\ref{fig:dual_sim_bc}.

\subsubsection{No coupling}
Suppose the coupling gates along the parity-inversion axes of the RPM are turned off, as illustrated in Fig.~\ref{fig:parity_models_axes}(a). The parity inversion symmetric coupling along the parity inversion axes do not bias any of the TopSFF modes. Consequently, the boundary states in Eq.~\eqref{app_eq:k_top_def_floq} and their Fourier transforms are
\be
\phi(a,b,c,d) 
= i^{\delta_{a,1}}, \qquad 
\widetilde{\phi}(a;k_b,k_c,k_d)
=
i^{\delta_{a,1}}
\sqrt{2t^{3}}
\delta_{n_b,0}\delta_{n_c,0}\delta_{n_d,0},
\ee
i.e., the boundary states lie entirely in the momentum sector $(k_b,k_c,k_d)=(0,0,0)$.  This choice provides a useful reference contribution from the zero-momentum sector, which can be subtracted when one wishes to isolate non-zero momentum sectors.

\subsubsection{General periodic driving}
A simple way to gain access to different momentum sectors of the generalized Boltzmann factor is to introduce a generalized periodic driving \textit{only} along the parity inversion axes, i.e., along the dashed lines in Fig.~\ref{fig:parity_models}. The quantum circuits away from the parity inversion axes are unchanged, and hence the generalized Boltzmann factor is unchanged, but the boundary state is. To be more precise, along the parity inversion axes, the coupling gates of RPM are drawn as described by \eqref{app_eq:pRPM_def3} in \ref{app:model_par_inv}, except that the coupling gates are drawn independently  unless they are related by 
\begin{equation}
w(\tau+p,r=L/2,L)=w(\tau,r=L/2,L),
\end{equation}
where both time $\tau$ and period $p$ are discrete. 
The RPM for $p=1$ and $p=2$ are illustrated in Fig.~\ref{fig:parity_models_axes}(b) and (c) respectively.
The boundary state in position space is given by
\[
\phi(a,b,c,d)= i^{\delta_{a,1}}
\Big[\mu^{2t}+(1-\mu^{2t})\,\delta_{[c-b\ (\mathrm{mod}\ t)]\ (\mathrm{mod}\ p),\,0}\;\delta_{c,d}\Big].
\]
Thus for $t=mp$ with some integer $m$, the boundary state in momentum space is
\be\label{eq:ini_state_periodp}
\widetilde{\phi}(a;k_b,k_c,k_d)
=
i^{\delta_{a,1}}
\left[
\mu^{2t}\sqrt{2t^{3}}
\delta_{n_b,0}\delta_{n_c,0}\delta_{n_d,0}
+
(1-\mu^{2t})\frac{\sqrt{2t}}{p}
\delta_{n_b\ (\mathrm{mod}\ 2t/p),0} \,
\delta_{n_c+n_d+n_b/2\ (\mathrm{mod}\ t), 0}
\right].
\ee
A few notable cases:
\begin{itemize}
    \item \textbf{$p=1$. Floquet driving with period 1.} For $p=1$, we recover the period-1 Floquet case as described above in \ref{app:bound_state_floq}, where the period along and away from the parity inversion symmetry axes match. 
    \item \textbf{$p=2$. Floquet driving with period 2.} For $p=2$, i.e. the period-2 Floquet case, we have
\be
\widetilde{\phi}
=
i^{\delta_{a,1}}
\left[
\mu^{2t}\sqrt{2t^3}\,\delta_{n_b,0}\delta_{n_c,0}\delta_{n_d,0}
+
(1-\mu^{2t})\sqrt{\frac{t}{2}}\;
\delta_{n_b\ (\mathrm{mod}\ t),\,0}\;
\delta_{n_c+n_d+n_b/2\ (\mathrm{mod}\ t), 0}
\right].
\ee

\item \textbf{$p=t$. Temporal random driving.} For $p=t$, i.e. the temporal random case, we have
\be
\widetilde{\phi}
=
i^{\delta_{a,1}}
\left[
\mu^{2t}\sqrt{2t^3}\,\delta_{n_b,0}\delta_{n_c,0}\delta_{n_d,0}
+
(1-\mu^{2t})\sqrt{\frac{2}{t}}\;
\delta_{n_b\ (\mathrm{mod}\ 2),\,0}\;
\delta_{n_c+n_d+n_b/2\ (\mathrm{mod}\ t), 0}
\right].
\ee
\end{itemize}

\begin{figure}[ht!]
    \centering
    \includegraphics[width=0.6 \textwidth]{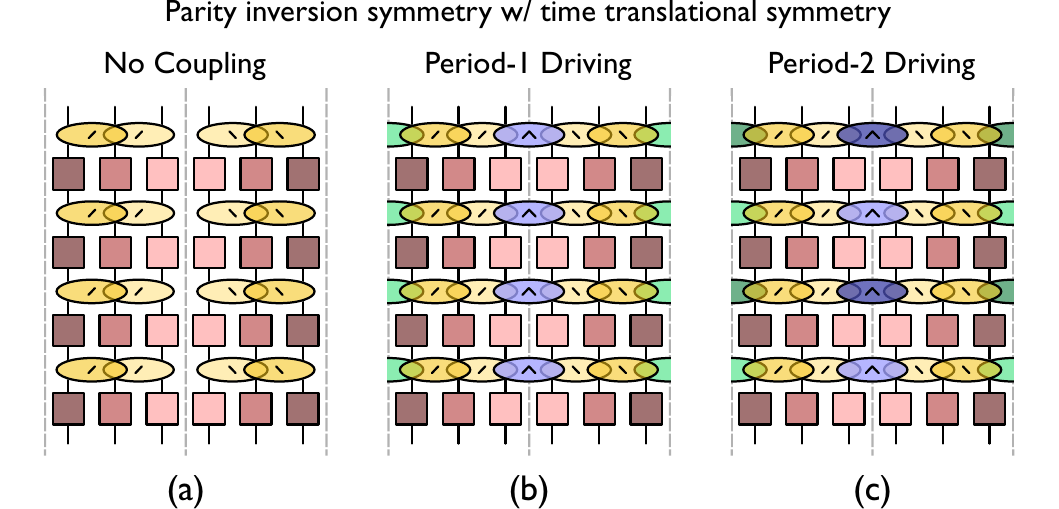}
    \caption{\textbf{Models with periodic driving along parity inversion axes.} We consider quantum chaotic many-body systems described by the parity-inversion-symmetric Random Phase Model with discrete time translational symmetry. Along the parity inversion axes, we study coupling gates corresponding to (a) no coupling, (b) period-one driving, and (c) period-two driving.
    }
    \label{fig:parity_models_axes}
\end{figure}

\begin{figure}
\centering
\includegraphics[width=0.7\textwidth]
{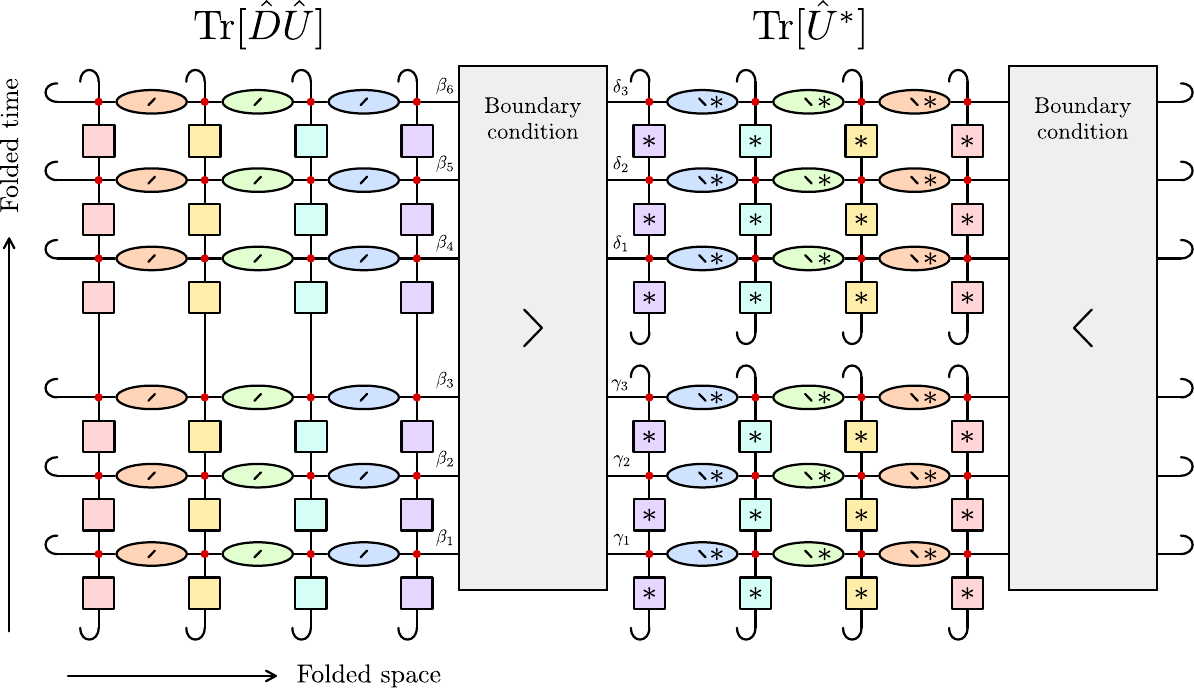}
\caption{
\textbf{Tensor network of TopSFF.}
We show the tensor network representation of TopSFF for $t=3$ and $L=8$. Gates with identical colours are identified. Red dots label where the degrees of freedom live. 
}
\label{fig:dual_sim_bc}
\end{figure}

\section{Exact TopSFF with spatially extended defects at large $q$}\label{app:tsff_exact_q_spatial}

In this appendix, we present the exact large-\(q\) evaluation of the TopSFF with a spatially extended global swap defect, using the generalized Boltzmann factor derived in \ref{app:tsff_toolbox}. 
In Sec.~\ref{app_sec:cayley_hamilton}, we express the TopSFF using the Cayley-Hamilton relation and introduce boundary-resolved TopSFF. 
In Sec.~\ref{app:tsff_fixed_mom}, we evaluate the TopSFF in selected fixed momentum sectors. 
In Sec.~\ref{app:tsff_regular_bc}, we combine these sectors to obtain the TopSFF in the time translation symmetric case. 
Finally, in Sec.~\ref{app:engineered_bc}, we show how engineered parity-inversion axes, including no-coupling and period-two driving, can be used to access and isolate selected momentum sectors such as \((\pi,\pi,0)\).

\subsection{TopSFF and Cayley-Hamilton }
\label{app_sec:cayley_hamilton}
The TopSFF with the global swap operator inserted as a spatially extended topological defect in the parity inversion symmetric and discrete time translational invariant RPM can be obtained by evaluating 
\begin{equation}
\big[\widetilde{\boltz}(k_b,k_c,k_d)^L\big]_{a_1 a_2}.
\end{equation}
for some positive integer $L$. 
The eigenvalues $\lambda_\pm$ of $\widetilde{\boltz}$ are
\be 
\lambda_\pm=\frac{\Trb\pm\sqrt{\Delta}}{2},
\qquad
\Trb:= \tr\widetilde{\boltz}, 
\qquad 
\Detb:= \det \widetilde{\boltz},
\qquad
\Delta:=\Trb^{2}-4\Detb. 
\ee 
We evaluate $\widetilde{\boltz}(k_b,k_c,k_d)^L$ by applying the Cayley-Hamilton theorem to the $2\times 2$ matrix $\widetilde{\boltz}$, so that all powers of $\widetilde{\boltz}$ are controlled by its trace and determinant as
\begin{equation}
\widetilde{\boltz}^{\,L}
=
u_L\,\widetilde{\boltz}
-
\Detb\,u_{L-1}\,\iden,
\label{eq:tildeboltz_power_closed}
\end{equation}
where $u_0=0$, $u_1=1$, and
\begin{equation}
u_{L+1}=\Trb\,u_L-\Detb\,u_{L-1}.
\end{equation}
Thus, the $L$-dependence is governed by the two scalar invariants $\Trb$ and $\Detb$ through the recurrence relation for $u_L$. 
If $\lambda_+\neq \lambda_-$, then
\begin{equation}
\widetilde{\boltz}^{\,L}
=
\frac{\lambda_+^{L}-\lambda_-^{L}}{\lambda_+-\lambda_-}\,\widetilde{\boltz}
+
\frac{\lambda_+\lambda_-^{L}-\lambda_-\lambda_+^{L}}{\lambda_+-\lambda_-}\,\iden.
\label{eq:Ttilde_power_u_v}
\end{equation}
If $\lambda_+=\lambda_-=\lambda$ (including exceptional points where $\widetilde{\boltz}$ is not diagonalizable), then the expression above has the smooth limit
\begin{equation}
\widetilde{\boltz}^{\,L}
\;\longrightarrow\;
L\lambda^{L-1}\widetilde{\boltz}
+
(1-L)\lambda^{L}\iden,
\qquad
\text{as} \qquad \Delta \rightarrow 0.
\label{eq:Ttilde_power_EP_closed}
\end{equation}
TopSFF can be written as a quadratic form of \(\widetilde{\boltz}(k)^L\) with respect to a boundary vector in momentum sector \(k\). 
Specifically, if \(\widetilde{\phi}(k)=\mathcal{N}(k)\varphi\) with \(\varphi=(\varphi_1,\varphi_2)^T\), then, using the Cayley-Hamilton identity, we obtain
\begin{equation}\label{eq:Kvarphi_general}
\ba
\Ksf_{k}
:=& \,
\varphi^T \widetilde{\boltz}(k)^L \varphi
=
g(k)\,u_L(k)
-
h \det \widetilde{\boltz}(k)\,u_{L-1}(k),
\qquad
\varphi=\binom{\varphi_1}{\varphi_2},
\\
h:=& \, \varphi^T\varphi=\varphi_1^2+\varphi_2^2,
\\
g(k):=& \, \varphi^T \widetilde{\boltz}(k)\varphi=
\Ttilde_{00}(k)\varphi_1^2
+
\bigl[\Ttilde_{01}(k)+\Ttilde_{10}(k)\bigr]\varphi_1\varphi_2
+
\Ttilde_{11}(k)\varphi_2^2.
\\
u_L(k)= & 
\begin{cases}
\dfrac{\lambda_+^L-\lambda_-^L}{\lambda_+-\lambda_-},
\qquad \qquad \qquad \qquad  & \DeltaDisc(k)\neq 0,\\
L\left(\dfrac{\Trb}{2}\right)^{L-1},
& \DeltaDisc(k)=0.
\end{cases}
\ea
\end{equation}
where eigenvalues $\lambda\pm(k)$ are distinct for $\Delta \neq 0$, and the eigenvalues coalesce $\lambda_+(k)=\lambda_-(k)=\lambda(k)$ for $\Delta=0$ at the EP.
For certain momentum sectors, the Boltzmann factor has \(\mathcal{PT}\) symmetry, and the boundary vector has the form $\varphi = (1,i)^T$, then the TopSFF is given by
\be\label{app_eq:ktop_cayley_hamilton_pt}
\ba
\Ksf_k:=& \,\varphi^T\widetilde{\boltz}(k)^L\varphi
=
g(k)\,u_L(k),
\qquad
\Ttilde(k)=
\begin{pmatrix}
\Asf(k) & i\Bsf(k)\\
i\Bsf(k) & \Csf(k)
\end{pmatrix}, \qquad \varphi=\binom{1}{i},
\\
g(k):=& \, \varphi^T\Ttilde(k)\varphi
=
\Asf(k)-\Csf(k)-2\Bsf(k),
\\
u_L(k)=&
\begin{cases}
\dfrac{\lambda_+(k)^L-\lambda_-(k)^L}{\lambda_+(k)-\lambda_-(k)},
& \DeltaDisc(k)>0
\qquad (\mathcal{PT}\text{ unbroken}),
\\
L\left(\dfrac{\Trb(k)}{2}\right)^{L-1},
& \DeltaDisc(k)=0
\qquad (\text{Exceptional point}),
\\
\bigl[\Detb(k)\bigr]^{\frac{L-1}{2}}
\dfrac{\sin\bigl(L\theta(k)\bigr)}{\sin\theta(k)},
\qquad
\theta(k):=\arccos\left(\dfrac{\Trb(k)}{2\sqrt{\Detb(k)}}\right),
& \DeltaDisc(k)<0
\qquad (\mathcal{PT}\text{ broken}).
\end{cases}
\ea
\ee
where $h=0$, and $\Asf(k),\Bsf(k),\Csf(k)\in\mathbb{R}$.
In the \(\mathcal{PT}\) broken phase, the eigenvalues form a complex-conjugate pair,
\(\lambda_\pm(k)=\rho(k)e^{\pm i\theta(k)}=x(k)\pm i y(k)\). The spectral phase \(\theta(k)\), with \(\tan\theta(k)=y(k)/x(k)\), controls the oscillation period in \(L\), while the modulus
\(\rho(k)=|\lambda_\pm(k)|=\sqrt{\Detb(k)}\)
controls the exponential envelope. For \(x(k)>0\), a smaller ratio \(|y(k)/x(k)|\) gives a smaller \(\theta(k)\) and hence a longer oscillation period \(L_{\rm osc}\sim 2\pi/\theta(k)\). In the same phase, TopSFF grows exponentially if \(\rho(k)>1\), decays exponentially if \(\rho(k)<1\), and is purely oscillatory  if \(\rho(k)=1\).

\subsubsection{Boundary-resolved TopSFF}\label{app_sec:boundary_resolved_topsff}
At the exceptional point, we would like to observe the linear-in-$L$ behaviour in Eq.~\eqref{app_eq:ktop_cayley_hamilton_pt}, which arises from Jordan non-diagonality. However, this contribution can be suppressed if $g(k)=0$ at the EP, as will occur
in the case considered below. To avoid this cancellation, we instead consider boundary-resolved TopSFFs defined with the boundary vectors
\begin{equation}
    v_0=
    \begin{pmatrix}
        1\\0
    \end{pmatrix},
    \qquad
    v_1=
    \begin{pmatrix}
        0\\i
    \end{pmatrix}.
\end{equation}
which have supports only in the Gaussian tDW basis state and non-Gaussian tDW basis state respectively. We define the three boundary-resolved amplitudes
\begin{align}
    \Ksf_{00}
    &:=
    v_0^T \widetilde{\mathcal B}(k)^L v_0,
    \\
    \Ksf_{11}
    &:=
    v_1^T \widetilde{\mathcal B}(k)^L v_1,
    \\
    \Ksf_{01}
    &:=
    v_0^T \widetilde{\mathcal B}(k)^L v_1 .
\end{align}
such that, for $\varphi=v_0+v_1= (1,i)^T$, we have
\begin{equation}
    \Ksf_k
    :=
    \varphi^T \widetilde{\mathcal B}(k)^L \varphi
    =
    \Ksf_{00}+2\Ksf_{01}+\Ksf_{11}.
\end{equation}
Using the Cayley-Hamilton form as before, the boundary-resolved TopSFF are
\begin{align}
    \Ksf_{00}
    &=
    \Asf(k)u_L(k)-\Detb(k)u_{L-1}(k),
    \\
    \Ksf_{11}
    &=
    -\Csf(k)u_L(k)+\Detb(k)u_{L-1}(k),
    \\
    \Ksf_{01}
    &=
    -\Bsf(k)u_L(k).
\end{align}
Equivalently, for $\Delta(k)\neq 0$,
\begin{align}
    \Ksf_{00}
    &=
    \frac{
    \bigl(\Asf(k)-\lambda_-(k)\bigr)\lambda_+(k)^L
    -
    \bigl(\Asf(k)-\lambda_+(k)\bigr)\lambda_-(k)^L
    }
    {\lambda_+(k)-\lambda_-(k)},
    \\
    \Ksf_{11}
    &=
    -\frac{
    \bigl(\Csf(k)-\lambda_-(k)\bigr)\lambda_+(k)^L
    -
    \bigl(\Csf(k)-\lambda_+(k)\bigr)\lambda_-(k)^L
    }
    {\lambda_+(k)-\lambda_-(k)},
    \\
    \Ksf_{01}
    &=
    -\Bsf(k)
    \frac{\lambda_+(k)^L-\lambda_-(k)^L}
    {\lambda_+(k)-\lambda_-(k)} .
\end{align}
At an exceptional point, $\Delta(k)=0$, the two eigenvalues coalesce,
\begin{equation}
\lambda_{\rm EP}(k) :=  \lambda_+(k)=\lambda_-(k)
    =
    \frac{\tr \widetilde{\boltz}(k)}{2}.
\end{equation}
The transfer matrix can be written in terms of the identity matrix and a nilpotent matrix,
\begin{equation}
    \widetilde{\mathcal B}(k)
    =
    \lambda_{\rm EP}(k)\iden + N(k),
    \qquad
    N(k)^2=0 .
\end{equation}
In the original Gaussian/non-Gaussian basis, the nilpotent matrix is
\begin{equation}
    N(k)
    =
    \begin{pmatrix}
        \delta(k) & i\Bsf(k)\\
        i\Bsf(k) & -\delta(k)
    \end{pmatrix},
    \qquad
    \delta(k):=\frac{\Asf(k)-\Csf(k)}{2}.
\end{equation}
At the exceptional point, we have
\(    \delta(k)^2=\Bsf(k)^2\), with  \(\delta(k)=s \Bsf(k)\)
and     \(s=\pm 1\). Therefore
\begin{equation}
    \widetilde{\mathcal B}(k)^L
    =
    \lambda_{\rm EP}(k)^L \iden
    +
    L\lambda_{\rm EP}(k)^{L-1}N(k),
\end{equation}
and hence
\begin{align}
    \Ksf_{00}\big|_{\rm EP}
    &=
    \lambda_{\rm EP}(k)^L
    +
    L\,\delta(k)\lambda_{\rm EP}(k)^{L-1},
    \\
    \Ksf_{11}\big|_{\rm EP}
    &=
    -\lambda_{\rm EP}(k)^L
    +
    L\,\delta(k)\lambda_{\rm EP}(k)^{L-1},
    \\
    \Ksf_{01}\big|_{\rm EP}
    &=
    -\Bsf(k)\,
    L\lambda_{\rm EP}(k)^{L-1}.
\end{align}
To observe TopSFF's polynomial scaling in $L$ due to the exceptional point and Jordan non-diagonality, we compute the quantity
\begin{equation}\label{app_eq:tsff_jordan}
  \left.  \frac{\Ksf_{00}+ \Ksf_{11}}
    {[\tr \widetilde{\boltz}(k)/2]^{L-1}} \right|_{\rm EP}= 2  \delta(k) L \, \propto \,  L \, ,
\end{equation}
which is linear in $L$ at the exceptional point. We compute
\(
 (\Ksf_{00}+ \Ksf_{11})/[\tr \widetilde{\boltz}(k)/2]^{L-1}
\)
across the exceptional point at finite $q$ in
Fig.~\ref{fig:jordan_diff_t} and at large $q$ in Fig.~\ref{fig:pipi0_jordan_nondiag_largeq}. In both cases, the results show excellent
agreement with the linear scaling in $L$ in  Eq.~\eqref{app_eq:tsff_jordan}.

\subsection{TopSFF for fixed momentum sectors}\label{app:tsff_fixed_mom}
Next, we evaluate the TopSFF for four $\mathcal{PT}$ symmetric momentum sectors that arises from evaluating TopSFF with period-1 and period-2 driving along the parity inversion axes. 
It is convenient to define
\begin{equation}
S_2(n):=\frac{1-\mu^{-2n}}{1-\mu^{-2}},
\qquad
S_4(n):=\frac{1-\mu^{-4n}}{1-\mu^{-4}},
\qquad
\widehat S_2:=S_2(t-1)=\frac{1-\mu^{-2(t-1)}}{1-\mu^{-2}}.
\end{equation}
and
\begin{equation}\label{app_eq:geom_sum_abcd}
\begin{aligned}
A_n
&:=\sum_{r=1}^{n}(\mu^{-2r}-1)^2
=\mu^{-4}S_4(n)-2\mu^{-2}S_2(n)+n,
\\
B_n
&:=\sum_{r=1}^{n}(t-1+\mu^{-2r})^2
=n(t-1)^2+2(t-1)\mu^{-2}S_2(n)+\mu^{-4}S_4(n),
\\
C_t&:=A_t-(\mu^{-2}-1)^2,
\\
D_t&:=B_t-(t-1+\mu^{-2})^2.
\end{aligned}
\end{equation}

\subsubsection{TopSFF for \((k_b,k_c,k_d)=(0,0,0)\)}
We consider the TopSFF for \((k_b,k_c,k_d)=(0,0,0)\), which is $\mathcal{PT}$ symmetric as derived in \ref{app:pt_sym}.
We use the shorthand \(\zeroedge\equiv (k_b,k_c,k_d)=(0,0,0)\) for both the momentum sector and the corresponding subscripts. The matrix elements of the generalized Boltzmann factor are
\begin{equation}\label{app_eq:ABC_000}
\begin{aligned}
\Asf_{\zeroedge}
&=
\Big(1+2(t-1)\mu^{2t}+(t-1)^2\mu^{4t+2}\Big)
+\Big(\mu^{4t-4}+2(t-1)\mu^{4t-2}+(t-1)^2\mu^{4t+2}\Big)
\nonumber\\
&\qquad
+2\mu^{4t}\Big(D_t+(t-1)(\mu^2-1)\Big),
\\
\Bsf_{\zeroedge}
&=2\mu^{4t}B_t,
\\
\Csf_{\zeroedge}
&=
-\mu^{4t}\Big[t^2+\bigl(t-1+\mu^{-2t}\bigr)^2+2B_{t-1}\Big].
\end{aligned}
\end{equation}
Using Eq.~\eqref{app_eq:ktop_cayley_hamilton_pt}, we obtain TopSFF at \((k_b,k_c,k_d)=(0,0,0)\) as
\begin{equation}\label{app_eq:tsff_zero_edge}
\Ksf_{\zeroedge}:=\varphi^T\widetilde{\boltz}(0,0,0)^L\varphi
=
\bigl(\Asf_{\zeroedge}-\Csf_{\zeroedge}-2\Bsf_{\zeroedge}\bigr)\,
u_L(0,0,0).
\end{equation}
A representative example of the TopSFF behaviour in the \((k_b,k_c,k_d)=(0,0,0)\) sector is shown in Fig.~\ref{fig:tsff_three_sectors_data}.

\subsubsection{TopSFF for \((k_b,k_c,k_d)=(0,k,-k)\)}
We consider the TopSFF for \((k_b,k_c,k_d)=(0,k,-k)\) with \(k\neq0\), which is $\mathcal{PT}$ symmetric as derived in \ref{app:pt_sym}.
We use the shorthand \(\zerobulk\equiv (k_b,k_c,k_d)=(0,k,-k)\) for both the momentum sector and the corresponding subscripts. The matrix elements of the generalized Boltzmann factor are
\begin{equation}\label{app_eq:ABC_0k}
\begin{aligned}
\Asf_{\zerobulk}
&=
\Big(1-2\mu^{2t}+\mu^{4t+2}\Big)
+\Big(\mu^{4t-4}-2\mu^{4t-2}+\mu^{4t+2}\Big)
+2\mu^{4t}\Big(C_t+(t-1)(\mu^2-1)\Big),
\\
\Bsf_{\zerobulk}
&=2\mu^{4t}A_t,
\\
\Csf_{\zerobulk}
&=
-\mu^{4t}\Big[\bigl(\mu^{-2t}-1\bigr)^2+2A_{t-1}\Big].
\end{aligned}
\end{equation}
Using Eq.~\eqref{app_eq:ktop_cayley_hamilton_pt}, we obtain TopSFF at \((k_b,k_c,k_d)=(0,k,-k)\) with \(k\neq0\) as
\begin{equation}\label{app_eq:tsff_zero_bulk}
\Ksf_{\zerobulk}:=\varphi^T\widetilde{\boltz}(0,k,-k)^L\varphi
=
\bigl(\Asf_{\zerobulk}-\Csf_{\zerobulk}-2\Bsf_{\zerobulk}\bigr)\,
u_L(0,k,-k), 
\end{equation}
for \( k\neq 0\).
A representative example of the TopSFF behaviour in the \((k_b,k_c,k_d)=(0,k,-k)\) sector is shown in Fig.~\ref{fig:tsff_three_sectors_data}.

\subsubsection{TopSFF for \((k_b,k_c,k_d)=(\pi,\pi,0)\)}
We consider the TopSFF for \((k_b,k_c,k_d)=(\pi,\pi,0)\), which is $\mathcal{PT}$ symmetric as derived in \ref{app:pt_sym}.
We use the shorthand \(\piedge\equiv (k_b,k_c,k_d)=(\pi,\pi,0)\) for both the momentum sector and the corresponding subscripts. The matrix elements of the generalized Boltzmann factor are
\begin{equation}\label{app_eq:ABC_pipi0}
\begin{aligned}
\Asf_{\piedge}
&=
1+\mu^{4t-4}
+(t-2)\bigl(\mu^{2t}+\mu^{4t-2}\bigr)
-2(t-1)\mu^{4t+2}
+2t\,\mu^{4t-4}\widehat S_2
-2t(t-1)\mu^{4t}
+2\mu^{4t}C_t
+2\mu^{4t}(1-\mu^2),
\\
\Bsf_{\piedge}
&=
\mu^{4t}\Big[
t\Big((\mu^{-2t}-1)+(\mu^{-2}-1)+(\mu^{-2}+\mu^{-4})\widehat S_2-2(t-1)\Big)
+(\mu^{-2t}-1)^2+(\mu^{-2}-1)^2+A_{t-1}+C_t
\Big],
\\
\Csf_{\piedge}
&=
-\mu^{4t}\Big[
t\Big((\mu^{-2t}-1)+2\bigl(\mu^{-2}\widehat S_2-(t-1)\bigr)\Big)
+(\mu^{-2t}-1)^2+2A_{t-1}
\Big].
\end{aligned}
\end{equation}
Using Eq.~\eqref{app_eq:ktop_cayley_hamilton_pt}, we can simplify $g_{\piedge}$ and obtain TopSFF at \((k_b,k_c,k_d)=(\pi,\pi,0)\) as
\begin{equation}\label{app_eq:tsff_pi_edge}
\Ksf_{\piedge}:=\varphi^T\widetilde{\boltz}(\pi,\pi,0)^L\varphi
=
\mu^{4t-4}(1-\mu^2)\Bigl(2t\mu^4-t\mu^2+\mu^2-1\Bigr)\,
u_L(\pi,\pi,0).
\end{equation} 
A representative example of the TopSFF behaviour in the \((k_b,k_c,k_d)=(\pi,\pi,0)\) sector with \(\mathcal{PT}\) symmetry breaking transition is shown in Fig.~\ref{app_fig:large_q_PT_trans}. The signatures of  \(\mathcal{PT}\) symmetry breaking transition and the exceptional point are discussed in Appendix \ref{app:tsff_and_pt} and in Appendix \ref{sec:EP_coalescence_pipi0}.

\subsubsection{TopSFF for \((k_b,k_c,k_d)=(\pi,k,\pi-k)\)}
We consider the TopSFF for \((k_b,k_c,k_d)=(\pi,k,\pi-k)\) with \(k\neq0,\pi\), which is $\mathcal{PT}$ symmetric as derived in \ref{app:pt_sym}. We use the shorthand \(\pibulk\equiv (k_b,k_c,k_d)=(\pi,k,\pi-k)\) for both the momentum sector and the corresponding subscripts. The matrix elements of the generalized Boltzmann factor are
\begin{equation}\label{app_eq:ABC_pibulk}
\begin{aligned}
\Asf_{\pibulk}
&=
1+\mu^{4t-4}-2\mu^{2t}-2\mu^{4t-2}+2\mu^{4t}+2\mu^{4t}C_t,
\\
\Bsf_{\pibulk}
&=
\mu^{4t}\Big[(\mu^{-2t}-1)^2+(\mu^{-2}-1)^2+A_{t-1}+C_t\Big],
\\
\Csf_{\pibulk}
&=
-\mu^{4t}\Big[(\mu^{-2t}-1)^2+2A_{t-1}\Big].
\end{aligned}
\end{equation}
Using Eq.~\eqref{app_eq:ktop_cayley_hamilton_pt}, we can simplify $g_{\pibulk}$ and obtain TopSFF at \(\pibulk\equiv (k_b,k_c,k_d)=(\pi,k,\pi-k)\) with \(k\neq0,\pi\) as
\begin{equation}\label{app_eq:tsff_pi_bulk}
\Ksf_{\pibulk}:=\varphi^T\widetilde{\boltz}(\pi,k,\pi-k)^L\varphi
=
-\mu^{4t-4}(1-\mu^2)^2\,u_L(\pi,k,\pi-k),
\end{equation}
for \( k\neq 0, \pi\). A representative example of the TopSFF behaviour in the \((k_b,k_c,k_d)=(\pi,k,\pi-k)\) sector is shown in Fig.~\ref{fig:tsff_three_sectors_data}.

\begin{figure}[ht]
\centering
\includegraphics[width=0.32\textwidth]{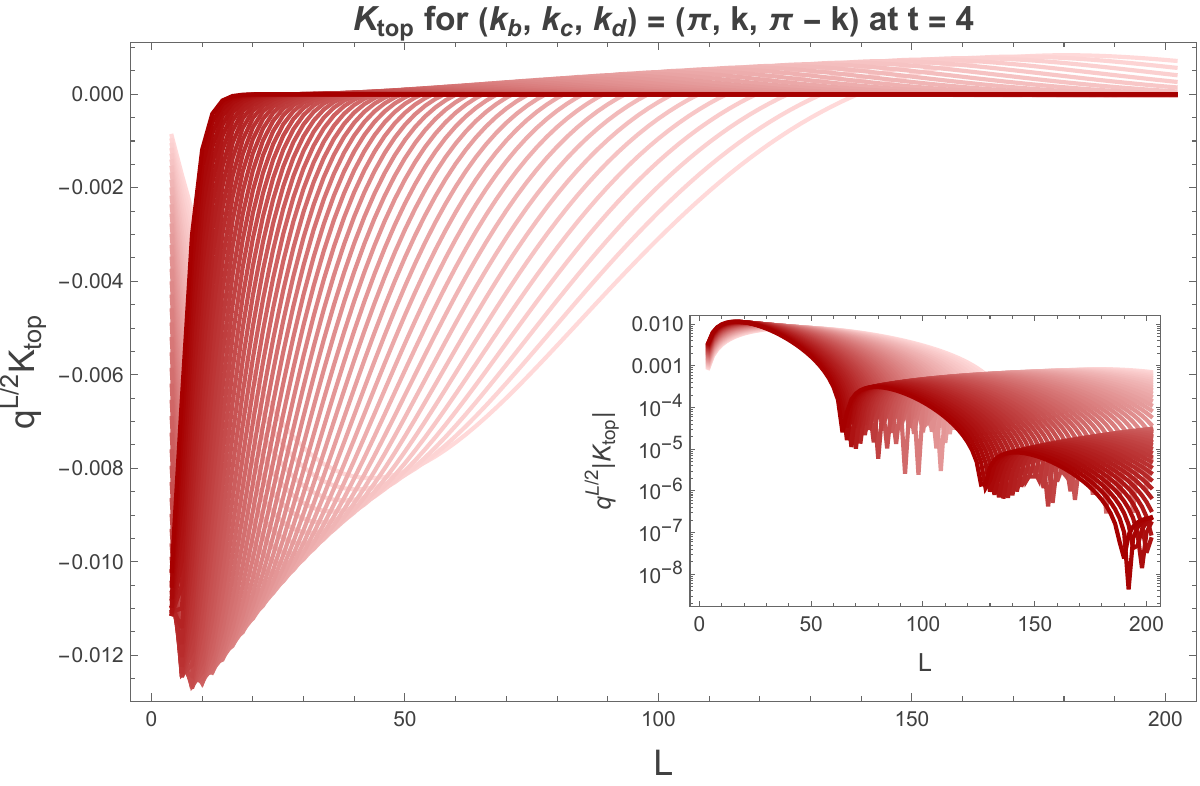}
\hfill
\includegraphics[width=0.32\textwidth]{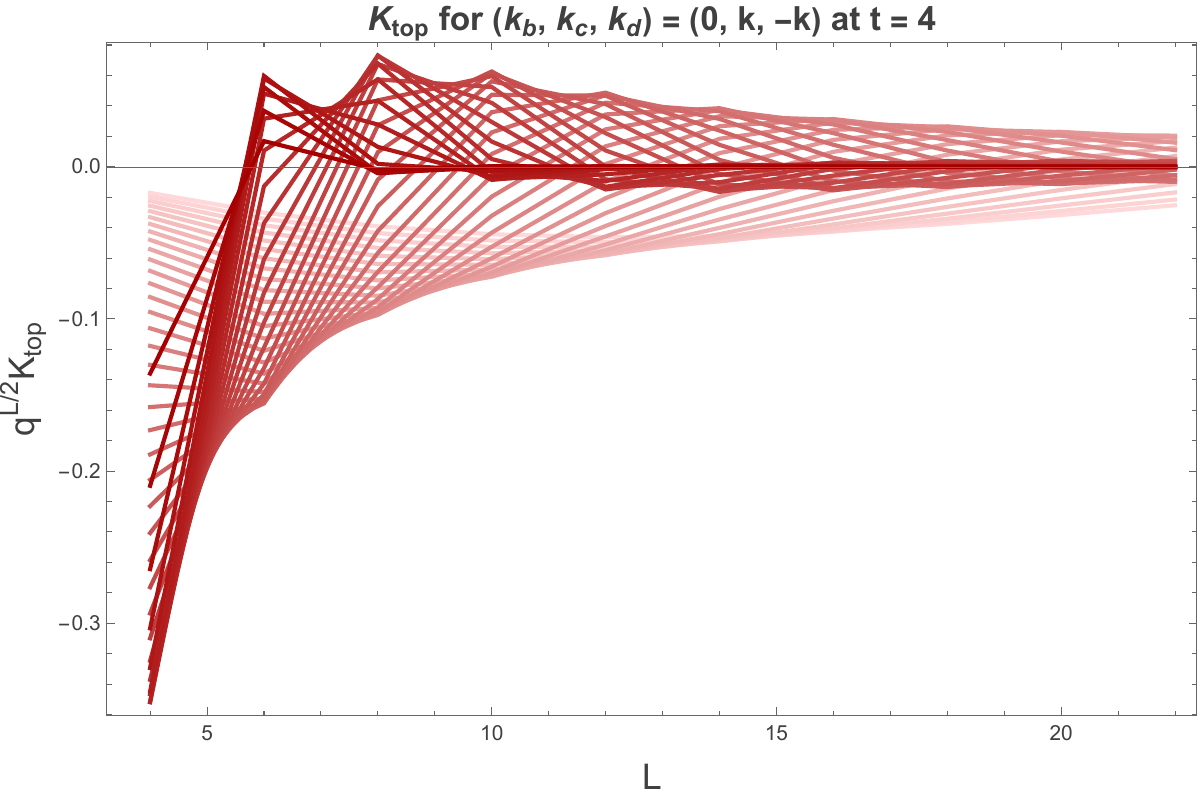}
\hfill
\includegraphics[width=0.32\textwidth]{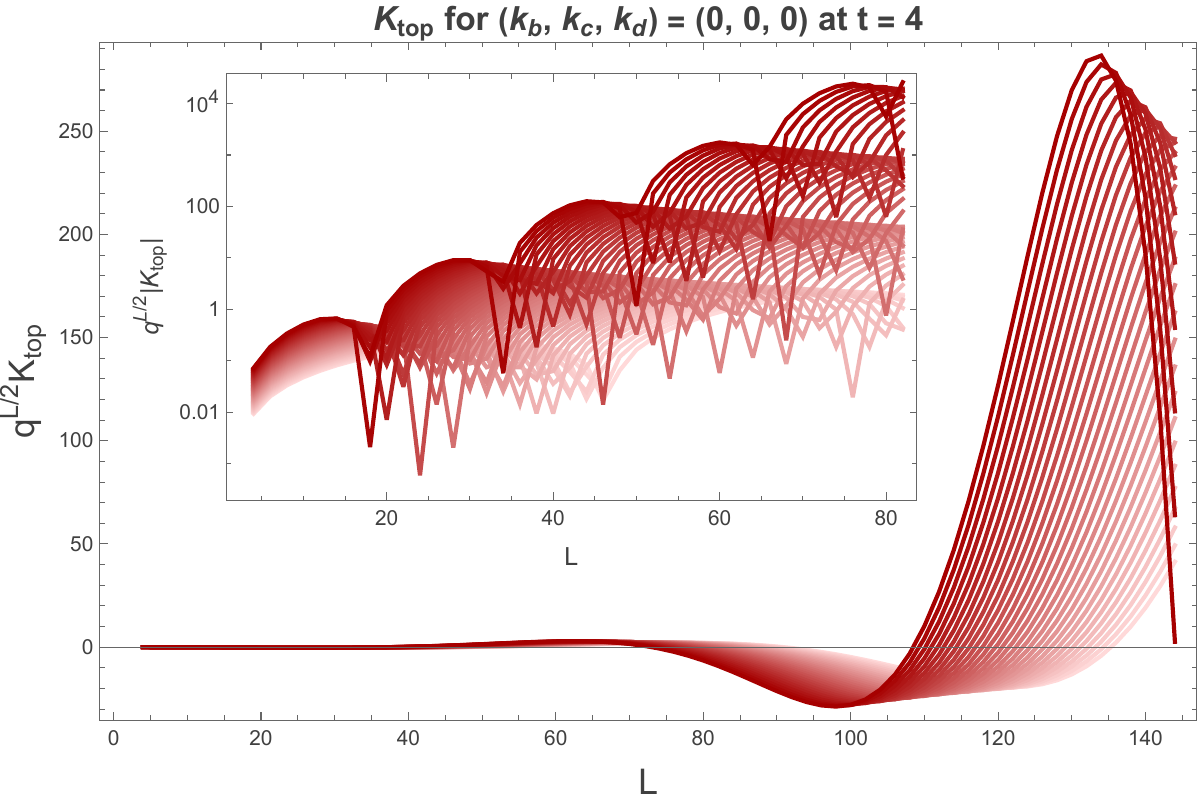}
\caption{
\textbf{Exact large-$q$ TopSFF in different momentum sectors at $t=4$.}
We show the exact large-$q$ TopSFF as a function of $L$ in the momentum sectors $(k_b,k_c,k_d)=(\pi,k,\pi-k)$, $(0,k,-k)$, and $(0,0,0)$, from left to right, respectively. The corresponding $\mu$ ranges are $\mu\in[0.6,0.9]$, $\mu\in[0.7,0.99]$, and $\mu\in[0.6,0.62]$, sampled in 50, 40, and 30 slices, respectively, and shown from light to dark red with increasing $\mu$. 
In the left and right panels, the TopSFF exhibits exponentially decaying and exponentially growing oscillatory behaviour, respectively. These oscillations arise from complex-conjugate eigenvalues of the Boltzmann factor, with modulus smaller than $1$ in the left panel and larger than $1$ in the right panel, see Fig.~\ref{fig:grid_4_sectors}. In the middle panel, the TopSFF mostly displays an exponentially decaying envelope, although an exceptional point exists in this sector, as described by \eqref{app_eq:ep_0k_sector}, the associated oscillatory behaviour is largely obscured by this decay.
Insets in the left and right panels show the corresponding semi-log plots for $\mu\in[0.6,0.7]$, where peaks and troughs correspond to the oscillations in the main panels.
}
\label{fig:tsff_three_sectors_data}
\end{figure}

\begin{figure}
    \centering

    \includegraphics[width=0.49\textwidth]{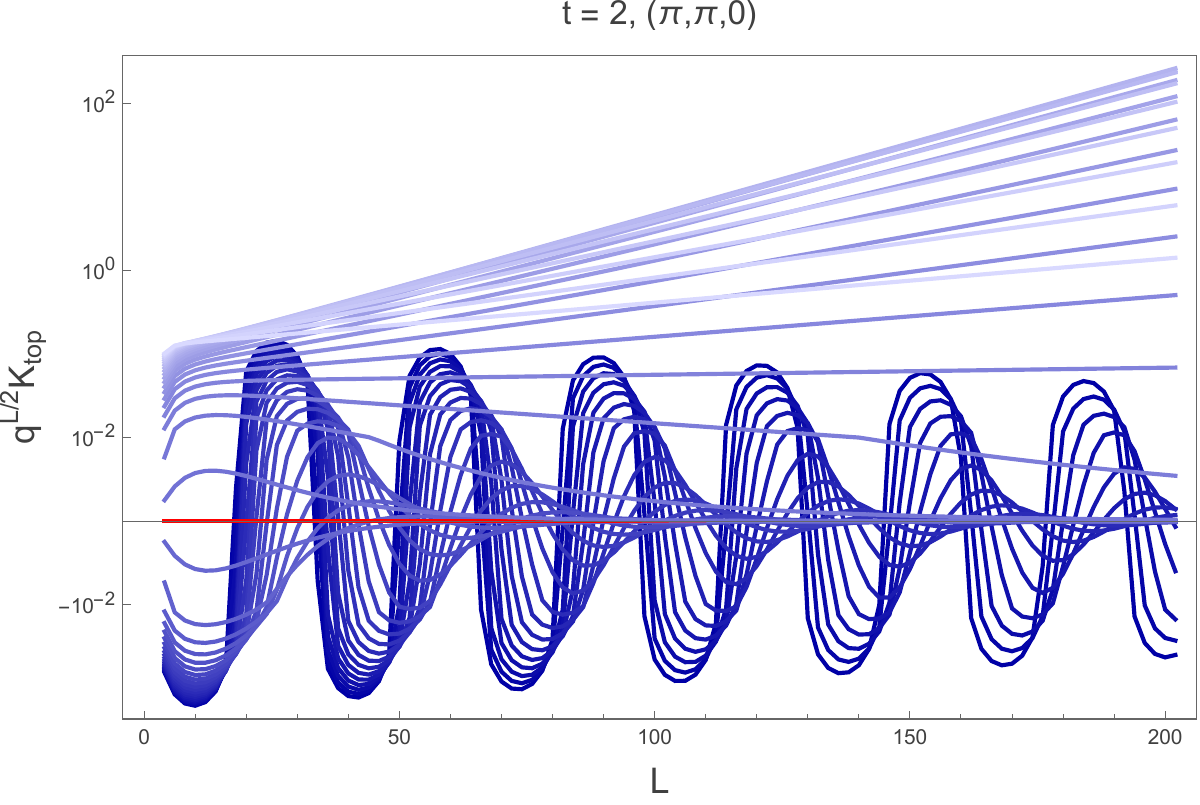}
    \hfill
    \includegraphics[width=0.49\textwidth]{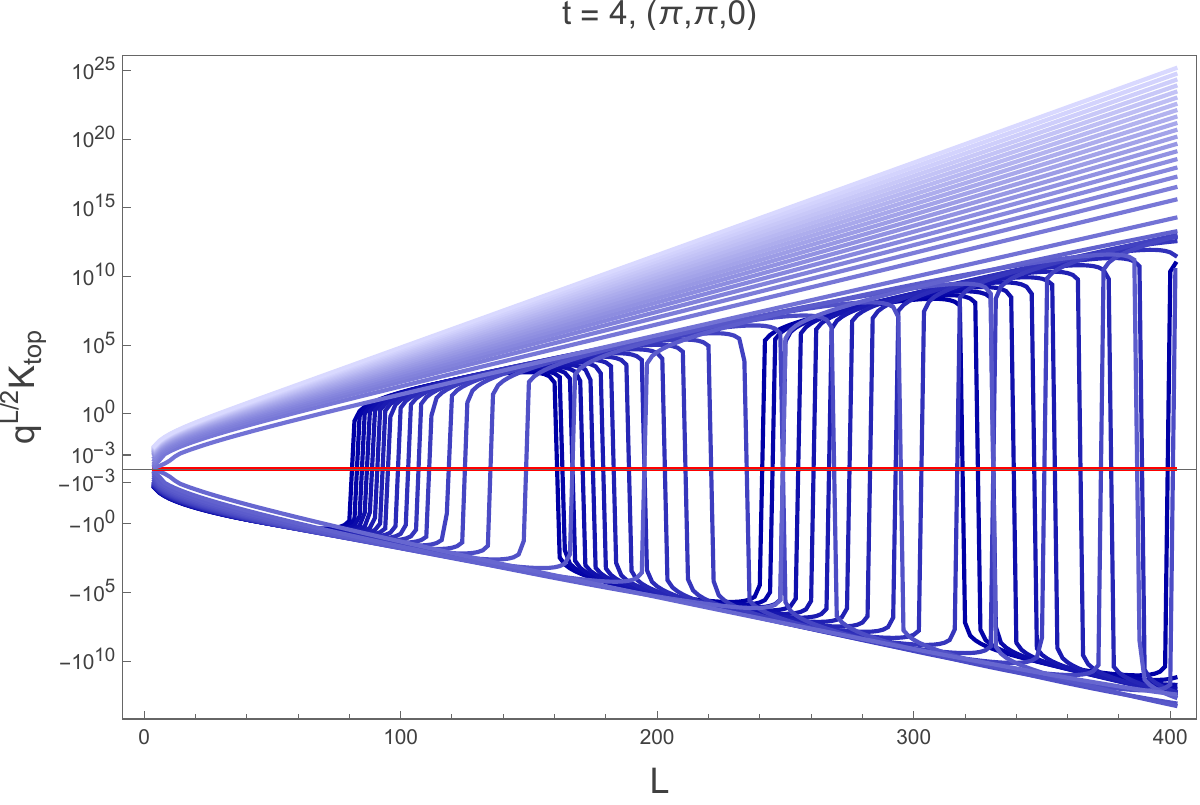}

    \vspace{0.5em}

    \includegraphics[width=0.49\textwidth]{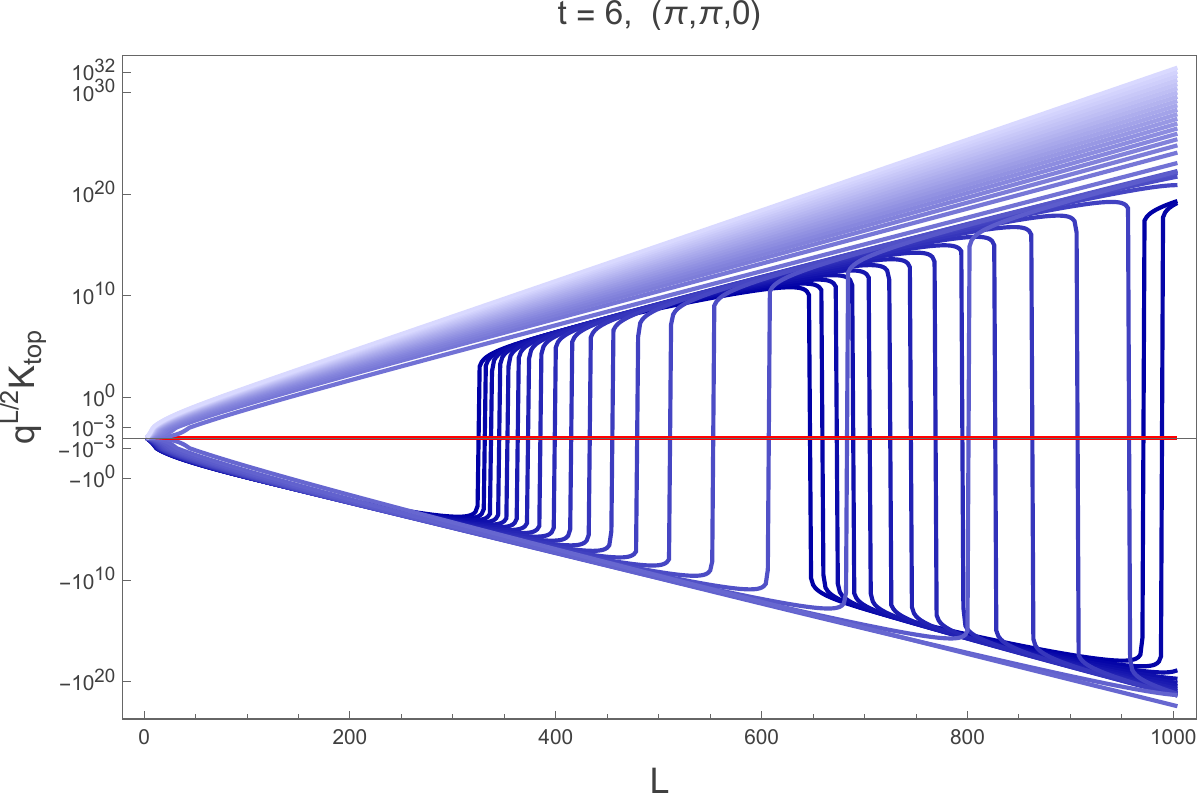}
    \hfill
    \includegraphics[width=0.49\textwidth]{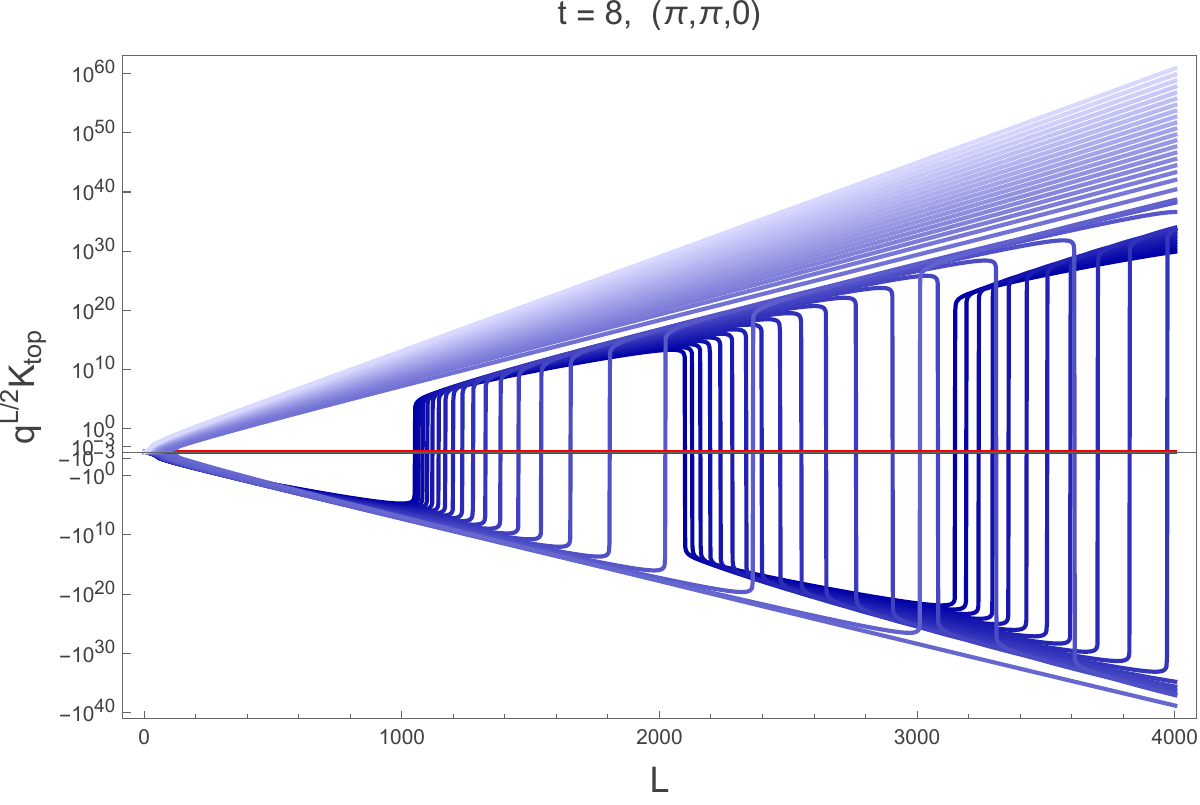}

    \caption{
\textbf{$\mathcal{PT}$ symmetry breaking transitions in the exact large-$q$ TopSFF in the $(\pi,\pi,0)$ sector.}
We plot the exact large-$q$ TopSFF as a function of $L$ for $t=2,4,6,8$, focusing on the neighbourhood
$\mu \in [\mu_{\EP}(t)-\delta\mu,\mu_{\EP}(t)+\delta\mu]$
of the exceptional point $\mu_{\EP}(t)$ given in Eq.~\eqref{eq:muEP_closed}.
For each value of $t$, this interval is sampled using $40$ slices in $\mu$, with
$\delta\mu=0.1,0.02,0.01,0.125$ for $t=2,4,6,8$, respectively.
Curves at smaller $\mu$, corresponding to stronger interaction strength $\epsilon$, are shown in darker blue.
The TopSFF evaluated exactly at the exceptional point $\mu=\mu_{\EP}$ is shown in red.
Note that for this particular boundary condition, the prefactor multiplying the exceptional point contribution vanishes, and hence the TopSFF is identically zero. The Jordan non-diagonality can be extracted by probing boundary-resolved TopSFF, as shown in Fig.~\ref{fig:pipi0_jordan_nondiag_largeq} and in Appendix \ref{sec:EP_coalescence_pipi0}.
}
    \label{app_fig:large_q_PT_trans}
\end{figure}

\subsection{TopSFF with time translational symmetry}\label{app:tsff_regular_bc}
Here we provide exact leading-$q$ solution of TopSFF with the global swap operator inserted as spatially extended topological defects, in the presence of parity inversion symmetry and discrete time translational symmetry. For completeness, we restate the TopSFF in \eqref{eq:master-unitary} here as
\be\label{app_eq:tsff_largeq_def}
\ba
\overline{K}_{\spatdef} & \, =:   q^{-\Leff -1} \overline{\Knormalized}_{\spatdef} + O(q^{-{\Leff} -2}),
\\
\overline{\Knormalized}_{\spatdef} & \, =
\phi^T \boltz^{\Leff} \phi = \sum_{k_b,k_c,k_d} \widetilde{\phi}(-k)^T\, \widetilde{\boltz}(k)^{\Leff}\, \widetilde{\phi}(k).
\ea
\ee
where $\boltz$ and $\widetilde{\boltz}$ are the generalized Boltzmann factor in position and momentum space respectively. The components of the boundary state in momentum space are given by 
\begin{equation}
\widetilde{\phi}(a;k_b,k_c,k_d)
=
i^{\delta_{a,1}}
\left[
\sqrt{2t^3}\mu^{2t}\,\delta_{n_b,0}\delta_{n_c,0}\delta_{n_d,0}
+
\sqrt{2t}(1-\mu^{2t})\,\delta_{n_b,0}\delta_{n_c+n_d,\,0 \pmod{t}}
\right],
\end{equation}
which is nonzero only if \(k_b=0\) and \(k_d\equiv -k_c \pmod{2\pi}\). Restricted to these momenta, the generalized Boltzmann factor depends only on whether \((k_b,k_c,k_d)=(0,0,0)\) or \((k_b,k_c,k_d)=(0,k,-k)\). Note that the generalized Boltzmann factors at these momentum sectors have $\mathcal{PT}$ symmetry. We use the subscript \(\zeroedge\) to denote the momentum sector \((k_b,k_c,k_d)=(0,0,0)\), and the subscript \(\zerobulk\) to denote the momentum sector \((k_b,k_c,k_d)=(0,k,-k)\) with \(k\neq 0\). Thus, TopSFF can be written as
\begin{equation}
\label{eq:two-block-decomp}
\overline{\Knormalized}_\spatdef =
2t\Big[
\mathcal{N}_{\zeroedge}^2\,\varphi^T \widetilde{\boltz}^{\Leff}(0,0,0) \, \varphi 
+(t-1)\mathcal{N}_{\zerobulk}^2\,
\varphi^T \widetilde{\boltz}^{\Leff}(0,k,-k) \, \varphi
\Big],
\end{equation}
with
\be
\mathcal{N}_{\zeroedge} := \left[1+(t-1) \mu^{2t}\right],\qquad 
 \mathcal{N}_{\zerobulk} := \left(1-\mu^{2t}\right), \qquad
 \varphi = \binom{1}{i}.
\ee
Using \eqref{app_eq:ktop_cayley_hamilton_pt}, the TopSFF with global swap operator as a spatially extended topological defect can be evaluated as
\begin{equation}
\label{eq:Kspatdef-final}
\begin{aligned}
\overline{\Knormalized}_{\spatdef}
&=
2t
\left[1+(t-1) \mu^{2t}\right]^2 \Ksf_{\zeroedge}
+ 2t (t-1) \left(1-\mu^{2t}\right)^2
\Ksf_{\zerobulk}\,,
\end{aligned}
\end{equation}
where $\Ksf_{\zeroedge}$ and $\Ksf_{\zerobulk}$ are given in \eqref{app_eq:tsff_zero_edge} and \eqref{app_eq:tsff_zero_bulk} respectively.
The implications of the presence of $\mathcal{PT}$ symmetry and the existence of exceptional points are discussed below. As a consistency check, for $t=1$ and $\alpha=1$, \eqref{eq:Kspatdef-final} reduces to
\be\label{eq:floqparity_largeq_t1_check}
\ba
\overline{\Knormalized}_{\spatdef}
=& \,
2(\mu^4-1)\,
\frac{\lambda_{\zeroedge,+}^{\Leff}-\lambda_{\zeroedge,-}^{\Leff}}{\lambda_{\zeroedge,+}-\lambda_{\zeroedge,-}}
=
-\frac{4\sqrt{1-\mu^4}}{\sqrt{\mu^4+7}}
\left(\sqrt{2(1-\mu^4)}\right)^{\Leff}
\sin(\Leff\theta),
\\
\lambda_{\zeroedge,\pm}
=& \, 
\frac{1-\mu^4}{2}\pm \frac{1}{2}\sqrt{(\mu^4-1)(\mu^4+7)}=:r e^{\pm i\theta}.
\ea
\ee
where $r=\sqrt{2(1-\mu^4)}$ and $\cos\theta=\sqrt{(1-\mu^4)/8}$ or $\sin\theta=\sqrt{(\mu^4+7)/8}$. Eq.~\eqref{eq:floqparity_largeq_t1_check} coincides with (i) the large-$q$ expansion of the exact finite-$q$ result of the TopSFF in Eq.~\eqref{eq:t1_finite_q_v2} at $t=1$, and (ii) the exact large-$q$ calculation of the TopSFF for the RPM at $t=1$ in the absence of discrete time translational symmetry in Eq.~\eqref{eq:K_spatdef_ana_temp_rand}.

\subsection{TopSFF with engineered parity inversion axes}\label{app:engineered_bc}

 Different momentum sectors of TopSFF, in particular the $(\pi,\pi,0)$ sector containing the $\mathcal{PT}$ symmetry breaking transition, can be accessed by engineering the coupling gates along the parity inversion axes. We focus on two choices: no coupling and period-2 driving along these axes. Comparing the corresponding TopSFF allows us to isolate the $(k_b,k_c,k_d)=(\pi,\pi,0)$ contribution.

\subsubsection{No coupling}
Here we compute the TopSFF in Eq.~\eqref{app_eq:tsff_largeq_def} exactly for RPM with discrete time translational symmetry, but without coupling along the parity inversion axes. The generalized Boltzmann factor $\widetilde{\boltz}$ remains the same, while the boundary states in Eq.~\eqref{app_eq:k_top_def_floq} in momentum space are given by
\be 
\widetilde{\phi}(a;k_b,k_c,k_d)
=
i^{\delta_{a,1}}
\sqrt{2t^{3}}
\delta_{n_b,0}\delta_{n_c,0}\delta_{n_d,0},
\ee
Since the boundary states lie entirely in the momentum sector $(k_b,k_c,k_d)=(0,0,0)$, the TopSFF without coupling along parity inversion axes can be evaluated as
\be
\overline{\Knormalized}_\spatdef = 
8t^3  \Ksf_{\zeroedge}\, , 
\ee
where $\Ksf_{\zeroedge}$ is given in \eqref{app_eq:tsff_zero_edge}.
This case is useful for removing contributions in TopSFF associated with the zero momentum sector.

\subsubsection{Period-2 driving}
Here we compute the TopSFF in Eq.~\eqref{app_eq:tsff_largeq_def} exactly for RPM with discrete time translational symmetry, but with period-2 driving along the parity inversion axes. The generalized Boltzmann factor $\widetilde{\boltz}$ remains the same, while the boundary states in Eq.~\eqref{app_eq:k_top_def_floq} in momentum space are given by
\begin{equation} \label{app_eq:ini_vector_p2}
\begin{aligned}
\widetilde{\phi}(a;k_b,k_c,k_d)
=& \,
i^{\delta_{a,1}}
\left[
\mu^{2t}\sqrt{2t^3}\,\delta_{n_b,0}\delta_{n_c,0}\delta_{n_d,0}
+
(1-\mu^{2t})\sqrt{\frac{t}{2}}\;
\delta_{n_b\ (\mathrm{mod}\ t),\,0}\;
\delta_{n_c+n_d+n_b/2\ (\mathrm{mod}\ t), 0}
\right],
\end{aligned}
\end{equation}
which has support in the $\mathcal{PT}$ symmetric sectors \( (k_b,k_c,k_d) \in \{(0,k,-k), (\pi,k, \pi-k) \} \). Defining
\begin{equation}
    \begin{aligned}
\Ncal_0:=
\sqrt{\frac t2}\Big[1+(2t-1)\mu^{2t}\Big],
\qquad
\Ncal_\star:= \sqrt{\frac t2}\,(1-\mu^{2t}) \, ,
    \end{aligned}
\end{equation}
the TopSFF with period-2 driving along parity inversion axes can be evaluated as 
\be
\overline{\Knormalized}_\spatdef = 
\mathcal{N}^2_0\Ksf_{\zeroedge}
+(t-1)  \mathcal{N}^2_\star \Ksf_{\zerobulk}
+ \mathcal{N}^2_\star\Big[
2 \Ksf_{\piedge}
+(t-2)\Ksf_{\pibulk}
\Big].
\ee
where $\Ksf_{\zeroedge}$, $\Ksf_{\zerobulk}$, $\Ksf_{\piedge}$ and $\Ksf_{\pibulk}$ are respectively given in \eqref{app_eq:tsff_zero_edge}, \eqref{app_eq:tsff_zero_bulk}, \eqref{app_eq:tsff_pi_edge} and \eqref{app_eq:tsff_pi_bulk}.

\subsubsection{Isolating $(k_b, k_c,k_d)=(\pi,\pi,0)$ sector by engineering parity inversion axes}
We extract the TopSFF directly in the $(\pi,\pi,0)$ sector by computing TopSFF in finite-\(q\) via  dual spatial propagation of the quantum circuit, as discussed in Secs.~\ref{subsec:finite_q_symbolic_tsff} and \ref{app:num_sim}.
However, in principle, TopSFF of $(k_b, k_c,k_d)=(\pi,\pi,0)$ and its \(\mathcal{PT}\) symmetry breaking transition can be extracted by engineering the parity inversion axes. 
More specifically, comparing the no-coupling boundary condition with period-2 driving allows one to separate the dominant contributions between the $k_b=0$ sector and the $k_b=\pi$ sector. As shown in Fig.~\ref{fig:grid_4_sectors}, among the momentum sectors accessed by period-2 driving, namely $(0,0,0)$, $(0,k,-k)$, $(\pi,\pi,0)$, and $(\pi,k,\pi-k)$, only the $(0,0,0)$ and $(\pi,\pi,0)$ sectors contain eigenvalues with modulus larger than unity. The contributions from the remaining sectors therefore decay exponentially with system size $L$. Consequently, at large $L$, the contribution from the $(\pi,\pi,0)$ sector can in principle be isolated from the period-2 TopSFF by subtracting the corresponding zero-momentum contribution,
\be
\mathsf{K}_{\piedge}
=
\overline{\Knormalized}_{p=2}
-
\beta\,
\overline{\Knormalized}_{\mathrm{no\text{-}coupling}} ,
\ee
where $\overline{\Knormalized}_{p=2}$ denotes TopSFF with period-2 driving along the parity-inversion axes, $\overline{\Knormalized}_{\mathrm{no\text{-}coupling}}$ denotes TopSFF without coupling along these axes, and $\beta$ is chosen to fit the zero momentum sector contribution. This subtraction isolates the $(\pi,\pi,0)$ contribution up to terms that are exponentially small in $L$. In practice, however, we extract the TopSFF directly in the $(\pi,\pi,0)$ sector by computing TopSFF in finite-\(q\) via  dual spatial propagation of the quantum circuit, as discussed in Secs.~\ref{subsec:finite_q_symbolic_tsff} and \ref{app:num_sim}.

\section{Emergent \(\mathcal{PT}\) symmetric dimers}
\label{app:PT_mapping}

In this appendix, we first review the two-mode \(\mathcal{PT}\) dimer, and then show how the TopSFF generalized Boltzmann factor reduces to a canonical \(\mathcal{PT}\) symmetric dimer. This gives an interpretation of the \(\mathcal{PT}\) unbroken regime, \(\mathcal{PT}\) broken regime, and the exceptional point in terms of Gaussian/non-Gaussian tDW detuning and conversion.

\subsection{Review of $\mathcal{PT}$ symmetric dimers}
Consider a single complex amplitude $\psi(t)$ evolving under a generator $H$ as
\be
i\frac{d}{dt}\psi = H\psi.
\ee
If $H=i\gamma$ with $\gamma\in\mathbb{R}$, then
\be
\frac{d}{dt}\psi=\gamma\psi
\quad\Rightarrow\quad
\psi(t)=e^{\gamma t}\psi(0).
\ee
Thus $\gamma>0$ gives exponential growth of amplitude, which is often referred to as ``gain''. If instead $H=-i\gamma$, then
\be
\frac{d}{dt}\psi=-\gamma\psi
\quad\Rightarrow\quad
\psi(t)=e^{-\gamma t}\psi(0),
\ee
which leads to  exponential decay of amplitude, which is referred to as ``loss''. 

Consider the canonical $\mathcal{PT}$ dimer
\begin{equation}
H=
\begin{pmatrix}
+i\gamma & J\\
J & -i\gamma
\end{pmatrix},
\label{eq:PT_dimer}
\end{equation}
where $J,\gamma\in\mathbb{R}$. The diagonal entries $\pm i\gamma$ represent balanced gain/loss, while $J$ coherently couples the two modes. In particular, we have $\tr H = i\gamma-i\gamma =0$,
so the gain and loss are exactly balanced at the level of the trace. The eigenvalues of the $\mathcal{PT}$ dimer are
\begin{equation}
E_\pm=\pm\sqrt{J^2-\gamma^2},
\label{eq:Epm_PT_dimer}
\end{equation}
and a right eigenvector $\psi=(u,v)^T$ of the eigenvalue $E$ satisfies
\begin{equation}
(i\gamma-E)u+Jv=0
\qquad\Longrightarrow\qquad
\frac{v}{u}=\frac{E-i\gamma}{J}.
\label{eq:ratio_v_over_u}
\end{equation}
The Hamiltonian \eqref{eq:PT_dimer} satisfies the $\mathcal{PT}$ symmetry
\begin{equation}
(\mathcal{PT})H(\mathcal{PT})^{-1}=H,
\qquad
\mathcal{P}=\sigma_x,
\label{eq:PT_sym_H}
\end{equation}
and with $\mathcal{T}$ being complex conjugation. However, individual eigenvectors need not be invariant under $\mathcal{PT}$.
If $H|\psi\rangle=E|\psi\rangle$, then
\begin{equation}
H(\mathcal{PT}|\psi\rangle)=E^*(\mathcal{PT}|\psi\rangle),
\label{eq:PT_maps_E_to_Estar}
\end{equation}
so $\mathcal{PT}$ maps an eigenvector at $E$ to an eigenvector at $E^*$.

The phases of the model are given as follow.
\begin{itemize}
\item \textbf{Unbroken $\mathcal{PT}$ phase ($|\gamma|<|J|$): real spectrum and balanced modes.}
In this strong coupling regime with $E_\pm\in\mathbb{R}$, the time evolution is governed by pure phases, $e^{-iE_\pm t}$, and there is no exponential growth/decay in time: the eigenmodes are dynamically balanced.
The right eigenvectors can be chosen as $\mathcal{PT}$ eigenstates
\begin{equation}
    \mathcal{PT}|\psi_\pm\rangle=e^{i\theta_\pm}|\psi_\pm\rangle.
\end{equation}
Using $E^2=J^2-\gamma^2$ in \eqref{eq:ratio_v_over_u} gives
\begin{equation}
\left|\frac{v}{u}\right|
=\frac{|E-i\gamma|}{|J|}
=\frac{\sqrt{E^2+\gamma^2}}{|J|}
=\frac{\sqrt{J^2}}{|J|}
=1.
\label{eq:balanced_weights}
\end{equation}
Thus each eigenvector has equal weight on the gain and loss sites (up to a relative phase): the eigenvectors samples both sites so that net gain/loss cancels.

\item \textbf{Broken $\mathcal{PT}$ phase ($|\gamma|>|J|$): complex-conjugate spectrum and polarized modes.}
Write
\begin{equation}
E_\pm=\pm i\kappa,
\qquad
\kappa:=\sqrt{\gamma^2-J^2}>0.
\end{equation}
Then time evolution contains exponentials, $e^{-iE_\pm t}=e^{\pm \kappa t}$, so one eigenmode amplifies while the other eigenmode decays. $\mathcal{PT}$ cannot leave an eigenvector invariant, since it maps $E_\pm\mapsto E_\pm^*=E_\mp$. In other words,
\begin{equation}
\mathcal{PT}|\psi_+\rangle \propto |\psi_-\rangle,\qquad
\mathcal{PT}|\psi_-\rangle \propto |\psi_+\rangle.
\end{equation}
The ratio of the components of eigenvector becomes
\begin{equation}
\frac{v}{u}=\frac{i\kappa-i\gamma}{J}
=i\,\frac{\kappa-\gamma}{J},
\end{equation}
so generically $\bigl|v/u\bigr|\neq 1$, i.e., the right eigenvectors become polarized toward one site/mode. This eigenvector polarization is the signature accompanying $\mathcal{PT}$ symmetry breaking.

\item \textbf{Exceptional point ($|\gamma|=|J|$): coalescence of eigenvalues and eigenvectors.}
At threshold, $E_+=E_-=0$ and the two eigenvalues merge. Generically the corresponding eigenvectors coalesce, so $H$ becomes non-diagonalizable and is represented by a Jordan block. This coalescence of eigenvalues and eigenvectors defines the exceptional point.
\end{itemize}

\subsection{TopSFF and $\mathcal{PT}$ dimers}\label{app:tsff_and_pt}
According to Proposition \ref{prop:PT_classification_full}, in the momentum space, the generalized Boltzmann factor $\widetilde{\boltz}(k_b, k_c,k_d)$ has $\mathcal{PT}$ symmetry  with complex conjugation $\mathcal{T}$ and $\mathcal{P}:= \sigma_z$, if $k_b+k_c+k_d\equiv 0 \pmod{\pi}$. At such symmetric points, the generalized Boltzmann factor $\widetilde{\boltz}(k_b, k_c,k_d)$ is a 2-by-2 matrix in $a$-space with the form
\begin{equation}
\widetilde{\boltz}=
\begin{pmatrix}
\widetilde{\boltz}_{00} & \widetilde{\boltz}_{01}\\
\widetilde{\boltz}_{10} & \widetilde{\boltz}_{11}
\end{pmatrix}
=
\begin{pmatrix}
\Asf & i\Bsf\\
i\Bsf & \Csf
\end{pmatrix}
=d_{0}\,\iden+d_{z}\,\sigma_z+i\Bsf\,\sigma_x,
\qquad
d_{0}:=\frac{\Asf+\Csf}{2},\qquad
d_{z}:=\frac{\Asf-\Csf}{2}.
\label{eq:Mp_def_concise_nop}
\end{equation}
where $\Asf,\Bsf,\Csf\in\mathbb{R}$, and we have left labels of momentum sector implicit. In terms of the temporal domain wall basis,
\begin{equation}
\{ \ket{\gauss}:=\ket{a=0; k_b,k_c,k_d}\quad\text{(Gaussian tDW)},\qquad
\ket{\ngauss}:=\ket{a=1; k_b,k_c,k_d}\quad\text{(non-Gaussian tDW)}\,\},
\end{equation}
 the entry of the Boltzmann factor  has the following  interpretation:
\begin{itemize}
\item $\widetilde{\boltz}_{00} \equiv \Asf$: Amplitude for the Gaussian tDW to remain as Gaussian tDW,
\item $\widetilde{\boltz}_{11} \equiv  \Csf$: Amplitude for the non-Gaussian tDW to remain as non-Gaussian tDW,
\item $\widetilde{\boltz}_{01} \equiv  i\Bsf$: Amplitude for the DW type conversion with a $\pi/2$ phase.
\end{itemize}
Now introduce a unitary transformation
\begin{equation}
U:=\frac{1}{\sqrt{2}}
\begin{pmatrix}
1 & 1\\
1 & -1
\end{pmatrix},
\qquad
U\sigma_zU^\dagger=\sigma_x,
\qquad
U\sigma_xU^\dagger=\sigma_z.
\label{eq:U_def_Ttilde}
\end{equation}
Then the rotated TopSFF Boltzmann factor is the canonical $\mathcal{PT}$ dimer up to a trace shift,
\begin{equation}
U\widetilde{\boltz}U^\dagger
=
d_0\,\iden
+d_z\,\sigma_x
+i\Bsf\,\sigma_z
=
d_0\,\iden
+
\begin{pmatrix}
i\Bsf& \frac{\Asf-\Csf}{2}\\
\frac{\Asf-\Csf}{2} & -i\Bsf
\end{pmatrix}
\equiv
d_0\,\iden
+
\begin{pmatrix}
+i\gamma & J\\
J & -i\gamma
\end{pmatrix},
\label{eq:Ttilde_rotated_ordered_pm}
\end{equation}
where the coupling strength and gain/loss rates are determined respectively by
\begin{equation}
    J:=d_z=\frac{\Asf-\Csf}{2} = \frac{\widetilde{\boltz}_{00} - \widetilde{\boltz}_{11}}{2},
\qquad
\gamma:=\Bsf = -i\widetilde{\boltz}_{01} .
\end{equation}
Note that after the mapping, $U\widetilde{\boltz}U^\dagger$ is  invariant under the $\mathcal{PT}$ symmetry with complex conjugation $\mathcal{T}$ and the standard parity symmetry operator $U\mathcal{P}U^{\dagger}=\sigma_x$. The basis is now
\begin{equation}
\left\{ 
\ket{+}:=\tfrac{1}{\sqrt{2}}\left(\ket{\gauss}+ \ket{\ngauss}\right) ,
\qquad
\ket{-}:= \tfrac{1}{\sqrt{2}}\left(\ket{\gauss}- \ket{\ngauss} \right)  \right\},
\end{equation}
The eigenvalues of $\widetilde{\boltz}$ are
\( \lambda_\pm=
\frac{\Asf+\Csf}{2}
\pm \sqrt{\Delta}\) with the transition-controlling discriminant
\begin{equation}
\Delta:=J^2-\gamma^2
=\left(\frac{\Asf-\Csf}{2}\right)^2-\Bsf^2 
=  
\left(\frac{\widetilde{\boltz}_{00} - \widetilde{\boltz}_{11}}{2}\right)^2
-
\left(-i\widetilde{\boltz}_{01}\right)^2
,
\label{eq:lambda_discriminant_gauss}
\end{equation}
i.e., the \(\mathcal{PT}\) symmetry-breaking transition is controlled by whether the Gaussian/non-Gaussian tDW conversion amplitude exceeds the detuning between the Gaussian and non-Gaussian tDW propagation amplitudes.

TopSFF can be obtained by evaluating $\varphi^{T}\widetilde{\boltz}^{\,\Leff}\varphi$ with the vector $\varphi=(\varphi_1,\varphi_2)^T$. As discussed in \ref{app_sec:cayley_hamilton}, using the
Cayley-Hamilton theorem on $\widetilde{\boltz}$, we can write
\begin{equation}
\label{eq:phi_T_TL_phi_general}
\varphi^{T}\widetilde{\boltz}^{\,\Leff}\varphi
=
A(\Leff)\, \varphi^{T}\widetilde{\boltz}\varphi
+
B(\Leff)\varphi^{T}\varphi \, ,
\end{equation}
where 
\begin{equation}
\label{eq:CH_decomposition}
A(\Leff):=\frac{\lambda_{+}^{\Leff}-\lambda_{-}^{\Leff}}{\lambda_{+}-\lambda_{-}},
\qquad
B(\Leff):=\frac{\lambda_{+}\lambda_{-}^{\Leff}-\lambda_{-}\lambda_{+}^{\Leff}}{\lambda_{+}-\lambda_{-}}.
\end{equation}
and, in terms of components of $\varphi$,
\begin{equation}
\begin{aligned}
\label{eq:phiT_T_phi_components}
\varphi^{T}\widetilde{\boltz}\,\varphi
=
\Asf\,\varphi_1^{2}
+\Csf\,\varphi_2^{2}
+2i\Bsf\,\varphi_1\varphi_2,
\qquad
\varphi^{T}\varphi=\varphi_1^{2}+\varphi_2^{2}\, .
\end{aligned}
\end{equation}
For a given momentum sector $(k_b,k_c,k_d)$ where $\mathcal{PT}$ symmetry is present, the $\mathcal{PT}$ symmetry breaking transition has the following interpretation:
\begin{itemize}
\item \textbf{Unbroken $\mathcal{PT}$ phase ($\Delta>0$ and $\left| \widetilde{\boltz}_{01} \right| < \left| \frac{\widetilde{\boltz}_{00} - \widetilde{\boltz}_{11}}{2} \right| $): Polarization of Gaussian/non-Gaussian tDWs.}
In the rotated basis the right eigenvectors can be chosen as $\mathcal{PT}$ eigenstates, i.e. they are balanced across the effective gain/loss modes.
In the unrotated tDW basis, the eigenvectors are dominated by \(\ket{\gauss}\) or \(\ket{\ngauss}\), so the propagation eigenmodes do not exhibit substantial coherent conversion between Gaussian and non-Gaussian tDWs.

The eigenvalues $\lambda_{\pm}\in\mathbb{R}$ are distinct. Take the convention where $|\lambda_{+}|>|\lambda_{-}|$. From
\eqref{eq:CH_decomposition},
\begin{equation}
A(\Leff)=c\,\lambda_{+}^{\Leff-1},
\qquad
c:=\frac{1-(\lambda_{-}/\lambda_{+})^{\Leff}}{1-(\lambda_{-}/\lambda_{+})}
\approx \frac{1}{1-(\lambda_{-}/\lambda_{+})},
\end{equation}
and similarly $B(\Leff)=\mathcal{O}\left(\lambda_{+}^{\Leff-1} \right)$. Hence
$\varphi^{T}\widetilde{\boltz}^{\,\Leff}\varphi$ grows or decays exponentially without oscillations with the dominant real eigenvalue.

\item \textbf{Broken $\mathcal{PT}$ phase ($\Delta<0$ and  $\left| \widetilde{\boltz}_{01} \right| > \left| \frac{\widetilde{\boltz}_{00} - \widetilde{\boltz}_{11}}{2} \right| $): Hybridization of Gaussian/non-Gaussian tDWs.}
In the rotated basis the eigenvalues form a complex-conjugate pair and the eigenvectors are not $\mathcal{PT}$ eigenstates ($\mathcal{PT}$ exchanges them).
In the unrotated tDW basis, this polarization manifests as hybridization between
\(\ket{\gauss}\) and \(\ket{\ngauss}\): each propagation eigenmode involves nontrivial coherent hybridization of the Gaussian and non-Gaussian tDW sectors, with one mode effectively amplified and the other mode attenuated under repeated multiplication by \(\widetilde{\boltz}\).

The eigenvalues  $\lambda_{\pm}$ form a complex-conjugate pair. Write
\begin{equation}
\lambda_{\pm}=d_0\pm i\sqrt{|\Delta|}=r\,e^{\pm i\theta},
\qquad
r:=|\lambda_{\pm}|=\sqrt{d_0^2+|\Delta|},
\qquad
\theta:=\arctan\left(\frac{\sqrt{|\Delta|}}{d_0}\right).
\end{equation}
In particular,
\begin{equation}
A(\Leff)
=\frac{\lambda_{+}^{\Leff}-\lambda_{-}^{\Leff}}{\lambda_{+}-\lambda_{-}}
=
r^{\Leff-1}\,\frac{\sin(\Leff\theta)}{\sin\theta},
\end{equation}
and $B(\Leff)=\mathcal{O}(r^{\Leff-1})$. Therefore,
$\varphi^{T}\widetilde{\boltz}^{\,\Leff}\varphi$ has an envelope $r^{\Leff-1}$ with oscillations
set by $\sin(\Leff\theta)/\sin\theta$.

\item \textbf{Exceptional point ($\Delta=0$ and and $\left| \widetilde{\boltz}_{01} \right| = \left| \frac{\widetilde{\boltz}_{00} - \widetilde{\boltz}_{11}}{2} \right| $): onset of hybridization and mode coalescence.}

At $\Delta=0$ the two eigenvalues coalesce, and generically the two eigenvectors merge into a
single eigenvector in the Jordan form. This is the $\mathcal{PT}$ threshold where the tDW sector hybridization switches on non-analytically.

The eigenvalue $\lambda_{+}=\lambda_{-}=d_0$ and $\widetilde{\boltz}$ is generically non-diagonalizable. Taking the coalescing limit in \eqref{eq:CH_decomposition} yields
\begin{equation}
A(\Leff)=\Leff\, d_{0}^{\,\Leff-1},
\qquad
B(\Leff)=(1-\Leff)\,d_{0}^{\,\Leff}.
\end{equation}
Thus $\varphi^{T}\widetilde{\boltz}^{\,\Leff}\varphi$ acquires the characteristic Jordan non-diagonality, i.e., an extra factor of $\Leff$ relative to the generic cases. See \ref{app_sec:boundary_resolved_topsff} for the extraction of the linear scalling in $\Leff$ from TopSFF.
\end{itemize}

\section{TopSFF and exceptional points (EP)}\label{app:tsff_ep}

In this appendix, we analyse exceptional points (EP) of the generalized Boltzmann factor in the \(\mathcal{PT}\) symmetric momentum sectors. Section \ref{app:selected_ep} solves the exceptional-point condition in the four selected momentum sectors \((0,k,-k)\), \((0,0,0)\), \((\pi,\pi,0)\), and \((\pi,k,\pi-k)\). Section \ref{app:ep_data} presents the resulting EP data. Section \ref{sec:EP_coalescence_pipi0} presents the signatures of eigenvalue and eigenvector coalescence, including the Jordan-block contribution exposed by boundary-resolved TopSFFs.

\subsection{EP in selected momentum sectors}\label{app:selected_ep}
We study exceptional points (EPs) of the generalized Boltzmann factor in the $\mathcal{PT}$-symmetric sectors. In these sectors, the Boltzmann factor takes the form
\begin{equation}
\widetilde{\boltz}=
\begin{pmatrix}
\Asf & i\Bsf\\
i\Bsf & \Csf
\end{pmatrix},
\end{equation}
with eigenvalues
\begin{equation}
\lambda_\pm=
\frac{\Asf+\Csf}{2}
\pm \sqrt{\Delta},
\qquad
\Delta:=
\left(\frac{\Asf-\Csf}{2}\right)^2-\Bsf^2 .
\label{eq:lambda_discriminant_gauss_v2}
\end{equation}
Exceptional points occur when the discriminant vanishes, $\Delta=0$, equivalently
\begin{equation}
\Asf-\Csf=\pm 2\Bsf .
\label{app_eq:ep_condition}
\end{equation}
For each $\mathcal{PT}$-symmetric sector, we define the branch polynomials
\begin{equation}
P^\pm(\mu):=\Asf-\Csf\mp 2\Bsf,
\label{eq:Ppm_def}
\end{equation}
so that the EP condition is $P^\pm(\mu)=0$. Below we identify the corresponding values of $\mu$ for which $\Delta=0$. See Sec.~\ref{app:tsff_and_pt} for further discussion of the associated $\mathcal{PT}$ symmetry breaking transition.

\subsubsection{EP for $(k_b,k_c,k_d)=(0,k,-k)$}
For the sector $(k_b,k_c,k_d)=(0,k,-k)$ with $k\neq0$, the matrix elements are given in Eq.~\eqref{app_eq:ABC_0k}.

\textbf{$(+)$ branch: $\Asf-\Csf=+2\Bsf$.}
On this branch, the geometric sums cancel, and
\begin{equation}
P^+_{\zerobulk}(\mu)
=
\mu^{t-1}(\mu^2-1)\bigl(2t\mu^4-\mu^2+1\bigr).
\end{equation}
The factor $\mu^2-1$ gives the boundary solution $\mu=1$, while for $t\ge2$ the factor $\mu^{t-1}$ gives the boundary solution $\mu=0$. The remaining quadratic has discriminant $1-8t<0$ for all $t\ge1$, so it has no real root. Hence there is no EP with $\mu\in(0,1)$ on the $(+)$ branch.

\textbf{$(-)$ branch: $\Asf-\Csf=-2\Bsf$.}
On this branch,
\begin{equation}
\label{eq:Pstar}
\begin{aligned}
P^-_{\zerobulk}(\mu)
=&\,
8\sum_{m=0}^{t-1}\mu^{4m}
-16\sum_{m=t}^{2t-1}\mu^{2m}
-\mu^{4t-4}+2\mu^{4t-2}+(6t-1)\mu^{4t}
+2t\mu^{4t+2}.
\end{aligned}
\end{equation}
One checks that
\begin{equation}
P^-_{\zerobulk}(1)=0,
\qquad
P^-_{\zerobulk}(0)=8 \quad (t\ge2),
\qquad
\frac{\mathrm{d}}{\mathrm{d}\mu}P^-_{\zerobulk}(1)=4t>0.
\end{equation}
For $t\ge2$, $P^-_{\zerobulk}(\mu)$ is negative just to the left of $\mu=1$ and positive at $\mu=0$. Therefore, by continuity there is at least one non-trivial root in $(0,1)$. Thus the $(-)$ branch admits an interior EP. At large $t$, the interior root approaches $\mu=1$ as
\begin{equation} \label{app_eq:ep_0k_sector}
\mu_{\EP}(t)
=
1-\frac{1}{t^2}+O(t^{-3}),
\qquad t\gg1.
\end{equation}
In Fig.~\ref{fig:EP_polynomials_and_scaling}, we plot $P^-_{\zerobulk}$ whose roots determine $\mu_{\EP}(t)$.

\begin{figure}[ht]
\centering
\includegraphics[width=0.32\textwidth]{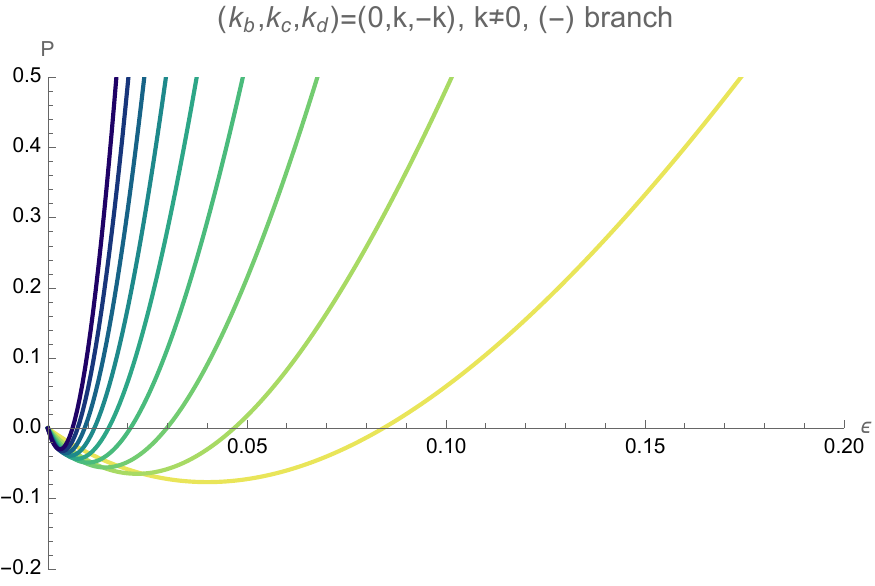}
\hfill
\includegraphics[width=0.32\textwidth]{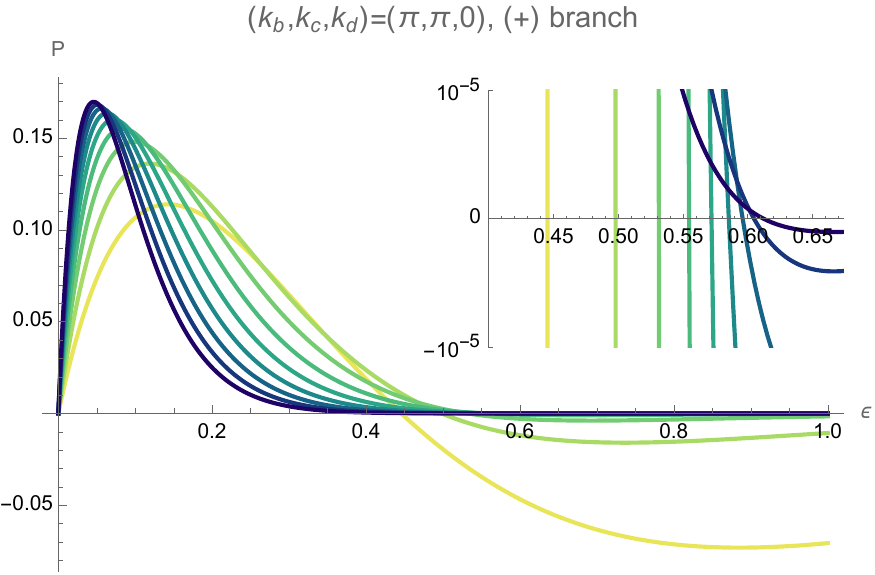}
\hfill
\includegraphics[width=0.32\textwidth]{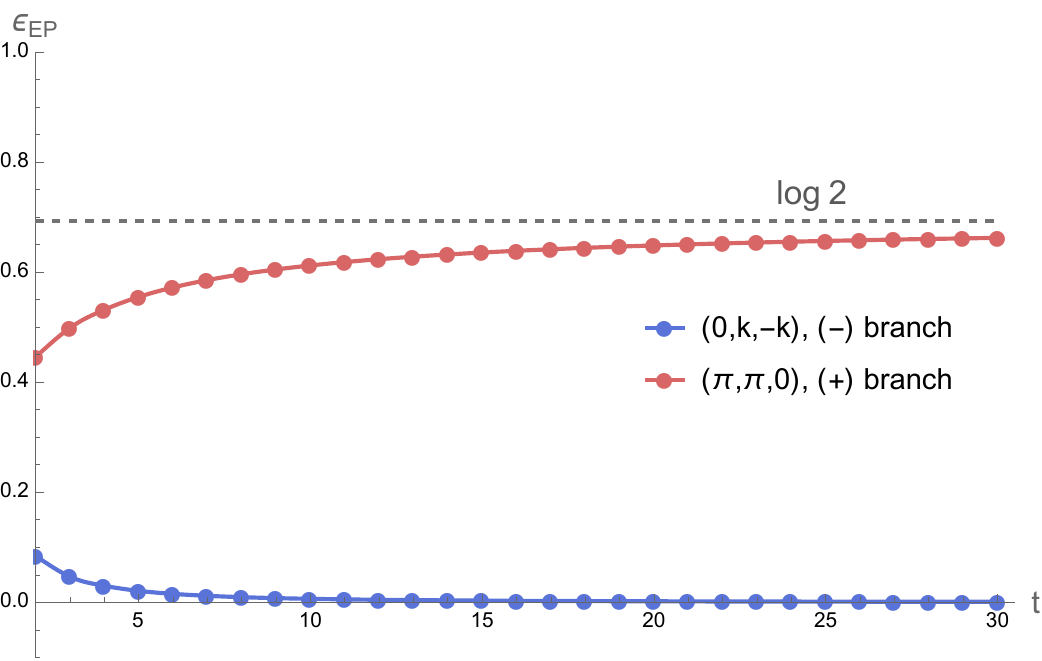}
\caption{
\textbf{Exceptional points in the $(0,k,-k)$ and $(\pi,\pi,0)$ sectors at large $q$.}
Left and middle panels show $P^-_{\zerobulk}(\epsilon)$ and $P^+_{\piedge}(\epsilon)$, respectively, as functions of $\epsilon\equiv -2\ln\mu$, with $t=2,\ldots,10$ shown from yellow to blue. For each $t\ge2$, a non-trivial root with $\mu\in(0,1)$ signals an exceptional point. The inset in the middle panel enlarges the root region for the $(\pi,\pi,0)$ sector. Right panel: exceptional-point locations versus $t$ for the $(\pi,\pi,0)$ sector and the $(0,k,-k)$ sector. As $t$ increases, the exceptional point in the $(\pi,\pi,0)$ sector approaches $\epsilon_{\EP}=\log 2$, see \eqref{eq:muEP_closed}.
}
\label{fig:EP_polynomials_and_scaling}
\end{figure}

\subsubsection{EP for $(k_b,k_c,k_d)=(0,0,0)$}
For the sector $(k_b,k_c,k_d)=(0,0,0)$, the matrix elements are given in Eq.~\eqref{app_eq:ABC_000}.

\textbf{$(+)$ branch: $\Asf_0-\Csf_0=+2\Bsf_0$.}
On this branch, the geometric sums cancel, and
\begin{equation}
P^+_{\zeroedge}(\mu)
=
\mu^{t-1}(\mu^2-1)\Bigl[2t(t-1)\mu^4+(2t-1)\mu^2+1\Bigr].
\end{equation}
The factor $\mu^2-1$ gives the boundary solution $\mu=1$, while for $t\ge2$ the factor $\mu^{t-1}$ gives the boundary solution $\mu=0$. The quadratic in $\mu^2$ has discriminant $(2t-1)^2-8t(t-1)=1+4t(1-t)\le 1$, which is negative for all $t\ge2$, so there is no real root. Hence there is no EP with $\mu\in(0,1)$ on the $(+)$ branch.

\textbf{$(-)$ branch: $\Asf_0-\Csf_0=-2\Bsf_0$.}
The EP condition is 
$P^-_{\zeroedge}(\mu)=0$,
with
\begin{equation}
\label{eq:P0}
\begin{aligned}
P^-_{\zeroedge}(\mu)
={}&
8\sum_{m=0}^{t-1}\mu^{2m}
+16(t-1)\sum_{m=t}^{2t-1}\mu^{m}
+\bigl(8t(t-1)^2-2t^2+4t-1\bigr)\mu^{2t}
-\mu^{2t-2}-2(t-1)\mu^{2t-1}
+2t(t-1)\mu^{2t+3}.
\end{aligned}
\end{equation}
This can be rearranged as
\begin{equation}
\begin{aligned}
P^-_{\zeroedge}(\mu)
={}&
8\sum_{m=0}^{t-2}\mu^{2m}
+7\mu^{2t-2}
+16(t-1)\sum_{m=t}^{2t-2}\mu^{m}
+14(t-1)\mu^{2t-1}
\\
&\qquad
+\bigl(8t(t-1)^2-2t^2+4t-1\bigr)\mu^{2t}
+2t(t-1)\mu^{2t+3}.
\end{aligned}
\end{equation}
All coefficients are non-negative, and in fact strictly positive for $t\ge2$. Therefore $P^-_{\zeroedge}(\mu)>0$ for $\mu\in(0,1)$, and the $(-)$ branch has no interior solution.

\subsubsection{EP for $(k_b,k_c,k_d)=(\pi,\pi,0)$}
For the sector $(k_b,k_c,k_d)=(\pi,\pi,0)$, the matrix elements are given in Eq.~\eqref{app_eq:ABC_pipi0}.

\textbf{$(+)$ branch: $\Asf-\Csf=+2\Bsf$.}
The $(+)$ branch simplifies to
\begin{equation}
P^+_{\piedge}(\mu)
=
\mu^{4t-4}(1-\mu^2)\Bigl(2t\mu^4-(t-1)\mu^2-1\Bigr).
\end{equation}
Besides the boundary solutions $\mu=0$ (for $t\ge2$) and $\mu=1$, the only interior EP comes from the quadratic equation
\begin{equation}\label{app_eq:pipi0_mu_pt}
2t\mu^4-(t-1)\mu^2-1=0.
\end{equation}
Solving for $\mu^2$ gives
\begin{equation}
\mu_{\EP}(t)
=
\sqrt{\frac{(t-1)+\sqrt{(t-1)^2+8t}}{4t}}
\;\xrightarrow{t \gg 1}\;
\frac{1}{\sqrt{2}}
\, .
\label{eq:muEP_closed}
\end{equation}
In terms of $\epsilon$ with $\mu = e^{-\epsilon /2}$, we have
\begin{equation}
\epsilon_{\EP}(t)
=
\log\!\left[
\frac{\sqrt{(t-1)^2+8t}-(t-1)}{2}
\right]
\;\xrightarrow{t\gg 1}\;
\log 2
\, .
\label{eq:epsilonEP_closed}
\end{equation}
See Fig.~\ref{fig:EP_polynomials_and_scaling} for the locations of the EP for $t \leq 30$.
We are particularly interested in the case of $t=2$ at large $q$, where $\mu_{\EP}(t=2) = 0.8002$ and $\epsilon_{\EP}(t=2) = -2 \log \mu_{\EP}(t=2) = 0.4457$, since we can compare these predictions with our exact computation at finite $q$.

\textbf{$(-)$ branch: $\Asf-\Csf=-2\Bsf$.}
Using Eqs.~\eqref{app_eq:ABC_pipi0} and \eqref{app_eq:geom_sum_abcd}, the branch polynomial
\begin{equation}
\begin{aligned}
P^-_{\piedge}(\mu)
={}&
8\sum_{m=0}^{t-1}\mu^{4m}
+8(t-2)\sum_{m=t}^{2t-3}\mu^{2m}
+(8t-17)\mu^{4t-4}
+(7t-14)\mu^{4t-2}
\\
&\qquad
-\bigl(8t^2-11t+1\bigr)\mu^{4t}
-2t\,\mu^{4t+2}.
\end{aligned}
\label{eq:Pminus_piedge_poly}
\end{equation}
It can be shown that $P^-_{\piedge}(\mu)$ is positive for $\mu\in(0,1)$, and hence this branch gives no additional root in $\mu\in(0,1)$.

\subsubsection{EP for $(k_b,k_c,k_d)=(\pi,k,\pi-k)$}
For the sector $(k_b,k_c,k_d)=(\pi,k,\pi-k)$ with $k\neq0,\pi$, the matrix elements are given in Eq.~\eqref{app_eq:ABC_pibulk}.

\textbf{$(+)$ branch: $\Asf-\Csf=+2\Bsf$.}
Using Eq.~\eqref{app_eq:ABC_pibulk}, one finds
\begin{equation}
P^+_{\pibulk}(\mu)
=
-\mu^{4t-4}(1-\mu^2)^2.
\end{equation}
Hence the $(+)$ branch has only the boundary solution $\mu=1$, and no EP in the interior $\mu\in(0,1)$.

\textbf{$(-)$ branch: $\Asf-\Csf=-2\Bsf$.}
On this branch, using the definitions of $A_{t-1}$ and $C_t$ in \eqref{app_eq:geom_sum_abcd}, we obtain
\begin{equation}
P^-_{\pibulk}(\mu)
=
7\mu^{4t-4}(1-\mu^2)^2
+8\sum_{r=2}^{t-1}\mu^{4t-4r}(1-\mu^{2r})^2
+8(1-\mu^{2t})^2,
\label{eq:Ppibulk_positive}
\end{equation}
which has positive terms, leading to $P^-_{\pibulk}(\mu)>0$ for $\mu\in(0,1)$. Therefore, the $(-)$ branch has no interior solution. Hence the $(\pi,k,\pi-k)$ sector has no exceptional point for $\mu\in(0,1)$.

\subsubsection{EP for $(k_b,k_c,k_d)=(2\pi/t, - 2\pi/t, 0)$}
For odd \(t\geq 3\), we focus on the momentum sector
\(
k
=
(k_b,k_c,k_d)=(2\pi/t, - 2\pi/t, 0)
\),
which has a leading eigenvalue with moduli exceeding unity, giving the exponential growth with system size discussed below, and an EP at finite interaction strength in large \(t\).  

In this momentum sector, \(n_b=2\), so that \((-1)^{n_b}=1\), while
\(S_t(k_c)=0\),  \(S_t(k_d)=t\), and 
\(k_b+k_c=k_b+k_c+k_d=0 \pmod{2\pi}\).
Moreover, all terms depending on \(k_b+k_d\) in the Boltzmann factor are multiplied by \(S_t(k_c)\) and therefore
vanish.  These are precisely the same reduced momentum-space
conditions as in the even \(t\) sector
\(\pi,\pi,0)\).  Consequently, the
generalized Boltzmann factor in the odd-\(t\) sector is given by
the same analytic expressions as Eq.~\eqref{app_eq:ABC_pipi0}. The EP condition therefore factorizes as
\begin{equation}
P_{\mathrm{o}}^{+}(\mu)
=
\mu^{4t-4}(1-\mu^2)
\left[
2t\mu^4-(t-1)\mu^2-1
\right].
\end{equation}
Besides the boundary solutions \(\mu=0,1\), the interior
exceptional point obeys
\begin{equation}
2t\mu_{\mathrm{EP}}^4
-(t-1)\mu_{\mathrm{EP}}^2-1=0,
\end{equation}
and hence
\begin{equation}
\mu_{\EP}(t)
=
\sqrt{\frac{(t-1)+\sqrt{(t-1)^2+8t}}{4t}}
\;\xrightarrow{t \gg 1}\;
\frac{1}{\sqrt{2}}
\, .
\end{equation}
Using \(\mu=e^{-\epsilon/2}\), this gives
\begin{equation}
\epsilon_{\mathrm{EP}}(t)
=
\log\!\left[
\frac{
\sqrt{(t-1)^2+8t}-(t-1)
}{2}
\right]
\;\xrightarrow{t\gg 1}\; \log 2.
\end{equation}
Thus, the exceptional-point location has the same functional
form for even and odd \(t\), but occurs respectively in the
momentum sectors \((\pi,\pi,0)\) for even \(t\) and \(\left(2\pi/t,-2\pi/t,0\right)\) for odd \(t\).

\subsection{EP data}\label{app:ep_data}
Here we provide plots of the complex eigenvalues of the generalized Boltzmann factor against the interaction strength in Fig.~\ref{fig:grid_4_sectors} and Fig.~\ref{fig:grid_all_sectors}, demonstrating $\mathcal{PT}$ symmetry breaking transition in certain momentum sectors. 
In Fig.~\ref{fig:grid_4_sectors}, we plot the complex eigenvalues of the generalized Boltzmann factor of $(k_b,k_c,k_d) \in \{(0,0,0),(0,2\pi m/t,-2\pi m/t ), (\pi,\pi,0), (\pi, 2\pi m/t,\pi -2\pi m/t ) \}$ with $m=1$. Among these, $(\pi,\pi,0)$ sector exhibits a particularly prominent $\mathcal{PT}$ transition, for which $|\lambda_\pm|$ exceeds $1$ over a finite range of $\mu$. Importantly, the EP of $(\pi,\pi,0)$ persists at finite $\mu$, see \eqref{eq:muEP_closed}.
In Fig.~\ref{fig:grid_all_sectors}, we plot complex eigenvalues of Boltzmann factor in all momentum sectors $(k_b,k_c,k_d) =( 2\pi n_b/2t ,2\pi n_c/t,2\pi n_d/t )$ with $n_b=[1,2t]$ and $n_c, n_d \in [1,t]$. $\Re \lambda_\pm$. For $t=2$, $\mathcal{PT}$ symmetry breaking transitions occur in $(k_b,k_c,k_d)\in \{(0,\pi,\pi), (\pi/2,0,0), (\pi/2,0,\pi), (\pi/2,\pi,0), (\pi,0,\pi), (\pi,\pi,0), (3\pi/2, 0, 0),  (3\pi/2, 0, \pi),  (3\pi/2, \pi, 0) \}$.
    Again, the transitions in the $(\pi,0,\pi)$ and $(\pi,\pi,0)$ sectors (related by the exchange symmetry between coordinates $c$ and $d$) are the most prominent, since $|\lambda_\pm|$ can exceed $1$, which is also generally the case for $t>2$.

\begin{figure}[ht]
    \centering
    \includegraphics[width=1\textwidth]{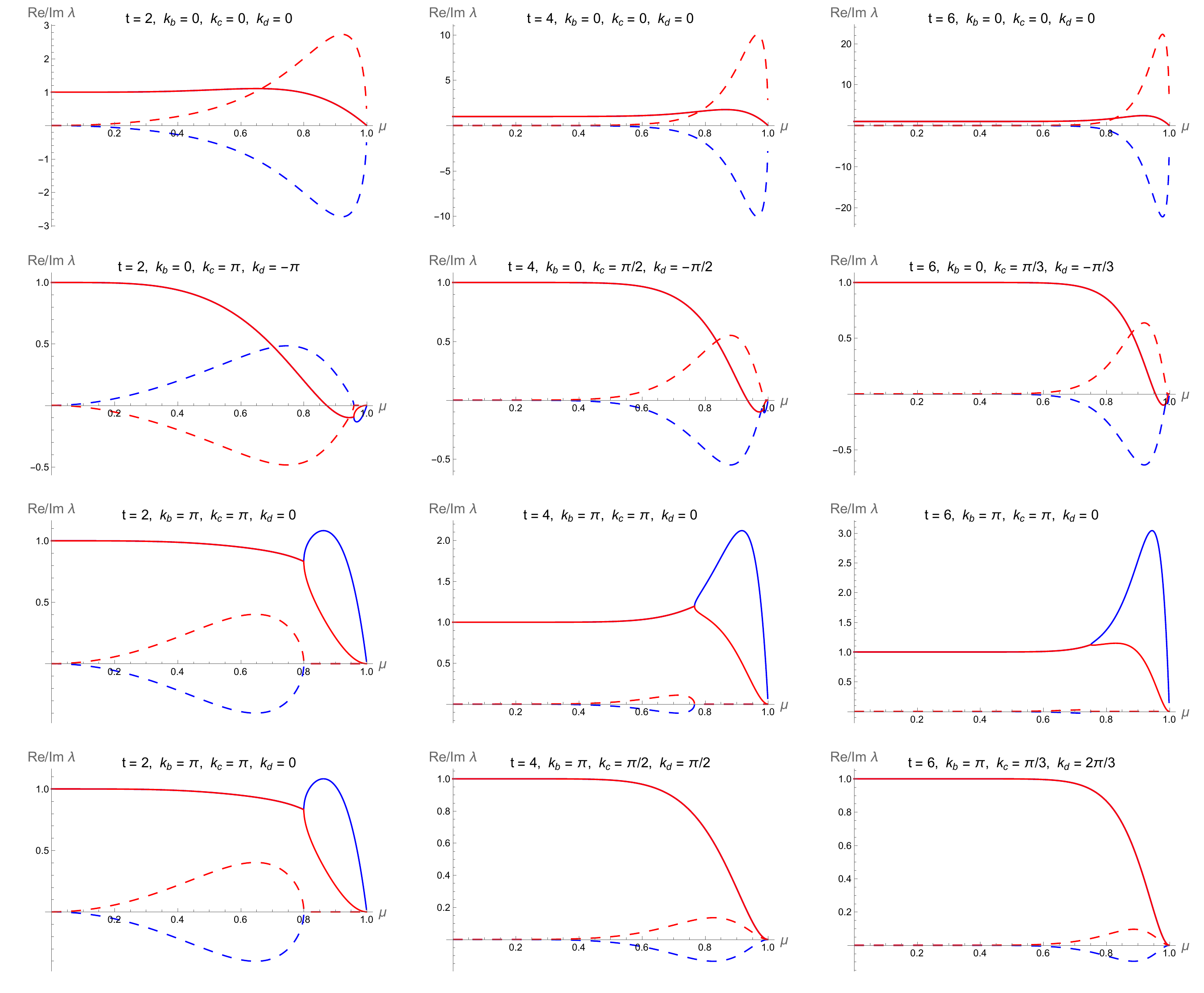}
    \caption{\textbf{Eigenvalues of Boltzmann factor in 4 momentum sectors for $t=2,4,6$ at large $q$.}
    Complex eigenvalues of the generalized Boltzmann factor of $(k_b,k_c,k_d) \in \{(0,0,0),(0,2\pi m/t,-2\pi m/t ), (\pi,\pi,0), (\pi, 2\pi m/t,\pi -2\pi m/t ) \}$ with $m=1$. $\Re \lambda_\pm$ are plotted in red and blue solid lines. $\Im \lambda_{\pm}$ are plotted in red and blue dashed lines. Exceptional points are shown in the sectors $(\pi,\pi,0)$ and $(0,2\pi m/t,-2\pi m/t )$.
    Among these, $(\pi,\pi,0)$ sector exhibits a particularly prominent $\mathcal{PT}$ transitions, for which $|\lambda_\pm|$ exceeds $1$ over a finite range of $\mu$. Note that the EP of $(\pi,\pi,0)$ persists at finite $\mu$ (equivalently $\epsilon = -2\ln \mu$), see \eqref{eq:muEP_closed}. 
    }
    \label{fig:grid_4_sectors}
\end{figure}

\begin{figure}[ht]
    \centering
    \includegraphics[width=1\textwidth]{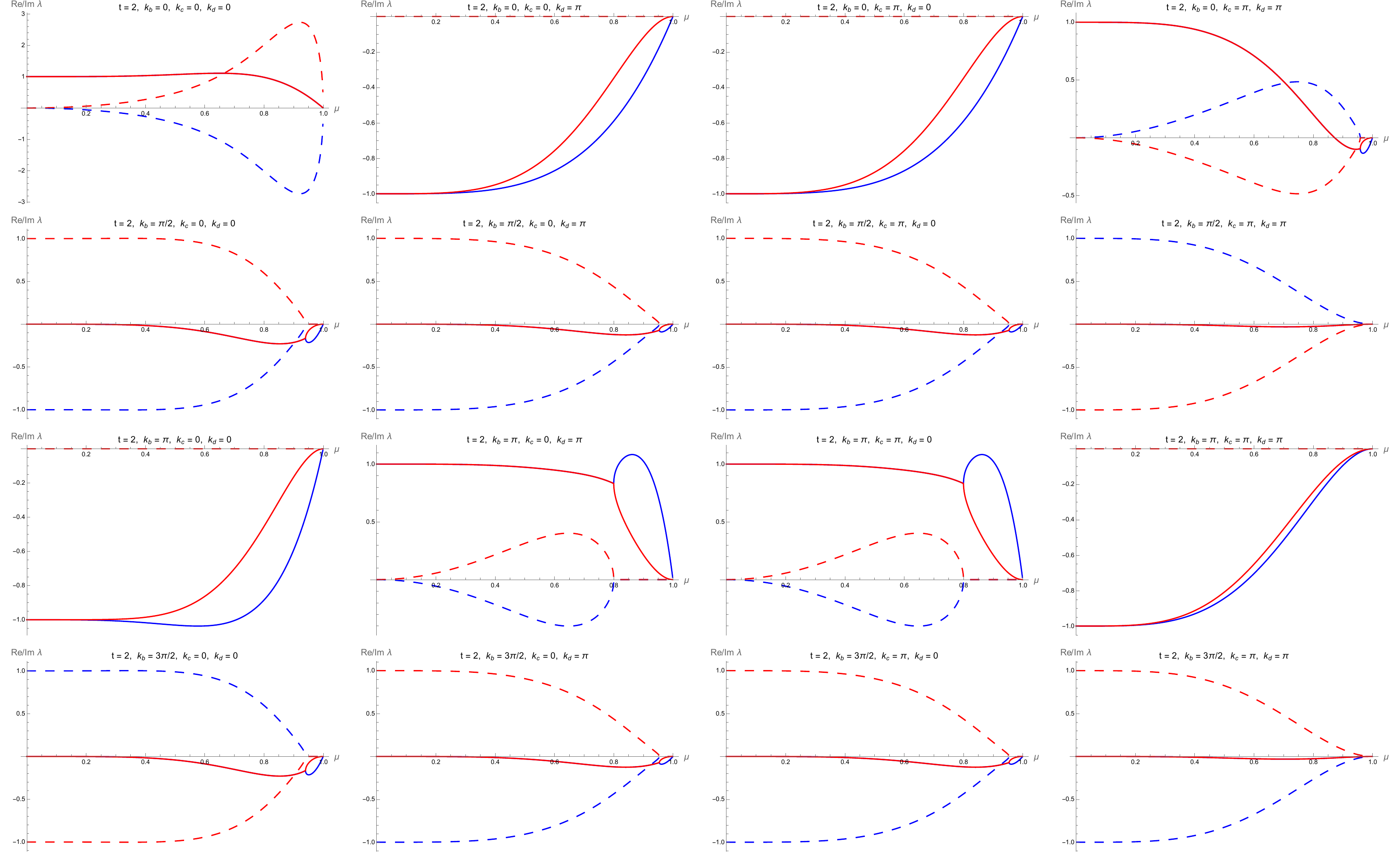}
    \caption{\textbf{Eigenvalues of Boltzmann factor in all momentum sectors for $t=2$  at large $q$.}
    Complex eigenvalues of Boltzmann factor of $(k_b,k_c,k_d) =( 2\pi n_b/2t ,2\pi n_c/t,2\pi n_d/t )$ for $n_b=[1,2t]$ and $n_c, n_d \in [1,t]$. $\Re \lambda_\pm$ are plotted in red and blue solid lines. 
    Note that the spectra are invariant under both the exchange of the two momentum labels
\(
(k_b,k_c,k_d)\sim (k_b,k_d,k_c)
\) (equivalence between second and third columns)
and the simultaneous momentum inversion
\(
(k_b,k_c,k_d)\sim (-k_b,-k_c,-k_d)
\) (equivalence between second and forth rows).
    $\Im \lambda_{\pm}$ are plotted in red and blue dashed lines. For $t=2$, real-to-complex spectral bifurcations  occur in $(k_b,k_c,k_d)\in \{(0,\pi,\pi), (\pi/2,0,0), (\pi/2,0,\pi), (\pi/2,\pi,0), (\pi,0,\pi), (\pi,\pi,0), (3\pi/2, 0, 0),  (3\pi/2, 0, \pi),  (3\pi/2, \pi, 0) \}$. Note that only sectors satisfying $k_x+k_y + k_z \equiv 0 \pmod{\pi}$ have $\mathcal{PT}$ symmetry. Among these, the bifurcation in the $(\pi,0,\pi)$ and $(\pi,\pi,0)$ sectors are the most prominent, since $|\lambda_\pm|$ can exceed $1$, which is also generally the case for $t>2$, see Fig.~\ref{fig:grid_4_sectors}.
    }
    \label{fig:grid_all_sectors}
\end{figure}

\begin{figure}[ht]
    \centering
    \includegraphics[width=1\textwidth]{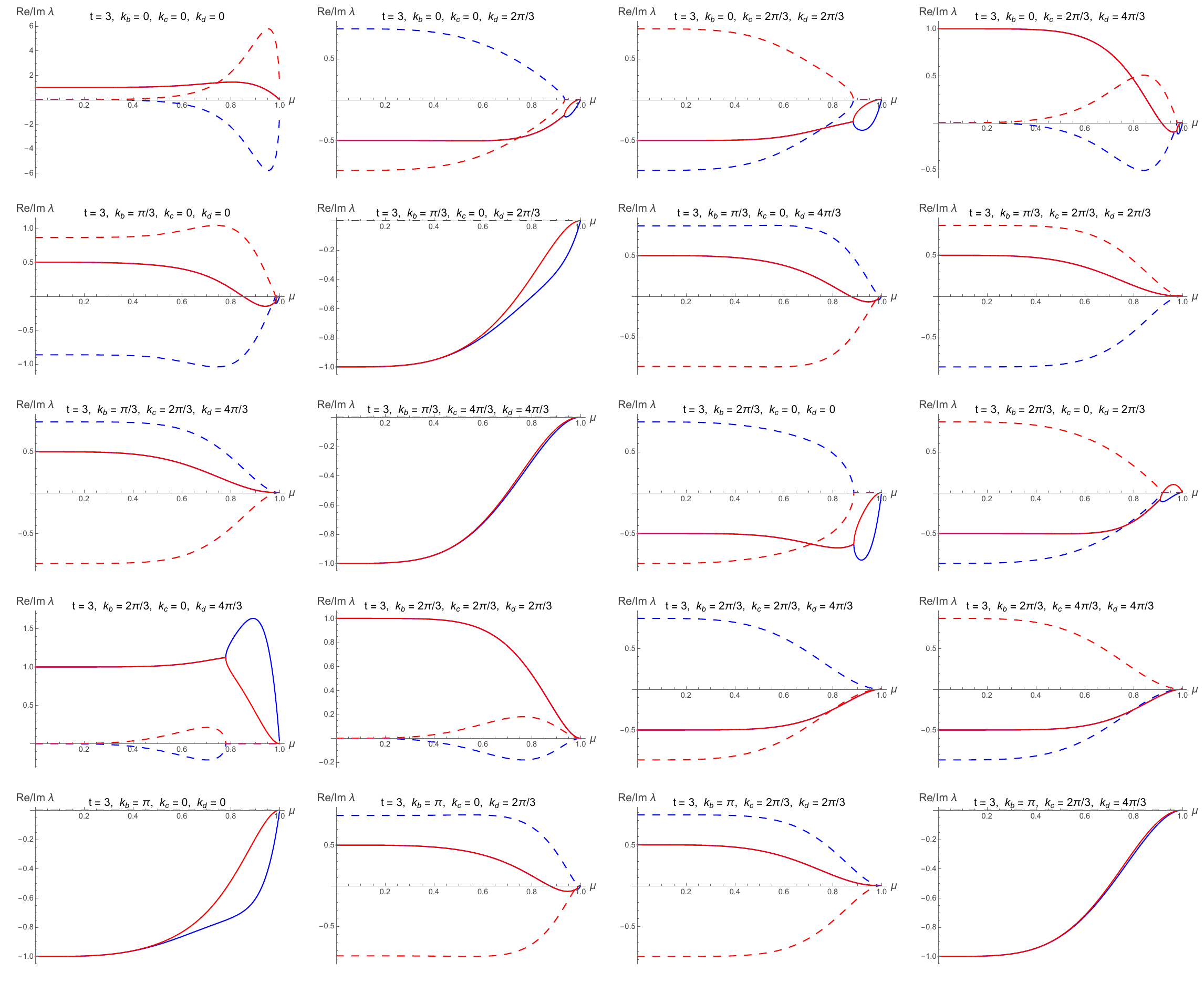}
    \caption{\textbf{Eigenvalues of Boltzmann factor in all momentum sectors for $t=3$  at large $q$.}
    Complex eigenvalues of Boltzmann factor of $(k_b,k_c,k_d) =( 2\pi n_b/2t ,2\pi n_c/t,2\pi n_d/t )$ for $n_b=[1,2t]$ and $n_c, n_d \in [n_c,t]$. $\Re \lambda_\pm$ are plotted in red and blue solid lines. $\Im \lambda_{\pm}$ are plotted in red and blue dashed lines. 
    Note that the spectra are invariant under both the exchange of the two momentum labels
\(
(k_b,k_c,k_d)\sim (k_b,k_d,k_c),
\)
and the simultaneous momentum inversion
\(
(k_b,k_c,k_d)\sim (-k_b,-k_c,-k_d)
\).
    For $t=3$, there are many $\mathcal{PT}$ symmetry breaking transitions, notably in the $(2\pi/3,0,4\pi/3)$, $(2\pi/3,4\pi/3,0)$, $(4\pi/3,0,2\pi/3)$, and $(4\pi/3,2\pi/3,0)$, with $|\lambda_\pm|$  exceeding $1$.
    }
    \label{fig:grid_all_sectors_t3}
\end{figure}

\subsection{Eigenvalue and eigenvector coalescence at EP} \label{sec:EP_coalescence_pipi0}

\begin{figure}
    \centering
    \includegraphics[width=0.33\textwidth]{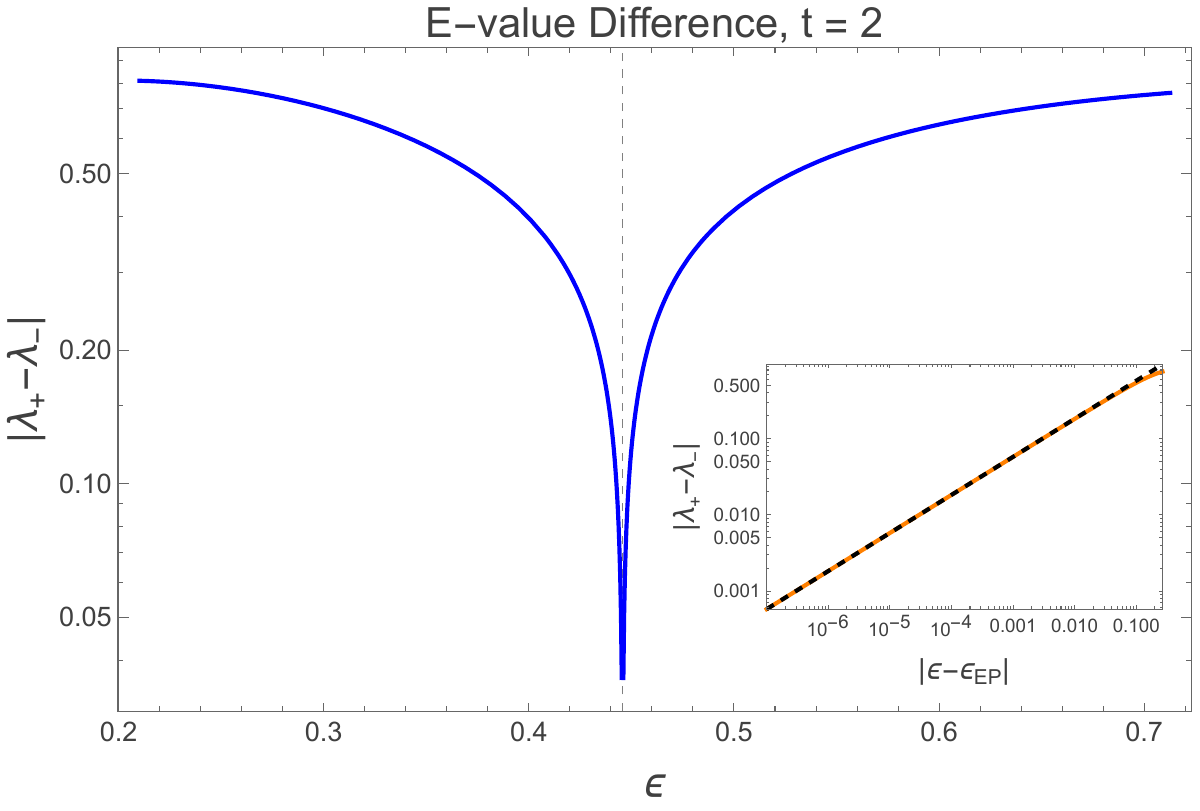}
    \includegraphics[width=0.34\textwidth]{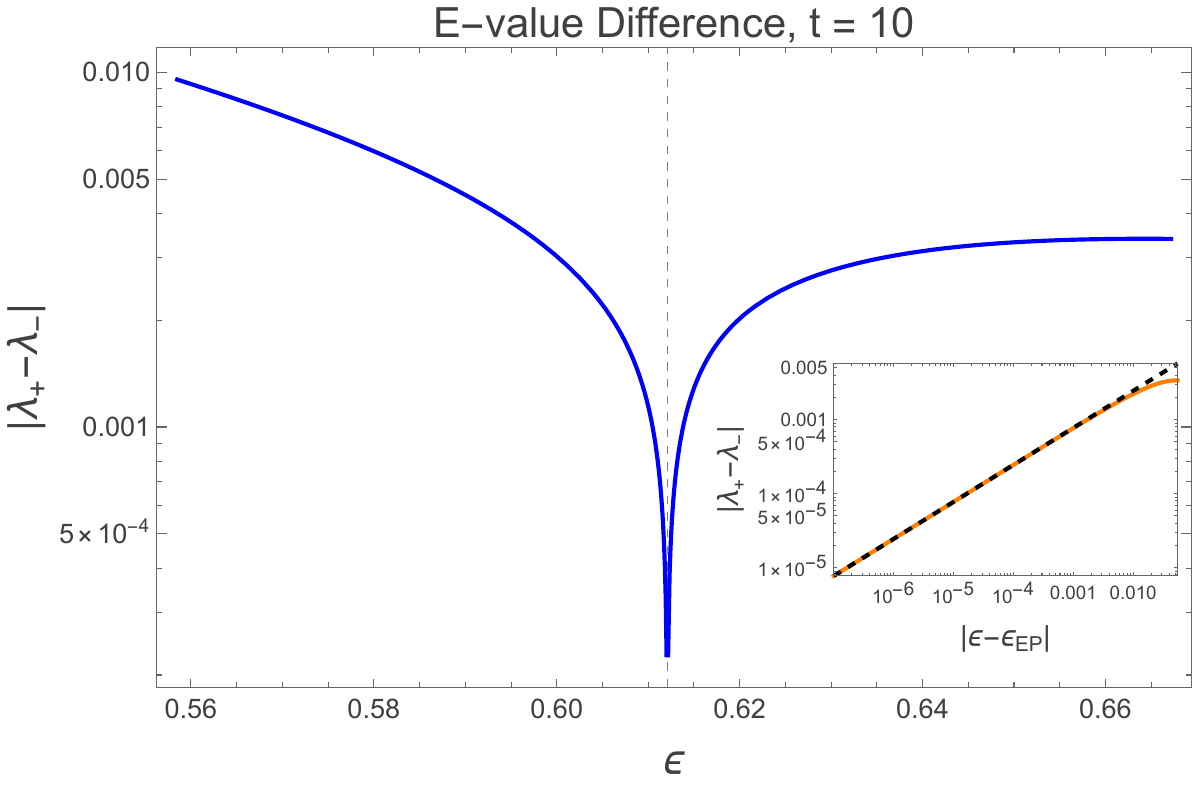}
    \caption{
    \textbf{Square-root eigenvalue splitting at the EP at large $q$.}
    For the exact large-$q$ Boltzmann factor of TopSFF in the $(\pi,\pi,0)$ sector, we plot the eigenvalue splitting
    $|\lambda_{+}-\lambda_{-}|$ as a function of the interaction parameter $\epsilon$ for $t=2$ (left) and $t=10$ (right).
    The vertical dashed line marks the exceptional point
    $\epsilon_{\EP}(t)=-2\log \mu_{\EP}(t)$, with $\mu_{\EP}(t)$ given in Eq.~\eqref{eq:muEP_closed}.
    At this point the two eigenvalues coalesce, so that $|\lambda_{+}-\lambda_{-}|$ vanishes.
    The insets show the same data as a function of $|\epsilon-\epsilon_{\EP}|$ on log-log axes.
    The dashed black guide highlights the characteristic EP scaling
    $|\lambda_{+}-\lambda_{-}| \propto |\epsilon-\epsilon_{\EP}|^{1/2}$ as given in \eqref{app_eq:eval_diff}.
    }
    \label{fig:pipi0_evalue_diff_largeq}
\end{figure}

\begin{figure}
    \centering
    \includegraphics[width=0.33\textwidth]{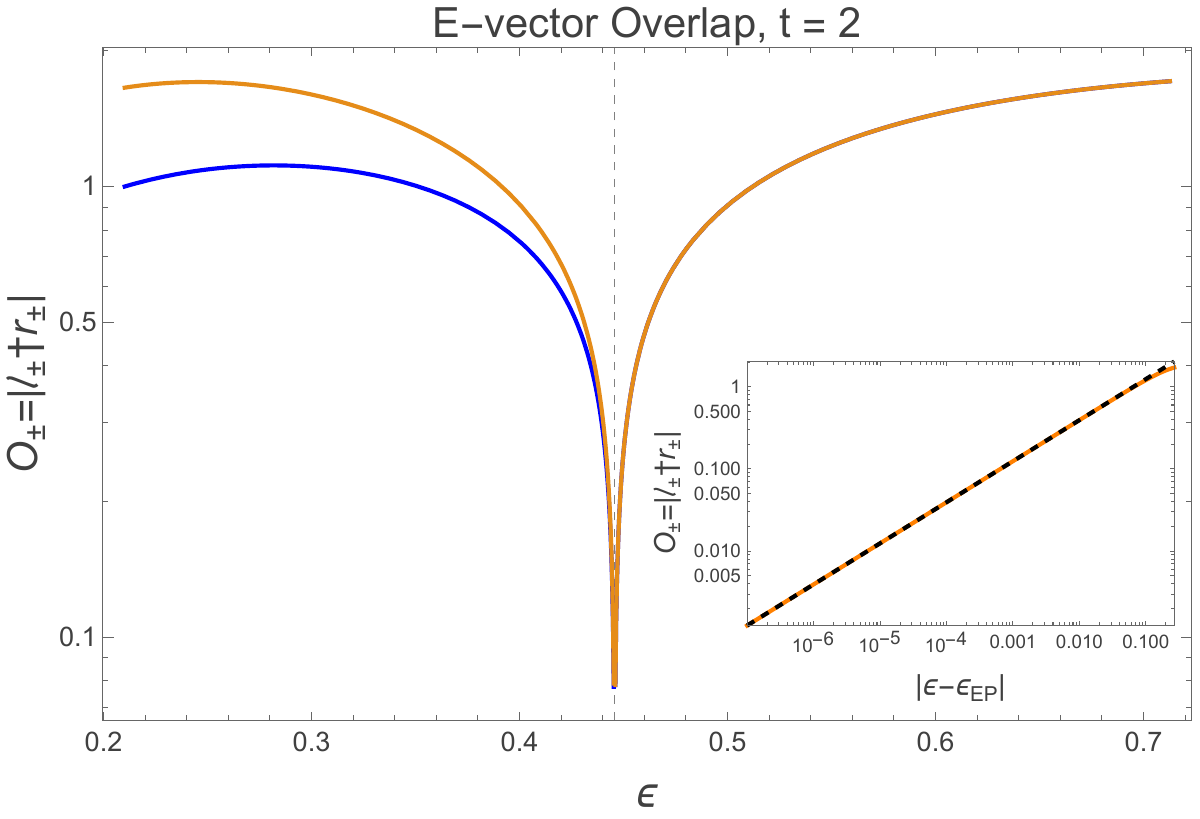}
    \includegraphics[width=0.342\textwidth]{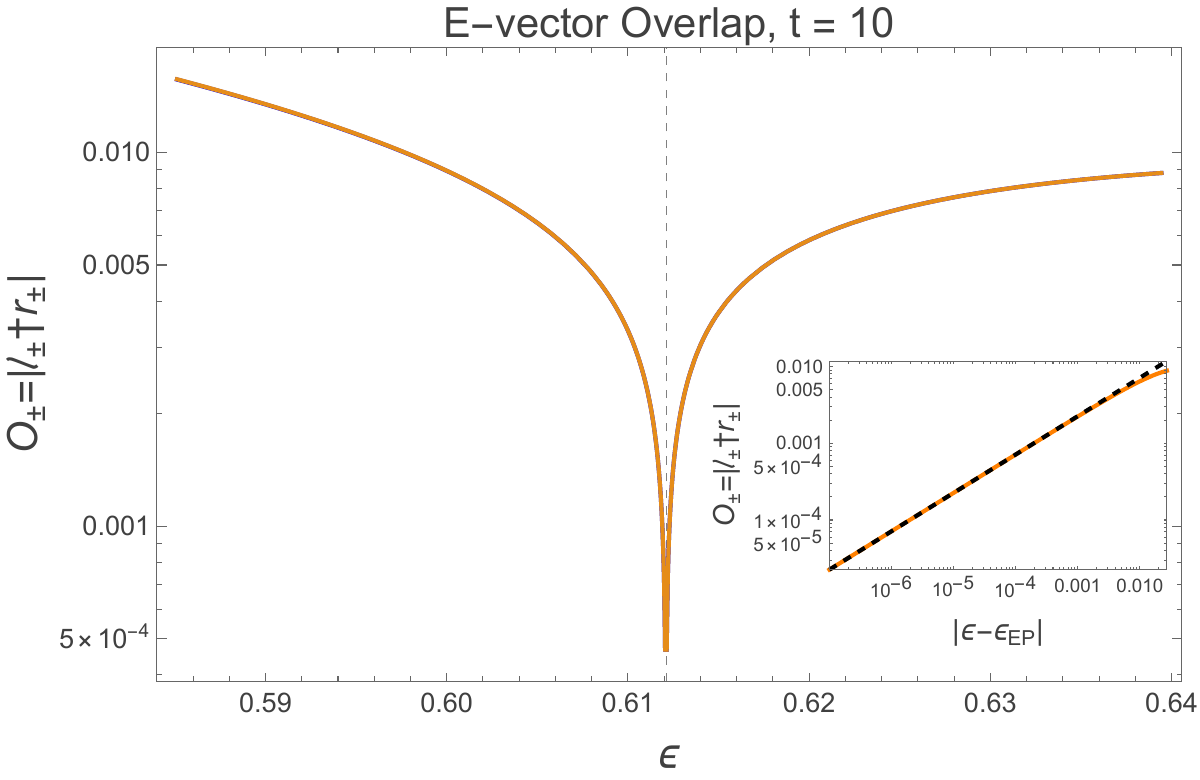}
    \caption{
    \textbf{Eigenvector self-orthogonality at the EP at large $q$.}
    We plot the left-right eigenvector overlap
    $O_{\pm}=|\ell_{\pm}^{\dagger} r_{\pm}|$
    for the two eigenmodes of the exact large-$q$ Boltzmann factor of TopSFF in the $(\pi,\pi,0)$ sector, for $t=2$ (left) and $t=10$ (right).
    The vertical dashed line marks
    $\epsilon_{\EP}(t)=-2\log \mu_{\EP}(t)$.
    The overlap collapses to zero at the exceptional point, showing that the coalescing eigenmode becomes self-orthogonal in the biorthogonal sense.
    The insets plot the approach to the exceptional point on log-log axes and show the expected scaling
    $O_{\pm}\propto |\epsilon-\epsilon_{\EP}|^{1/2}$ as given in \eqref{eq:overlap_sqrt_scaling}.
    }
    \label{fig:pipi0_evector_overlap_largeq}
\end{figure}

\begin{figure}
    \centering
    \includegraphics[width=0.33\textwidth]{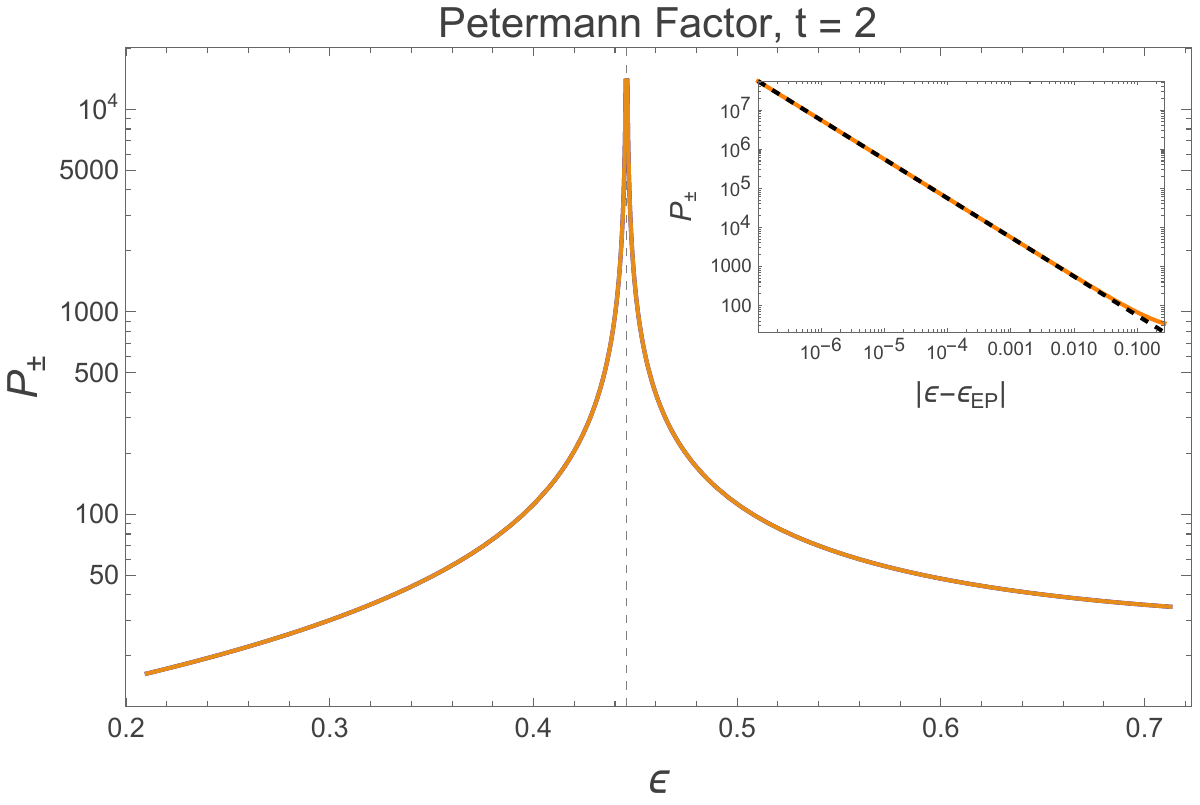}
    \includegraphics[width=0.323\textwidth]{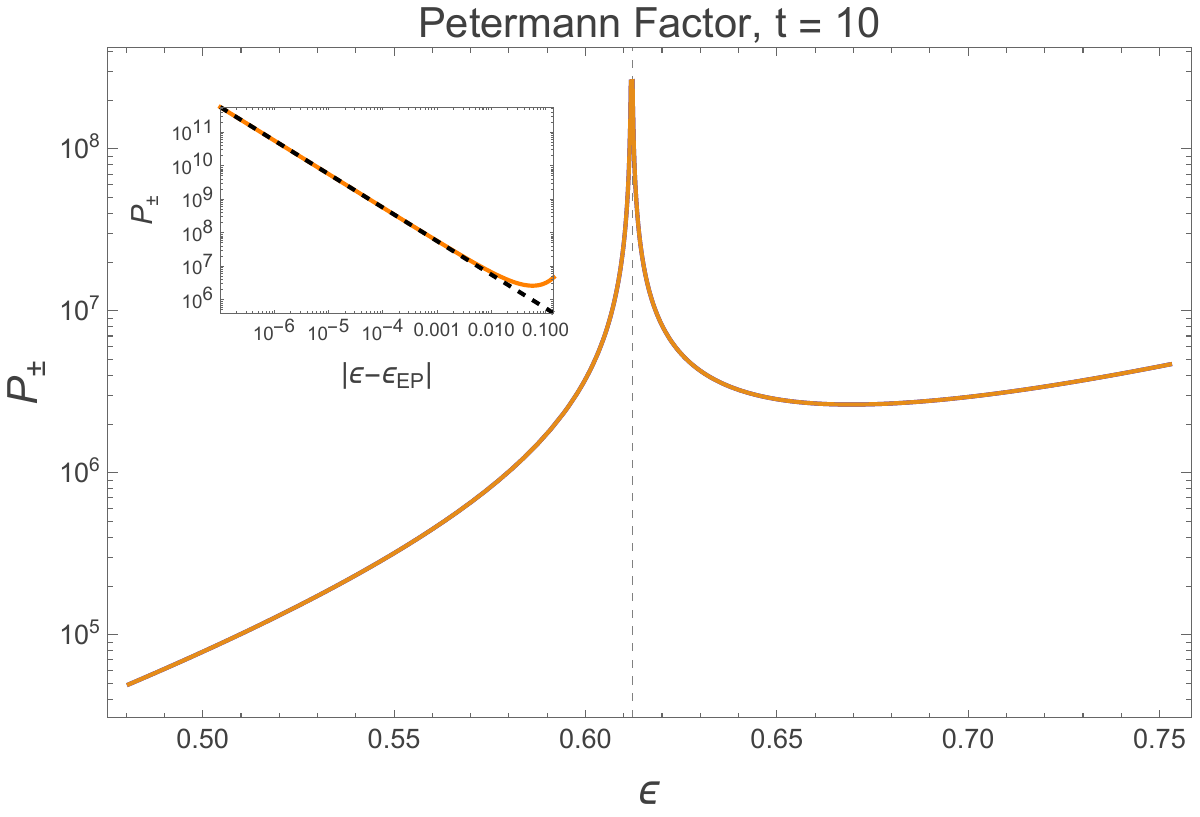}
    \caption{
    \textbf{Divergent Petermann factor at the EP at large $q$.}
    We plot the Petermann factor $P_{\pm}$ of the two eigenmodes of the exact large-$q$ Boltzmann factor of TopSFF in the $(\pi,\pi,0)$ sector, for $t=2$ (left) and $t=10$ (right).
    The vertical dashed line marks
    $\epsilon_{\EP}(t)=-2\log \mu_{\EP}(t)$.
    As the exceptional point is approached, the biorthogonal overlap vanishes and the Petermann factor diverges.
    The insets show the same data against $|\epsilon-\epsilon_{\EP}|$ on log-log axes, with the dashed black guide indicating the EP divergence as described by
    $P_{\pm}\propto |\epsilon-\epsilon_{\EP}|^{-1}$ in \eqref{eq:Petermann_EP_scaling}.
    }
    \label{fig:pipi0_petermann_largeq}
\end{figure}

\begin{figure}
    \centering
    \includegraphics[width=0.33\textwidth]{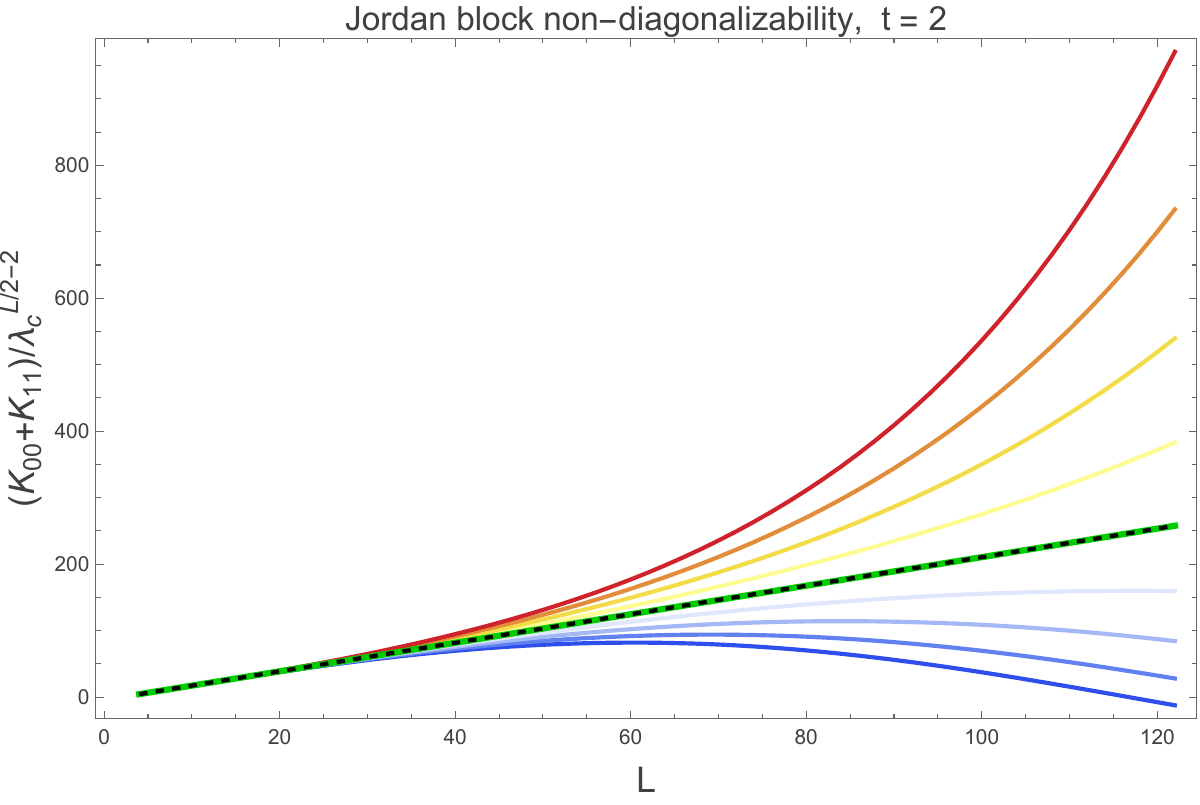}
    \includegraphics[width=0.334\textwidth]{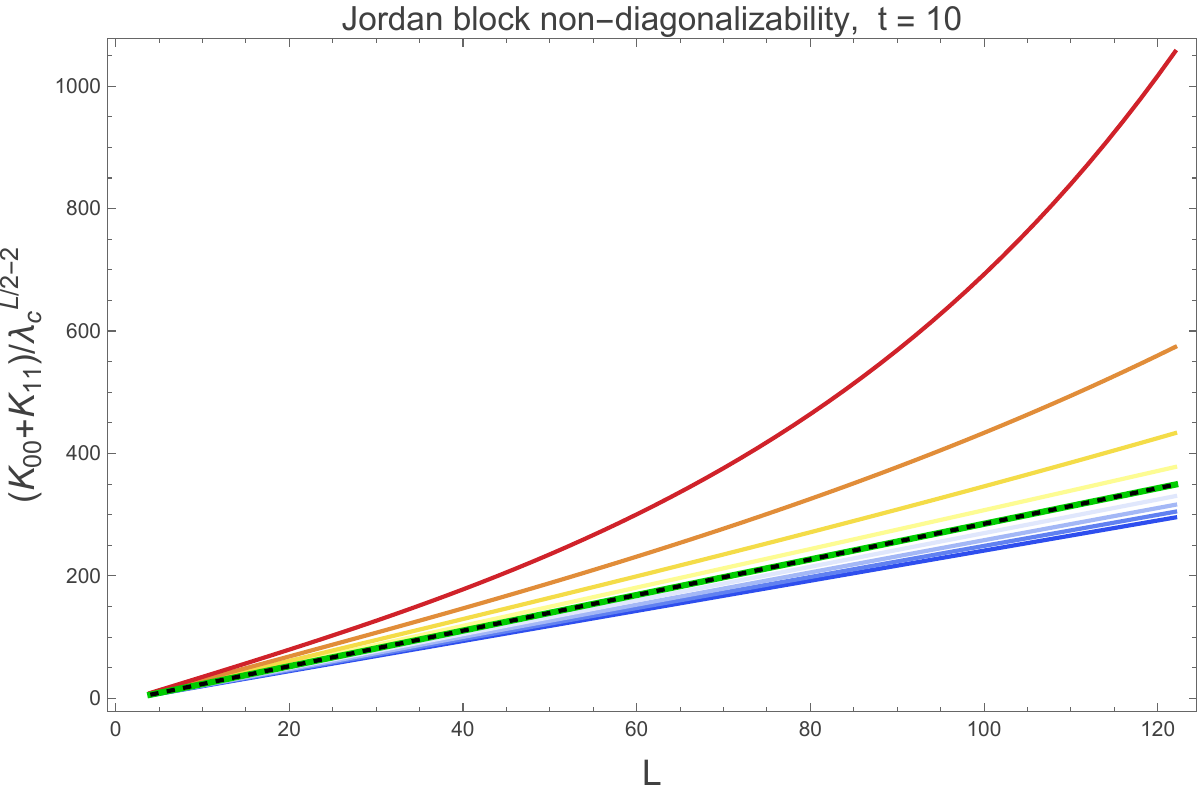}
    \caption{
    \textbf{Jordan block non-diagonality of TopSFF at the EP at large $q$.}
    We plot the boundary-resolved TopSFF
       $(\Ksf_{00}+\Ksf_{11})/\lambda_{\mathrm{c}}^{L/2-2}$
    in the \((k_b,k_c,k_d)=(\pi,\pi,0)\) sector as a function of the physical system size \(L\) for \(t=2\) (left) and \(t=10\) (right), where \(\lambda_c(\mu):=\Tr\widetilde{\mathcal B}/2 = (\lambda_+ + \lambda_-)/2\) reduces to the coalesced eigenvalue \(\lambda_{\EP}\) at \(\epsilon=\epsilon_{\EP}\). The curves correspond to different values of \(\epsilon \in [\epsilon_{\EP}-\delta\epsilon,\epsilon_{\EP}+\delta\epsilon]\), with \(\delta\epsilon=10^{-3}\) for \(t=2\) and \(\delta\epsilon=0.1\) for \(t=10\). 
    The TopSFF at EP (green line) agree with a  guideline linear in \(\Leff\) and \(L\) (dashed black line), as predicted in Eq.~\eqref{app_eq:tsff_jordan}.  
    At the EP, the two eigenvalues of the generalized Boltzmann factor coalesce and the matrix becomes non-diagonalizable, producing the Jordan-block scaling
        \(\Ksf_{00}+\Ksf_{11}
        \propto
        L_{\rm eff}\lambda^{L_{\rm eff}-1}\), resulting in a linear growth with \(L\) after dividing the TopSFF by the exponential envelope.   Away from \(\mu_{\EP}\), the eigenvalues split, the matrix is diagonalizable, and the curves depart from the linear EP form. 
    }
    \label{fig:pipi0_jordan_nondiag_largeq}
\end{figure}

In this section, we discuss the signatures of exceptional points (EPs) of the $\mathcal{PT}$-symmetric generalized Boltzmann factor, including eigenvalue coalescence and eigenvector coalescence, as diagnosed by eigenvector self-orthogonality and the divergence of the Petermann factor. We focus on the sector $(k_b,k_c,k_d)=(\pi,\pi,0)$, although the discussion applies more generally to any $\mathcal{PT}$-symmetric momentum sector.
In the momentum sector $(k_b,k_c,k_d)=(\pi,\pi,0)$, the generalized Boltzmann factor is $\mathcal{PT}$ symmetric and takes the form
\begin{equation}
\widetilde{\mathcal B}_{\pi_{\mathrm e}}
=
\begin{pmatrix}
\mathsf A_{\pi_{\mathrm e}} & i\,\mathsf B_{\pi_{\mathrm e}}\\
i\,\mathsf B_{\pi_{\mathrm e}} & \mathsf C_{\pi_{\mathrm e}}
\end{pmatrix},
\end{equation}
where the matrix entries are defined in Eq.~\eqref{app_eq:ABC_pipi0}, and we use the shorthand \(\piedge \equiv (k_b,k_c,k_d)=(\pi,\pi,0)\) for this momentum sector.
At an exceptional point, the two eigenvalues and their corresponding eigenvectors coalesce. This is a second order exceptional point (EP2), since the algebraic multiplicity is 2, while the geometric multiplicity is 1. In what follows, we examine both eigenvalue coalescence and eigenvector coalescence in the vicinity of the EP.

\begin{itemize}
    \item \textbf{Square root scaling of eigenvalue coalescence.} 
    The eigenvalues are 
\begin{equation}
\lambda_\pm
=
\frac{\Tr \Ttilde_{\piedge} \pm \sqrt{\Delta_{\pi_{\mathrm e}}}}{2}, 
\qquad 
\Delta_{\pi_{\mathrm e}}
:=
\Tr \Ttilde_{\piedge}^2 - 4\det \Ttilde_{\piedge}
=
(\mathsf A_{\pi_{\mathrm e}}-\mathsf C_{\pi_{\mathrm e}})^2-4\mathsf B_{\pi_{\mathrm e}}^2.
\end{equation}
At the exceptional point, the discriminant vanishes so that the two eigenvalues coalesce,
\begin{equation}
\Delta_{\pi_{\mathrm e}}(\mu_{\EP})=0 \quad 
\Longrightarrow  \quad 
\lambda_+(\mu_{\EP})=\lambda_-(\mu_{\EP})=\frac{1}{2}\Tr \Ttilde_{\piedge}(\mu_{\EP}).
\end{equation}
Expanding near the exceptional point, we have generically
\begin{equation}\label{app_eq:lambda_expand}
\lambda_\pm(\mu)
=
\lambda_{\EP}
\pm
\frac{1}{2}\sqrt{\Delta'_{\pi_{\mathrm e}}(\mu_{\EP})}\,\sqrt{\mu-\mu_{\EP}}
+
O(\mu-\mu_{\EP}), \qquad 
\Delta'_{\pi_{\mathrm e}}(\mu_{\EP}):= \frac{\mathrm d \Delta_{\pi_{\mathrm e}}}{\mathrm d \mu}\neq 0,
\end{equation}
Thus, the eigenvalue splitting, i.e., the difference between the two eigenvalues, exhibits the characteristic square-root scaling near the EP,
\begin{equation}\label{app_eq:eval_diff}
\lambda_+ - \lambda_-  \propto \sqrt{\Delta(\mu)} \propto \sqrt{|\mu-\mu_{\EP}|}  \propto
    \sqrt{|\epsilon-\epsilon_{\EP}|}.
\end{equation}
Note that this square-root eigenvalue splitting occurs on both sides of the EP. In the unbroken $\mathcal{PT}$ phase, the two eigenvalues split along the real axis, whereas in the broken $\mathcal{PT}$ phase they split along the imaginary axis. See Fig.~\ref{fig:grid_4_sectors} and Fig.~\ref{fig:pipi0_evalue_diff_largeq} for square root eigenvalue at the EP ar large $q$ for various $t$.

\item \textbf{Square root scaling of eigenvector coalescence.}
A convenient choice of right and left eigenvectors, satisfying
\(\widetilde{\boltz}(\pi,\pi,0)\,r_\pm=\lambda_\pm r_\pm\) and
\(\widetilde{\boltz}_{\pi,\pi,0}^{\dagger}\,\ell_\pm=\lambda_\pm^*\,\ell_\pm\), is
\begin{equation}
r_\pm:=
\begin{pmatrix}
i\Bsf\\
\lambda_\pm-\Asf
\end{pmatrix},
\qquad
\ell_\pm:=
\begin{pmatrix}
-i\Bsf\\
\lambda_\pm^*-\Asf
\end{pmatrix}.
\label{eq:right_evec_choice}
\end{equation}

\begin{itemize}
    \item \textbf{Eigenvector self-orthogonality.}
Define the biorthogonal overlap
\begin{equation}
\Overlap_\pm:=|\ell_\pm^\dagger r_\pm|.
\end{equation}
Away from the exceptional point, the left and right eigenvectors may be chosen biorthogonal and normalized so that
$\ell_n^\dagger r_m=\delta_{nm}$. For the explicit choice in Eq.~\eqref{eq:right_evec_choice}, one finds
$\ell_\pm^\dagger r_\pm
=
(\lambda_\pm-\Asf)^2-\Bsf^2$.
Using
$\lambda_\pm-\Asf
=
\frac{\Csf-\Asf\pm\sqrt{\Delta}}{2}$, and 
$\Delta=(\Asf-\Csf)^2-4\Bsf^2$,
we obtain
\begin{equation}
\ell_\pm^\dagger r_\pm
=
\frac{1}{2}\left[\Delta\mp(\Asf-\Csf)\sqrt{\Delta}\right].
\label{eq:overlap_closed_general}
\end{equation}
At an EP2, \(\Delta=0\), we have 
\begin{equation}
\Overlap_\pm(\mu_{\EP})=0 \qquad \text{at EP},
\label{eq:self_orthogonality}
\end{equation}
i.e., the coalesced eigenvector is self-orthogonal at the EP.
Expanding around the EP, we write 
$\Delta(\mu)=\Delta'(\mu_{\rm EP}) (\mu - \mu_{\EP})+O((\mu - \mu_{\EP})^2)$
with \(\Delta'(\mu_{\rm EP})\neq 0\). 
Therefore, 
around the EP, we have
\begin{equation}
\ell_\pm^\dagger r_\pm
=
\mp \frac{\Asf-\Csf}{2}\sqrt{\Delta}\,\Big[1+O(\sqrt{\Delta})\Big].
\end{equation}
Provided \(\Asf-\Csf\neq 0\) at \(\mu=\mu_{\rm EP}\), which is generically the case, we obtain
\begin{equation}
\Overlap_\pm(\mu)=\left|\ell_\pm^\dagger r_\pm \right|
\propto \left|\sqrt{\Delta(\mu)}\right|
\propto \sqrt{\left|\mu-\mu_{\rm EP}\right|}
\propto \sqrt{\left|\epsilon-\epsilon_{\rm EP}\right|}.
\label{eq:overlap_sqrt_scaling}
\end{equation}
i.e., the biorthogonal overlap vanishes with a square root exponent at a generic EP2. See Fig.~\ref{fig:pipi0_evector_overlap_largeq} for the vanishing biorthogonal overlap at the EP in the $(\pi,\pi,0)$ sector at large-$q$.

\item \textbf{Divergence of the Petermann factor.} A standard measure of eigenvector coalescence in non-Hermitian systems is the
Petermann factor
\begin{equation}
\Peter_\pm(\mu):=
\frac{\|\ell_\pm\|^2\,\|r_\pm\|^2}
{|\ell_\pm^\dagger r_\pm|^2},
\label{eq:Petermann_def}
\end{equation}
where \(r_\pm\) and \(\ell_\pm\) are respectively the right and left
eigenvectors, and \(\|\cdot\|\) denotes the Euclidean norm. This quantity is invariant under independent rescalings of \(r_\pm\) and \(\ell_\pm\). At a generic second-order exceptional point, we have \(\Delta(\mu_{\rm EP})=0\)
and \(A-C\neq0\), and the norms \(\|r_\pm\|\) and \(\|\ell_\pm\|\) remain finite and non-zero as
\(\mu\to\mu_{\rm EP}\). Using \eqref{eq:overlap_sqrt_scaling}, we obtain
\begin{equation}
\Peter_\pm(\mu)
\propto
\frac{1}{|\ell_\pm^\dagger r_\pm|^2}
\propto
\frac{1}{|\Delta(\mu)|}
\propto
\frac{1}{|\mu-\mu_{\rm EP}|}
\propto
\frac{1}{|\epsilon-\epsilon_{\rm EP}|},
\label{eq:Petermann_EP_scaling}
\end{equation}
i.e. the Petermann factor diverges with exponent \(-1\) at a generic EP2. See Fig.~\ref{fig:pipi0_petermann_largeq} for the divergent Petermann factor
at the EP in the \((\pi,\pi,0)\) sector at large \(q\).

\end{itemize}

\end{itemize}

\section{Exact TopSFF with spatially extended defects at finite $q$}\label{app:finite_q_exact}
In this appendix, we present an exact finite-\(q\) evaluation of the TopSFF with a spatially extended swap defect. We first formulate the computation for finite \(q\) and small \(t\) using an ensemble-averaged transfer matrix, and then give an analytic evaluation at \(t=1\) and finite \(q\).

\subsection{Exact finite-$q$ symbolic evaluation of TopSFF}
\label{subsec:finite_q_symbolic_tsff}
Consider the exact finite-\(q\) symbolic computation of the TopSFF, \(   K_{\topo}(t,L)= \mathrm{Tr}[ \hat{\mathcal{D}} (\hat{U}\otimes \hat{U}^*) ] = \Tr [\hat{S} \hat{U}] \Tr[\hat{U}^\dagger]\), with  \([\hat{\mathcal{D}}, (\hat{U}\otimes \hat{U}^*)] =0\), 
where the spatially extended topological defect is $\hat{\mathcal{D}}=\hat{S}\otimes\iden$.
Here $\hat{S}$ is the global swap operator, which reverses the ordering of the $L$ qudits according to $\hat{S}\ket{a_1,a_2,\dots,a_{L-1},a_L}=\ket{a_L,a_{L-1},\dots,a_2,a_1}$, with $a_i=1,\dots,q$. We analyse $K_{\topo}$ in a generic many-body quantum chaotic system satisfying $[\hat{\mathcal{D}},\hat{U}\otimes \hat{U}^*]=0$, i.e. the system possesses parity inversion symmetry.
Such a system is minimally realised by the Random Phase Model (RPM) with parity inversion symmetry and discrete time translation symmetry, defined in App.~\ref{app:parity_ptRPM}, i.e., $\hat{U}=\hat{U}_{\mathrm{f\text{-}p\text{-}RPM}}$.
The local Hilbert-space dimension is denoted by $q$.
Instead of generating an ensemble of quantum circuits and computing the averaged
TopSFF, this approach computes the generalized Boltzmann factor after the ensemble average over the quantum circuits.
In particular, this computation keeps \textit{all} permutations in Eq.~\eqref{eq:Weingarten_raw_form}
at finite \(q\), as opposed to retaining only the large-\(q\) Mickey Mouse diagrams.
For a single folded site, the two permutations appearing in the Weingarten
contraction are labelled by the pair \((\sigma,\tau)\in S_{2t}\times S_{2t}\).
The Kronecker deltas generated by Eq.~\eqref{eq:Weingarten_raw_form} identify
the local time indices into closed loops. We denote the resulting set of loops by
\(\mathcal L(\sigma,\tau)\), and its cardinality by
\(\ell(\sigma,\tau):=|\mathcal L(\sigma,\tau)|\). 

\subsubsection{Colour basis}
At finite \(q\), each loop carries an explicit colour index \(a \in \{1,\dots, q\}\). The basis state for a
single folded site is labelled by \(\ket{\rho}\equiv\ket{\sigma,\tau;\mathbf a}\), with \((\sigma,\tau)\in S_{2t}\times S_{2t}\), and \(\mathbf a=(a_1,\ldots,a_{\ell(\sigma,\tau)})\in \{1,\ldots,q\}^{\ell(\sigma,\tau)}\). In order to perform the average over the 2-site coupling gate, consider the colour index \(\Theta_\rho(x)\) associated to each endpoint,
see dotted lines along \(b\)-, \(c\)- and \(d\)-loops in
Fig.~\ref{fig:before_avg}(e), for a given configuration \(\rho\). The endpoint
label \(x\) runs over \(1,\ldots,4t\). The first \(2t\) labels are the endpoints
associated to \(\Tr[\hat S \hat U]\), and the labels \(2t+1,\ldots,4t\) are the
endpoints associated to \(\Tr[\hat U^\dagger]\).
For a given pair \((\sigma,\tau)\), the loop partition
\(\mathcal L(\sigma,\tau)\) is obtained by following the contractions around the
folded site. Let \(R_U\) be the cyclic rotation acting on the \(2t\) endpoints
associated to \(\Tr[\hat S\hat U]\), and let \(R_{U^\dagger}\) be the cyclic
rotation acting separately on the two \(t\)-blocks associated to
\(\Tr[\hat U^\dagger]\). Explicitly, for \(m\in\{1,\ldots,2t\}\),
\[
    R_U(m)
    =
    \begin{cases}
    m+1, & 1\leq m<2t,\\
    1, & m=2t,
    \end{cases}
\]
while
\[
    R_{U^\dagger}(m)
    =
    \begin{cases}
    m+1, & 1\leq m<t,\\
    1, & m=t,\\
    m+1, & t+1\leq m<2t,\\
    t+1, & m=2t.
    \end{cases}
\]
Starting from an endpoint \(y\in\{1,\ldots,2t\}\) associated to
\(\Tr[\hat S\hat U]\), one first reaches the endpoint \(2t+\tau(y)\) associated
to \(\Tr[\hat U^\dagger]\), and then returns to an endpoint associated to
\(\Tr[\hat S\hat U]\). The resulting map on the
\(\Tr[\hat S\hat U]\) endpoints is
\[
    \Phi_{\sigma,\tau}(y)
    :=
    R_U^{-1}\sigma^{-1}R_{U^\dagger}\tau(y).
\]
The loops are the orbits of \(\Phi_{\sigma,\tau}\), together with their
associated endpoints on \(\Tr[\hat U^\dagger]\). Explicitly, we write
\[
    L_{\alpha}
    =
    \mathcal O_{\alpha}
    \cup
    \{\,2t+\tau(y):y\in\mathcal O_{\alpha}\,\},
    \qquad
    \mathcal O_{\alpha}\in{\rm Cyc}(\Phi_{\sigma,\tau}) ,
\]
and the set of all loops are given by \(\mathcal L(\sigma,\tau)=\{L_1,\ldots,L_{\ell(\sigma,\tau)}\}\). The total number of coloured basis states is
\[
N(t,q)
=
\sum_{\sigma,\tau\in S_{2t}} q^{\ell(\sigma,\tau)}
=
(2t)!
\sum_{\pi\in S_{2t}} q^{C(\pi)}
=
(2t)!\prod_{r=0}^{2t-1}(q+r),
\]
where \(C(\pi)\) denotes the number of cycles of \(\pi\).  For example, one finds, for \(t=2\),
\(
    \{N(2,q)\}_{q=2}^{5}
    =
    \{2880,\;8640,\;20160,\;40320\},
\)
and for \(t=3\), 
\(
    \{N(3,q)\}_{q=2}^{3}
    =
    \{3628800,\;14515200\}.
\)

The two-site coupling comes from averaging the random diagonal phases in the
coupling gates. 
Define the colour function as
\(
    \Theta_{\rho}(x)=a_{\alpha} 
\) 
    if and only if
\( 
    x\in L_{\alpha}.
\)
Let
\(\rho_1=(\sigma_1,\tau_1;\mathbf a_1)\) and
\(\rho_2=(\sigma_2,\tau_2;\mathbf a_2)\)
be two adjacent one-site basis states. We write
\(\Theta_{\rho_1}(x),\Theta_{\rho_2}(x)\in\{1,\ldots,q\}\) for the colours
carried by endpoint \(x\) in the loop configurations \(\rho_1\) and
\(\rho_2\), respectively. The endpoints are split into those associated to
\(\Tr[\hat S \hat U]\), \(x=1,\ldots,2t\), and those associated to
\(\Tr[\hat U^\dagger]\), \(x=2t+1,\ldots,4t\). The phase average depends only on
the imbalance of colour pairs between the endpoints associated to
\(\Tr[\hat S \hat U]\) and the endpoints associated to
\(\Tr[\hat U^\dagger]\). Then the diagonal phase average gives the factor
\begin{align}
    D_{\rho_1\rho_2}(\mu)
    =
    \mu^{Q(\rho_1,\rho_2)},
    \qquad
    Q(\rho_1,\rho_2)
    :=
    \sum_{r,s=1}^{q}
    \left[
    \sum_{x=1}^{2t}
    \delta_{\Theta_{\rho_1}(x),r}
    \delta_{\Theta_{\rho_2}(x),s}
    -
    \sum_{x=2t+1}^{4t}
    \delta_{\Theta_{\rho_1}(x),r}
    \delta_{\Theta_{\rho_2}(x),s}
    \right]^2 .
    \label{eq:finite_q_phase_exponent}
\end{align}
From this expression, we see that the interaction does not depend on the
absolute values of the loop colours, but only on which loop colours coincide with which other loop colours. We will use this fact to compress the size of $\boltz(\rho_1, \rho_2)$ below. Combining the one-site Weingarten function \(\wg(\sigma^{-1}\tau, q)\) with the two-site phase average gives the exact finite-\(q\) generalized Boltzmann factor as an \(N(t,q)\)-dimensional transfer matrix
\begin{align}
    \boltz(\rho_1, \rho_2)
    =
    \sqrt{\wg(\sigma_1^{-1}\tau_1,q)}
    \,
    D_{\rho_1\rho_2}(\mu)
    \,
    \sqrt{\wg(\sigma_2^{-1}\tau_2,q)} .
    \label{eq:finite_q_generalized_boltzmann}
\end{align}
Note that, at finite \(q\), the generalized Boltzmann factor \(\boltz\) acts on the full permutation-and-colour basis, and therefore has a much larger matrix dimension than the large-\(q\) generalized Boltzmann factor, which is restricted to the leading Mickey Mouse diagrams.
The generalized Boltzmann factor can have signs and phases, and as a result, the emergent physics is non-Hermitian.
Although each Weingarten weight is increasingly suppressed in the large-\(q\) expansion as \(t\) increases, the matrix dimension of the generalized Boltzmann factor \(\boltz(\rho_1,\rho_2)\) grows rapidly with \(t\). After summing over the finite-\(q\) basis states in the transfer-matrix contraction, the unnormalized
TopSFF scales as \(\overline{K_{\topo}}=O(q^{-\Leff})\).

The TopSFF in a momentum sector is obtained by contracting powers of the
position-space transfer matrix \(\boltz\) between momentum-resolved boundary
states via \(\overline{K_{\topo}}=\phi_{-k}^T\boltz^{\Leff}\phi_k\). We consider
a single momentum sector \(k=(k_b,k_c,k_d)\), where
\(k_b=2\pi n_b/(2t)\), \(k_c=2\pi n_c/t\), and \(k_d=2\pi n_d/t\), with
\(n_b=0,1,\ldots,2t-1\) and \(n_c,n_d=0,1,\ldots,t-1\).
Let \((\sigma_{abcd},\tau_{abcd})\), defined in
Eq.~\eqref{app_eq:mickey_mouse_permutations}, denote the Mickey Mouse diagrams
labelled by \(a=0,1\), \(b=1,\ldots,2t\), and \(c,d=1,\ldots,t\). We define the
phase function
\begin{align}
    \Psi_{k}(\sigma,\tau)
    =
    \begin{cases}
    \exp\!\left[
        i\left(
        b k_b + c k_c + d k_d
        \right)
    \right],
    &
    \text{if }
    (\sigma,\tau)
    =
    (\sigma_{abcd},\tau_{abcd}),
    \\
    0,
    &
    \text{otherwise}.
    \end{cases}
    \label{eq:finite_q_momentum_phase_function}
\end{align}
The corresponding boundary state has components
\(\braket{\sigma,\tau;\mathbf a|\phi_k}
=
\sqrt{\wg(\sigma^{-1}\tau,q)}\,\Psi_k(\sigma,\tau)\). This boundary state is independent of the explicit loop-colour assignment \(\mathbf a\), and the colour degrees of freedom enter only through the bulk transfer matrix \(\boltz\). 

\subsubsection{Orbit basis}
Further, the colour structure in Eq.~\eqref{eq:finite_q_phase_exponent} allows an exact compression. Since the bulk interaction and the momentum boundary states are invariant under a global relabelling of the \(q\) colours,
one may replace explicit colour assignments by their equality patterns. Let \(m=\ell(\sigma,\tau)\), and write \(\mathcal L(\sigma,\tau)=\{L_1,\ldots,L_m\}\). Instead of assigning an explicit colour \(a_\alpha\in\{1,\ldots,q\}\) to each loop \(L_\alpha\), we assign an equality pattern, namely a partition \(\lambda=\{P_1,\ldots,P_r\}\) of the loop labels \(\{1,\ldots,m\}\), with
\(1\leq r\leq \min(m,q)\). Two loops have the same colour if and only if their labels belong to the same block of \(\lambda\), i.e.
\(a_\alpha=a_\beta \) if \(\alpha,\beta\in P_j\) for some $j$. 
For example, for \(m=3\), the partition \(\lambda=\{\{1,2\},\{3\}\}\) represents all explicit colour assignments with
\(a_1=a_2\neq a_3\), such as \((a_1,a_2,a_3)=(1,1,2)\), \((3,3,1)\), or
\((q,q,1)\). The compressed one-site basis state is labelled by
\(
    \ket{\rho}_{\rm orb}
    =
    \ket{\sigma,\tau;\lambda}_{\rm orb}.
\)
For a partition \(\lambda=\{P_1,\ldots,P_r\}\),  define the 
block-colour function \(\widehat{\Theta}_{\sigma,\tau;\lambda}(x)=j\) if \(x\in L_\alpha\) with \(\alpha\in P_j\). Choosing a physical colours amounts to choosing \(r\) distinct colours using the map \(\iota:\{1,\ldots,r\}\to\{1,\ldots,q\}\) with distinct values \(\iota(j)=c_j\), and set
\(
    \Theta_{\sigma,\tau;\lambda,\iota}(x)
    =
    \iota\!\left(
    \widehat{\Theta}_{\sigma,\tau;\lambda}(x)
    \right)
\).
For a fixed equality pattern with \(r\) colour blocks, the number of explicit colour assignments represented by this orbit is
\(
    w(\lambda)=(q)_r
    =
    q(q-1)\cdots(q-r+1)
\).
The number of possible equality patterns for \(m\) loops is the restricted Bell number
\(
    B_m^{(q)}
    =
    \sum_{r=1}^{\min(m,q)} S(m,r),
\)
where \(S(m,r)\) is the Stirling number of the second kind. Hence an \(m\)-loop
sector contributes \(B_m^{(q)}\) orbit states rather than \(q^m\) explicit-colour
states. The corresponding basis size can be evaluated as 
\[
    N_{\rm orb}(t,q)
    =
   \sum_{\sigma,\tau\in S_{2t}}
    B_{\ell(\sigma,\tau)}^{(q)} 
    =
    (2t)!
    \sum_{m=1}^{2t}
    c(2t,m)
    \sum_{r=1}^{\min(m,q)}
    S(m,r),
    \label{eq:finite_q_orbit_basis_count}
\]
where \(c(n,m)\) is the unsigned Stirling number of the first kind. 
For fixed \(t\), the orbit-compressed basis size \(N_{\rm orb}(t,q)\) stops growing once \(q\geq 2t\) and becomes independent of $q$, since an \(m\)-loop sector can use at most \(m\leq 2t\) distinct colours. This allows us to compute TopSFF for arbitrary large $q$ for the case of $t=2$. For example, one finds for \(t=2\), 
\(
    \{N_{\rm orb}(2,q)\}_{q=2}^{5}
    =
    \{1440,1728,1752,1752\},
\)
and for \(t=3\), 
\(
    \{N_{\rm orb}(3,q)\}_{q=2}^{5}
    =
    \{1814400,2678400,2894400,2916000\}
\). In a normalized orbit basis,
\(
    \ket{\sigma,\tau;\lambda}_{\rm orb}
    =
    w(\lambda)^{-1/2}
    \sum_{\mathbf a\in{\rm Orb}(\lambda)}
    \ket{\sigma,\tau;\mathbf a},
\)
the matrix element of the compressed transfer matrix is
\begin{equation}
    \boltz_{\rm orb}(\rho_1,\rho_2)
    =
    \frac{1}{\sqrt{w(\lambda_1)w(\lambda_2)}}
    \sqrt{\wg(\sigma_1^{-1}\tau_1,q)}
    \sqrt{\wg(\sigma_2^{-1}\tau_2,q)}
    \sum_{\iota_1}
    \sum_{\iota_2}
    \mu^{
    Q\left(
    \rho_1,\iota_1;
    \rho_2,\iota_2
    \right)
    },
    \label{eq:finite_q_orbit_boltzmann}
\end{equation}
where \(\rho_j=(\sigma_j,\tau_j;\lambda_j)\), and the sums are over all choices
of distinct physical colours for the blocks of \(\lambda_j\), equivalently over
maps \(\iota_j:\{1,\ldots,r_j\}\to\{1,\ldots,q\}\) with distinct values. The
exponent \(Q\) is the same phase exponent as in
Eq.~\eqref{eq:finite_q_phase_exponent}, but with \(\Theta_{\rho_j}(x)\) replaced
by \(\Theta_{\sigma_j,\tau_j;\lambda_j,\iota_j}(x)\). In the same normalized orbit basis, the momentum-resolved boundary state becomes
\begin{equation}
    \braket{\sigma,\tau;\lambda|\phi_k}_{\rm orb}
    =
    \sqrt{w(\lambda)}
    \sqrt{\wg(\sigma^{-1}\tau,q)}
    \Psi_k(\sigma,\tau),
\end{equation}
such that \(\overline{K_{\topo}}=\phi_{\mathrm{orb},-k}^T\boltz^{\Leff}_{\rm orb}\phi_{\mathrm{orb},k}\).

\subsubsection{Fourier transformation}
The ensemble-averaged transfer matrix is invariant under independent cyclic
translations along the three folded-time loops. These translations form the
group
$\Gamma_t=\mathbb Z_{2t}^{(b)}\times\mathbb Z_t^{(c)}
\times\mathbb Z_t^{(d)}$.
For a translation $ r=(r_b,r_c,r_d)\in\Gamma_t$, let $T_{r}$ denote the
corresponding matrix representation, the projector onto a momentum sector
$k=(k_b,k_c,k_d)$ is
\begin{align}
    P_k
    =
    \frac{1}{2t^3}
    \sum_{r_b=0}^{2t-1}
    \sum_{r_c,r_d=0}^{t-1}
    e^{-i(k_b r_b+k_c r_c+k_d r_d)}
    T_{r}.
    \label{eq:finite_q_momentum_projector}
\end{align}
The momenta are labelled by integer indices $k_b=2\pi n_b/2t$ and
$k_c=2\pi n_c/t$, $k_d=2\pi n_d/t$, with
$n_b=0,\ldots,2t-1$ and $n_c,n_d=0,\ldots,t-1$. Using that $[\boltz_{\rm orb},T_{ r}]=0$, the colour-orbit transfer matrix block diagonalises as
$\boltz_{\rm orb}=\bigoplus_k\boltz_{{\rm orb},k}$; hence, the momentum-resolved TopSFF is
$\overline{K_{\topo,k}}
=\phi_{{\rm orb},-k}^{T}
\boltz_{{\rm orb},k}^{\Leff}
\phi_{{\rm orb},k}$.

\subsubsection{Quotient reduction}
\label{subsubsec:finite_q_direct_quotient}
Let $\alpha=(\sigma_\alpha,\tau_\alpha;\lambda_\alpha)$ denote a state in the orbit basis, $\alpha$'s action on the $4t$ endpoints set generates a colour-equality pattern. We denote this endpoint pattern by $z_\alpha$. Explicitly, two endpoints $x,y\in\{1,\ldots,4t\}$ belong to the
same block of $z_\alpha$ if $\widehat{\Theta}_{\sigma_\alpha,\tau_\alpha;\lambda_\alpha}(x) = \widehat{\Theta}_{\sigma_\alpha,\tau_\alpha;\lambda_\alpha}(y)$.
In other words, $z_\alpha$ records which endpoint colours coincide, but not which permutation pair and contraction loop partition generated these coincidences. This parametrisation is useful since the two-site phase average depends only on the endpoint patterns $z_\alpha$ and $z_\beta$. We write the colour-averaged interaction
kernel $\mathcal D_{z_\alpha z_\beta}(\mu,q)$ between two endpoint patterns as
\begin{align}
    \mathcal D_{zz'}(\mu,q)
    =
    \frac{1}{\sqrt{w(z)w(z')}}
    \sum_{\iota}
    \sum_{\iota'}
    \mu^{Q(z,\iota;z',\iota')},
    \label{eq:finite_q_endpoint_kernel}
\end{align}
where $w(z) = q!/[q- r(z)]!$, with $r(z)$ the number of colour blocks in $z$.
The exponent $Q$ is the same phase-imbalance exponent as in
Eq.~\eqref{eq:finite_q_phase_exponent}, and the maps $\iota$ is defined below \eqref{eq:finite_q_orbit_boltzmann}. Note that the endpoint pattern does not determine the Weingarten function.
Different states $\alpha$ and $\alpha'$ can have $z_\alpha=z_{\alpha'}$ but
different cycle structures of
$\sigma_\alpha^{-1}\tau_\alpha$ and
$\sigma_{\alpha'}^{-1}\tau_{\alpha'}$. The permutation labels must therefore be retained when evaluating the Weingarten factors, and may be summed only afterwards. To express this reduction, define the rectangular matrix
$F_{\alpha z}
=
\sqrt{\wg(\sigma_\alpha^{-1}\tau_\alpha,q)}
\,\delta_{z,z_\alpha}$.
The colour-orbit transfer matrix then factorizes as
\begin{align}
    \boltz_{\rm orb}
    =
    F \mathcal DF^{T}.
    \label{eq:finite_q_direct_factorization}
\end{align}
The matrix $G:=F^{T}F$ is diagonal in the position-space endpoint-pattern
basis, with
\begin{align}
    G_{zz'}
    =
    \delta_{zz'}\,g_z(q),
    \qquad
    g_z(q)
    =
    \sum_{\alpha:\,z_\alpha=z}
    \wg(\sigma_\alpha^{-1}\tau_\alpha,q).
    \label{eq:finite_q_weingarten_aggregate}
\end{align}
$g_z(q)$ is the  sum of the Weingarten functions of all permutation-loop-colour states which produce that pattern $z$. 
We refer to this exact compression as a ``quotient reduction''. 
In a fixed momentum sector $k$, the map $F_k^{T}$ identifies orbit-basis
vectors that differ by an element of $\ker F_k^{T}$. Hence the dynamically
relevant space is the quotient
\(
\mathcal{H}_k/\ker F_k^{T}
\simeq
\operatorname{im}F_k^{T},
\)
which is represented by the endpoint-pattern states. In particular, any
$v\in\ker F_k^{T}$ is annihilated by the bulk transfer matrix,
$\boltz_{{\rm orb},k}v
=F_k\mathcal D_kF_k^{T}v=0$.

The boundary states are generated in the same way. Define
$\eta_k:=F_k^{T}\phi_{{\rm orb},k}$, where $F_k$ denotes the restriction of $F$ to momentum sector $k$. Its components are
\begin{equation}
    [\eta_k]_z
    =
    \sqrt{w(z)}
    \sum_{\alpha:\,z_\alpha=z}
    \wg(\sigma_\alpha^{-1}\tau_\alpha,q)
    \Psi_k(\sigma_\alpha,\tau_\alpha).
\end{equation}
Hence both the bulk transfer matrix and the boundary vectors retain the full
permutation-dependent Weingarten information.

For $\Leff\geq1$, Eq.~\eqref{eq:finite_q_direct_factorization} gives
\begin{align}
    \boltz_{{\rm orb},k}^{\Leff}
    =
    F_k\mathcal D_k
    (G_k\mathcal D_k)^{\Leff-1}
    F_k^{T}.
    \label{eq:finite_q_factorized_power}
\end{align}
We therefore define the exact quotient transfer matrix as
$\boltz_{{\rm quot},k}:=G_k\mathcal D_k$, together with the quotient boundary
vectors
$r_k:=\eta_k$ and
$\ell_{-k}^{T}:=\eta_{-k}^{T}\mathcal D_k$.
The momentum-resolved TopSFF becomes
\begin{align}
    \overline{K_{\topo,k}}(\Leff)
    =
    \ell_{-k}^{T}
    \boltz_{{\rm quot},k}^{\Leff-1}
    r_k.
    \label{eq:finite_q_quotient_topSFF}
\end{align}
Equation~\eqref{eq:finite_q_quotient_topSFF} evaluates exactly the same
TopSFF as the full colour-orbit transfer matrix, but requires repeated
multiplication only in the much smaller endpoint-pattern quotient space. We emphasise that the simplification via quotient is exact rather than a numerical approximation.

\subsection{Exact finite-\(q\) analytical evaluation of TopSFF at \(t=1\)}

For \(t=1\) at finite-\(q\), we evaluate exactly the TopSFF, \(   K_{\topo}(t,L)= \mathrm{Tr}[ \hat{\mathcal{D}} (\hat{U}\otimes \hat{U}^*) ] = \Tr [\hat{S} \hat{U}] \Tr[\hat{U}^\dagger]\), for the Random Phase Model (RPM) with parity inversion symmetry, defined in App.~\ref{app:parity_ptRPM}. At \(t=1\), imposing time translation symmetry makes no difference, since there is only a single time step.  The computation follows the colour-orbit construction above. For $t=1$, the explicit colour basis has size \(N(1,q)=2q(q+1)\), while the compressed basis has size \(N_{\rm orb}(1,q)=6\) for
\(q\geq2\). The boundary state further restricts the problem to a
three-dimensional invariant subspace. Diagonalizing the corresponding
finite-\(q\) generalized Boltzmann factor gives
\begin{align}
\overline{K_{\spatdef}}(t=1,\Leff,\mu,q)
&=
\frac{2(1-\mu^4+\rho)}{(q+1)(\mu^4+7-\rho)}
\lambda_{+}^{\Leff}
+
\frac{2(1-\mu^4)}{(q+1)\rho}
\lambda_{-}^{\Leff},
\label{app_eq:finite_q_t1_tsff}
\end{align}
where \(\lambda_{\pm}(\mu,q)=(1-\mu^4\pm\rho)/[2(q+1)]\),
\(\rho(\mu)=\sqrt{(\mu^4-1)(\mu^4+7)}\), \(\mu=e^{-\epsilon/2}\in[0,1]\), and
\(\Leff=L/2-1\) for even \(L\) (the case for odd \(L\) can be analysed easily). For \(0\leq\mu<1\), \(\rho=i\sqrt{(1-\mu^4)(\mu^4+7)}\), so \(\lambda_{\pm}\) form a complex-conjugate pair, and the \(t=1\) finite-\(q\)
Boltzmann factor is in the broken \(\mathcal{PT}\) symmetric regime. Writing
\(\lambda_{\pm}=\sqrt{2(1-\mu^4)}\,e^{\pm i\theta}/(q+1)\), with
\(\cos\theta=\sqrt{(1-\mu^4)/8}\) and
\(\sin\theta=\sqrt{(\mu^4+7)/8}\), the result becomes
\begin{equation}
\overline{K_{\spatdef}}(t=1,\Leff,\mu,q)
=
-\frac{4\sqrt{1-\mu^{4}}}{\sqrt{\mu^{4}+7}}\,
\Big(\sqrt{2(1-\mu^{4})}\Big)^{\Leff}
\sin(\Leff\theta)
\left(\frac{1}{q+1}\right)^{\Leff+1},
\label{eq:t1_finite_q_v2}
\end{equation}
which makes the oscillatory dependence on \(\Leff\) explicit. Note that \(\overline{K_{\spatdef}}(t=1,\Leff,\mu,q)=O(q^{-\Leff-1})=O(q^{-L/2})\). The finite-\(q\) answer has the same \(\mu\)- and \(\Leff\)-dependence as the
large-\(q\) result. The only difference is the replacement
\(q^{-(\Leff+1)}\) to \((q+1)^{-(\Leff+1)}\). Therefore, at \(t=1\), finite \(q\) preserves the
\(\mathcal{PT}\) symmetric structure and the oscillatory behaviour in \(L\), but renormalizes the overall large-\(q\) factor from \(q^{-\Leff-1}\) to \((q+1)^{-\Leff-1}\).  The finite-\(q\) TopSFF at \(t=1\) is plotted as a function of the system size
\(L\) in Fig.~\ref{fig:tsff_t1}.

\begin{figure}[ht]
    \centering
    \includegraphics[width=0.475\textwidth]{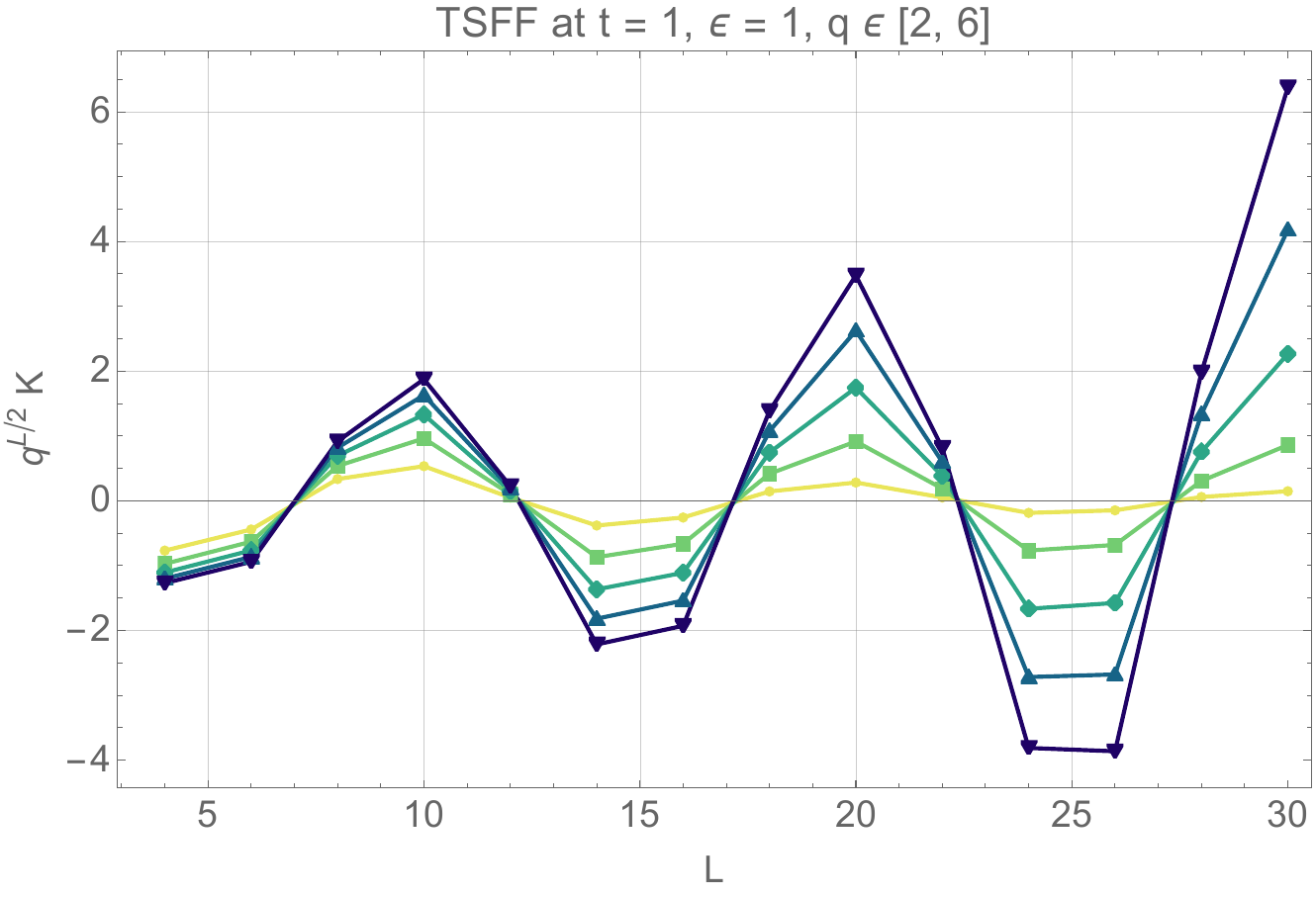}
    \caption{\textbf{Exact TopSFF at $t=1$ at finite $q$.}
    TopSFF \eqref{eq:t1_finite_q_v2} of the parity symmetric RPM at $\epsilon=1$ (with or without discrete time translational symmetry) against system size $L$ at $t=1$ from $q=2$ (yellow) to $q=6$ (blue).}
    \label{fig:tsff_t1}
\end{figure}

\section{TopSFF with spatially extended topological defects without time translational symmetry}\label{app:tsff_tmat}
In this appendix, we exactly compute the TopSFF with a spatially extended defect for the RPM in the absence of discrete time translation symmetry. We show that the TopSFF still gives rise to an emergent non-Hermitian \(\PT\)-symmetric single-particle problem for a temporal domain wall. However, any \(\PT\) symmetry breaking transitions are pushed towards the boundaries of parameter space, corresponding to zero or infinite interaction strength, as \(t\) increases in the pre-Heisenberg regime. This contrasts with the time-translation symmetric case, where a \(\PT\) transition can occur at finite interaction strength for arbitrary \(t\).

\subsection{Boltzmann factor $\boltz$ in position space}
\begin{figure}[ht]
    \centering
    \includegraphics[width=0.48\textwidth]{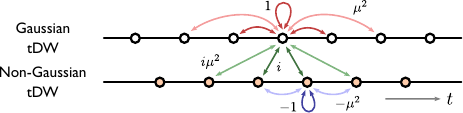}
    \caption{\textbf{Emergent non-Hermitian single-particle model}, as defined in \eqref{eq:ladder_model}. The single particle residing in the top (bottom) chain corresponds to the Gaussian (non-Gaussian) temporal domain wall. The intra-chain and inter-chain hopping terms are Hermitian and non-Hermitian respectively. We illustrate hopping amplitudes up to $O(\mu^2)$.
    }
    \label{fig:tight_binding}
\end{figure}

Here we exactly compute TopSFF for a spatially extended defect for the RPM in the absence of discrete time translational symmetry. As in the case with time translational symmetry, we compute the TopSFF
\(
K_{\topo}(t,L)
=
\mathrm{Tr}\!\left[\hat{\mathcal D}(\hat U\otimes \hat U^*)\right]
=
\Tr[\hat S\hat U]\Tr[\hat U^\dagger]\), where 
\(\hat{\mathcal D}=\hat S\otimes \iden\)
with  global swap operator $\hat S$.  $\hat{U}$ is defined in \eqref{app_eq:par_rpm_temprand}, which commutes with $\hat{D}$, but does not have discrete time translational symmetry, i.e., the RPM is temporally random and globally parity-inversion symmetric.
Upon ensemble averaging over \eqref{app_eq:par_rpm_temprand}, the TopSFF \(\overline{\Knormalized}_{\spatdef}\) can be expressed in terms of a transfer matrix, which describes a single-particle non-Hermitian hopping problem:
\be\label{app_eq:k_top_def_floq_temp_rand}
\ba
\overline{K}_{\spatdef} & \, =:   q^{-\Leff -1} \overline{\Knormalized}_{\spatdef} + O(q^{-{\Leff} -2}),
\\
\overline{\mathcal{K}}_{\spatdef}  
& \, =  
\phi^{T}\boltz^{\Leff}\phi
:=
\sum_{\sigma_1 ,\sigma_2}\phi(\sigma_1)\left[\boltz^{\Leff}\right]_{\sigma_1 \sigma_2}\phi(\sigma_2),
\ea
\ee
where $\boltz$ and $\phi$ are respectively the generalized Boltzmann factor encoding the many-body interactions and the boundary state in the folded representation. The effective length is $\Leff = L/2-1$ for even $L$. For odd $L$, TopSFF can be evaluated analogously.
The local TopSFF configurations are labelled by \(\sigma=(a,b)\) in the folded representation. 
Here \(a\in \{0,1\}\) labels the type of temporal domain wall (tDW): Gaussian tDW with $a=0$ consists only of Gaussian contractions, while non-Gaussian tDW with $a=1$ involves both Gaussian and non-Gaussian contractions. Further, \(b\in\{1,2,\dots,2t\}\) labels the position along the local length-$2t$ loop in $\tr[\hat{D} \hat{U}]$ in the folded representation. There are in total $4t$ local TopSFF configurations. Writing \(\sigma_i=(a_i,b_i)\) and \(\Delta b := (b_2-b_1)\pmod{2t}\), the generalized Boltzmann factor can be written as
\begin{equation}\label{eq:boltz_temporal_random_pos}
\boltz(\sigma_1,\sigma_2)
=
\big[\boltz_{a_1 a_2}(\Delta b)\big]_{a_1,a_2=0,1}
=
\begin{bmatrix}
\boltz_{00} & \boltz_{01}
\\
\boltz_{10} & \boltz_{11}
\end{bmatrix},
\end{equation}
where the matrix elements are
\begin{equation}\label{eq:boltz_temporal_random_entries}
\begin{aligned}
\boltz_{00}(\Delta b)
=&\;
\delta^{(2t)}_{\Delta b,0}
+\mu^{4t-4}\,\delta^{(2t)}_{\Delta b,t}
+\sum_{\ell=1}^{t-1}\mu^{4\ell-4}
\Big(
\delta^{(2t)}_{\Delta b,\ell}
+
\delta^{(2t)}_{\Delta b,-\ell}
\Big),
\\
\boltz_{11}(\Delta b)
=&\;
-\delta^{(2t)}_{\Delta b,0}
-\mu^{4t}\,\delta^{(2t)}_{\Delta b,t}
-\sum_{\ell=1}^{t-1}\mu^{4\ell}
\Big(
\delta^{(2t)}_{\Delta b,\ell}
+
\delta^{(2t)}_{\Delta b,-\ell}
\Big),
\\
\boltz_{01}(\Delta b)
=&\;
i\sum_{\ell=1}^{t}\mu^{4\ell-4}
\Big(
\delta^{(2t)}_{\Delta b,-\ell}
+
\delta^{(2t)}_{\Delta b,\ell-1}
\Big),
\\
\boltz_{10}(\Delta b)
=&\;
i\sum_{\ell=1}^{t}\mu^{4\ell-4}
\Big(
\delta^{(2t)}_{\Delta b,\ell}
+
\delta^{(2t)}_{\Delta b,1-\ell}
\Big).
\end{aligned}
\end{equation}
where \(\delta^{(2t)}_{x,y}=1\) if \(x\equiv y\ (\mathrm{mod}\ 2t)\) and \(0\) otherwise.

It is natural to represent $\boltz$ as a single-particle hopping problem of the temporal domain wall in second-quantized form by introducing creation and annihilation operators \(\hat{c}_{A,j}^{(\dagger)}\) and \(\hat{c}_{B,j}^{(\dagger)}\), 
\begin{equation}
|a=0,b\rangle := \hat{c}_{A,b}^\dagger |0\rangle,
\qquad
|a=1,b\rangle := \hat{c}_{B,b-\frac12}^\dagger |0\rangle,
\end{equation}
with periodic boundary condition \(b\equiv b+2t\). Here \(|0\rangle\) denotes the vacuum state annihilated by all annihilation operators, and \(A\) and \(B\) label the two sublattices. In this basis,
\begin{equation}\label{eq:H_boltz_pos_space}
H
=
\sum_{\sigma_1,\sigma_2}
\boltz(\sigma_1,\sigma_2)\,
|\sigma_1\rangle\langle \sigma_2|,
\qquad
\boltz(\sigma_1,\sigma_2):=\langle \sigma_1|H|\sigma_2\rangle .
\end{equation}
Note that such a Hamiltonian representation can also be made for Eq.~\eqref{eq:boltz_pos_space} in the presence of discrete time translational symmetry. In terms of creation and annihilation operators, in the absence of time translational symmetry, the Hamiltonian can then be written as
\begin{IEEEeqnarray}{rl}\label{eq:ladder_model}
H =& \, \, H_A + H_B + H_{AB} \, ,
\\ \nonumber
H_A =& \, \,\sum_{j=1}^{2t} \Big[ \hat{c}_{A,j}^\dagger \hat{c}_{A,j}
+ \mu^{4t-4} \hat{c}_{A,j}^\dagger \hat{c}_{A,j+t}
+ \sum_{\ell=1}^{t-1} \mu^{4\ell-4}\big( \hat{c}_{A,j}^\dagger \hat{c}_{A,j+\ell} + \hat{c}_{A,j}^\dagger \hat{c}_{A,j-\ell} \big) \Big] \, ,
\\ \nonumber
H_B =& \, \, - \sum_{j=1/2}^{2t-1/2} \Big[ \hat{c}_{B,j}^\dagger \hat{c}_{B,j}
+ \mu^{4t} \hat{c}_{B,j}^\dagger \hat{c}_{B,j+t}
+ \sum_{\ell=1}^{t-1} \mu^{4\ell}\big( \hat{c}_{B,j}^\dagger \hat{c}_{B,j+\ell} + \hat{c}_{B,j}^\dagger \hat{c}_{B,j-\ell} \big) \Big] \, ,
\\ \nonumber
H_{AB} =& \, \, i \sum_{j=1}^{2t}\sum_{\ell=1}^{t} \mu^{4\ell-4}
\Big(
\hat{c}_{A,j}^\dagger \hat{c}_{B,j-\ell-\frac12}
+\hat{c}_{A,j}^\dagger \hat{c}_{B,j+\ell-\frac32}
+\hat{c}_{B,j-\ell-\frac12}^\dagger \hat{c}_{A,j}
+\hat{c}_{B,j+\ell-\frac32}^\dagger \hat{c}_{A,j}
\Big) .
\end{IEEEeqnarray}
Here \(\hat{c}_{A,j}^{(\dagger)}\) with \(j=1,2,\dots,2t\) and \(\hat{c}_{B,j}^{(\dagger)}\) with \(j=1/2,3/2,\dots,2t-1/2\) act on the two legs of the ladder, with periodic identification \(\hat{c}^{(\dagger)}_{X,2tm+j}=\hat{c}^{(\dagger)}_{X,j}\) for \(X=A,B\).

\subsection{Boltzmann factor $\widetilde{\boltz}$ in momentum space}
Utilizing the translational invariance of  ${\boltz}$, we can express the TopSFF in terms of the generalized Boltzmann factor in momentum space as
\begin{equation} \label{eq:master-unitary_temp_rand} 
\overline{\Knormalized}_{\spatdef}=
\phi^T \boltz^{\Leff} \phi
=
\sum_{k} \widetilde{\phi}(-k)^T\, \widetilde{\boltz}(k)^{\Leff}\, \widetilde{\phi}(k),
\end{equation}
where the discrete Fourier transforms of $\phi$ and ${\boltz}$ are defined as
\begin{equation}
\label{eq:four_trans_def_temp_rand}
\begin{aligned}
\widetilde{\phi}(a;k)
:=& \;
\frac{1}{\sqrt{2t}}
\sum_{b=0}^{2t-1}
e^{ik b}\,\phi(a,b), 
\\
\widetilde{\boltz}_{a_1 a_2}(k)
:=& \;
\sum_{\Delta b=0}^{2t-1}
\boltz_{a_1 a_2}(\Delta b)\,
e^{-ik \Delta b }.
\end{aligned}
\end{equation}
where \(k = 2\pi n/(2t)\) with \(n=0,1,\dots,2t-1\).
A compatible momentum-space convention for the annihilation operators is
\begin{equation}
\hat{c}_{A,b}
=
\frac{1}{\sqrt{2t}}\sum_k e^{-ikb}\hat{c}_{A,k},
\qquad
\hat{c}_{A,k}
=
\frac{1}{\sqrt{2t}}\sum_{b=1}^{2t} e^{ikb}\hat{c}_{A,b},
\label{eq:FT_uniform_b}
\end{equation}
and
\begin{equation}
\hat{c}_{B,b-\frac12}
=
\frac{1}{\sqrt{2t}}\sum_k e^{-ikb}\hat{c}_{B,k},
\qquad
\hat{c}_{B,k}
=
\frac{1}{\sqrt{2t}}\sum_{b=1}^{2t} e^{ikb}\hat{c}_{B,b-\frac12}.
\label{eq:IFT_uniform_b}
\end{equation}
where \(k=\pi n/t\), \(n=0,1,\dots,2t-1\). 
With these conventions, the hopping Hamiltonian becomes
\begin{equation}
H=\sum_k
\begin{pmatrix}
\hat c_{A,k}^\dagger & \hat c_{B,k}^\dagger
\end{pmatrix}
\widetilde{\boltz}(k)
\begin{pmatrix}
\hat c_{A,k}\\
\hat c_{B,k}
\end{pmatrix},
\qquad
\widetilde{\boltz}(k)=
\begin{pmatrix}
\widetilde{\boltz}_{00}(k) & \widetilde{\boltz}_{01}(k)\\
\widetilde{\boltz}_{10}(k) & \widetilde{\boltz}_{11}(k)
\end{pmatrix}.
\label{eq:H_bloch_uniform}
\end{equation}
Evaluating the Fourier sums gives
\begin{align}
\widetilde{\boltz}_{00}(k)
&=
1+\mu^{4t-4}e^{-itk}+2\sum_{\ell=1}^{t-1}\mu^{4\ell-4}\cos(\ell k)
=: \Asf(k),
\\
\widetilde{\boltz}_{11}(k)
&=
-\Big[1+\mu^{4t}e^{-itk}+2\sum_{\ell=1}^{t-1}\mu^{4\ell}\cos(\ell k)\Big]
=: \Csf(k),
\\
\widetilde{\boltz}_{01}(k)
&=
i\sum_{\ell=1}^{t}\mu^{4\ell-4}\Big(e^{i\ell k}+e^{-i(\ell-1)k}\Big)
=
i\,e^{ik/2}\Bsf(k),
\\
\widetilde{\boltz}_{10}(k)
&=
i\sum_{\ell=1}^{t}\mu^{4\ell-4}\Big(e^{-i\ell k}+e^{i(\ell-1)k}\Big)
=
i\,e^{-ik/2}\Bsf(k),
\end{align}
where
\begin{equation}
\widetilde{\boltz}(k)=
\begin{pmatrix}
\Asf(k) & i\,e^{ik/2}\Bsf(k)\\
i\,e^{-ik/2}\Bsf(k) & \Csf(k)
\end{pmatrix},
\label{eq:boltzk_uniform}
\end{equation}
and after evaluating the geometric sums,
\begin{align}
\Asf(k)
&=
\frac{(1-\mu^4)(1-\mu^4+2\cos k)-(1-\mu^8)\mu^{4t-4}e^{-itk}}
{1-2\mu^4\cos k+\mu^8},
\\
\Csf(k)
&=
-\frac{(1-\mu^8)\bigl(1-\mu^{4t}e^{-itk}\bigr)}
{1-2\mu^4\cos k+\mu^8},
\\
\Bsf(k)
&=
\frac{2(1-\mu^4)\bigl(1-\mu^{4t}e^{-itk}\bigr)\cos(k/2)}
{1-2\mu^4\cos k+\mu^8}.
\end{align}
In this basis, the off-diagonal entries differ by the phase factors \(e^{\pm ik/2}\). We can remove the off-diagonal phases by a momentum-dependent unitary transformation,
\begin{equation}
\hat{\mathbf c}_k :=
\begin{pmatrix}
\hat c_{A,k}\\
\hat c_{B,k}
\end{pmatrix},
\qquad
\hat{\mathbf d}_k :=
\begin{pmatrix}
\hat d_{A,k}\\
\hat d_{B,k}
\end{pmatrix},
\qquad
\hat{\mathbf d}_k = U(k)\,\hat{\mathbf c}_k,
\qquad
U(k):=
\begin{pmatrix}
1 & 0\\
0 & e^{ik/2}
\end{pmatrix}.
\end{equation}
In terms of \(\hat d\), the \(B\)-sublattice Fourier transform becomes
\begin{equation}
\hat c_{B,b-\frac12}
=
\frac{1}{\sqrt{2t}}\sum_k e^{-ik(b+\frac12)}\hat d_{B,k}
=
\frac{1}{\sqrt{2t}}\sum_k e^{-ikj}\hat d_{B,k},
\qquad
j:=b+\frac12 .
\end{equation}
In terms of the \(\hat d\),
\begin{equation}
H=\sum_k
\begin{pmatrix}
\hat d_{A,k}^\dagger & \hat d_{B,k}^\dagger
\end{pmatrix}
\widetilde{\boltz}_{\mathcal{PT}}(k)
\begin{pmatrix}
\hat d_{A,k}\\
\hat d_{B,k}
\end{pmatrix},
\qquad
\widetilde{\boltz}_{\mathcal{PT}}(k)
=
U(k)\,\widetilde{\boltz}(k)\,U(k)^\dagger
=
\begin{pmatrix}
\Asf(k) & i\Bsf(k)\\
i\Bsf(k) & \Csf(k)
\end{pmatrix}.
\label{eq:boltzk_PT_gauge}
\end{equation}
Since \(k=\pi n/t\) implies \(e^{-itk}=(-1)^n\), the coefficients \(\Asf(k)\), \(\Bsf(k)\), and \(\Csf(k)\) are real. Therefore \(\widetilde{\boltz}_{\mathcal{PT}}(k)\) is in the \(\mathcal{PT}\) symmetric dimer form; see Sec.~\ref{app:tsff_and_pt}.

\begin{figure}[ht]
    \centering
    \includegraphics[width=1\textwidth]{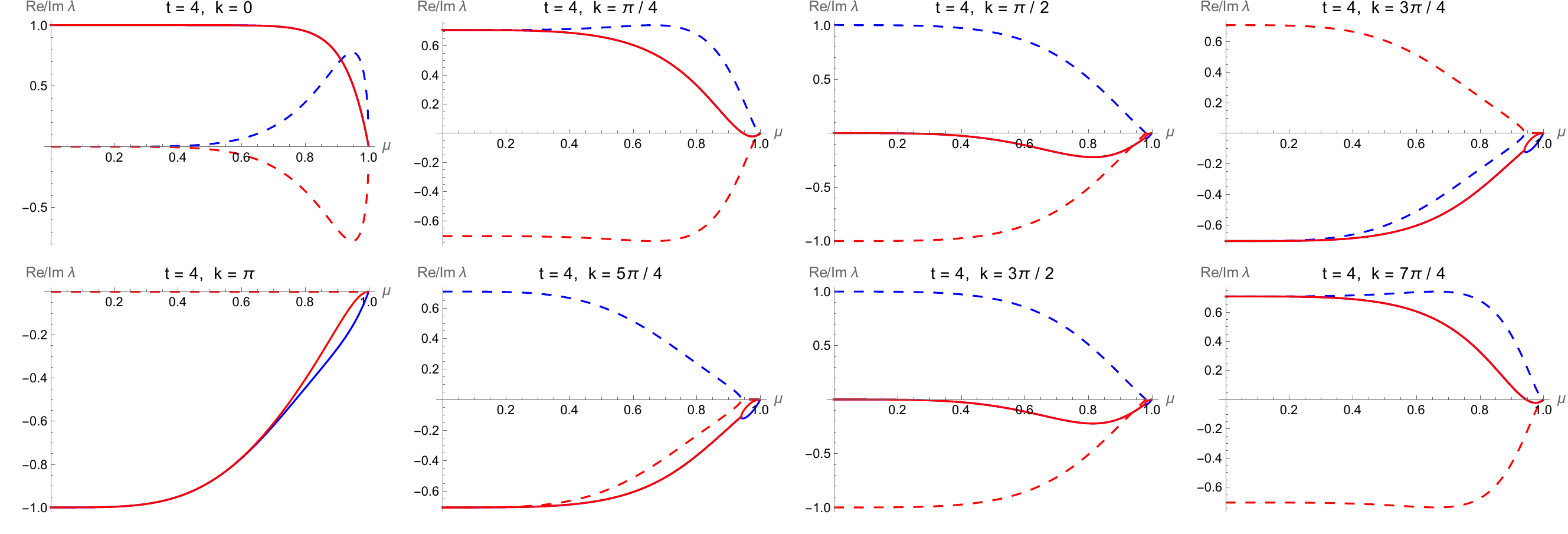}
    \caption{\textbf{Eigenvalues of Boltzmann factor for $t=4$ in the absence of time translational symmetry.}
    Complex eigenvalues of the generalized Boltzmann factor of $k \in \{0,1,2,\dots, 2t-1 \}$. $\Re \lambda_\pm$ are plotted in red and blue solid lines. $\Im \lambda_{\pm}$ are plotted in red and blue dashed lines. All sectors are $\mathcal{PT}$  symmetric. Exceptional points are shown in the sectors $k = \pi/2, 3\pi /4, 5\pi/4, 3\pi/2$.
    Unlike the case with time translational symmetry,  $|\lambda_\pm|$ do not exceed $1$, see \ref{fig:grid_4_sectors}. 
    }
    \label{fig:temp_rand_eval_t4}
\end{figure}

\subsection{Properties of $\widetilde{\boltz}(k)$ and $\mathcal{PT}$ symmetry}\label{app:boltz_mom_prop_temp_rand}
Here we describe the properties of the momentum space generalized Boltzmann factor in Eq.~\eqref{eq:boltzk_uniform} and Eq.~\eqref{eq:boltzk_PT_gauge}:
\begin{itemize}
    \item \textbf{Reality of $\det \widetilde{\boltz}(k)$ and $\tr \widetilde{\boltz}(k)$, and spectral pairing.} Since the allowed momenta are
\(
k=\pi n/t
\)
with \(n=0,1,\dots,2t-1\), we have
\(
e^{-itk}=(-1)^n\in\mathbb R
\).
It follows directly from Eqs.~\eqref{eq:boltzk_uniform}--\eqref{eq:boltzk_PT_gauge} that
\begin{equation}
\Asf(k),\Bsf(k),\Csf(k)\in\mathbb R,
\qquad
\Asf(-k)=\Asf(k),\quad
\Bsf(-k)=\Bsf(k),\quad
\Csf(-k)=\Csf(k).
\label{eq:ABC_even_real_temp_rand}
\end{equation}
As a consequence,
\begin{equation}
\tr \widetilde{\boltz}(k)=\Asf(k)+\Csf(k)\in\mathbb R,
\qquad
\det \widetilde{\boltz}(k)=\Asf(k)\Csf(k)+\Bsf(k)^2\in\mathbb R.
\end{equation}
Hence the characteristic polynomial has real coefficients, and the eigenvalues are
\begin{equation}
\lambda_\pm(k)
=
\frac{\tr\widetilde{\boltz}(k)\pm \sqrt{\Delta(k)}}{2},
\qquad
\Delta(k):=\big[\Asf(k)-\Csf(k)\big]^2-4\Bsf(k)^2\in\mathbb R.
\label{eq:disc_temp_rand}
\end{equation}

    \item \textbf{$\mathcal{PT}$ symmetry and antiunitary symmetry with respect to $\sigma_z$.} Take $\mathcal{P}:= \sigma_z$ and $\mathcal{T}$ to be complex conjugation, then the generalized Boltzmann factor is $\mathcal{PT}$ symmetric in the sense that
    \be
\sigma_z\,\widetilde{\boltz}_{\mathcal{PT}}(k)^*\,\sigma_z
=\widetilde{\boltz}_{\mathcal{PT}}(k)
,
\qquad 
\sigma_z\,\widetilde{\boltz}(k)^*\,\sigma_z=\widetilde{\boltz}(-k),
    \ee
i.e., every momentum sector is $\mathcal{PT}$ symmetric. As described in more details in Sec.~\ref{app:PT_mapping}, the standard $\mathcal{PT}$ symmetry breaking transition applies here and is controlled by \(\Delta(k)\) as
\begin{itemize}
    \item if \(\Delta(k)>0\), the two eigenvalues are real and distinct.
    \item if \(\Delta(k)=0\), the eigenvalues and eigenvectors coalesce, giving the candidate locus for an exceptional point.
    \item if \(\Delta(k)<0\), the eigenvalues form a complex-conjugate pair.
\end{itemize}

    \item \textbf{Complex symmetricity.} Let the superscript $T$ denote transposition, then the generalized Boltzmann factor satisfies
    \be
    \widetilde{\boltz}_{\mathcal{PT}}(k)^T
=
\widetilde{\boltz}_{\mathcal{PT}}(k),
\qquad 
\widetilde{\boltz}(k)^T=\widetilde{\boltz}(-k).
    \ee
    
    \item \textbf{Pseudo-Hermiticity with respect to $\sigma_z$.} Combining the $\mathcal{PT}$ symmetry (or antiunitary symmetry) and complex symmetricity above, the  generalized Boltzmann factor has pseudo-Hermiticity condition,
    \be
\sigma_z\,\widetilde{\boltz}_{\mathcal{PT}}(k)^\dagger\,\sigma_z
=
\widetilde{\boltz}_{\mathcal{PT}}(k),
\qquad 
\sigma_z\,\widetilde{\boltz}(k)^\dagger\,\sigma_z=\widetilde{\boltz}(k).
    \ee
Pseudo-Hermiticity is not an independent property once complex symmetricity and \(\mathcal{PT}\) symmetry are established.
\end{itemize}

\subsection{Boundary states}\label{app:bound_state_temp_rand}
The TopSFF with a spatially extended topological defect is given in Eq.~\eqref{app_eq:k_top_def_floq_temp_rand}. Here we describe the boundary states in the folded representation, both in position space and in momentum space. Away from the parity-inversion axes, the bulk of the quantum many-body system is parity-inversion symmetric, since it commutes with the global swap-defect operator.
Along the parity-inversion axes, the parity-inversion-symmetric coupling gates of the RPM are chosen according to Eqs.~\eqref{app_eq:pRPM_def3} and \eqref{app_eq:rpm_parity_temp_rand} in Sec.~\ref{app:model_par_inv}, subject to the constraint
\(\mathrm{SWAP}\, w(t,r)\, \mathrm{SWAP} = w(t,r)\),
where \(\mathrm{SWAP}\) denotes the swap operator acting on the same Hilbert space as \(w(t,r)\).
Coupling gates at different times are drawn independently, so discrete time translational symmetry is absent, i.e., \(w(t,r) \neq w(t',r)\) for  \(t \neq t'\).
After performing the ensemble average over the random phase gates, the boundary state \(\phi\) and its Fourier transform, defined in Eq.~\eqref{eq:four_trans_def_temp_rand}, are given by
\be \label{eq:temp_Rand_bd_state}
\phi(a,b) 
= i^{\delta_{a,1}}, 
\qquad 
\widetilde{\phi}(a;k)
=
i^{\delta_{a,1}}
\sqrt{2t}\,
\delta_{n,0},
\ee
where \(k = 2\pi n/(2t)\) with \(n=0,1,\dots,2t-1\).
Note that the imaginary factors arising from the Weingarten calculus have been absorbed into the boundary state. The boundary state in Eq.~\eqref{eq:temp_Rand_bd_state} lies entirely in the momentum sector \(k=0\). Hence
\be
\overline{\Knormalized}_{\spatdef} = 
\widetilde{\phi}(0)^T\, \widetilde{\boltz}(0)^{\Leff}\, \widetilde{\phi}(0).
\ee
In fact, in the absence of time translational symmetry, turning off the coupling gates along the parity-inversion axes of the RPM gives rise to the same boundary state as in Eq.~\eqref{eq:temp_Rand_bd_state}. The presence of these coupling gates can only be detected by probing higher moments of the TopSFF and likewise of the SFF~\cite{chan2020lyap}, or in the presence of time translational symmetry, see Sec.~\ref{app:tsff_toolbox}.

\subsection{Exact expression of TopSFF without time translational symmetry}

Since the boundary state has support only at \(k=0\), using the Cayley-Hamilton form in Eq.~\eqref{app_eq:ktop_cayley_hamilton_pt}, we obtain the TopSFF as
\begin{equation}
\overline{\Knormalized}_{\spatdef}
=
\widetilde{\phi}(0)^T\,\widetilde{\boltz}(0)^{\Leff}\,\widetilde{\phi}(0)
=
2t\,\varphi^T\widetilde{\boltz}(0)^{\Leff}\varphi
=2t\,g(0)\,u_{\Leff}(0),
\end{equation}
where $\varphi = (1,i)^T$, and $u_{\Leff}(k)$ and $g(k)$ are defined in Eq.~\eqref{app_eq:ktop_cayley_hamilton_pt}. 
For \(k=0\), one finds
\begin{equation}
\begin{aligned}
g(0)=& \, \Asf(0)-\Csf(0)-2\Bsf(0)
=
\,-\mu^{4t-4}(1-\mu^4),
\\
\tr\widetilde{\boltz}(0)
=& \,
2-(1+\mu^4)\mu^{4t-4},
\\
\det\widetilde{\boltz}(0)
=& \,
(1-\mu^{4t})(1+\mu^{4t-4}),
\\
\Delta(0)
:=& \,
\tr \, \widetilde{\boltz}(0)^2-4\det \widetilde{\boltz}(0)
=
\mu^{4t-8}\Big[\mu^{4t}(1+6\mu^4+\mu^8)-8\mu^4\Big] \leq 0\,,
\\
\lambda_\pm(0)
=& \, 
\frac{\tr\widetilde{\boltz}(0)\pm \sqrt{\Delta(0)}}{2}
=
1-\frac{(1+\mu^4)\mu^{4t-4}}{2}
\pm \frac{1}{2}\sqrt{\Delta(0)}
\end{aligned}
\end{equation}
Observe that for $k=0$ sector, $\Delta(0)<0$ for $\mu \in (0,1)$, i.e., there is not an exceptional point for $\mu \in (0,1)$, and the generalized Boltzmann factor is in the \(\mathcal{PT}\) broken phase with $\lambda_{\pm}$ forming complex conjugate pairs. Therefore, the TopSFF with spatially extended topological defects in the absence of time translational symmetry can be evaluated as 
\be \label{eq:K_spatdef_ana_temp_rand}
\ba
\overline{\Knormalized}_{\spatdef}
=&\;
-2t\mu^{4t-4}\left(1-\mu^4\right)\,
\frac{\lambda_+^{\Leff}-\lambda_-^{\Leff}}{\sqrt{\Delta}}
=
-\frac{4t\mu^{4t-4}\left(1-\mu^{4}\right)}{\sqrt{-\Delta}}
|\lambda|^{\Leff}\,\sin\!\big(\Leff \arg\lambda\big),
\\
\lambda
\equiv &\; \lambda_+ = \lambda^*_-=
1-\frac{1}{2}(1+\mu^4)\mu^{4t-4}
+\frac{1}{2}\sqrt{\Delta},
\\
\Delta
=&\;
\mu^{4t-8}\Big[\mu^{4t}\big(1+6\mu^{4}+\mu^{8}\big)-8\mu^{4}\Big]
\;\le 0 \, .
\ea
\ee
where $\Leff = L/2-1 $ for  even $L$, and $\Leff= (L-1)/2-1$ for odd $L$. $\mu = e^{-\epsilon/2}$ with $\epsilon \in [0, \infty)$ is the RPM parameter that sets the coupling strength, with larger $\epsilon$ corresponding to stronger coupling. Typical TopSFF behaviours are plotted Fig.~\ref{fig:tsff_no_tts_two_panels}.

\begin{figure}[t]
    \centering
      \centering
    \includegraphics[width=0.4\linewidth]{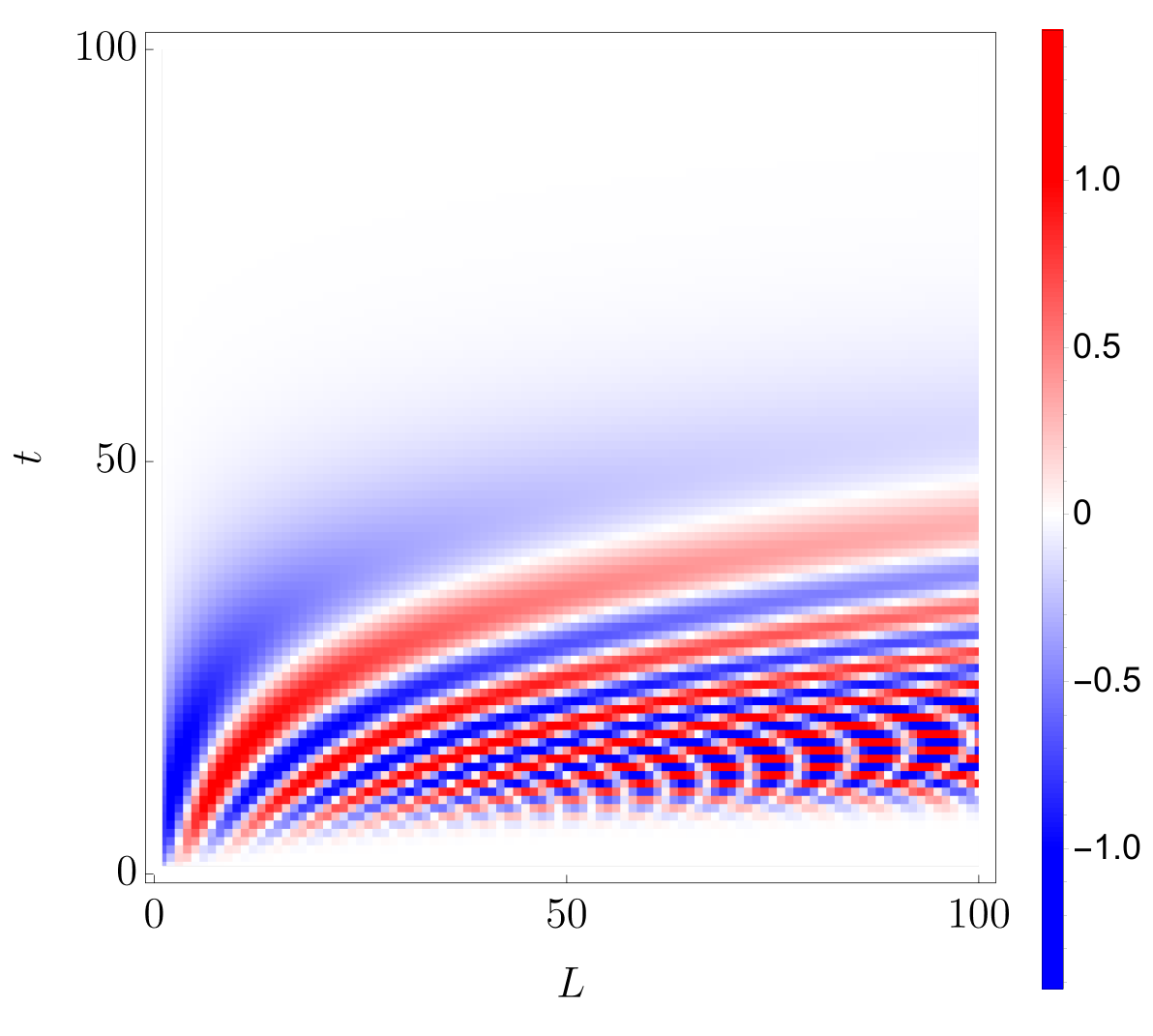}
    \includegraphics[width=0.4\linewidth]{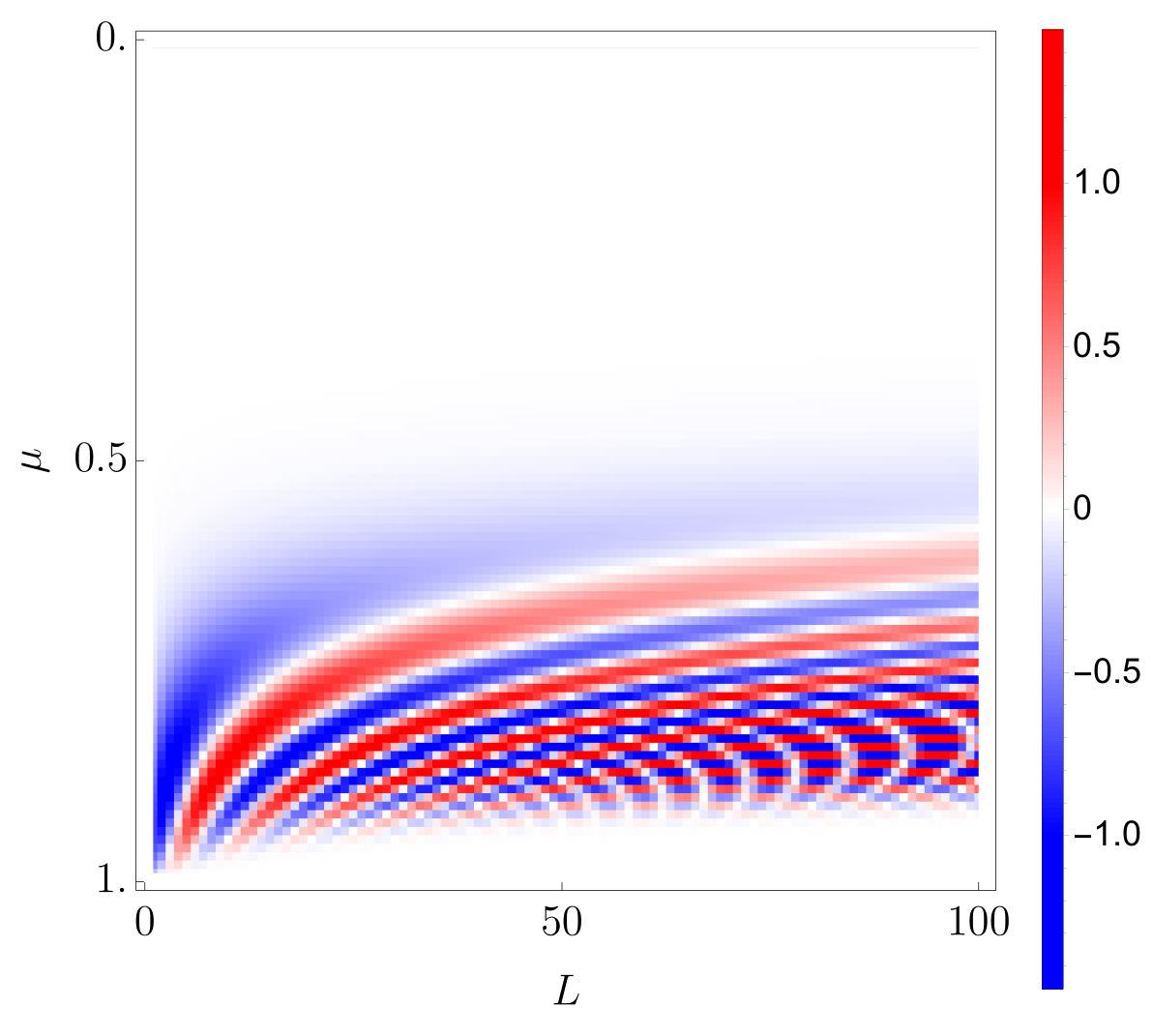}
    \caption{\textbf{TopSFF without time translational symmetry at large \(q\).}
    $\overline{\Knormalized_{\spatdef}}$ in the $t$-$L$ plane at $\mu=0.85$ (left) and in the $\mu$-$L$ plane at $t=15$ (right).}
    \label{fig:tsff_no_tts_two_panels}
\end{figure}

\subsection{Consistency check with finite-$q$ result at $t=1$} 

As a consistency check, at $t=1$, $\overline{\Knormalized_{\spatdef}}$ with spatially extended topological defects reduces to
\be\label{eq:tsff_temprand_t1}
\overline{\Knormalized_{\spatdef}}
=
-\frac{4\sqrt{1-\mu^{4}}}{\sqrt{\mu^{4}+7}}
\Big(\sqrt{2(1-\mu^{4})}\Big)^{\Leff}
\sin\big(\Leff\,\theta\big)\,,
\ee
where the angle $\theta$ is defined such that $\cos\theta=\sqrt{(1-\mu^4)/8}$ or 
$\sin\theta=\sqrt{(\mu^4+7)/8}$. 
Eq.~\eqref{eq:tsff_temprand_t1} coincides with (i) the large-$q$ expansion of the exact finite-$q$ result of the TopSFF in Eq.~\eqref{app_eq:finite_q_t1_tsff} at $t=1$, and with (ii) the exact large-$q$ calculation of the TopSFF in Eq.~\eqref{eq:floqparity_largeq_t1_check} for the  Floquet parity symmetric RPM at $t=1$.

\section{Temporally extended topological defects}\label{app:temp_ext}

In this appendix, we analyse TopSFF with temporally extended topological defects inserted in the spatial evolution. We consider two representative defects: a global swap defect in time reversal symmetric dynamics, and a time translation defect in Floquet dynamics. In both cases, we derive exact large-\(q\) expressions and the associated Thouless scaling forms for the topological free energy cost.

\subsection{Temporally extended global swap defects}

Here we consider the TopSFF of a temporally extended global swap defect \(\hat{\mathcal D}=\hat S\otimes \hat \iden\) acting on \( U_{\mathrm{s}} \otimes  U_{\mathrm{s}}\) in the spatial direction.  The one-dimensional generic quantum many-body chaotic circuit \(\hat U_{\mathrm{s}}(t,L):= \prod_{i=1}^L  \hat{V}(t,L)\) is  generated by the space-time dual transfer matrices \(\hat{V}(t,L)\) [Fig.~\ref{fig:tdw_mickey_mouse}(a) purple] of RPM defined in \eqref{app_eq:global_trs_rpm}, satisfying the \([\hat{V}(t,L), \hat S]=0\), and therefore has time reversal symmetry (TRS). We will also use the RHM with TRS, defined in Eq.~\eqref{app_eq:rhm_gtrs}, to verify the universal scaling form derived below from the RPM.

Using the methods described in App.~\ref{app:tsff_toolbox} and Ref.~\cite{wu2025}, the TopSFF with temporally extended defects in the TRS RPM can be evaluated exactly in the large-\(q\) limit as
\(
\overline{K_{\topo}}
=
\Tr\!\left[\boltz^{L}\sigma_x\right],
\)
where \(
\alpha(t)=t-1-\delta_{0,t\,\mathrm{mod}\,2}
\) and
\(
[\boltz]_{mn}
=
1+(1-\delta_{mn})\bigl(e^{-\epsilon\alpha(t)}-1\bigr)\)
is the transfer matrix, or equivalently the Boltzmann factor, of an effective Ising model.
\(\sigma_x\) is the Pauli-\(x\) matrix, which emerges because the global swap operator exchanges the role of the ladder mode and the twisted mode described in~\cite{wu2025}. 
 Thus, upon ensemble averaging, the temporally extended global swap topological defect becomes a spin-flip defect in the effective Ising description. Diagonalizing \(\boltz\) gives an exact expression for TopSFF with temporally extended defects in the large-\(q\) limit as
\begin{equation}
\overline{K_{\topo}}
=
\bigl[1+e^{-\epsilon\alpha(t)}\bigr]^L
-
\bigl[1-e^{-\epsilon\alpha(t)}\bigr]^L .
\end{equation}
In the Thouless double scaling limit~\cite{chan2021trans}, where
\(\ell\equiv L/\Lth\) is held fixed with \(\Lth=e^{\epsilon\alpha(t)}\), the TopSFF obeys
\begin{equation}
\kappa_{\topo}
:=
\lim_{\substack{L,t\to\infty\\ \ell=L/\Lth}}
\overline{K_{\topo}}
=
2\sinh\ell .
\label{eq:tsff_gtrs_scaling}
\end{equation}
In the absence of topological defects, the large-\(q\) SFF of the TRS RPM is
\(\lim_{q\to\infty}\overline{K}=[1+e^{-\epsilon \alpha(t)}]^L+[1-e^{-\epsilon \alpha(t)}]^L\), see App.~\ref{app:review_sff} and \cite{wu2025}. In the same Thouless scaling limit, the corresponding SFF scaling form is \(\kappa:=\lim_{t,L\to\infty}\overline{K}=2\cosh\ell\).
Interpreting the TopSFF and SFF as partition functions, with \(t\) playing the role of an inverse temperature, we define the topological free energy difference as
\begin{equation}
\begin{aligned}
\Delta F_{\topo}
:=
-\frac{1}{t}
\ln\frac{\overline{K_{\topo}}}{\overline{K}}
=
-\frac{1}{t}
\ln\!\left[
\tanh\!\left(L\,\operatorname{atanh} e^{-\epsilon \alpha(t)}\right)
\right],
\end{aligned}
\end{equation}
where the second equality is evaluated at the large-\(q\) limit. 

$\Delta F_{\topo}$ has the bounds $0 \leq \Delta F_{\topo} \leq \epsilon$ 
where $\epsilon$ is the spatial domain wall (sDW) effective tension and also controls the coupling strength of the quantum many-body system.   $\Delta F_{\topo}$ is monotonically increasing in $t$ and decreasing in $L$, i.e., $\partial_t \Delta F_{\topo} \geq 0$, and $\partial_L \Delta F_{\topo} \leq 0$.
At late time and fixed system size, we have $\Delta F_{\topo} = \epsilon - \tfrac{\ln L}{t}$. On the other hand, in the thermodynamic limit with large system size $L$ at fixed time $t$, we have $\Delta F_{\topo} = \tfrac{2}{t}e^{-2L \atanh e^{-\epsilon \alpha(t)}}$, and the spatial correlation length can be extracted as $\xi = \tfrac{1}{2 \atanh e^{-\epsilon \alpha(t)} } \to \tfrac{1}{2 e^{-\epsilon \alpha(t)} }  \propto \Lth$. 
Together, we have $\lim_{L\to \infty} \lim_{t \to \infty} \Delta F_{\topo}  = \epsilon$ and $\lim_{t \to \infty}  \lim_{L\to \infty} \Delta F_{\topo}  = 0$, i.e., the late-time and thermodynamic limits do not commute.

We begin with the first Thouless scaling limit, where \(t,L\to\infty\) with the Thouless ratio \(\ell\equiv L/\Lth=L e^{-\epsilon\alpha(t)}\) held fixed. In this limit, we have the free energy scaling form
\be
\mathscr{F}_{\topo, 1}\equiv \lim_{\substack{L,t \to \infty\\ \ell \equiv  L/\Lth}}  t \Delta F_{\topo} = - \ln \tanh \ell \,.
\label{eq: ratio gtrs-scaling}
\ee
In Fig.~\ref{fig:F_scaling_limits} left panel, we provide a finite-\(t\) check of the corresponding Thouless scaling limit, where the finite-\(t\) curves at fixed \(\mu=0.67\) collapse onto this universal scaling function as \(t\) is increased. The inset shows the unscaled \(\Delta F_{\topo}\) as a function of \(L\). This demonstrates that the temporally extended global swap defect measures the free-energy cost of a spatial domain wall, which becomes universal in the Thouless scaling limit.

We take the second Thouless scaling limit to be where $t$ and $L$ are sent to infinity, while $\tau \equiv t/\tth = \epsilon t/\ln L$ is fixed, we can write $ \Delta F_{\topo} = - \tfrac{1}{t} \ln L^{1- \tau }$ and
\be\label{app_eq:trs_2thoulessform}
\mathscr{F}_{\topo, 2} \equiv \lim_{\substack{L,t \to \infty\\ \tau  \equiv  t/\tth}}  \Delta F_{\topo} = 
\begin{cases}
0 \quad  & \tau  \leq 1 \,,
\\
\epsilon \left(1- \tfrac{1}{\tau } \right) & \tau  > 1\,,
\end{cases}
\ee
i.e., in the Thouless scaling limit, the universal scaling form has  a sharp crossover at $\tau =1$, i.e. at $t = \tth$. In Fig.~\ref{fig:F_scaling_limits} right panel, we test this second Thouless scaling limit at fixed \(\mu=0.67\), using finite-\(L\) data while varying \(t\). As \(L\) is increased, the curves collapse onto the universal scaling form \(\mathscr{F}_{\topo,2}\), with a rounded finite-size crossover sharpening near \(\tau=1\). The inset shows the corresponding unscaled \(\Delta F_{\topo}\) as a function of \(t\), illustrating the crossover from the thermodynamic-limit regime, where \(\Delta F_{\topo}\simeq 0\), to the late-time regime, where the topological free energy cost becomes finite.

As apparent in Fig.~\ref{fig:F_scaling_limits}, the convergence to the second Thouless scaling form is slower than that of the first because the two limits have parametrically different finite-size corrections. 
In the first Thouless limit, fixing \(\ell=Le^{-\epsilon t}\) gives \(L\operatorname{atanh}e^{-\epsilon t}=L(e^{-\epsilon t}+e^{-3\epsilon t}/3+\cdots)\), so the free energy difference differs from its scaling form only by power-law corrections,
\be
t\Delta F_{\topo}
=
\mathscr F_{\topo,1}(\ell)+O(L^{-2}).
\ee
at large finite system size \(L\).

In contrast, in the second Thouless scaling limit, fixing \(\tau=t/\tth=\epsilon t/\ln L\) gives \(e^{-\epsilon t}=L^{-\tau}\), and hence \(Le^{-\epsilon t}=L^{1-\tau}\). The sharp crossover at \(\tau=1\) is therefore rounded over the regime \(|\tau-1|\ln L=O(1)\). Correspondingly, the approach to the limiting scaling form is only logarithmic,
\be
\Delta F_{\topo}
=
\mathscr{F}_{\topo,2}(\tau)
+
O\!\left(\frac{1}{\ln L}\right),
\ee
at fixed \(\tau=t/\tth=\epsilon t/\ln L\). Equivalently, the finite-size rounding of the crossover has width \(\delta\tau\sim1/\ln L\). Thus the second scaling form suffers from logarithmically slow convergence with \(L\).

\begin{figure}
    \centering
    \includegraphics[width=0.49\linewidth]{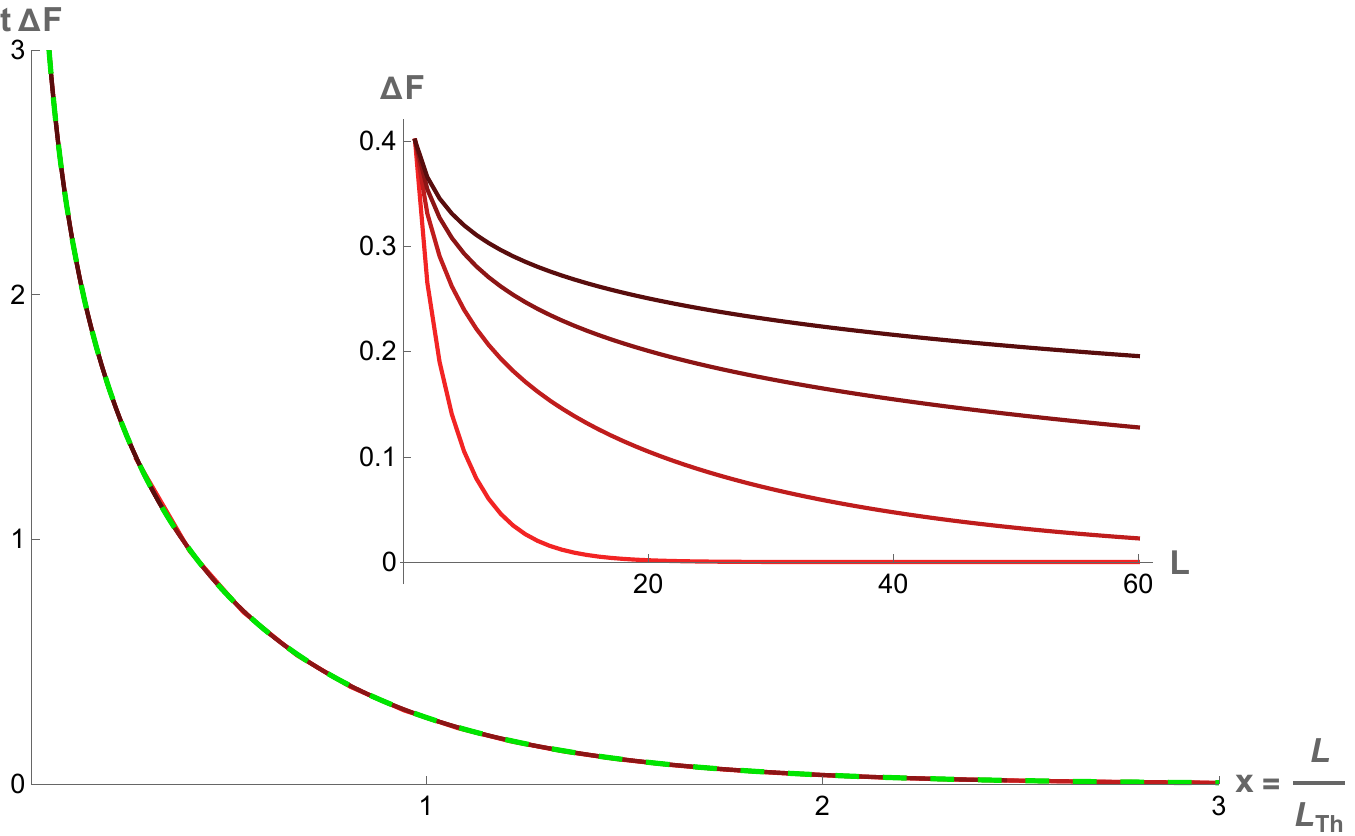}
    \hfill
    \includegraphics[width=0.49\linewidth]{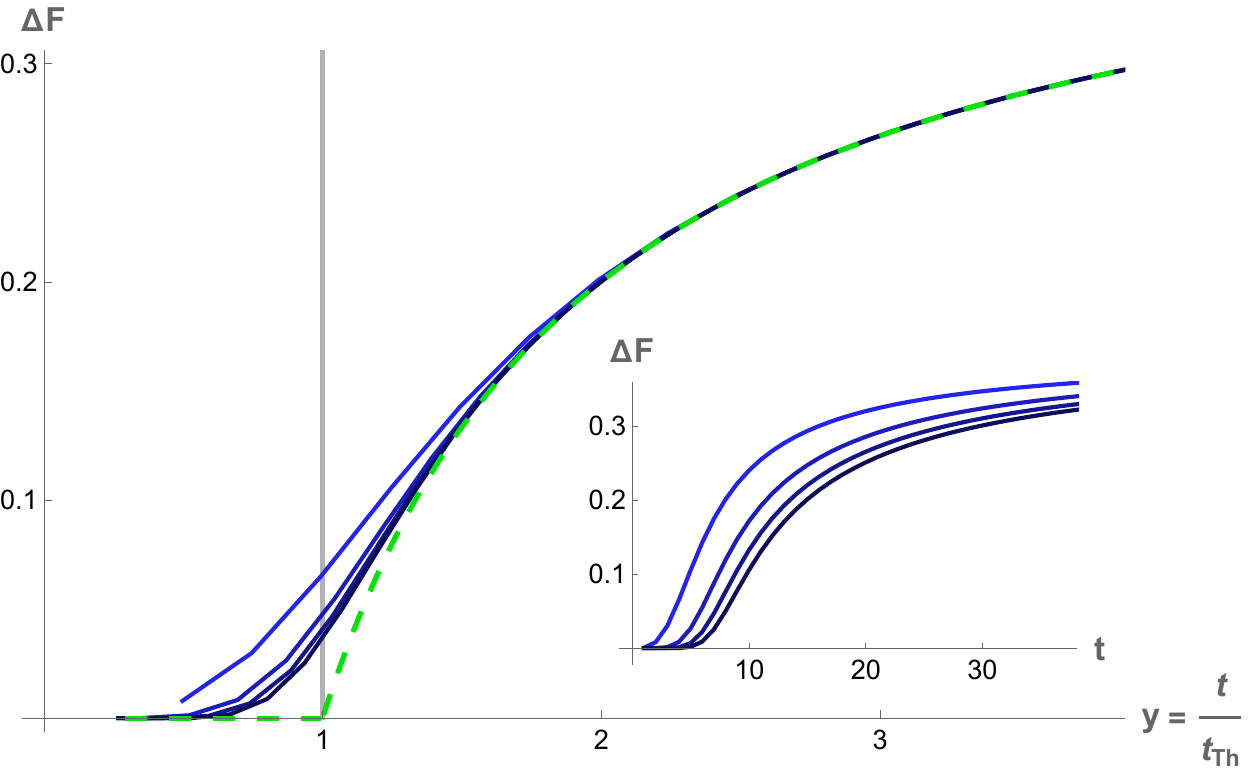}
    \caption{
    \textbf{Scaling collapse of free energy differences for temporally extended global swap defects.}
    Left: \(\mathscr{F}_{\topo, 1}\) plotted as a function of \(\ell\equiv L/\Lth\) at fixed \(\mu=0.67\), using finite-\(t\) data with \(t\in\{5,10,15,20\}\) and varying \(L\). Darker red curves denote larger \(t\). The inset shows the corresponding unscaled free-energy difference \(\Delta F_{\topo}\) as a function of \(L\).
    Right: \(\mathscr{F}_{\topo, 2}\) plotted as a function of \(y=t/\tth\) at fixed \(\mu=0.67\), using finite-\(L\) data with \(L\in\{5,10,15,20\}\) and varying \(t\). Darker blue curves denote larger \(L\). The inset shows \(\Delta F_{\topo}\) as a function of \(t\).
    }
    \label{fig:F_scaling_limits}
\end{figure}

\subsubsection{TRS and time translation symmetry}
The TopSFF with temporally extended global swap defects for TRS RPM with additional discrete time translational symmetry can be exactly evaluated in the large-\(q\) limit as 
\be
\ba
\, &\overline{K_{\topo}}
 =
\begin{cases}
\left[ 1 +t e^{-\epsilon(t-1)} + (t-1) e^{-\epsilon t}\right]^L
\\
\quad  -\left[1-t e^{-\epsilon (t-1)}+(t-1) e^{-\epsilon t}\right]^L , \quad & t \text{ odd}
\\
\left[1+ \frac{1}{2} t e^{-\epsilon (t-2)}+\left(\frac{3 t}{2}-1\right) e^{-\epsilon t}\right]^L
\\
\quad  -\left[1+ \frac{1}{2} t e^{-\epsilon (t-2)}-\left(\frac{t}{2}+1\right) e^{-\epsilon t}\right]^L , 
& t \text{ even} \, .
\end{cases}
\ea
\ee
From this expression, one can straightforwardly generalize the above procedure to derive similar universal scaling form for \(\overline{K_{\topo}}\) as in the previous section and \cite{wu2025}.

\subsection{Temporally extended time translation defects.} 
Now we consider the TopSFF of a temporally extended time translation defect \(\hat{\mathcal D}=\hat T\otimes \hat \iden\) acting on \( U_{\mathrm{s}} \otimes  U_{\mathrm{s}}\) in the spatial direction. Here \(T\) is the time translational operator defined by \(T \ket{a_1, a_2, \dots a_t} = \ket{a_t, a_1, \dots a_{t-1}}\), which cyclically shifts the time indices.
The one-dimensional generic quantum many-body chaotic circuit \(\hat U_{\mathrm{s}}(t,L):= \prod_{i=1}^L  \hat{V}(t,L)\) is realized generated by the space-time dual transfer matrices \(\hat{V}(t,L)\) [Fig.~\ref{fig:tdw_mickey_mouse}(a) purple] of RPM defined in \eqref{app_eq:floqonly_rpm_geom}, satisfying the \([\hat{V}(t,L), \hat T]=0\), and therefore has discrete time translational symmetry. 
We will also use the RHM with time translational symmetry, defined in Eq.~\eqref{app_eq:floqonly_rhm_geom}, to verify the universal scaling form derived below from the RPM.

The TopSFF with the discrete time translational operator as topological defects for the Floquet RPM can be evaluated exactly in the large-\(q\) limit as 
\be
\overline{K_{\topo}} = 
\left[1+ (t-1) e^{-\epsilon t}\right]^L
-(1- e^{-\epsilon t})^L \, ,
\ee
Taking the Thouless scaling limit~\cite{chan2021trans} where $\ell \equiv  L/\Lth$ is fixed with  $\Lth = e^{\epsilon t} /t$,  the TopSFF scaling form gives 
\be
\kappa_{\topo} :=  \lim_{\substack{L,t \to \infty\\ \ell \equiv  L/\Lth}} \left[ \, \overline{K_\topo}  +1 \right] = e^\ell  \, .
\ee
In the absence of topological defect, the SFF can be evaluated in the large-\(q\) limit as $\overline{K} = [1+ (t-1) e^{-\epsilon t}]^L+ (t-1) (1-e^{-\epsilon t})^L$~\cite{chan2018spectral}. In the Thouless scaling limit, the corresponding SFF scaling form is $\kappa := \lim_{t,L\to \infty} \overline{K} -t + 1 = e^\ell - \ell$.

Interpreting the TopSFF and SFF as partition functions, with \(t\) playing the role of an inverse temperature, we define the topological free energy difference as
\begin{equation}
\begin{aligned}
\Delta F_{\topo}
:=
-\frac{1}{t}
\ln\frac{\overline{K_{\topo}}}{\overline{K}}
 =  -\frac{1}{t} \ln \, \frac{
1-\omega^L 
}{
1 +  (t-1) \omega^L} \,,
\end{aligned}
\end{equation}
where $\omega =  (1- e^{-\epsilon t})/[1+ (t-1) e^{-\epsilon t}]$, and the second equality is evaluated at the large-\(q\) limit.

Similar to the global TRS case, $\Delta F_{\topo}$ has the bounds $0 \leq \Delta F_{\topo} \leq \epsilon$ where $\epsilon$ is the spatial domain wall (sDW) effective tension and also controls the coupling strength of the quantum many-body system. $\Delta F_{\topo}$ is monotonically increasing in $t$ and decreasing in $L$, i.e., $\partial_t \Delta F_{\topo} \geq 0$, and $\partial_L \Delta F_{\topo} \leq 0$.
 The upper bound is obtained by expanding at small $e^{-\epsilon t}$. 
At late time and at fixed system size $L$, we have $\Delta F_{\topo} = \epsilon - \tfrac{\ln L}{t}$. On the other hand, in the thermodynamic limit with large system size $L$ at fixed time $t$, we have $\Delta F_{\topo} = 
e^{-L/\xi}$, where the spatial correlation length is $\xi= \left[ \ln \tfrac{1+(t-1) e^{-\epsilon t}}{ 1- e^{-\epsilon t}} \right]^{-1} \to 1/t e^{-\epsilon t}  \propto \Lth$.
Together, we have $\lim_{L\to \infty} \lim_{t \to \infty} \Delta F_{\topo} = \epsilon$ and $\lim_{t \to \infty}  \lim_{L\to \infty} \Delta F_{\topo} = 0$, i.e., the late-time and thermodynamic limits do not commute. 

We now take the corresponding first Thouless scaling limit for the
time translation topological defect.  
In the scaling limit \(L,t\to\infty\) with
\(\ell:=L/\Lth=Lte^{-\epsilon t}\) fixed, we have \(e^{-\epsilon t}\to0\). Thus
\(
\ln\omega
=
\ln(1-e^{-\epsilon t})
-
\ln\left[1+(t-1)e^{-\epsilon t}\right]
=
-t e^{-\epsilon t}
+
O(t^2e^{-2\epsilon t}),
\)
so \(L\ln\omega=-\ell+O(\ell^2/L)\), and hence \(\omega^L\to e^{-\ell}\).
The free energy therefore becomes
\(
t\Delta F_{\topo}
=
-\ln\!\left[
\frac{1-e^{-\ell}}{1+(t-1)e^{-\ell}}
\right]
=
\ln t-\ln(e^\ell-1)+o(1)
\).  
The free energy therefore becomes, keeping the leading finite-\(t\) dependence,
\begin{equation}
t\Delta F_{\topo}-\ln t
=
-\ln\!\left[
\frac{t(1-e^{-\ell})}{1+(t-1)e^{-\ell}}
\right].
\end{equation}
Therefore, in the large-\(q\) limit, the universal scaling
form of TopSFF with a temporally extended time translation defect in the Thouless scaling limit is
\begin{equation}
\mathscr{F}_{\topo, 1}^{\mathrm{floq}}
\equiv 
\lim_{\substack{L,t\to\infty\\ \ell \equiv Lte^{-\epsilon t}}}
\left[
t\Delta F_{\topo}-\ln t
\right]
=
-\ln(e^\ell-1).
\end{equation}
This convergence uses \(1+(t-1)e^{-\ell}\simeq t e^{-\ell}\), and therefore requires
\((t-1)e^{-\ell}\gg1\) or \(\ell\ll \ln t\). Thus, at finite \(t\), the asymptotic scaling form is valid only before the crossover at \(\ell\sim\ln t\).

We consider next the second Thouless scaling limit in which
\(\tau := t/\tth = \epsilon t / \ln L\) is kept fixed as \(L,t\to\infty\). Since \(e^{-\epsilon t}=L^{-\tau}\), we have 
\(\ln \omega
    =
    \ln(1-e^{-\epsilon t})-\ln\left[1+(t-1)e^{-\epsilon t}\right]
    \simeq
    -te^{-\epsilon t}\), and therefore
\(    \omega^L \simeq
    \exp\left[-tL^{1-\tau}\right]\).
For \(\tau \leq 1\), we have \(\omega^L\to0\) and hence
\(
    \Delta F_{\topo}
    \longrightarrow 0 
\).
For \(\tau>1\), we have \(1-\omega^L\simeq tL^{1-\tau}\), while
\(1+(t-1)\omega^L\simeq t\). Therefore,
\begin{equation}
    \Delta F_{\topo}
    \simeq
    -\frac1t \ln L^{1-\tau}
    =
    \frac{(\tau-1)\ln L}{t}
    =
    \epsilon\left(1-\frac1{\tau}\right).
\end{equation}
Hence, in the large-\(q\) limit, the second Thouless scaling form of TopSFF with temporally extended time translation defect is
\begin{equation}
\mathscr{F}_{\topo, 2} \equiv \lim_{\substack{L,t \to \infty\\ \tau  \equiv  t/\tth}}  \Delta F_{\topo} = 
    \begin{cases}
        0, & \tau\leq 1, \\[4pt]
        \epsilon\left(1-\dfrac1{\tau}\right), & \tau>1 ,
    \end{cases}
\end{equation}
which has the same functional form as in Eq.~\eqref{app_eq:trs_2thoulessform}.

\section{Finite-\(q\) numerical simulations}\label{app:num_sim}

\subsection{Spatially extended topological defects}

We numerically simulate the finite-$q$ TopSFF with a spatially extended topological defect,
\(
K_{\topo}(t,L)
=
\mathrm{Tr}\!\left[\hat{\mathcal D}(\hat U\otimes \hat U^*)\right]
=
\Tr[\hat S\hat U]\Tr[\hat U^\dagger]\), where 
\(\hat{\mathcal D}=\hat S\otimes \iden\)
with  global swap operator $\hat S$ . We take $\hat U$ to be the parity inversion symmetric Random Phase Model (RPM) with discrete time translation symmetry, defined in App.~\ref{app:parity_ptRPM}. The local Hilbert-space dimension is denoted by $q$. 

We compute the TopSFF using two complementary finite-$q$ methods. 
Firstly, we estimate the TopSFF by sampling over random circuit realisations. Secondly, by constructing a transfer matrix after the ensemble average of the RPM tensor network, we exactly compute TopSFF for small \(t\) and arbitrary \(L\), as described in App.~\ref{subsec:finite_q_symbolic_tsff}. 
The two methods agree, as demonstrated in Fig.~\ref{fig:parity_tsff_temporal_random}. 
We primarily use the second method, despite its restriction to small time up to $t=4$, because it is fluctuation free and exact up to machine precision. This enables access to large \(L\), allowing us to verify at finite \(q\) the \(\mathcal{PT}\) symmetry breaking signatures predicted in the large-\(q\) limit.

 \begin{figure}
    \centering
    \includegraphics[width=0.32\textwidth]{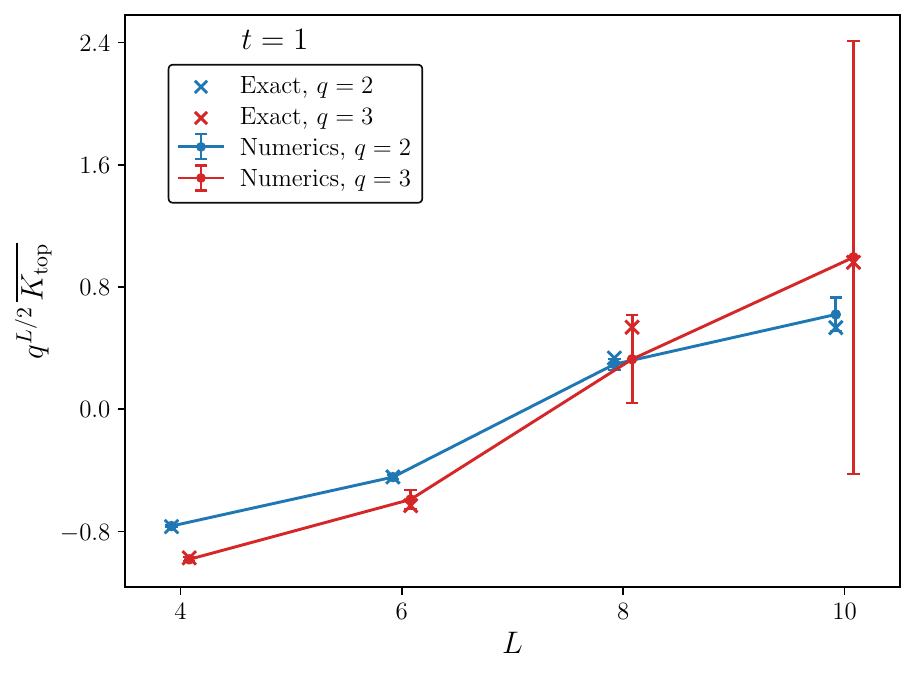}
    \hfill
    \includegraphics[width=0.32\textwidth]{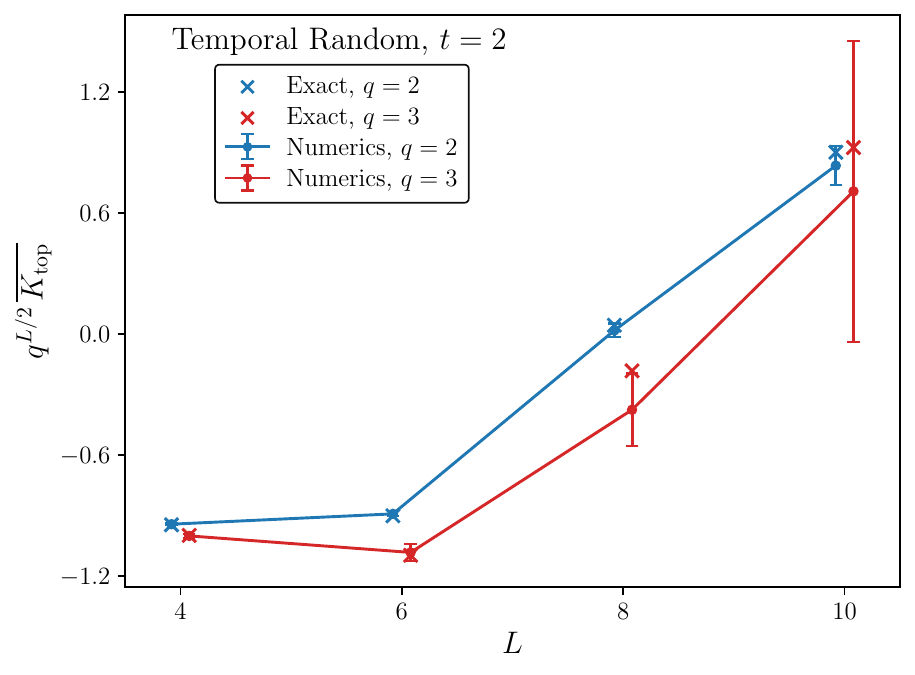}
\hfill
     \includegraphics[width=0.32\linewidth]{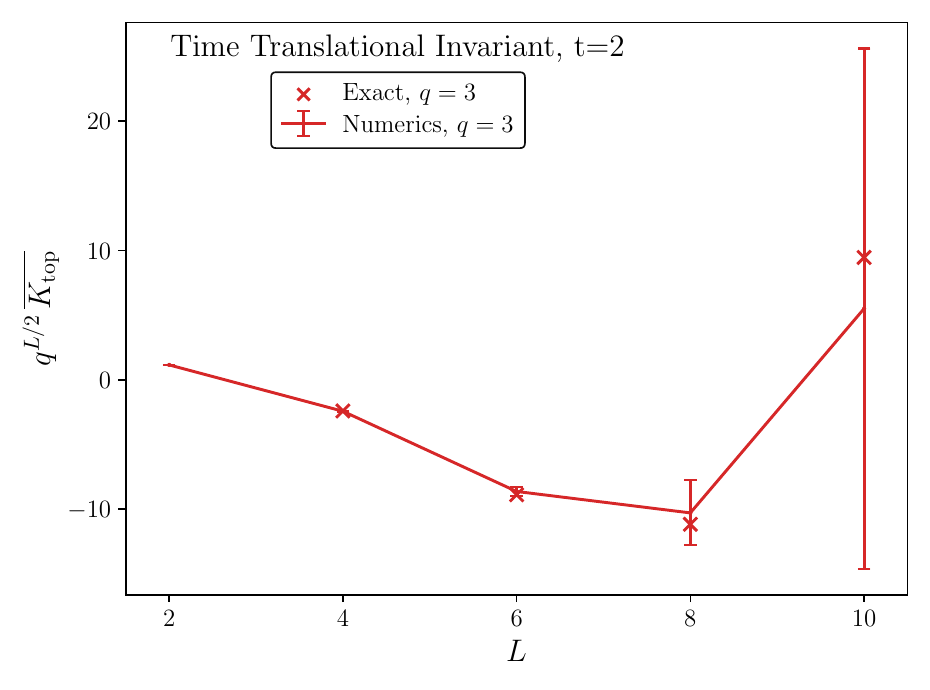}
    \caption{
    \textbf{Two independent methods of obtaining finite-$q$ small-$t$ TopSFF $\overline{K_{\mathrm{top}}}$} for parity symmetric RPM~\ref{app_eq:par_rpm_temprand} at variance $\epsilon=1$ via sampling over $5\times 10^{6}$ circuit realizations (dots) and exact symbolic computation (crosses) as described in \ref{subsec:finite_q_symbolic_tsff}. 
    In both panels, blue data denotes $q=2$ and red data denotes $q=3$. The exact finite-$q$ results are overlaid for comparison, showing good agreement with the numerical results. 
    }
    \label{fig:parity_tsff_temporal_random}
\end{figure}

 \begin{figure}
    \centering
  \includegraphics[width=0.32\textwidth]{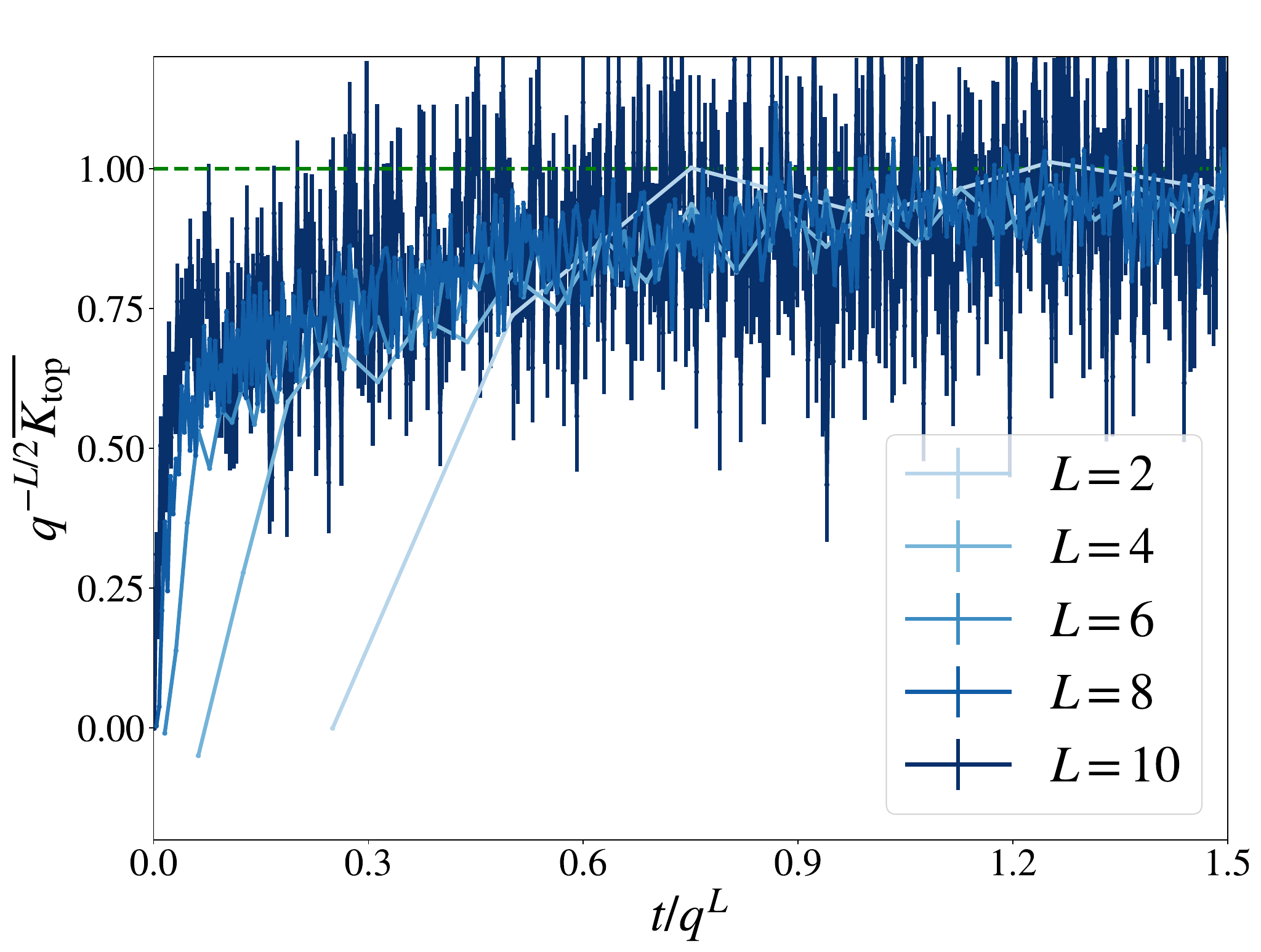}  \includegraphics[width=0.32\textwidth]{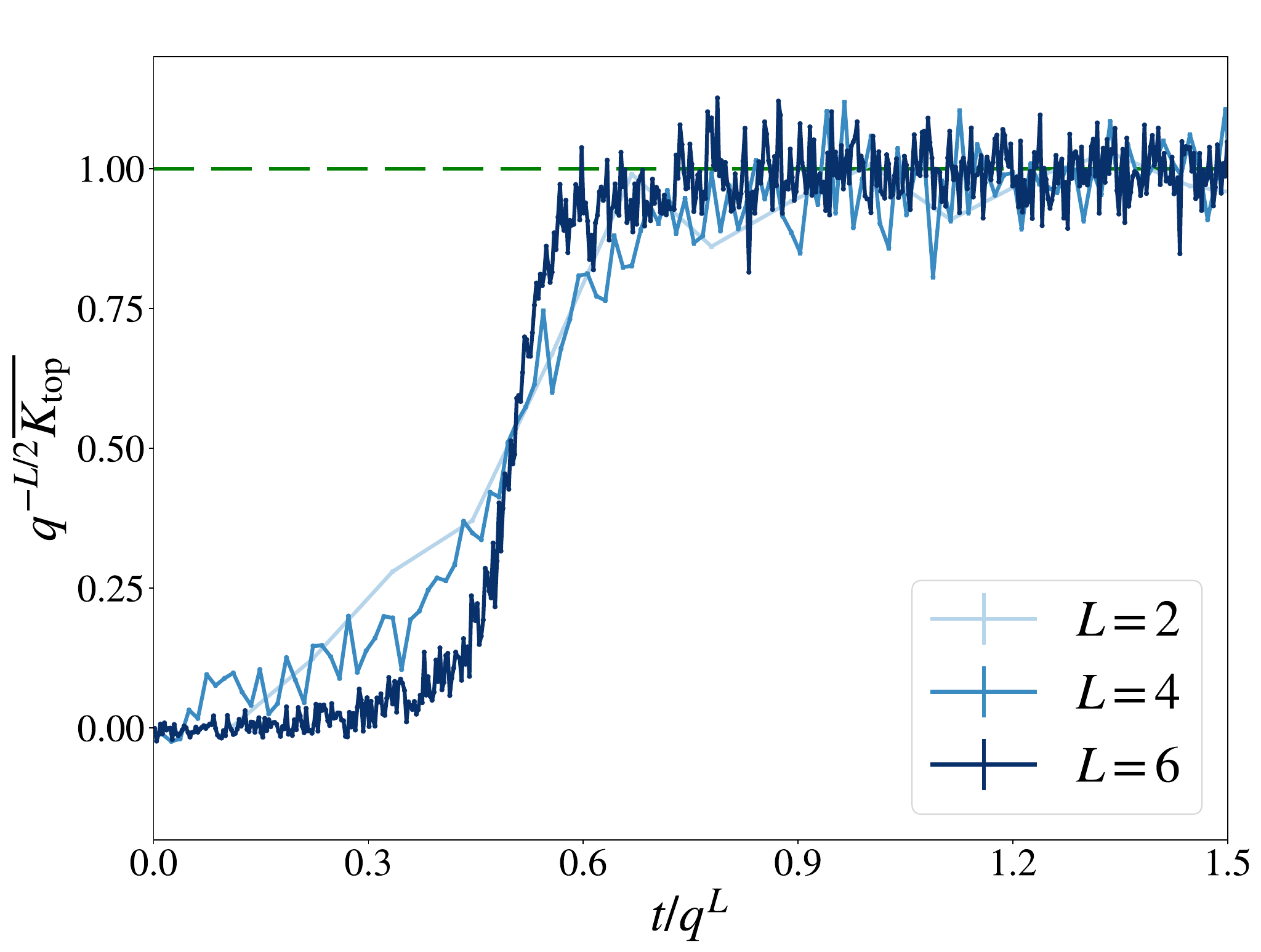}
     \includegraphics[width=0.32\linewidth]{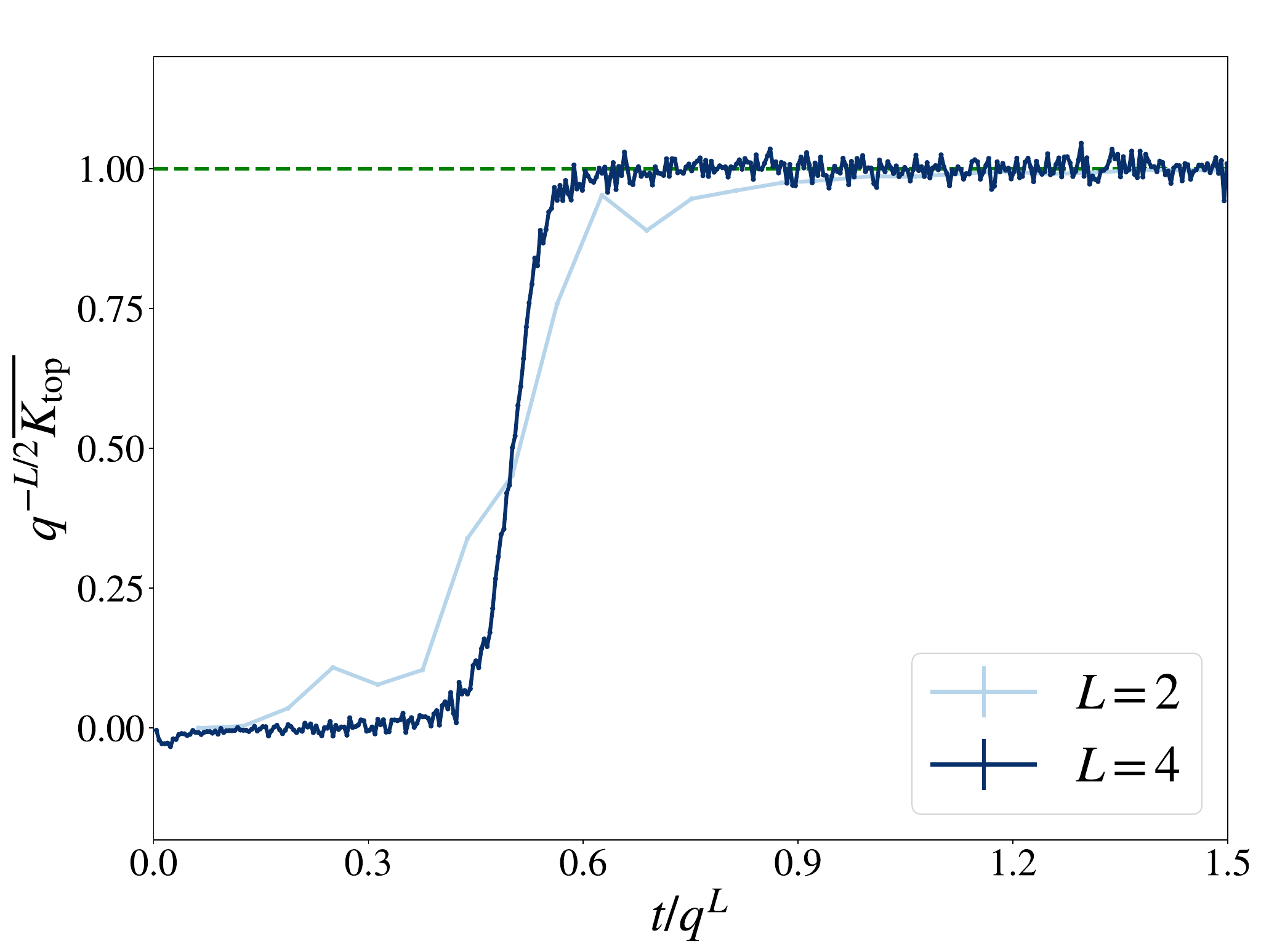}
     \caption{
\textbf{TopSFF $\overline{K_{\mathrm{top}}}$ against time at finite $q$} for the parity symmetric RPM~\eqref{app_eq:par_rpm_floq} at variance $\epsilon=1$, obtained by sampling for $q=2$ (left), $q=3$ (middle), and $q=4$ (right).
We plot the rescaled quantity $q^{-L/2}\overline{K_{\mathrm{top}}}$ against $t/q^L$ for several system sizes.
For $q=3,4$, the data show good agreement with the predicted long-time plateau and Heisenberg time scaling from Eq.~\eqref{app_eq:general_tsff_h_rmt}, namely $\overline{K_{\mathrm{top}}}\propto q^{L/2}$ for $t\gtrsim t_{\mathrm{Hei}}$, with $t_{\mathrm{Hei}}\propto q^L$.
However, the pre-Heisenberg regime for $q=3,4$ is not well described by the conventional SFF ramp-plateau behaviour. This regime is investigated in the main text and below.
For $q=2$, the many-body dynamics is non-chaotic, and the TopSFF approaches the plateau on a timescale shorter than the Heisenberg time.
}
    \label{fig:k_vs_t}
\end{figure}

 \begin{figure}
    \centering
  \includegraphics[width=0.33\textwidth]{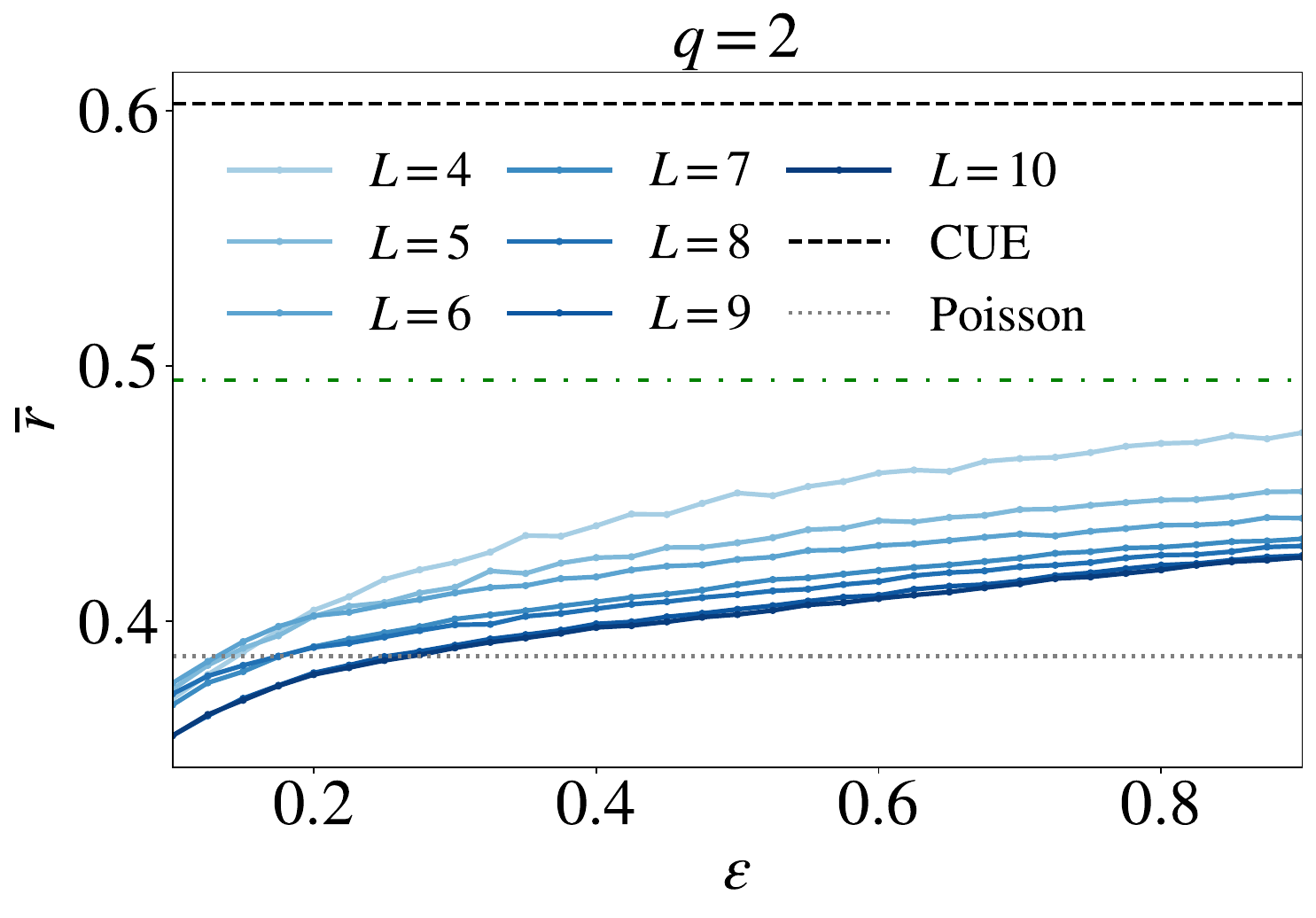}
  \vspace{0.5em}
  \includegraphics[width=0.33\textwidth]{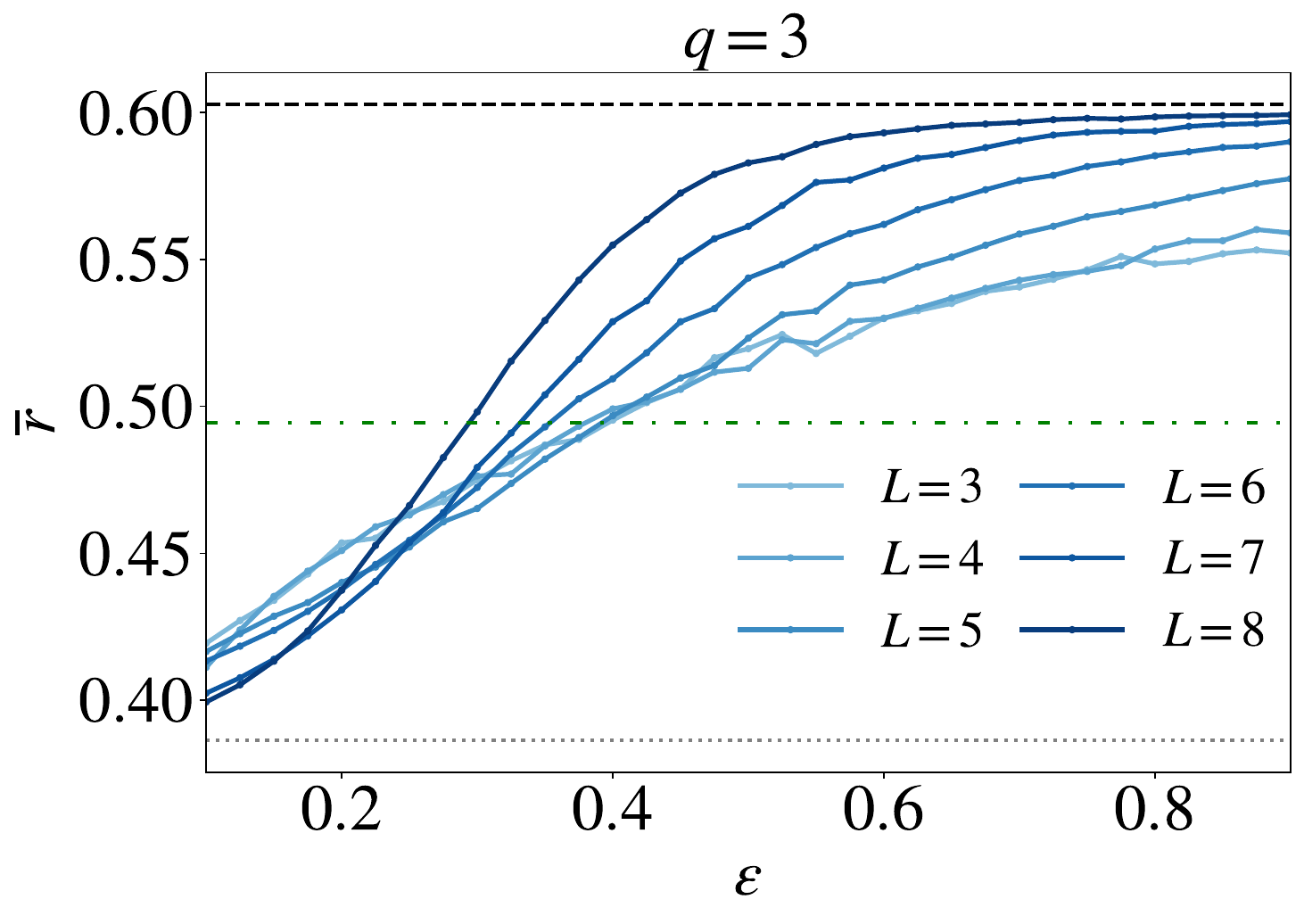}
  \vspace{0.5em}
  \includegraphics[width=0.33\linewidth]{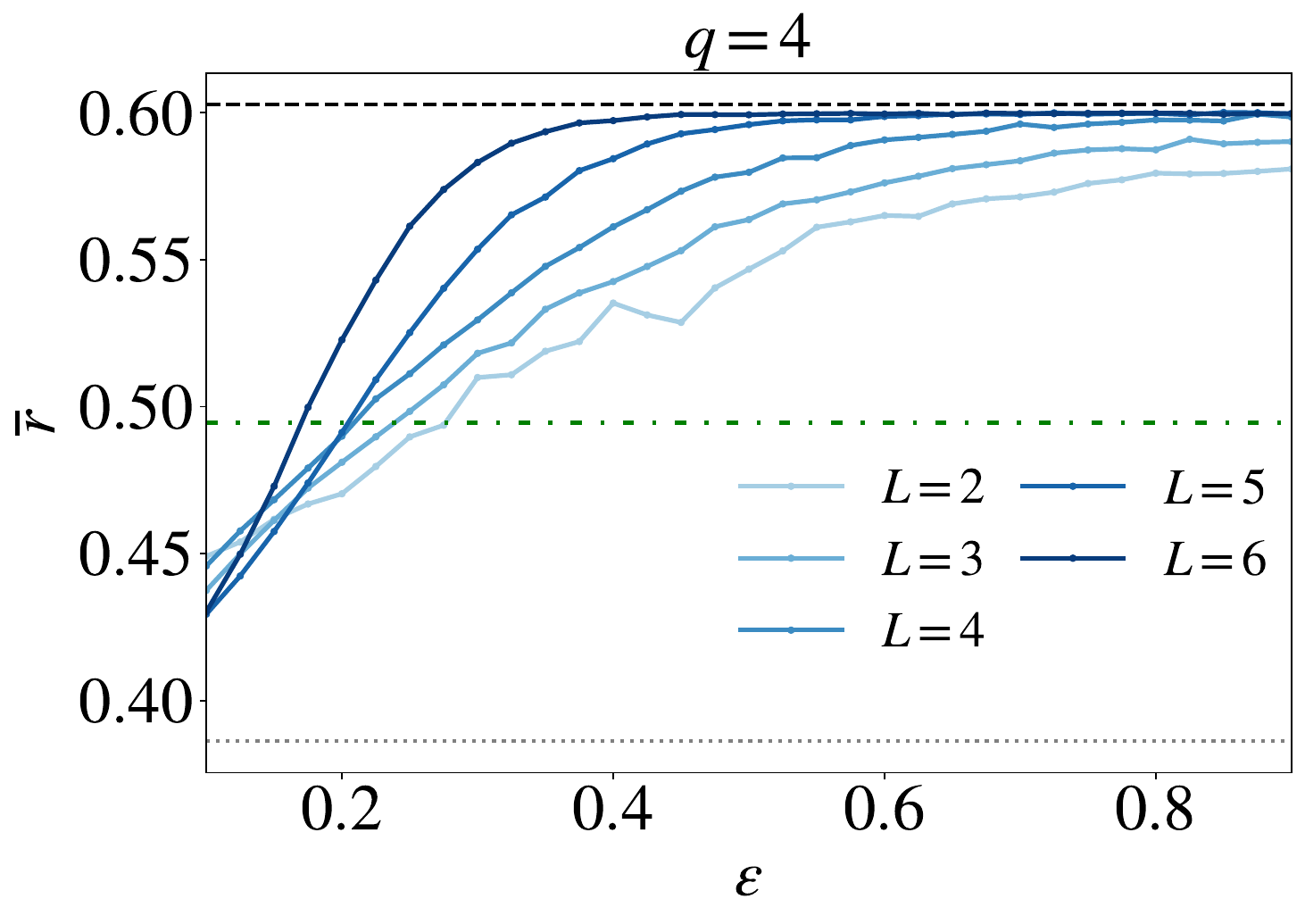}
  \includegraphics[width=0.33\linewidth]{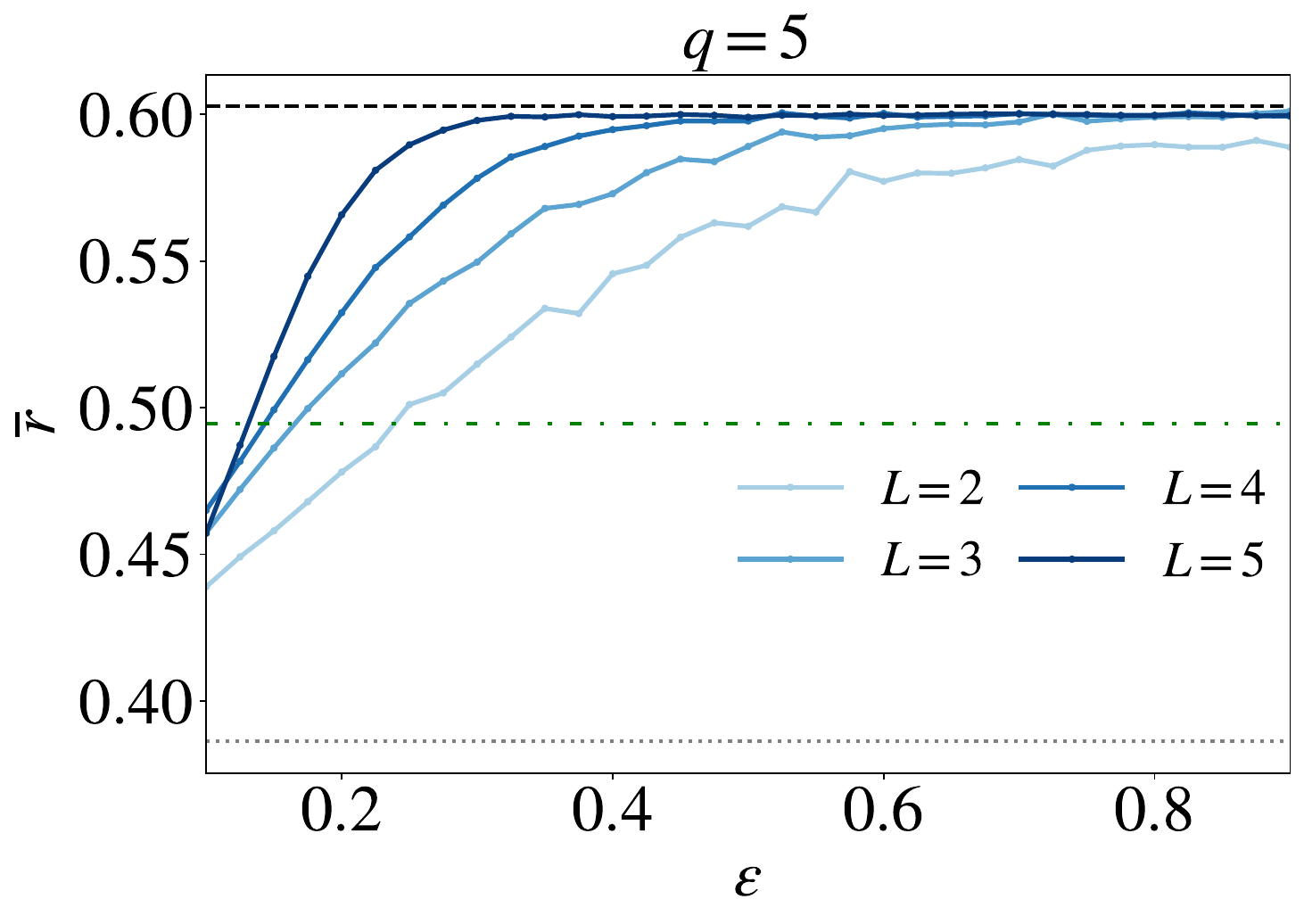}
     \caption{\textbf{Spacing ratio $\overline{r}$  against interaction strength $\epsilon$} for $q=2$ (top left), $q=3$ (top right), $q=4$ (bottom left), and $q=5$ (bottom right) for various $L$. The grey dotted, black dashed, and green dash-dotted lines indicate the $\overline{r}$ value for Poisson statistics ($2\ln2-1$), CUE statistics (0.5996) and the midpoint $\overline{r}^*$ between them (0.4929), respectively. 
     Data for increasing system sizes \(L\) are plotted with darker colours.
    We show only data from the \(+1\) parity sector, and the \(-1\) parity sector exhibits similar behaviour.
     The data show that, for fixed \(q\), the crossover shifts to smaller \(\epsilon\) as \(L\) increases. 
     Defining the crossover interaction strength \(\epsilon^*\) via 
     \(\overline r(\epsilon^*)=\overline r^*\),
     we observe that the extracted \(\epsilon^*\) decreases with increasing \(q\).
     } 
    \label{fig:r_ratio_finite_q_L}
\end{figure}

\begin{figure}
    \centering
    \includegraphics[width=0.33\linewidth]{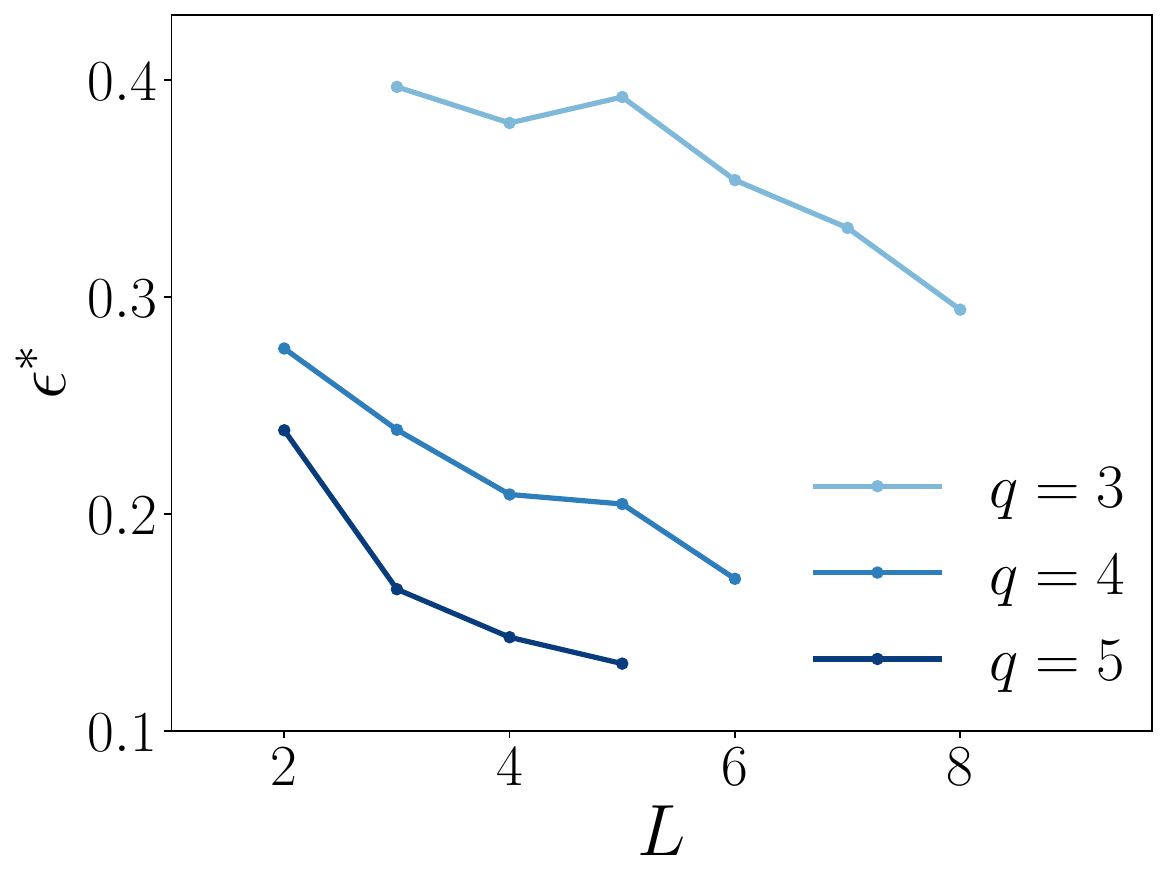}
    \caption{ 
    \textbf{Crossover interaction strength $\epsilon^*$ against system size $L$}. \(\epsilon^*\) is defined by 
     \(\overline r(\epsilon^*)=\overline r^*\) (see Fig.~\ref{fig:r_ratio_finite_q_L}). Data for increasing \(q\) are plotted with darker colours.
    The decrease of \(\epsilon^\ast\) for both increasing \(L\) and \(q\) indicates that the non-chaotic regime shrinks as either the system size or the local Hilbert-space dimension is increased.} 
    \label{fig:eps_star_parity}
\end{figure}

Figure~\ref{fig:k_vs_t} shows the time dependence of the TopSFF for the parity-symmetric RPM. For \(q=3,4\), at late times, the data approach the plateau predicted \(\overline{K_{\mathrm{top}}}\propto q^{L/2}\) by Eq.~\eqref{app_eq:general_tsff_h_rmt}, consistent with the many-body quantum system being described by a zero-dimensional RMT after the Heisenberg time \(t_{\mathrm{Hei}}\propto q^L\). However, for \(q=3,4\), unlike the conventional SFF, the TopSFF does not show a clear ramp-plateau behaviour in the pre-Heisenberg regime. The TopSFF behaviour in the pre-Heisenberg regime is investigated in the main text and below. For $q=2$, the many-body dynamics is non-chaotic, and the TopSFF approaches the plateau on a timescale shorter than the Heisenberg time.

As a complementary spectral diagnostic to the SFF, we also compute the spacing ratio $\overline{r}$~\cite{OganesyanHuse}
\begin{align}
    r_n:=\frac{\min(\delta_n,\delta_{n+1})}{\max(\delta_n,\delta_{n+1})},
\qquad
\overline r = \overline{\langle r_n\rangle_n},
\end{align}
where \(\langle\cdots\rangle_n\) denotes an average over quasienergy levels labelled by \(n\), and
the overline denotes an ensemble average. The gaps are
\(\delta_n=\theta_{n+1}-\theta_n\), where the quasienergies
\(\{\theta_n\}\) of the Floquet operator are ordered increasingly on the unit circle.
The results for the parity symmetric RPM model are displayed in Fig.~\ref{fig:r_ratio_finite_q_L}. For all \(q\)'s, the $\overline{r}$ increases from the Poisson value as $\epsilon$, $q$, or $L$ increases, indicating a crossover towards more chaotic dynamics. For \(q=2\), the parity symmetric RPM is non-chaotic, consistent with previous studies of the RPM without symmetries~\cite{chan2018spectral}. 
For \(q\geq 3\), \(\overline r\) crosses over from the Poisson value towards the Gaussian Unitary Ensemble (GUE) or Circular Unitary Ensemble (CUE) values as \(\epsilon\) increases. 
We quantify this crossover by \(\epsilon^*(q,L)\), defined as the value of \(\epsilon\) at which \( \overline{r}(\epsilon^*) \) reaches the midpoint between the Poisson and CUE values. 
The \(L\)-dependence of \(\epsilon^*(q,L)\), plotted in Fig.~\ref{fig:eps_star_parity}, shows that the crossover shifts to smaller \(\epsilon\) as \(L\) increases.
This spectral diagnostic should be contrasted with the CH stability map in Fig.~\ref{fig:CH_validity}, where the effective two-mode TopSFF description is robust only in the strongly interacting chaotic regime for \(q\geq 4\).

As an illustrative example, in Fig.~\ref{fig:floq_par_tsff_t2}, we compute the length dependence of momentum-unresolved TopSFF for period-one driving along the parity-inversion axes and the momentum-resolved TopSFF in the \(k=(0,0,0)\) sector for $t=2$, \(q=2,3\). For period-one driving, the dominant contribution comes from the \(k=(0,0,0)\) sector, consistent with the large-\(q\) analysis. 
However, the quantitative test described below [Fig.~\ref{fig:CH_validity}] shows that neither \(q=2\) nor \(q=3\) is described by the large-\(q\) $2\times2$ transfer matrix TopSFF theory. This description applies only for \(q\geq 4\) for accessable even \(t\), and we focus on regimes where this large-$q$ theory applies. While the deviation in the \(q=2\) RPM may be related to the absence of chaos, as previously noted in Ref.~\cite{chan2018spectral}, the corresponding \(q=3\) behaviour instead reflects a limitation of the two-mode description. We find (not shown) that, for \(q=3\), several eigenvalues of the exact transfer matrix have comparable moduli, so that a higher-mode description is required. In contrast, in the regime where the large-\(q\) description applies, the two leading eigenvalues are well separated from the rest and dominate the TopSFF.

\begin{figure}
    \centering 
    \includegraphics[width=0.24\linewidth]{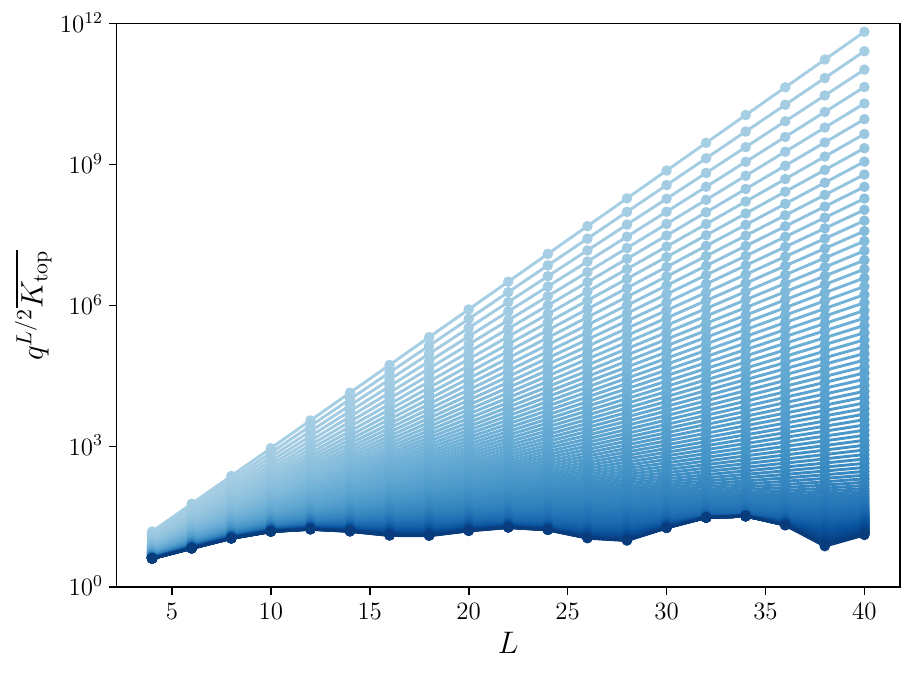}
    \hfill
    \includegraphics[width=0.24\linewidth]{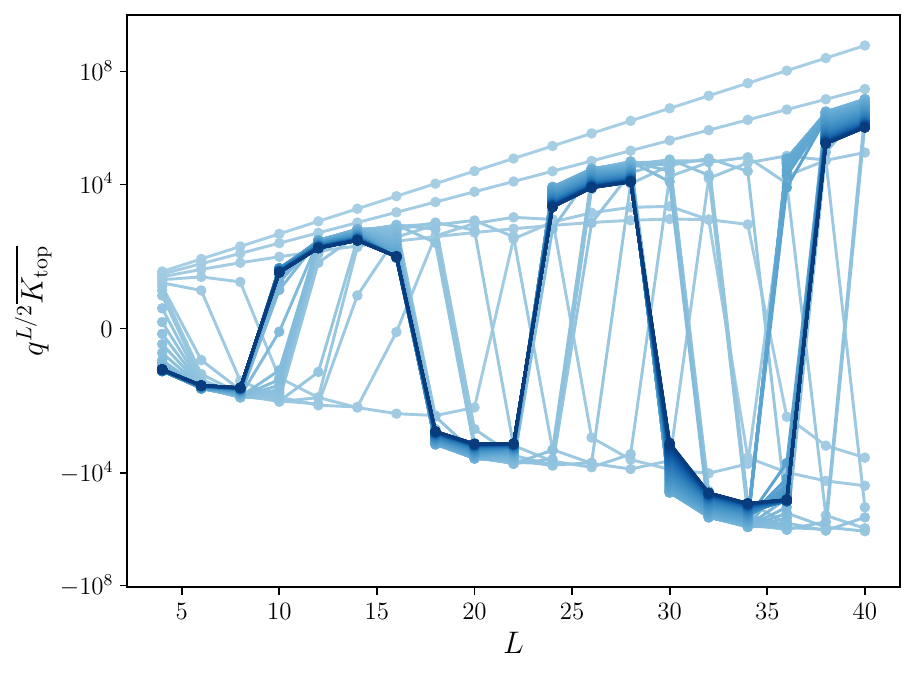}
    \hfill
    \includegraphics[width=0.24\linewidth]{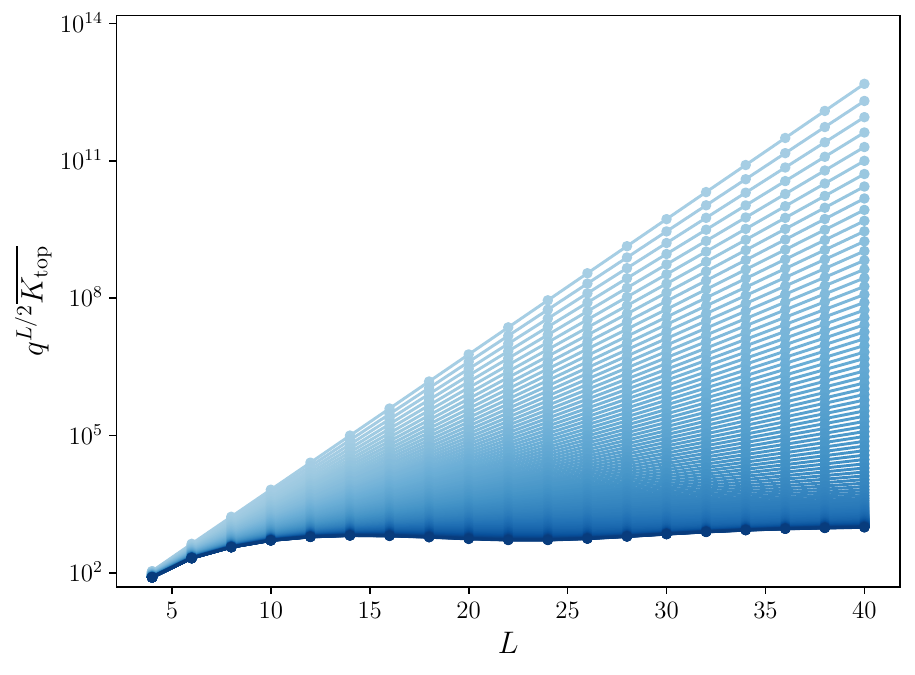}
    \hfill
    \includegraphics[width=0.24\linewidth]{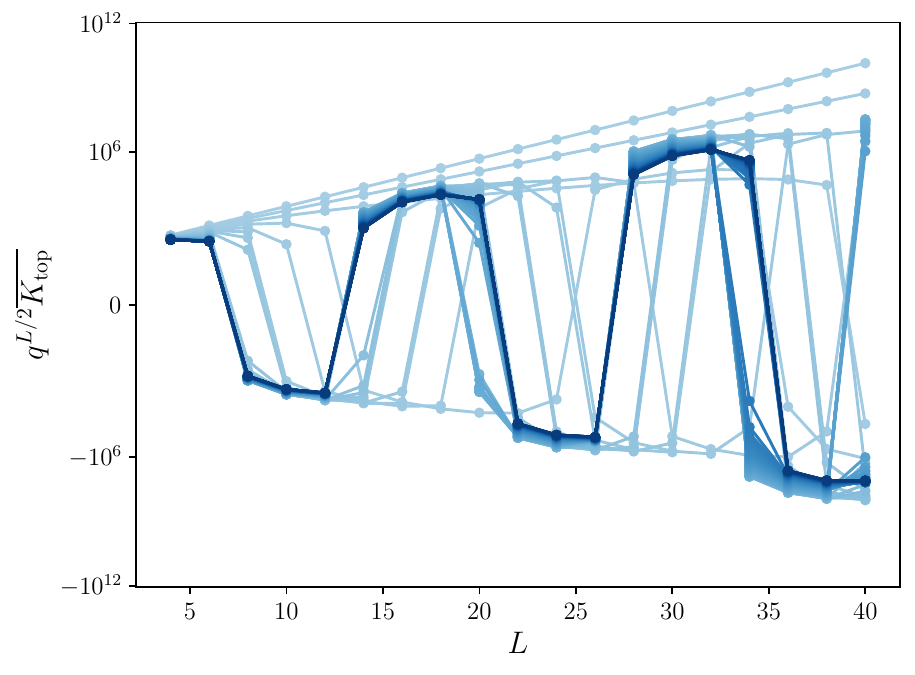}
    \caption{\textbf{Exact TopSFF at \(t=2\) for finite \(q\).}
From left to right: momentum-unresolved TopSFF with period-1 driving along parity inversion axes for \(q=2\) and \(q=3\), and momentum-resolved TopSFF in the \(k=(0,0,0)\) sector for \(q=2\) and \(q=3\). We plot \(q^{L/2}\overline{K_{\mathrm{top}}}\) for 100 values of \(\epsilon\) ranging from \(0.01\) (lighter curves) to \(2\) (darker curves). Note that a test described below [Fig.~\ref{fig:CH_validity}] shows that the large-\(q\) description of the TopSFF applies only for \(q\geq 4\).
}
    \label{fig:floq_par_tsff_t2}
\end{figure}

\begin{figure}
    \centering
    \includegraphics[width=\linewidth]{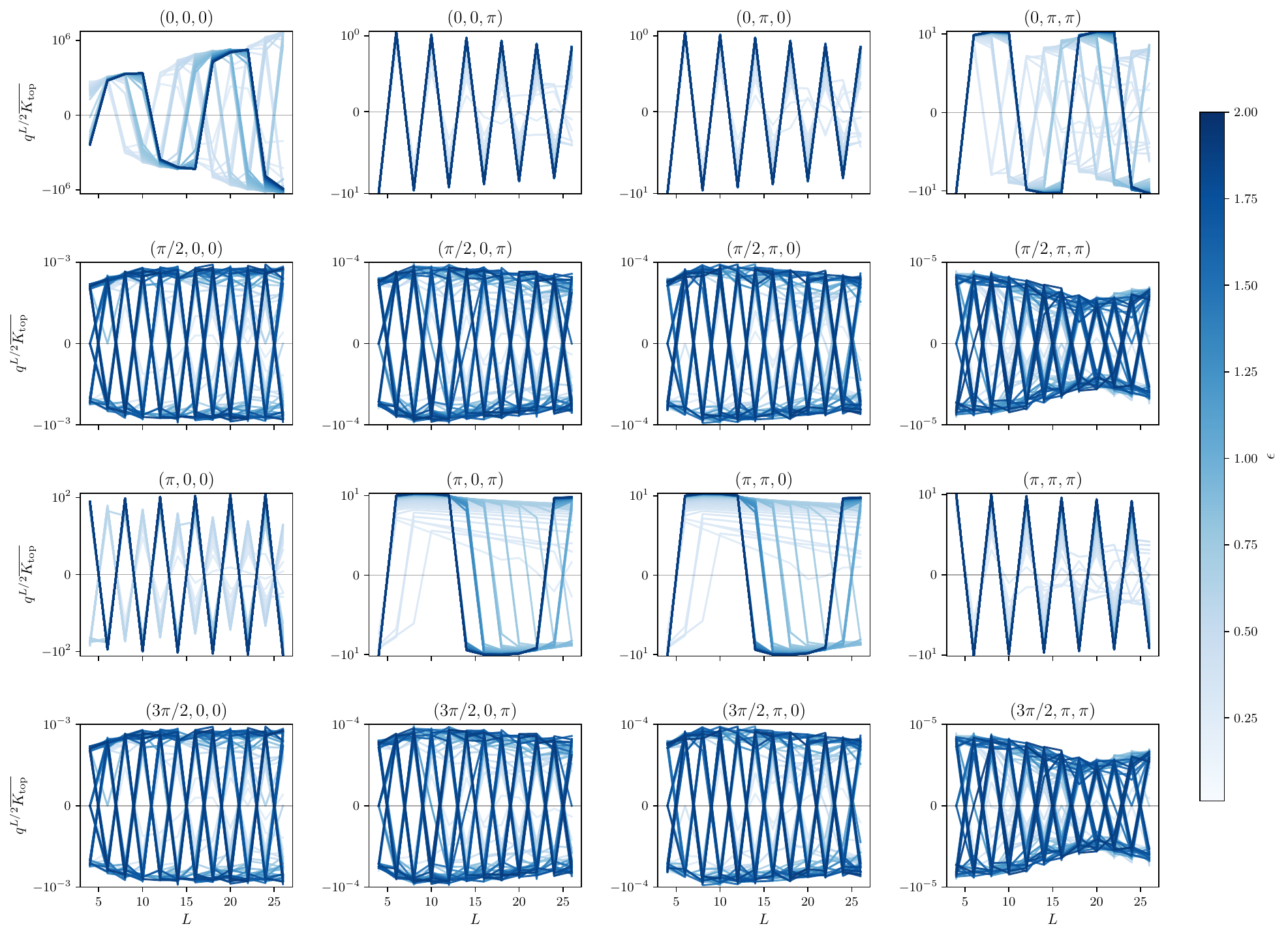}
    \caption{\textbf{Exact TopSFF at $t=2$ and $q=4$ for all momentum sectors.} Consistent with the  large-$q$ analysis in Fig.~\ref{fig:grid_all_sectors}, the data exhibit the expected symmetries under the exchange $k_c \leftrightarrow k_d$ and under momentum inversion $k \leftrightarrow -k$: in the figure, column 2 is identical to column 3, and row 2 is identical to row 4. In particular, as predicted in Fig.~\ref{fig:grid_all_sectors}, the sectors $(\pi,0,0)$ and $(\pi,\pi,\pi)$ are governed by a negative real eigenvalue, leading to sign-flipping behaviour as a function of $L$. The sector $(0,0,0)$ displays oscillatory behaviour, consistent with the dominant eigenvalues forming a complex conjugate pair. The sectors exhibiting a $\mathcal{PT}$ transition, such as $(\pi,\pi,0)$, are analysed in the main text.
  }
    \label{fig:t2_q4_allmom}
\end{figure}

To draw further parallels with the large-$q$ theory, Fig.~\ref{fig:t2_q4_allmom} provides an extensive sector-by-sector check of the finite-\(q\) TopSFF at \(t=2\) and \(q=4\) where the 2-mode description applies. We sweep all momentum sectors \(k=(k_b,k_c,k_d)\), and find that the exact finite-\(q\) data obey the symmetry constraints predicted by the large-\(q\) generalized Boltzmann factor, including invariance under \(k_c\leftrightarrow k_d\) and momentum inversion \(k\leftrightarrow -k\).
The qualitative sector dependence in finite-\(q\) in Fig.~\ref{fig:t2_q4_allmom} is also consistent with the large-\(q\) prediction in Fig.~\ref{fig:grid_all_sectors}: symmetry-related sectors coincide, while sectors governed by real negative or complex-conjugate leading eigenvalues show the corresponding monotonic/sign-changing or oscillatory dependence on \(L\). 
In the following, we focus on the \(k=(\pi,\pi,0)\) sector. As discussed in the main text, this sector is singled out by the large-\(q\) analysis because its leading eigenvalues can have moduli exceeding unity for sufficiently large \(q\), giving exponential growth with system size, while also exhibiting a \(\mathcal{PT}\) transition at finite interaction strength. It therefore provides the cleanest setting to diagnose the emergent \(\PT\) transition of TopSFF.

In Fig.~\ref{fig:parity_tsff_classification_t234}, we examine the finite-\(q\) dynamics in the momentum sector \(k=(\pi,\pi,0)\), showing the exact TopSFF at \(t=2,3,4\) for two different values of $q$, while sweeping \(\epsilon\).
For \(\epsilon<\epsilon_{\mathrm{EP}}\), the data are non-oscillatory at sufficiently large \(L\), with either exponential decay or growth, consistent with the large-\(q\) prediction. This can be seen in the first row for $t=2$, in which $q=4$ (left column) are exponentially decaying compared to the exponential growth for $q=30$.
This is consistent with the \(\mathcal{PT}\) unbroken phase. For \(\epsilon>\epsilon_{\mathrm{EP}}\), the TopSFF exhibits oscillations in \(L\) with an exponential envelope, consistent with the \(\mathcal{PT}\) broken phase. 
Near \(\epsilon=\epsilon_{\mathrm{EP}}\), as discussed in the main text, the unresolved TopSFF does not directly expose the Jordan-block contribution, because the corresponding prefactor vanishes for the physical boundary vector. Below, we first make this qualitative finite-\(q\) signature quantitative by extracting \(\epsilon_{\mathrm{EP}}\), together with effective traces, determinants and eigenvalues, using the Cayley-Hamilton relation. We then resolve the TopSFF into Gaussian and non-Gaussian boundary sectors to diagnose the polynomial-in-\(L\) scaling induced by Jordan non-diagonality.

\begin{figure}
    \centering

    \includegraphics[width=0.24\textwidth]{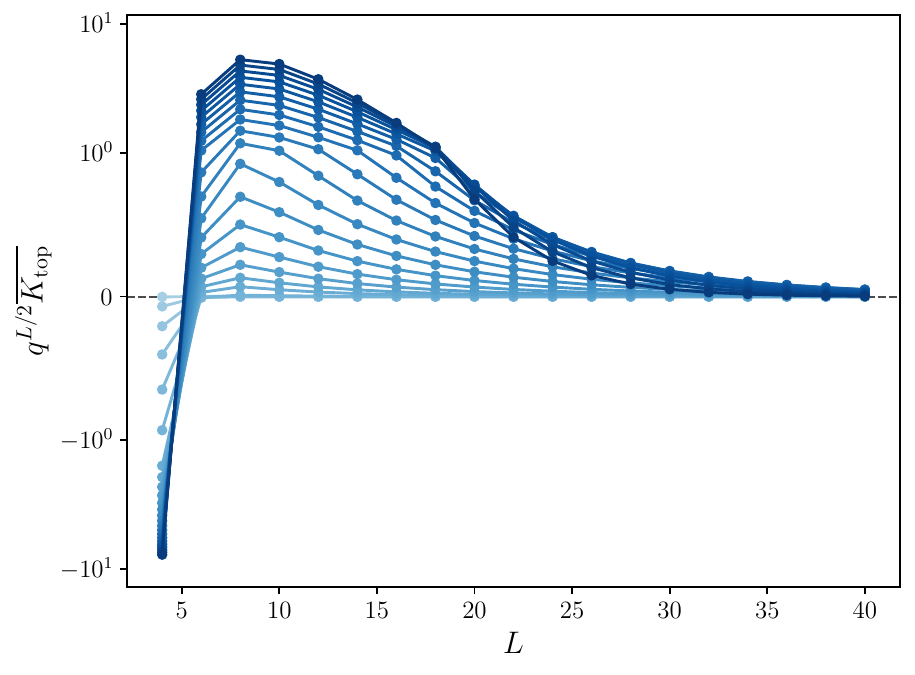}
    \hfill
    \includegraphics[width=0.24\textwidth]{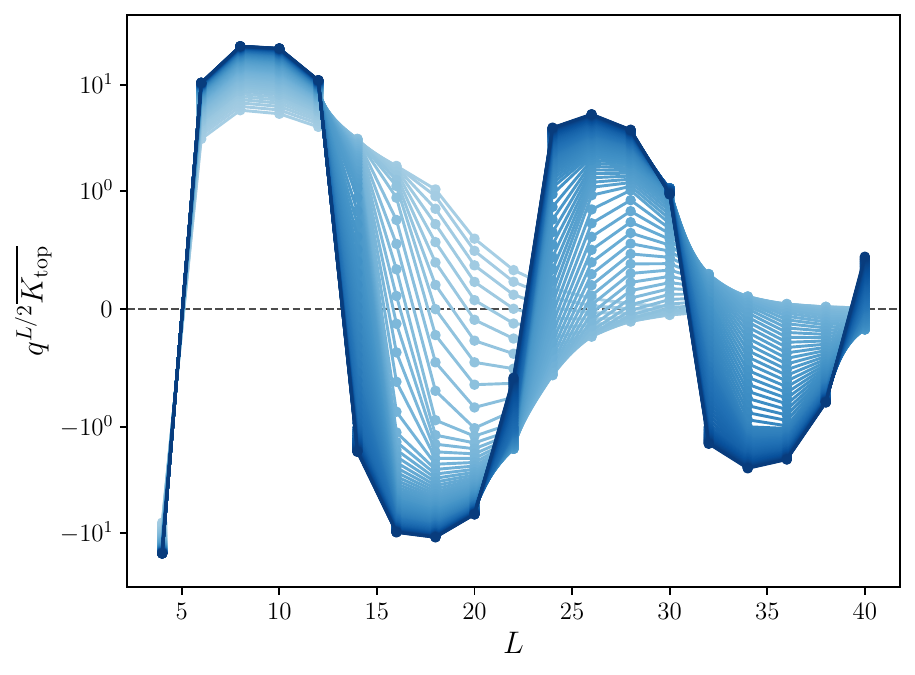}
    \hfill
    \includegraphics[width=0.24\textwidth]{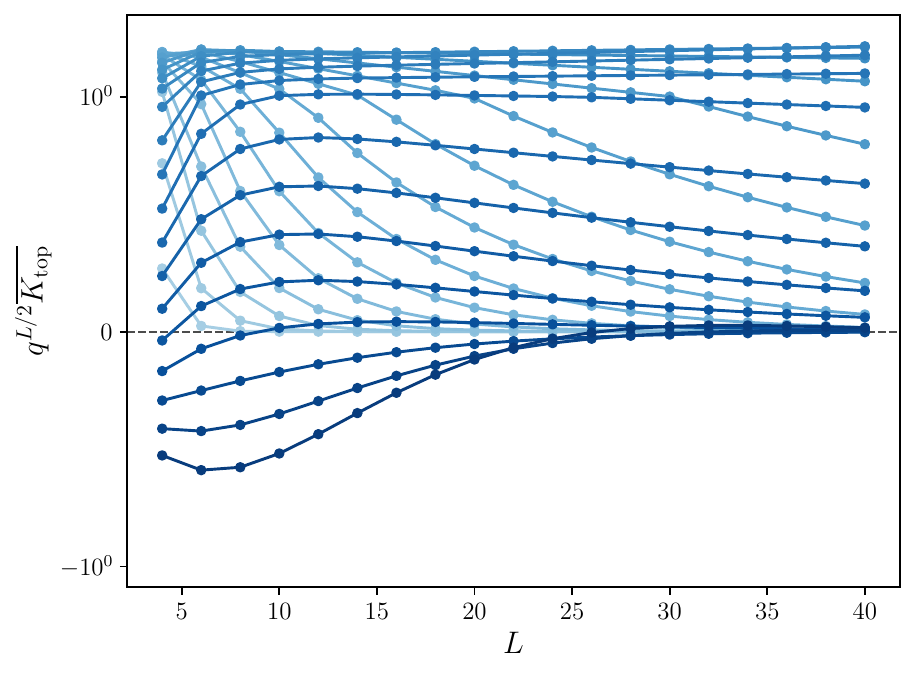}
    \hfill
    \includegraphics[width=0.24\textwidth]{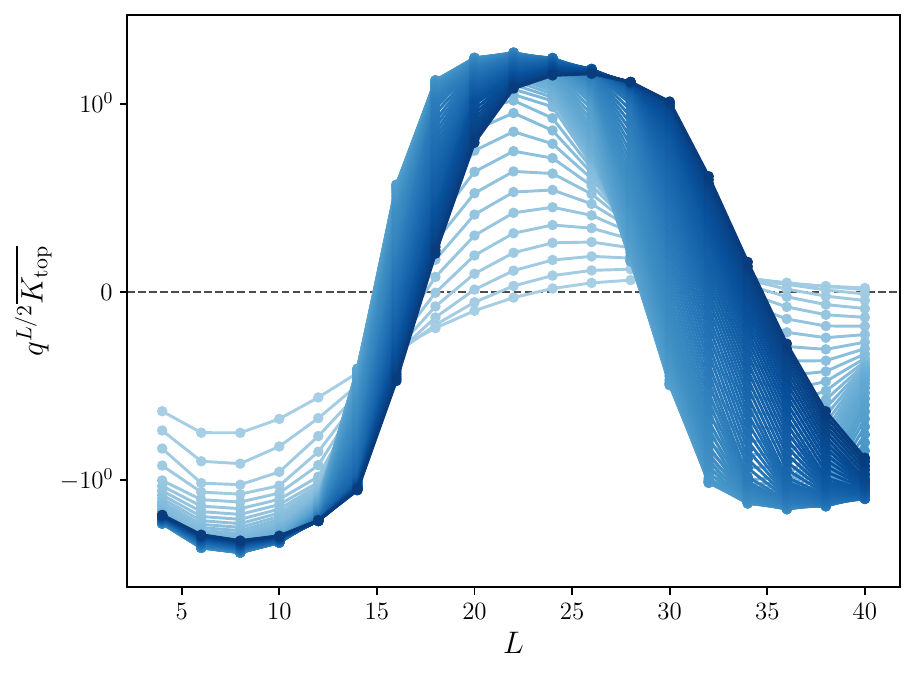}

    \vspace{0.25em}
    
    \centering
    \includegraphics[width=0.24\linewidth]{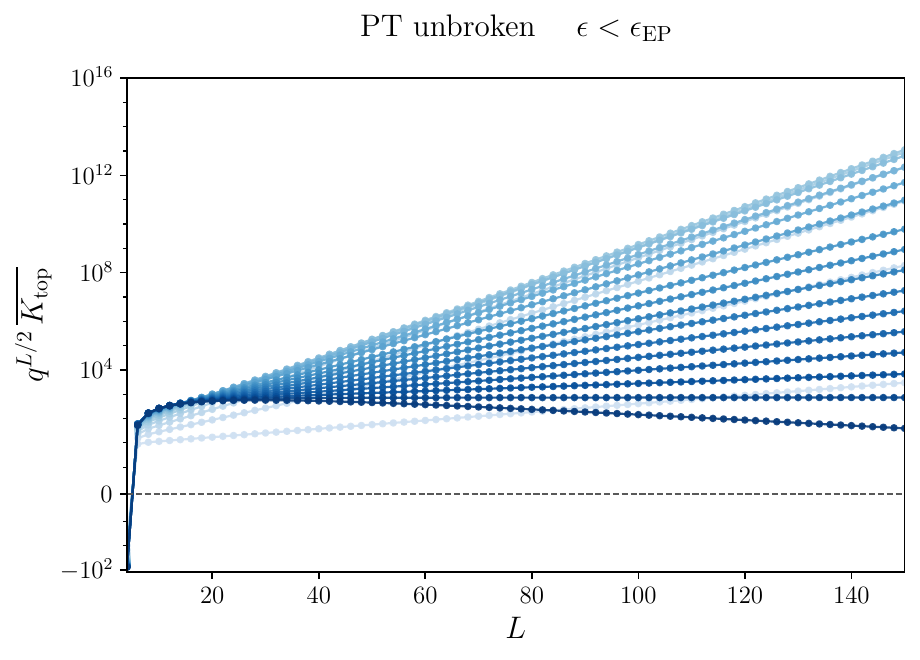}
    \hfill
    \includegraphics[width=0.24\linewidth]{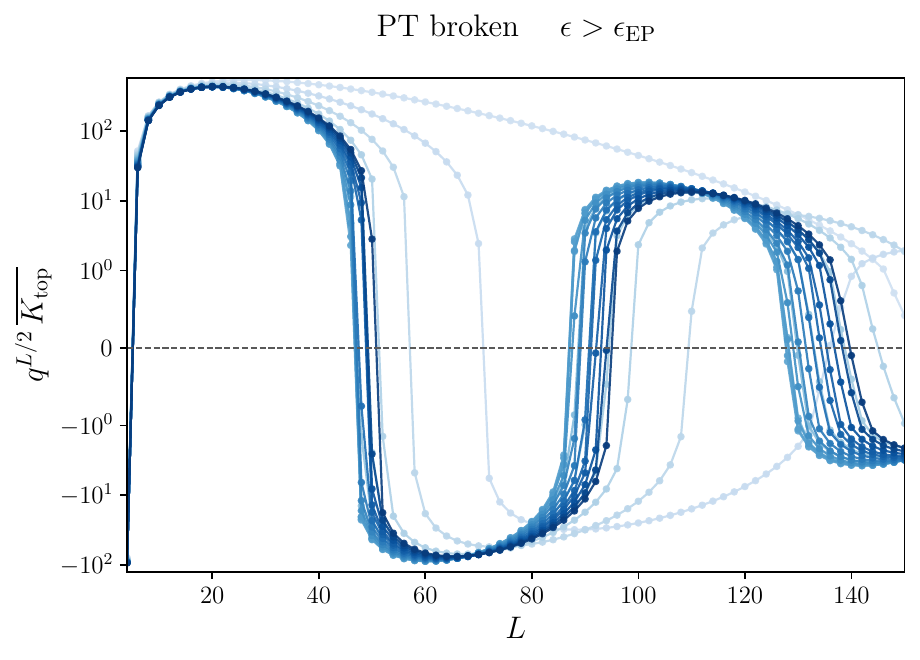}
    \hfill
    \includegraphics[width=0.24\linewidth]{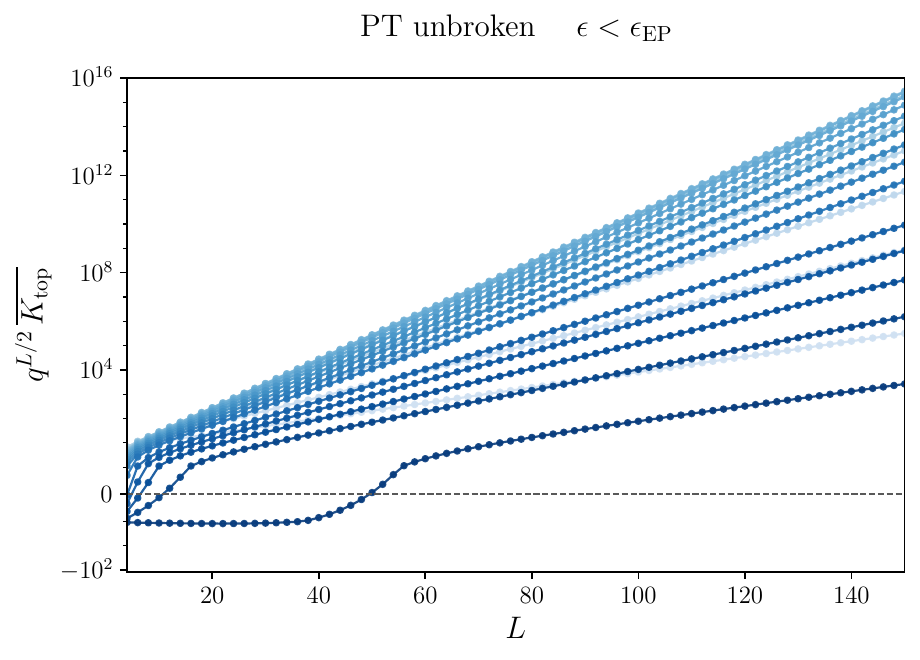}
    \hfill
    \includegraphics[width=0.24\linewidth]{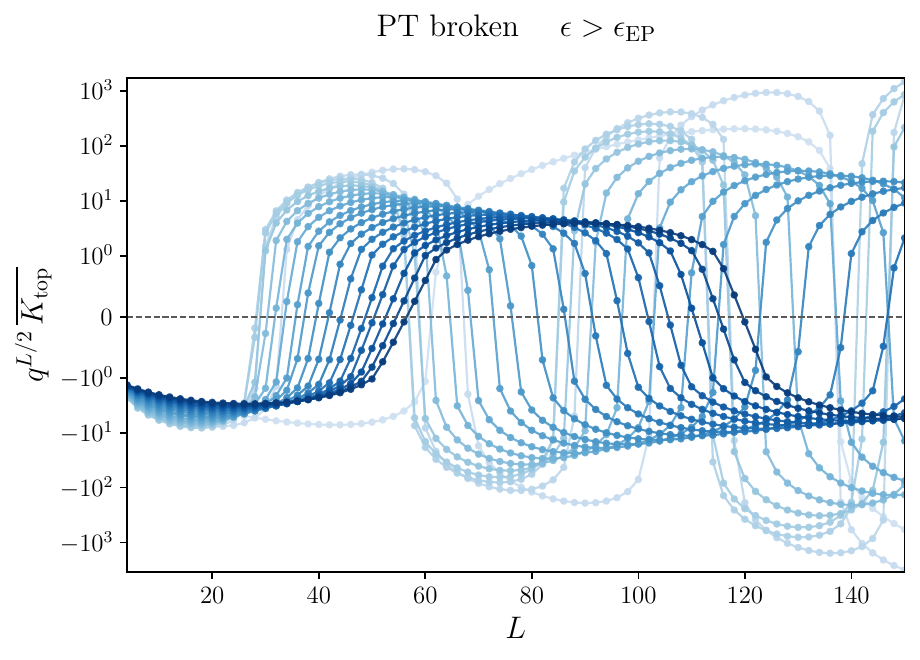}

    \vspace{0.25em}
    
    \centering
    \includegraphics[width=0.24\linewidth]{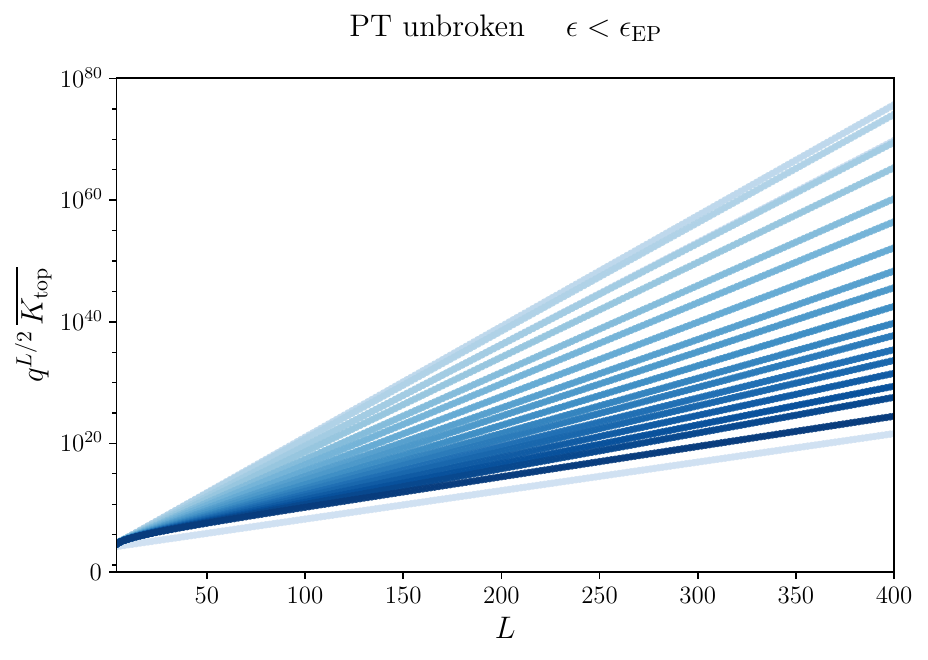}
    \hfill
    \includegraphics[width=0.24\linewidth]{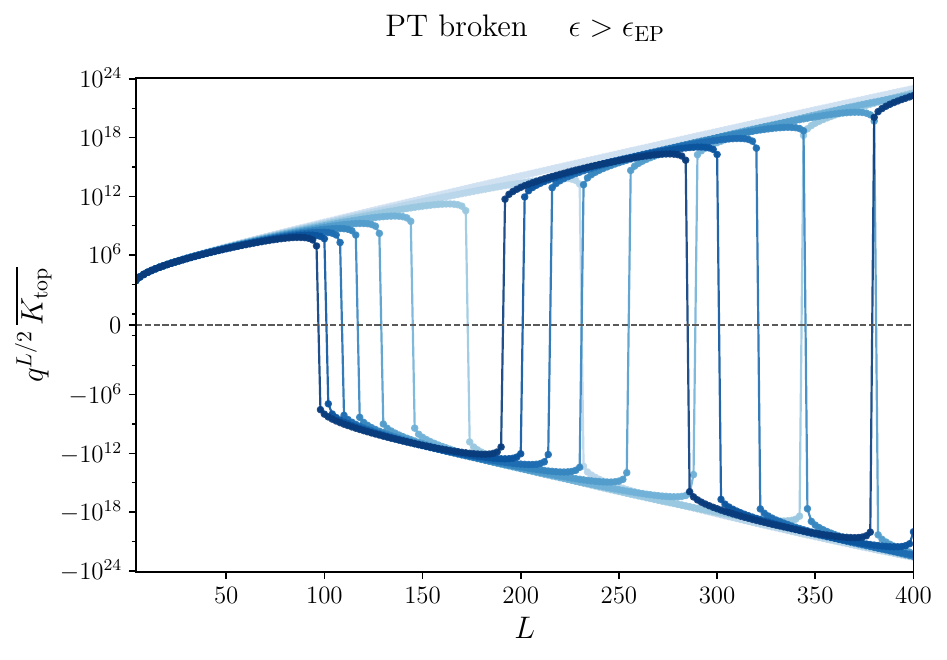}
    \hfill
    \includegraphics[width=0.24\linewidth]{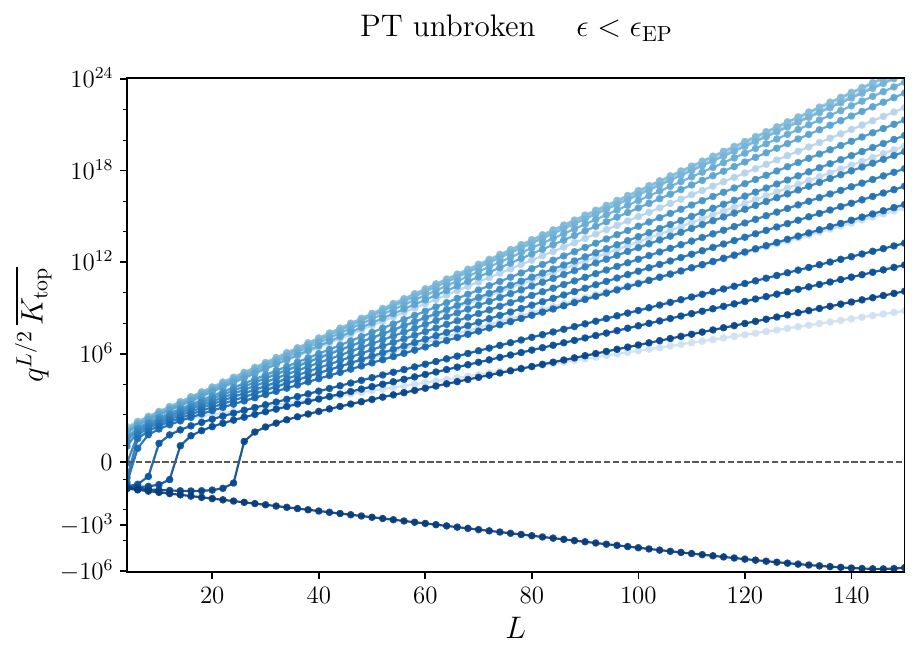}
    \hfill
    \includegraphics[width=0.24\linewidth]{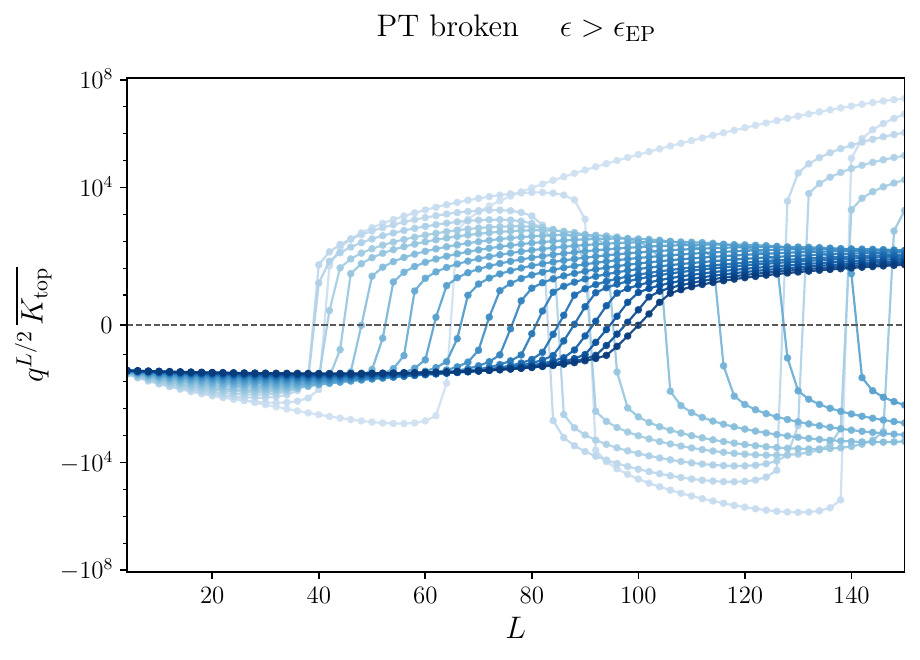}
    \caption{
\textbf{Exact finite-$q$ TopSFF and $\mathcal{PT}$ transitions for $t=2,3,4$.}
We plot $q^{L/2}\overline{\mathcal{K}_{\mathrm{top}}}$ as a function of $L$ for 100 values of $\epsilon$ in the range $0.01\leq\epsilon\leq2$.
The rows, from top to bottom, correspond to
$t=2$ in the $(\pi,\pi,0)$ sector,
$t=3$ in the $(2\pi/t,-2\pi/t,0)$ sector,
and $t=4$ in the $(\pi,\pi,0)$ sector.
For each $t$, two values of $q$ are shown:
$q=4,20$ for $t=2$,
$q=6,30$ for $t=3$,
and $q=4,30$ for $t=4$.
Within each row, the first two panels correspond to the smaller $q$ and the last two to the larger $q$.
For each $q$, the left and right panels show non-oscillatory TopSFF behaviour in the $\mathcal{PT}$ unbroken phase with $\epsilon<\epsilon_{\mathrm{EP}}$, and oscillatory TopSFF behaviour in the $\mathcal{PT}$ broken phase with $\epsilon>\epsilon_{\mathrm{EP}}$ respectively.
Within each panel, lighter to darker blue denotes increasing $\epsilon$.
For finite $q$, the TopSFF can also exhibit exponential decay with $L$, as visible, for example, among the light blue curves for $q=4$ and $t=2$, consistent with the finite $q$ effective-eigenvalue analysis.
}
    \label{fig:parity_tsff_classification_t234}
\end{figure}

\subsubsection{Cayley-Hamilton extraction and stability}\label{App:CH_verify}

In the large-\(q\) expansion, TopSFF in each momentum sector is described by a
$2\times2$ transfer matrix. As discussed in App.~\ref{app_sec:cayley_hamilton},
the Cayley-Hamilton (CH) relation then allows one to extract the trace and
determinant of the transfer matrix, and hence its eigenvalues and signatures of
the \(\mathcal{PT}\) transition.
At finite \(q\), the TopSFF can be evaluated as
\(\overline{\mathcal{K}_{\mathrm{top}}}(\ell)
=
{\phi}^{\,T}\widetilde{\boltz}^{\ell}{\phi}\),
with the finite-\(q\) transfer matrix defined in
App.~\ref{subsec:finite_q_symbolic_tsff}, and with \(\ell \equiv \Leff\). 
After transforming to momentum space, \(\widetilde{\boltz}\) is block diagonal, and the exact quotient reduction removes redundant permutation degrees of freedom within each block. Nevertheless, the resulting momentum-sector transfer matrices generally remain higher than two dimensional. 
Using hints from the large-$q$ results, we conjecture, and verify numerically, that at finite \(q\) each decoupled momentum sector may be described by an effective $2\times2$ transfer matrix, characterised by the two-dimensional Cayley-Hamilton relation,
\(\widetilde{\boltz}^{2}
=
\widehat{\Trb}\,\widetilde{\boltz}
-
\widehat{\Detb} \,\iden\),
where \(\Trb:=\tr\widetilde{\boltz}\) and
\(\Detb:=\det\widetilde{\boltz}\).
Multiplying this identity by \({\phi}^{T}\widetilde{\boltz}^{\ell}\)
from the left and by \({\phi}\) from the right gives the 
recurrence relation
\begin{equation}
    \overline{\mathcal{K}_{\mathrm{top}}}(\ell+2)
    =
    \widehat{\Trb}\, \overline{\mathcal{K}_{\mathrm{top}}}(\ell+1)
    -
    \widehat{\Detb}\, \overline{\mathcal{K}_{\mathrm{top}}}(\ell).
    \label{eq:CH_recurrence_TopSFF}
\end{equation}
Thus, from four consecutive TopSFF values, one can locally estimate the effective trace and
determinant as
\begin{equation}
    \widehat{\Trb}_{\ell}
    :=
    \frac{
    \overline{\mathcal{K}_{\mathrm{top}}}(\ell+1)\overline{\mathcal{K}_{\mathrm{top}}}(\ell+2)
    -
    \overline{\mathcal{K}_{\mathrm{top}}}(\ell)\overline{\mathcal{K}_{\mathrm{top}}}(\ell+3)
    }{
    \overline{\mathcal{K}_{\mathrm{top}}}(\ell+1)^{2}
    -
    \overline{\mathcal{K}_{\mathrm{top}}}(\ell)\overline{\mathcal{K}_{\mathrm{top}}}(\ell+2)
    },
\qquad 
    \widehat{\Detb}_{\ell}
    :=
    \frac{
    \overline{\mathcal{K}_{\mathrm{top}}}(\ell+2)^{2}
    -
    \overline{\mathcal{K}_{\mathrm{top}}}(\ell+1)\overline{\mathcal{K}_{\mathrm{top}}}(\ell+3)
    }{
    \overline{\mathcal{K}_{\mathrm{top}}}(\ell+1)^{2}
    -
    \overline{\mathcal{K}_{\mathrm{top}}}(\ell)\overline{\mathcal{K}_{\mathrm{top}}}(\ell+2)
    },
    \label{eq:CH_det_extraction}
\end{equation}
and we define the effective discriminant as
\begin{equation}
    \widehat{\Delta}_\ell(\epsilon)
    :=
    \widehat{\Trb}_\ell(\epsilon)^2
    - 4 \widehat{\Detb}_\ell(\epsilon) .
\end{equation}
At finite \(q\), the CH relation is overdetermined when applied over a range of \(\ell\). The extracted quantities \(\widehat{\Trb}_{\ell}\) and \(\widehat{\Detb}_{\ell}\) therefore acquire an \(\ell\)-dependence, and can be interpreted as effective trace and determinant only in regimes where this dependence is weak. 
\begin{figure}[ht]
    \centering
    \includegraphics[width=0.49\linewidth]{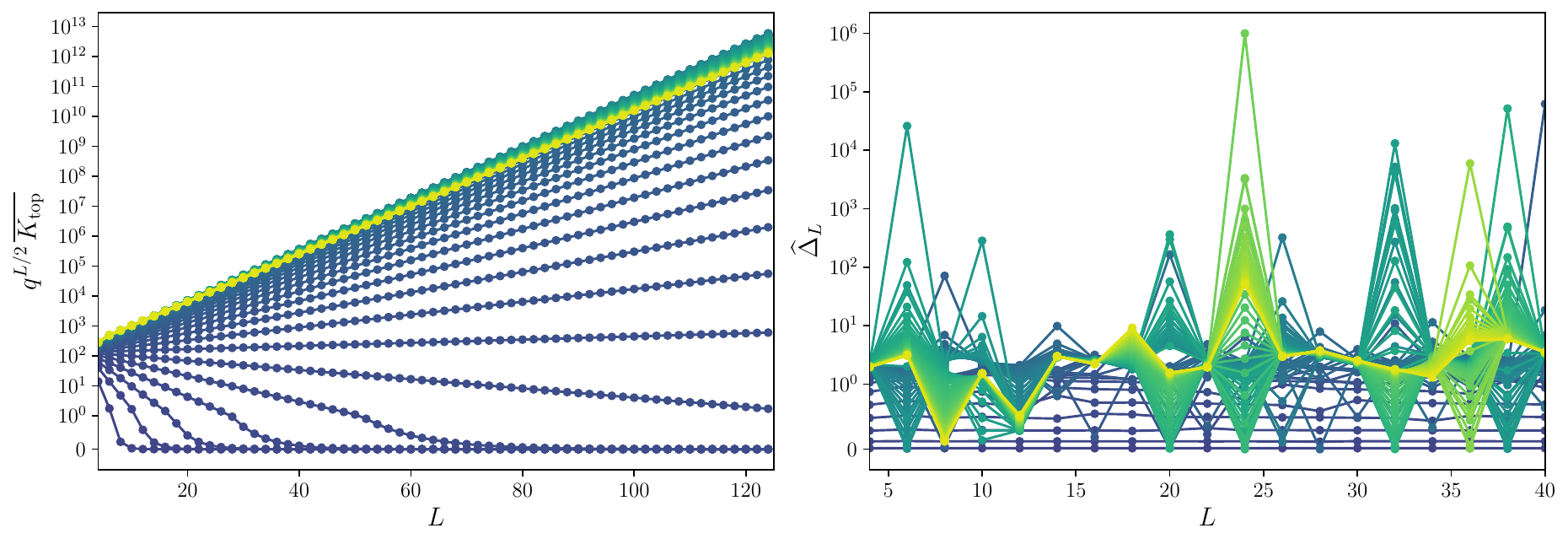}
    \includegraphics[width=0.49\linewidth]{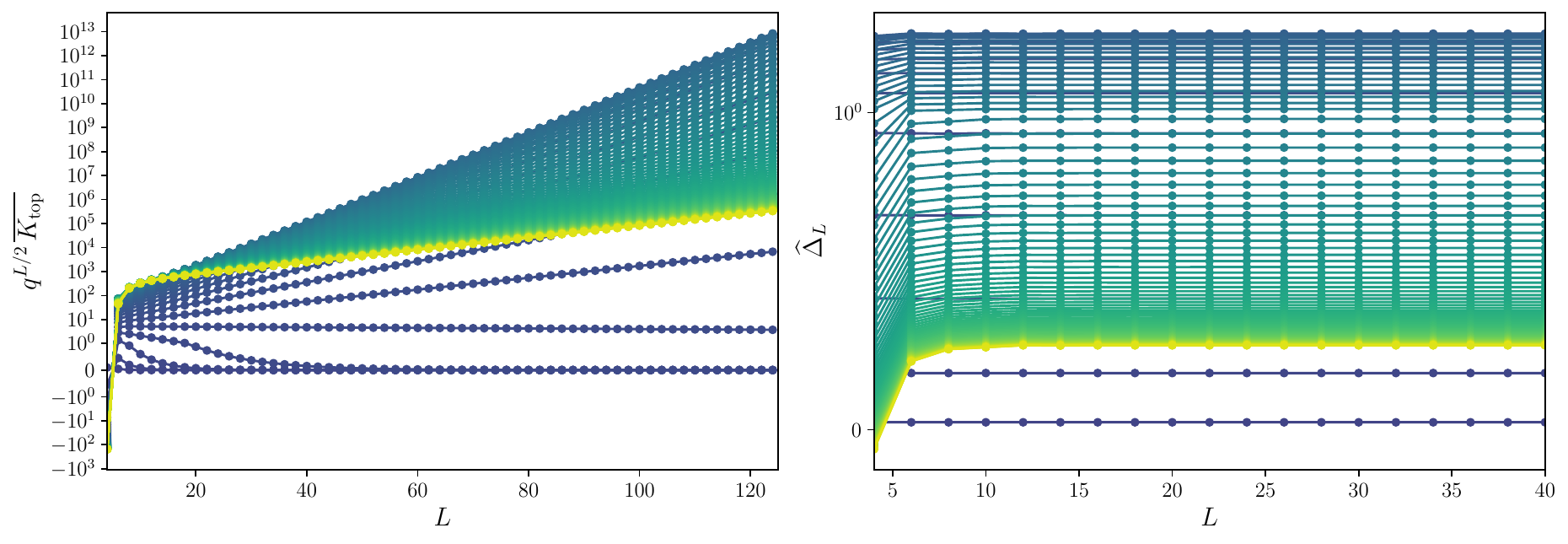}
    \vspace{0.5em}
     \includegraphics[width=0.49\linewidth]{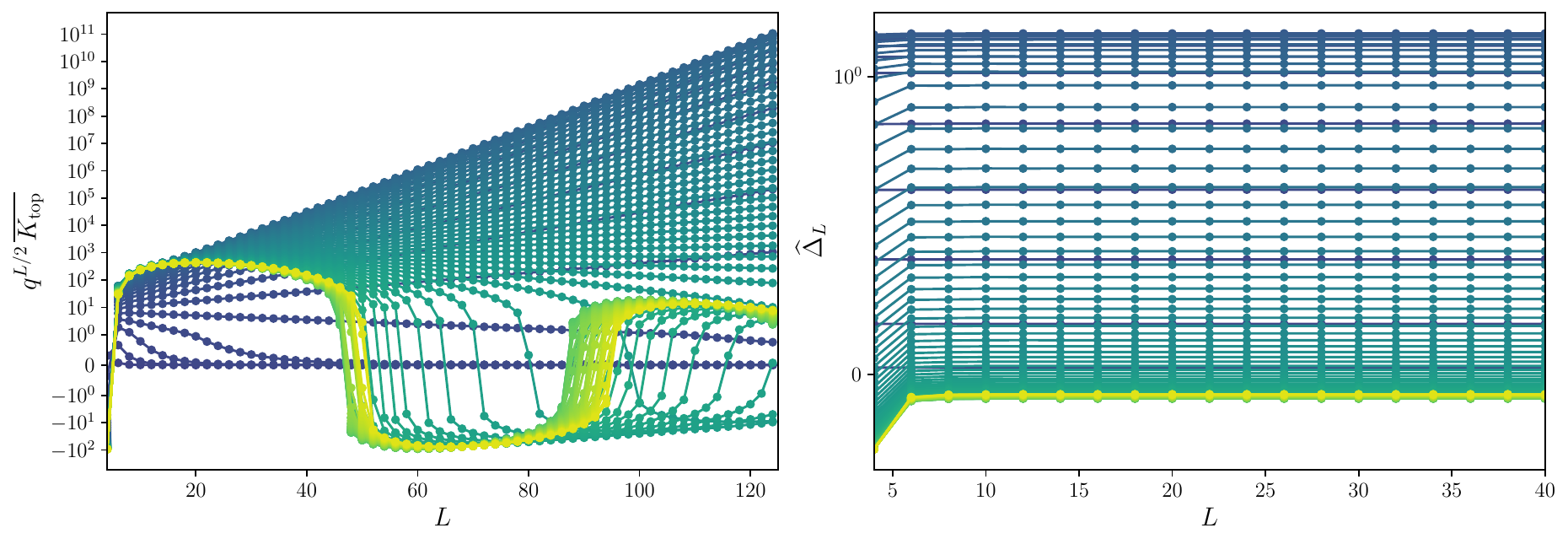}
    \vspace{0.5em}
     \includegraphics[width=0.49\linewidth]{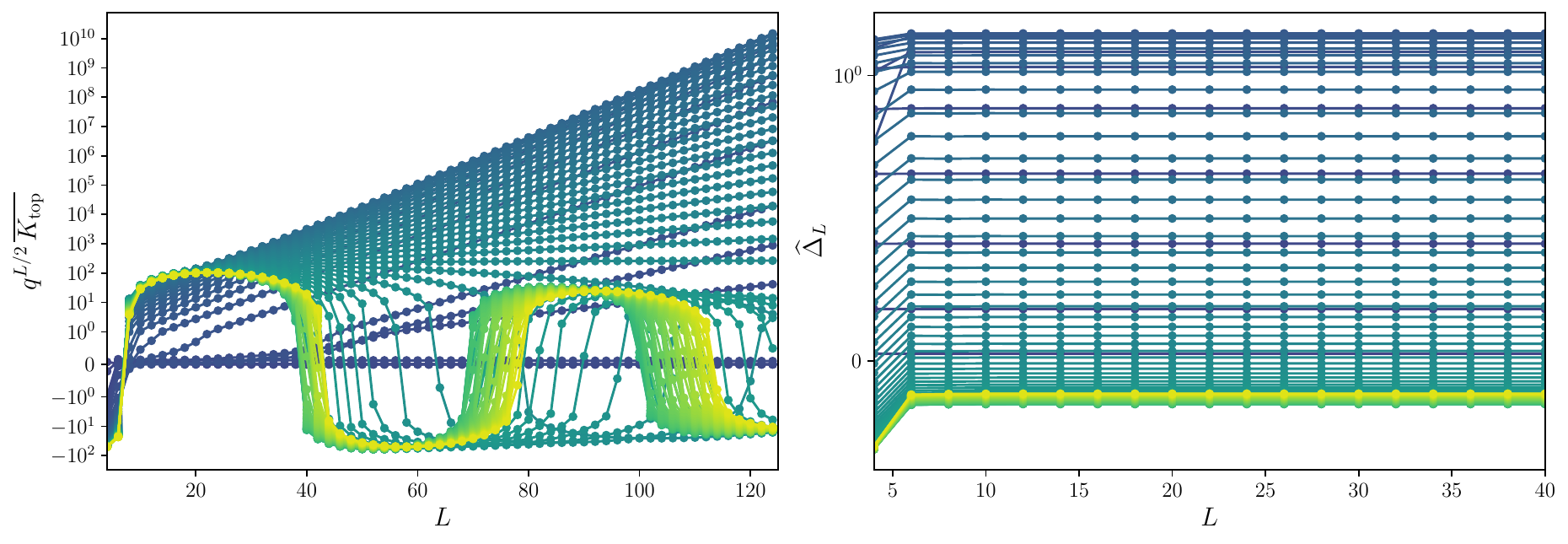}
    \vspace{0.5em}
    \includegraphics[width=0.49\linewidth]{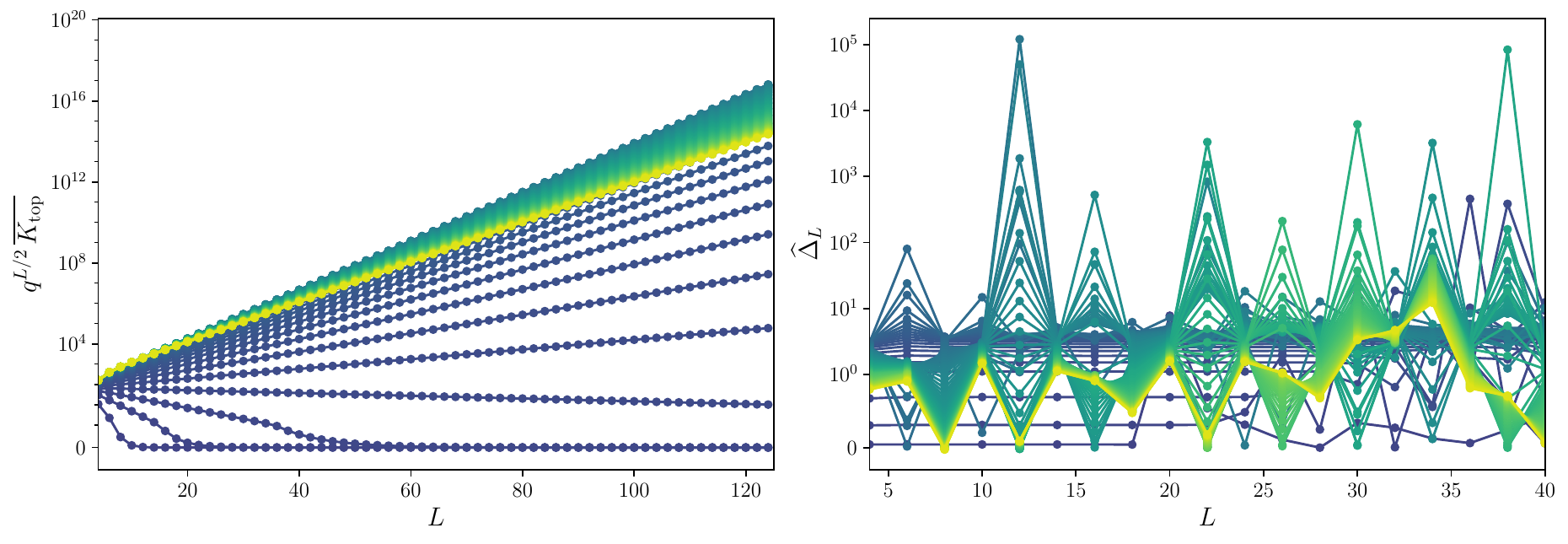}
    \vspace{0.5em}
    \includegraphics[width=0.49\linewidth]{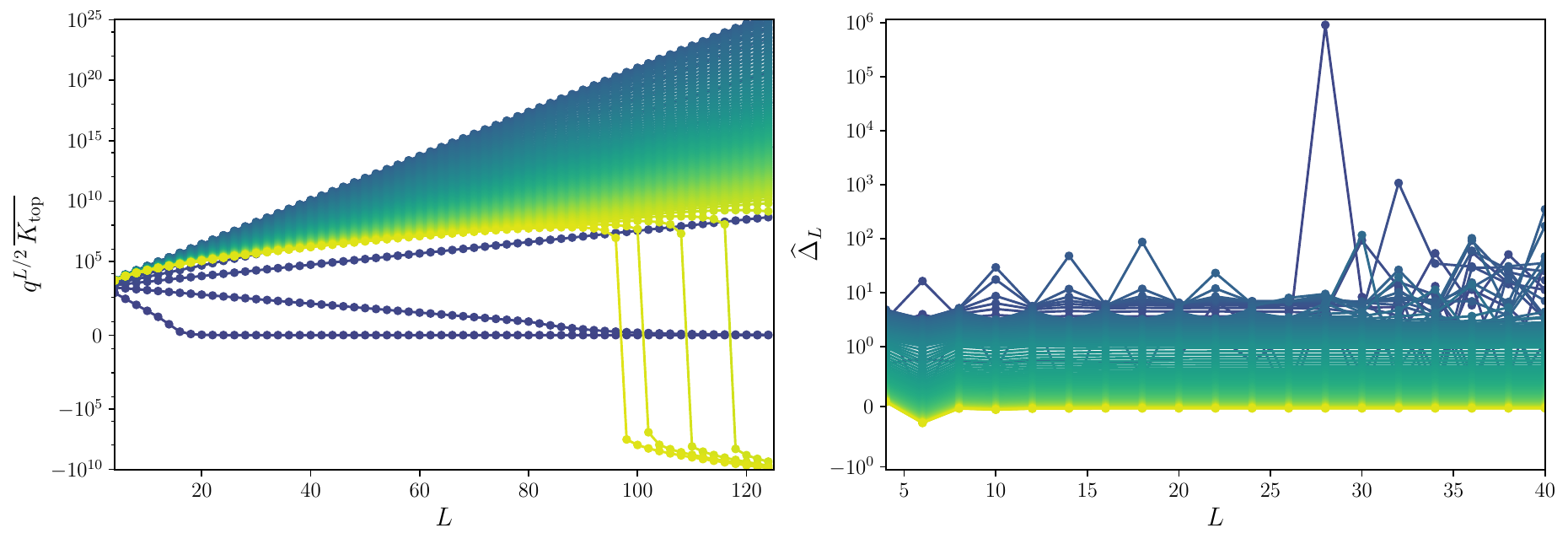}
    \vspace{0.5em}
    \includegraphics[width=0.49\linewidth]{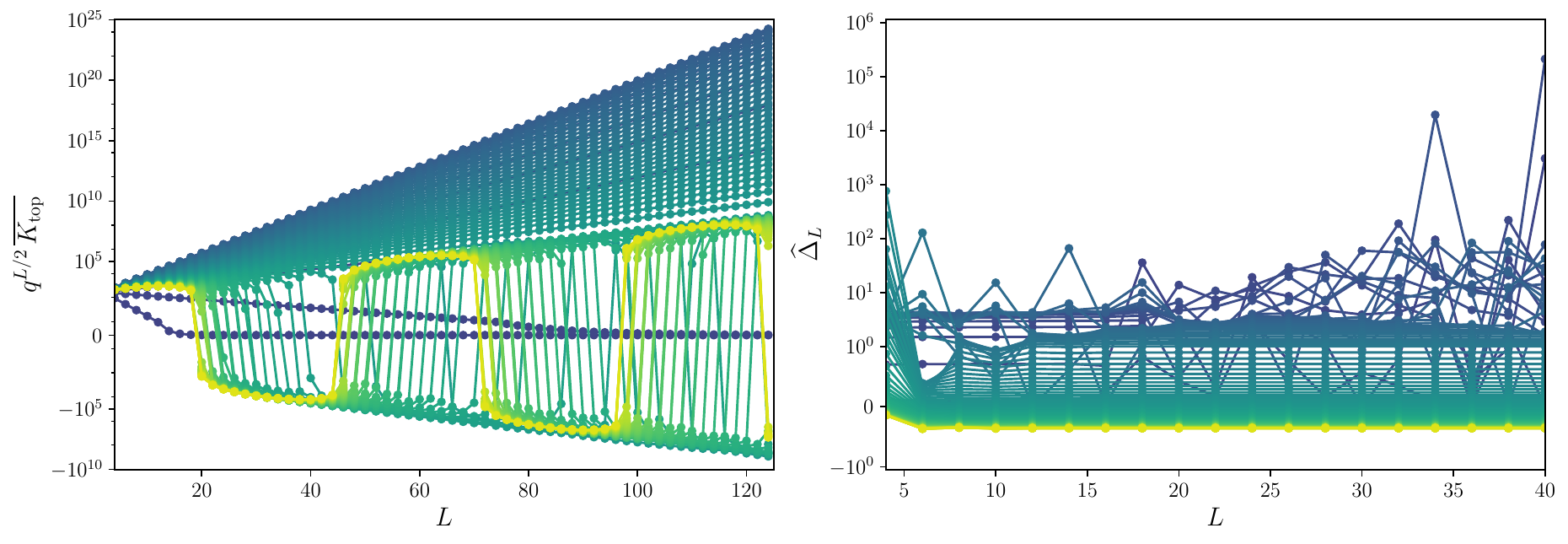}
    \vspace{0.5em}
    \includegraphics[width=0.49\linewidth]{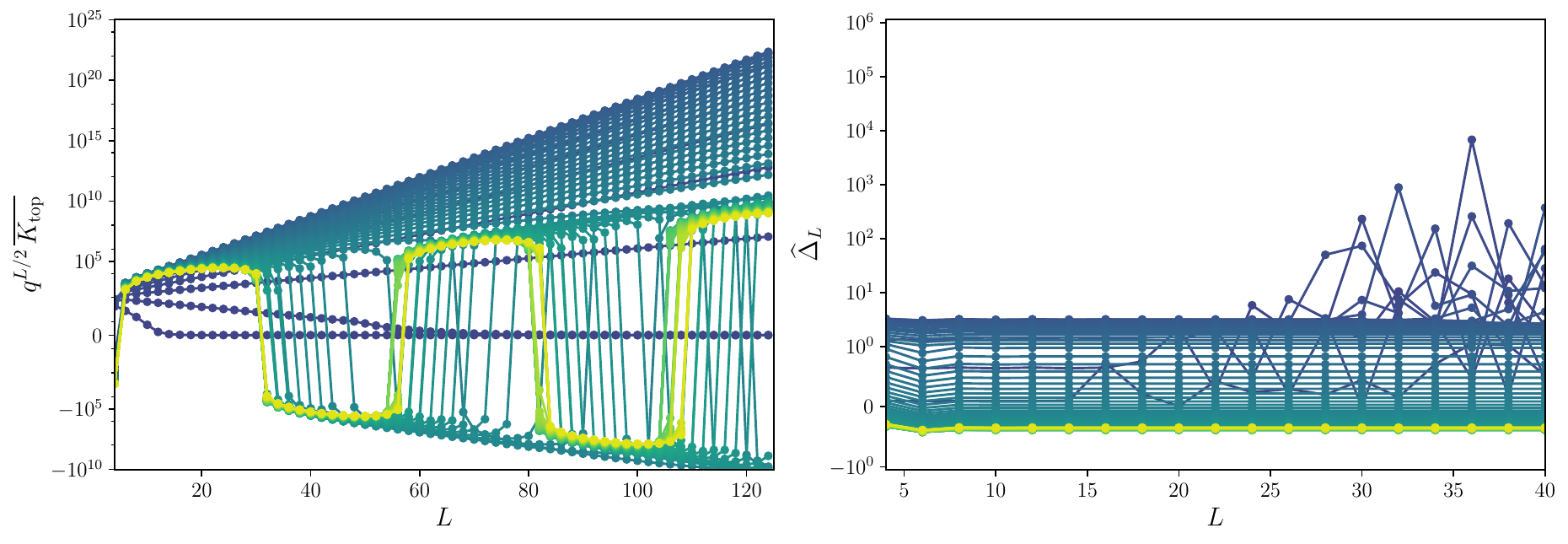}
 \caption{
\textbf{Finite-$q$ TopSFF data and Cayley-Hamilton extraction.}
We compare the TopSFF with the corresponding locally extracted discriminant
\(
\widehat{\Delta}_L=\widehat{t}_L^{\,2}-4\widehat{d}_L
\).
The top two rows show $t=3$ in the $(2\pi/t,-2\pi/t,0)$ sector for
$q=3,5,6,7$, while the bottom two rows show $t=4$ in the $(\pi,\pi,0)$ sector for
$q=3,4,5,6$.
For each $q$, the TopSFF is shown in the left panel and $\widehat{\Delta}_L$ in the adjacent right panel.
Within each set of two rows, $q$ increases from left to right and then from top to bottom.
Curves are coloured by interaction strength $\epsilon$, from blue at weaker interaction to yellow at stronger interaction.
The instability of $\widehat{\Delta}_L$ at small $q$, particularly for $q=3$, is consistent with the Cayley-Hamilton stability analysis in Fig.~\ref{fig:CH_validity}.
}
    \label{fig:sidebyside}
\end{figure}
These extractions of \(\widehat{\Trb}_{\ell}\) and \(\widehat{\Detb}_{\ell}\) are illustrated in Fig.~\ref{fig:sidebyside}, where we compare the exact TopSFF data with the locally extracted CH discriminant
\(
\widehat{\Delta}_{\ell}
\)
in the \(k=(\pi,\pi,0)\) sector. The left panels show \(\overline{K_{\mathrm{top}}}\) as a function of \(L\), while the right panels show the corresponding \(\widehat{\Delta}_{\ell}\) obtained from the local CH extraction. For even \(t\) with \(q\geq4\) and odd \(t\) with \(q \geq 6\), \(\widehat{\Delta}_{\Leff}\) is stable over an extended range of \((\epsilon,L)\), and its sign separates the oscillatory and non-oscillatory regimes of the TopSFF. 
For smaller \(q\), the local discriminant varies in space, showing that the effective two-mode CH description does not apply.

To diagnose the stability of the CH extraction in \(\ell\), we compute the relative variation
\begin{equation}
\label{app_eq:error_ch}
    \mathcal{E}_{\CH}(\ell,\epsilon)
    :=
    \max\left[
    \frac{
    |\widehat{\Trb}_{\ell+\delta \ell}(\epsilon)-\widehat{\Trb}_{\ell}(\epsilon)|
    }{
    \frac12\left(
    |\widehat{\Trb}_{\ell+\delta \ell}(\epsilon)|+|\widehat{\Trb}_{\ell}(\epsilon)|
    \right)+\eta
    },
    \,
    \frac{
    |\widehat{\Detb}_{\ell+\delta \ell}(\epsilon)-\widehat{\Detb}_{\ell}(\epsilon)|
    }{
    \frac12\left(
    |\widehat{\Detb}_{\ell+\delta \ell}(\epsilon)|+|\widehat{\Detb}_{\ell}(\epsilon)|
    \right)+\eta
    }
    \right] ,
\end{equation}
where \(\eta=10^{-12}\) is a small regularizer that prevents singular denominators. Small values of \(\mathcal{E}_{\CH}\) indicate that the locally extracted \(\widehat{\Trb}_{\ell}\) and \(\widehat{\Detb}_{\ell}\) are stable upon varying \(\ell\), and hence that the effective two-dimensional Cayley-Hamilton description is self-consistent. In this stable regime, the sign change of \(\widehat{\Delta}_{\ell}\)
provides a reliable finite-\(q\) diagnostic of the \(\mathcal{PT}\) transition.
\begin{figure}[ht!]
    \centering
    \includegraphics[width=0.32\linewidth]{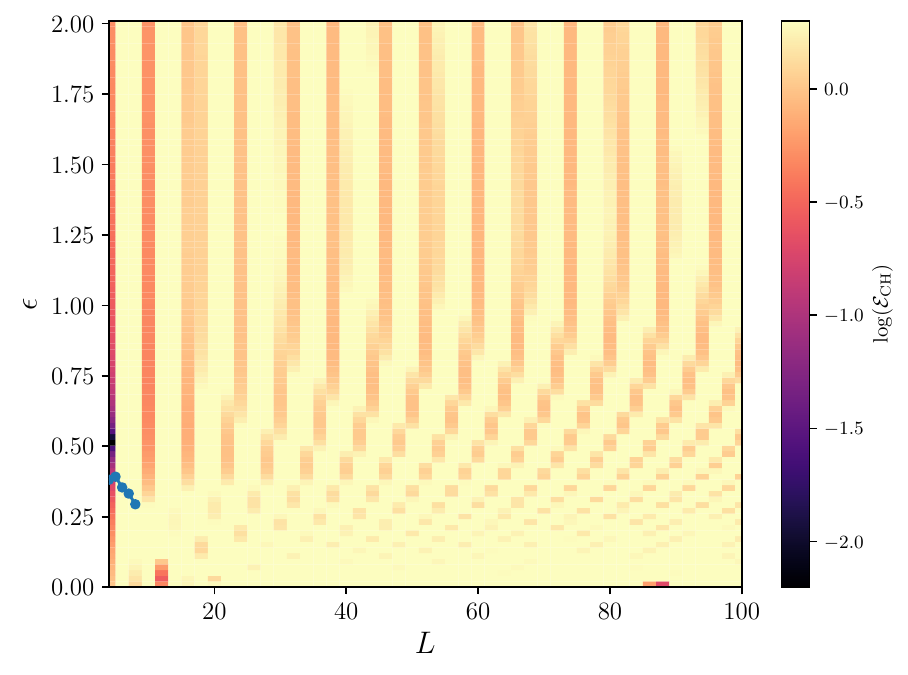}
    \includegraphics[width=0.32\linewidth]{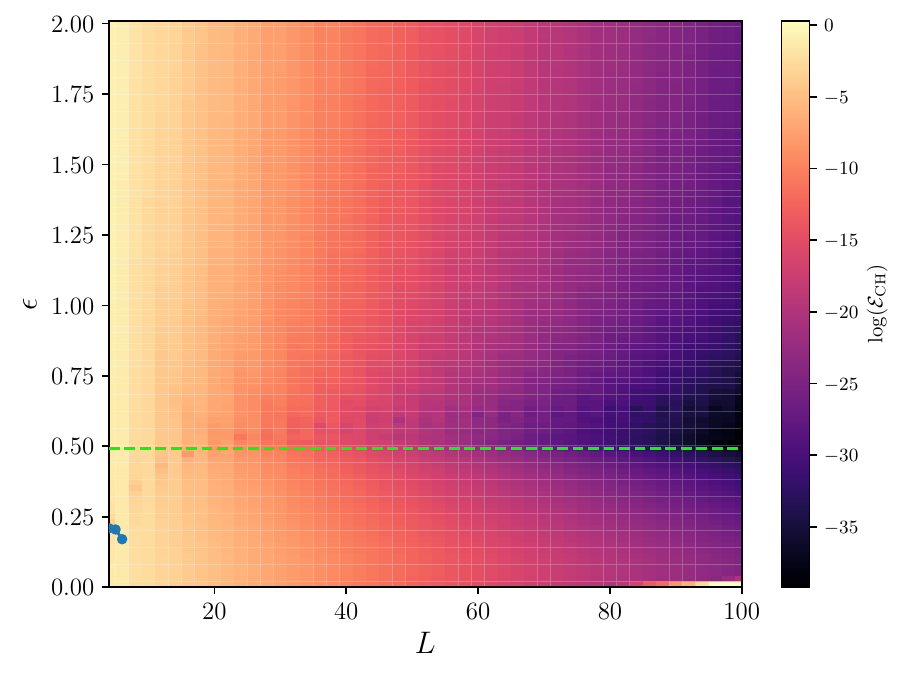}
    \includegraphics[width=0.32\linewidth]{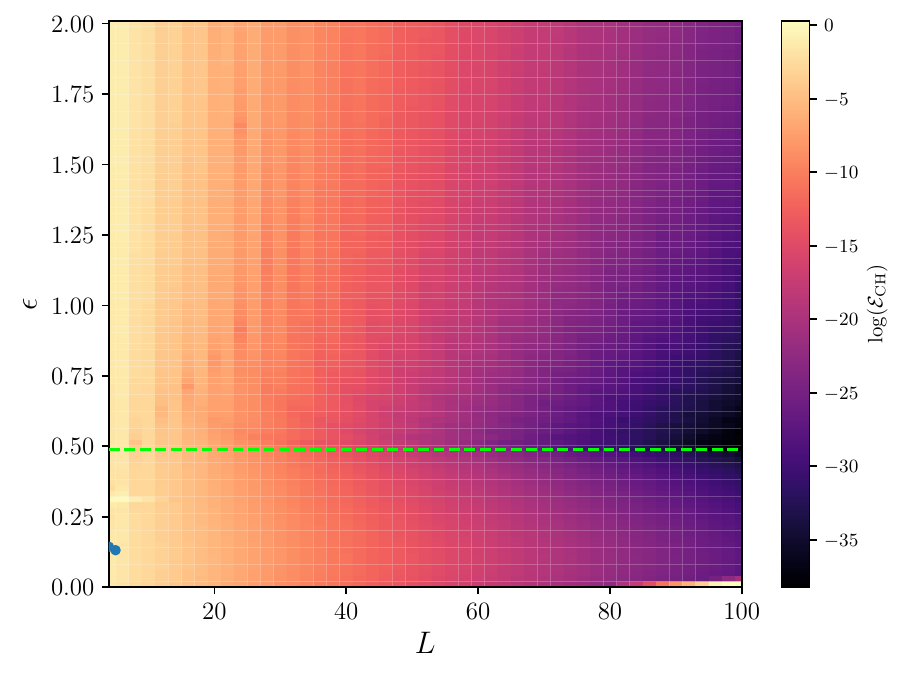}

   \vspace{0.25em}

    \includegraphics[width=0.32\textwidth]{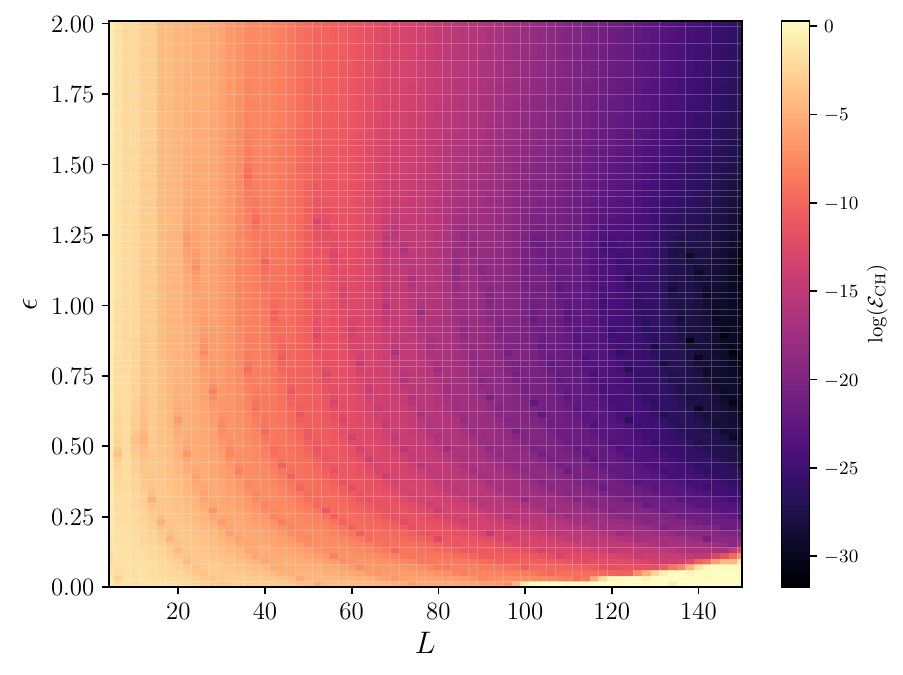}
    \hfill
    \includegraphics[width=0.32\textwidth]{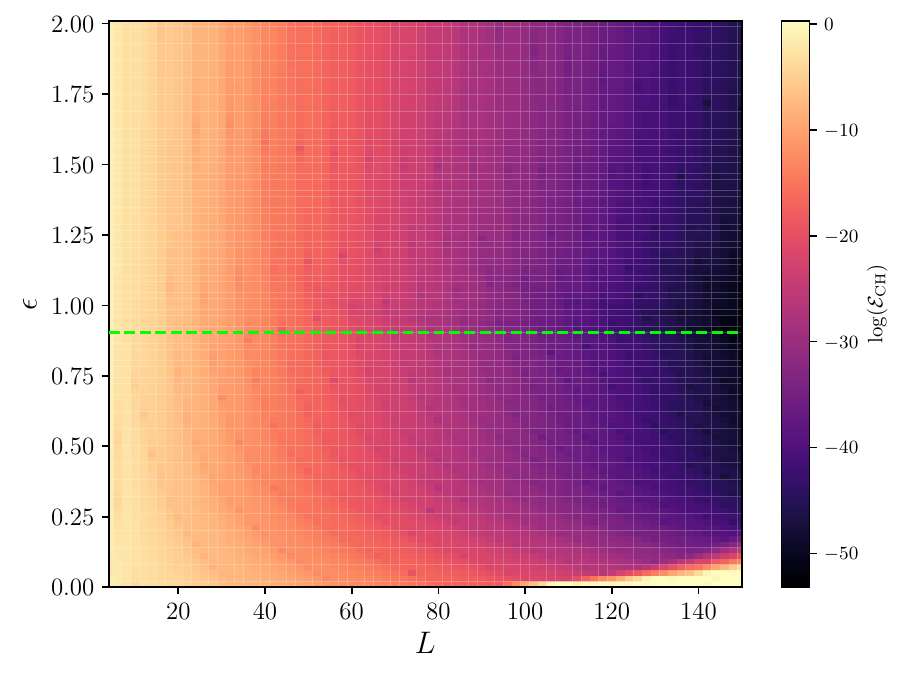}
    \hfill
    \includegraphics[width=0.32\linewidth]{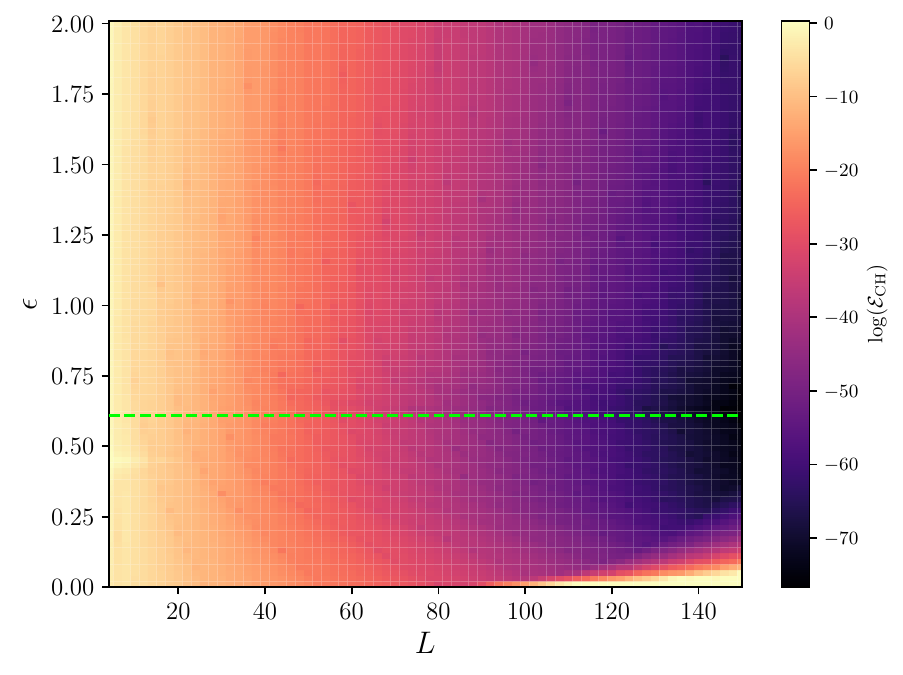}

    \includegraphics[width=0.24\textwidth]{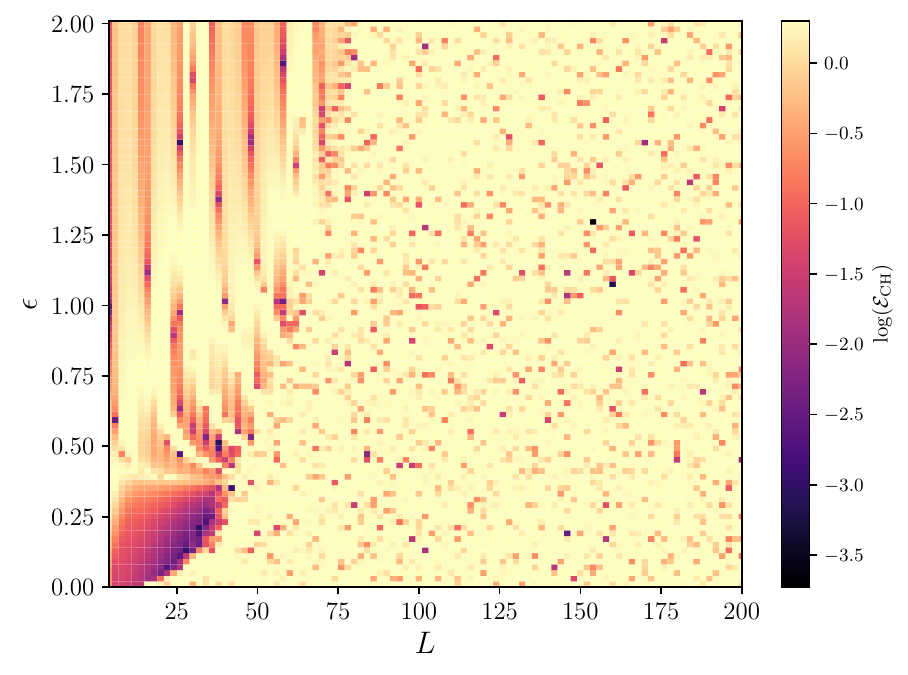}
    \hfill
    \includegraphics[width=0.24\textwidth]{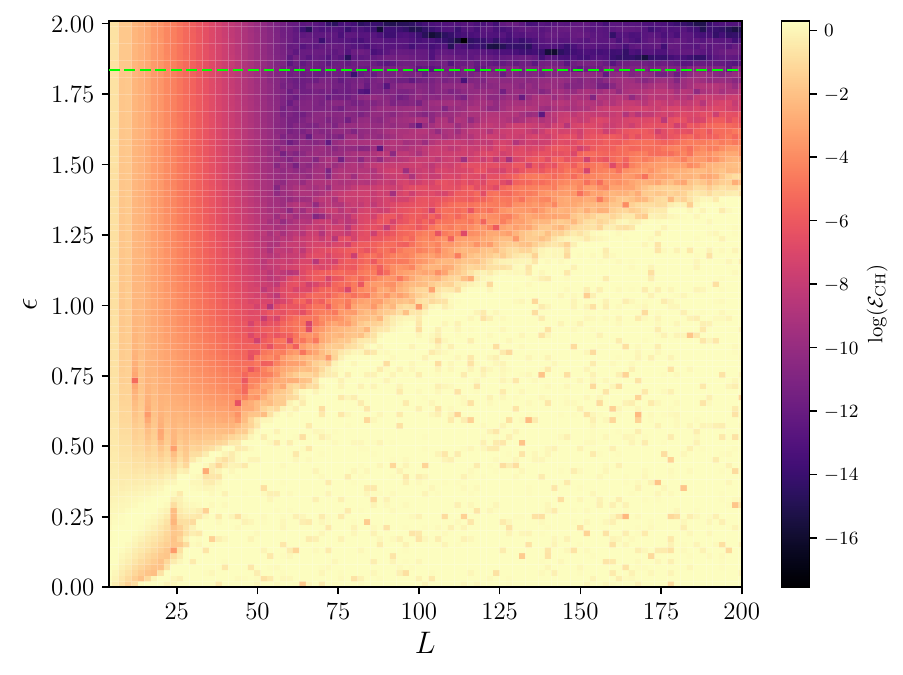}
    \hfill
    \includegraphics[width=0.24\linewidth]{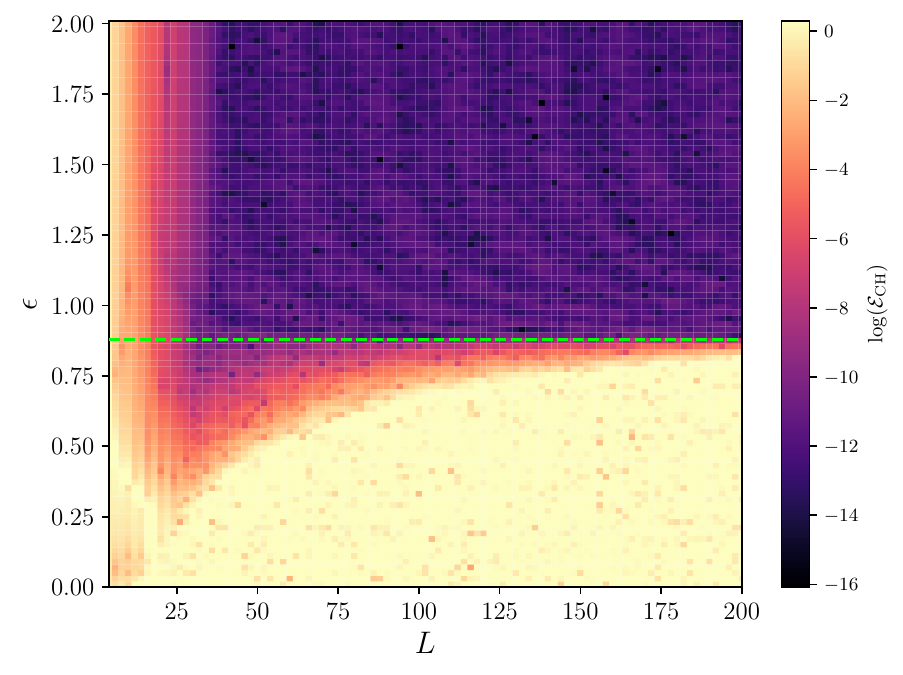}
    \hfill
        \includegraphics[width=0.24\linewidth]{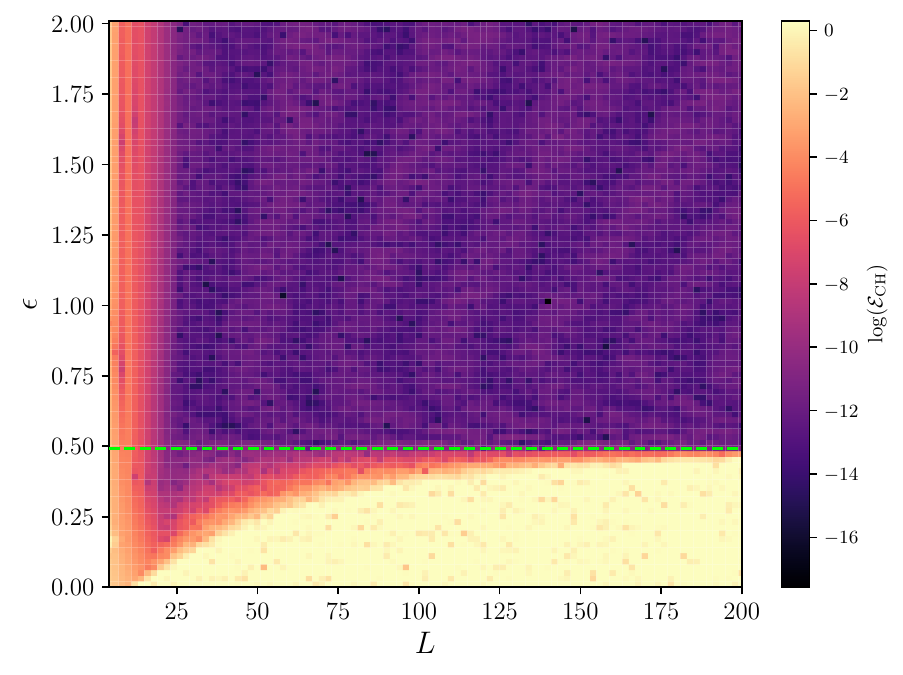}

    \caption{\textbf{Heatmap of $\mathcal{E}_{\CH}$ in the $(\epsilon,L)$ plane.} 
    Darker and lighter shades indicate lower and higher $\mathcal{E}_{\CH}$ values, respectively.
    We quantify the effectiveness of the two-mode Cayley-Hamilton description using $\mathcal{E}_{\mathrm{CH}}$, defined in Eq.~\eqref{app_eq:error_ch}. The $q$'s are (left to right) $q=3,4,5$ for $t=2$ (top row), $q=5,6,10$ for $t=3$ (middle row), and $q=3,4,5$ for $t=4$ (bottom row). Cases for larger $q$ display qualitatively similar behaviour to the cases of the largest $q$ displayed for each $t$. The dashed green lines indicate the $\mathcal{PT}$ transition line, while the dark blue lines at $t=2$ indicate the boundary between chaotic and non-chaotic regimes as probed by spacing ratio $\overline{r}$ in Fig.~\ref{fig:eps_star_parity}. Note that the two-mode Cayley-Hamilton description applies well into the non-chaotic regime. For $t=4$ at small $\epsilon$ and large $L$, $\mathcal{E}_{\mathrm{CH}}=\mathcal{O}(1)$ because exponential suppression of the subleading eigenmode renders the two-mode extraction numerically ill-conditioned.
    Note that an independent model BWM, already at $q=2$, exhibits momentum-dependent TopSFF behaviour qualitatively consistent with the large-$q$ RPM predictions; see Sec.~\ref{app:bwm_topsff}.
    }
    \label{fig:CH_validity}
\end{figure}

Fig.~\ref{fig:CH_validity} shows plots of \(\mathcal{E}_{\CH}\) in the \((\epsilon,L)\) plane for $t=2,3,4$ and various $q$. 
For $t=2,4$, the $q=3$ data do not exhibit an extended stable region, consistent with the instability shown in Fig.~\ref{fig:sidebyside} for $\widehat{\Delta}_{\ell}$. In contrast, for $q\geq4$, a broad region with small \(\mathcal{E}_{\CH}\) appears, showing that the effective CH description is valid over an extended part of the \((\epsilon,L)\) plane. The dashed green line marks the extracted \(\mathcal{PT}\) transition, while the blue curve in the plots for $t=2$ shows $\epsilon^*$ against $L$. The blue curve clearly deviates from the boundary where the effective CH description holds, which is indicated by the separation between the dark (low $\mathcal{E}_{\CH}$) and the light region (high $\mathcal{E}_{\CH}$) region. This suggests that the $\mathcal{PT}$ transition is inherently different to the chaotic-to-non-chaotic crossover.

For $t=4$, $\mathcal{E}_{\mathrm{CH}}$ becomes of order one at sufficiently large $L$ in the $\mathcal{PT}$-unbroken regime, $\epsilon<\epsilon_{\mathrm{EP}}$, indicating that the two-mode extraction becomes unreliable in this region.
This does not necessarily imply a breakdown of the underlying two-mode description.
Rather, in the $\mathcal{PT}$-unbroken phase the two leading eigenvalues have unequal moduli, so that the contribution of the subleading eigenvalue is exponentially suppressed with increasing $L$.
At sufficiently large $L$, the TopSFF is therefore effectively dominated by a single eigenmode, rendering the extraction of the second eigenvalue numerically ill-conditioned and leading to an enhanced $\mathcal{E}_{\mathrm{CH}}$.
This effect is particularly pronounced for $t=4$, for which the finite-$q$ transfer matrix is substantially larger than for $t=2$ and $t=3$, making the isolation of the two leading modes more numerically demanding.
In contrast, in the $\mathcal{PT}$-broken regime the dominant eigenvalues form a complex-conjugate pair with equal modulus, so that both contributions remain visible at large $L$ and the two-mode Cayley-Hamilton description remains resolved.

Having established the region of numerical stability, we define a truncation length \(L_{\rm trunc}\) in the following. For each fixed \((q,\epsilon)\), we scan the local estimates in increasing \(L\) and retain the initial stable region satisfying
\begin{equation}
    \mathcal{E}_{\CH}(\ell)
    \leq
    \mathcal{E}_{\CH}^{\rm cut},
    \qquad
    \mathcal{E}_{\CH}^{\rm cut}=1 .
\end{equation}
\(L_{\rm trunc}\) is the largest retained value of \(\ell\) before this stability criterion fails. Thus \(L_{\rm trunc}\) gives the maximum effective length up to which the locally extracted Cayley-Hamilton coefficients can be treated as stable for a given \((q,\epsilon)\). 
In the stable regime, we obtain the final effective coefficients by averaging the local estimates \( \{ \widehat{\Trb}_\ell\} \) and \( \{ \widehat{\Detb}_\ell\} \) over the retained window,
\[
    \widehat{\Trb}
    =
    \frac{1}{|\mathcal{W}|}
    \sum_{\ell\in\mathcal{W}}
    \widehat{\Trb}_{\ell},
    \qquad
    \widehat{\Detb}
    =
    \frac{1}{|\mathcal{W}|}
    \sum_{\ell\in\mathcal{W}}
    \widehat{\Detb}_{\ell},
\qquad 
    \widehat{\Delta}
    :=
    \widehat{\Trb}^2
    - 4 \widehat{\Detb} ,
\]
where \(\mathcal{W}\) denotes the retained CH stable range, \(L\leq L_{\rm trunc}\). These averaged coefficients are used to form the effective discriminant and eigenvalues below.

\subsubsection{Cayley-Hamilton extraction and \(\PT\) transitions}

\begin{figure}
    \centering
    \includegraphics[width=0.32\linewidth]{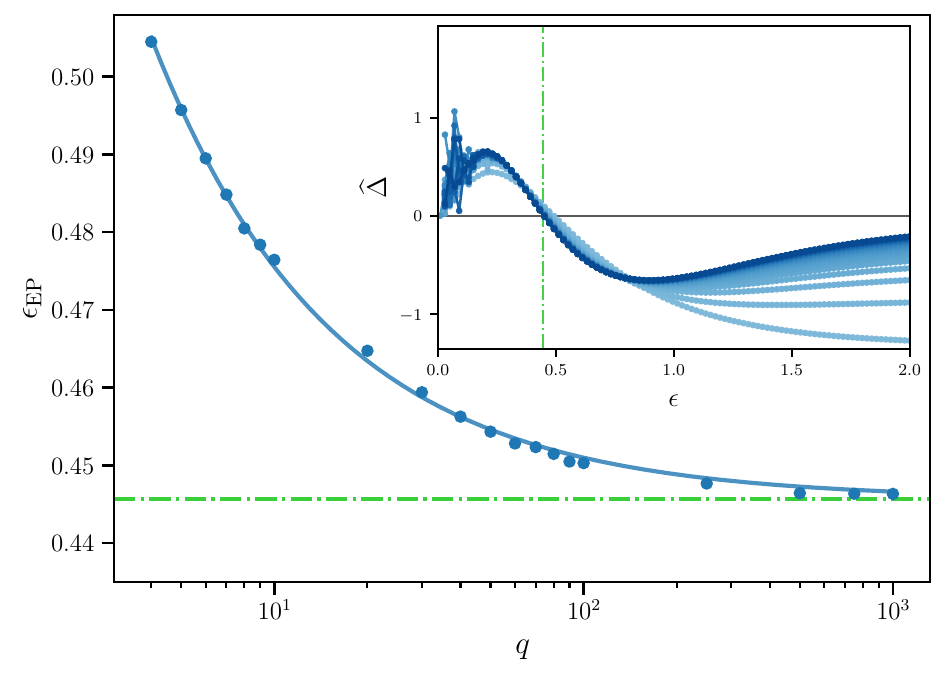}\hfill
    \includegraphics[width=0.32\linewidth]{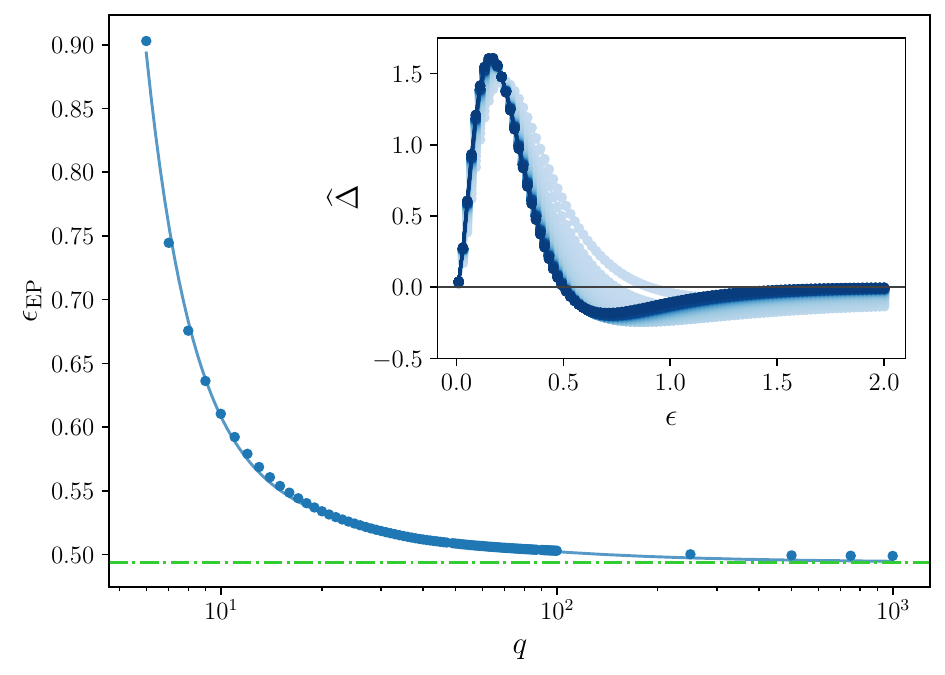}\hfill
    \includegraphics[width=0.32\linewidth]{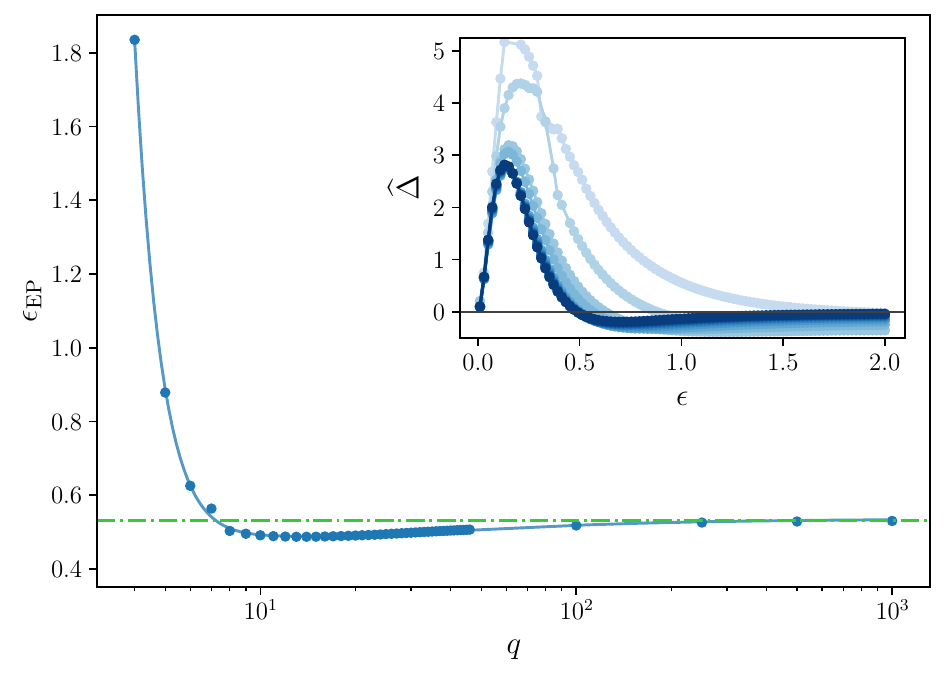}\hfill
    \caption{
    \textbf{Finite-$q$ $\mathcal{PT}$ transition points $\epsilon_\mathrm{EP}$ agree with large-$q$ predictions.}
    Results are shown for $t=2,3,4$, from left to right. Main plots: the dependence of $\epsilon_\mathrm{EP}$ on $q$. The blue markers denote numerically computed values, while the solid blue curves are polynomial fits.
The green dash-dotted lines indicate the corresponding large-$q$ predictions.
As $t$ increases, the finite-$q$ results converge to the large-$q$ limits.
Insets: dependence of effective discriminant $\widehat{\Delta}$ on $\epsilon$ for various values of $q$, with lighter (darker) shades of blue corresponding to smaller (larger) $q$.
For each $q$, the zero crossing $\widehat{\Delta}=0$ determines $\epsilon_{\mathrm{EP}}$, yielding the corresponding point shown in the main panel.}
    \label{fig:app_I_finite_q_ep}
\end{figure}

\begin{figure}
    \centering
    \includegraphics[width=0.32\linewidth]{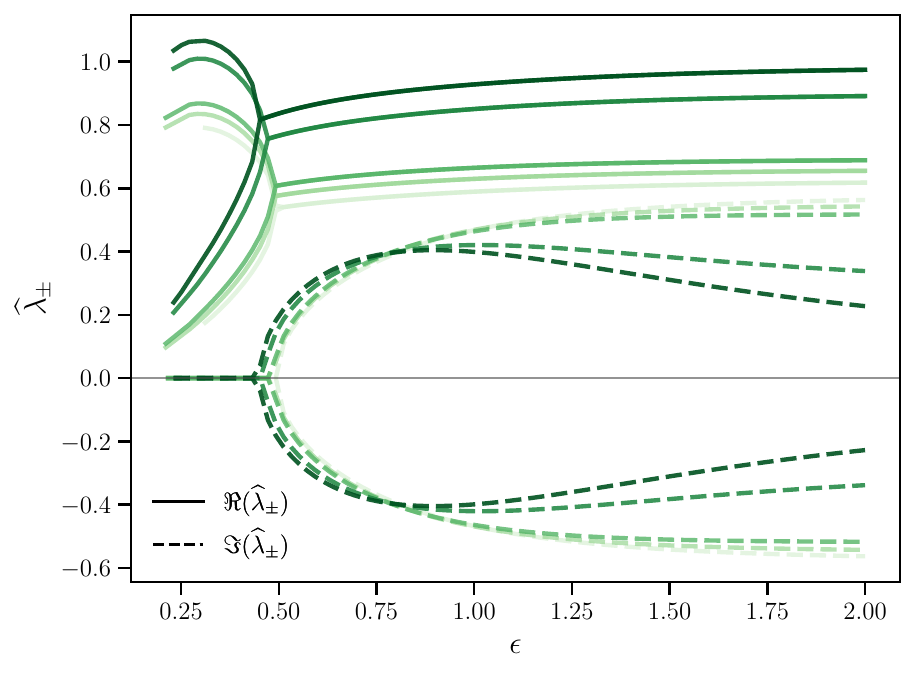}
    \includegraphics[width=0.32\linewidth]{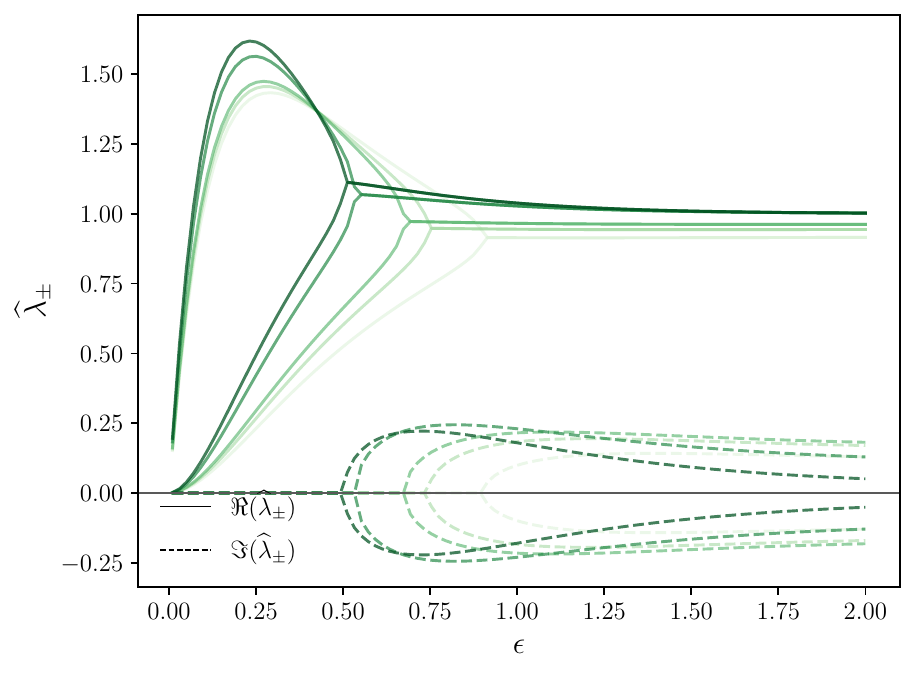}
    \includegraphics[width=0.32\linewidth]{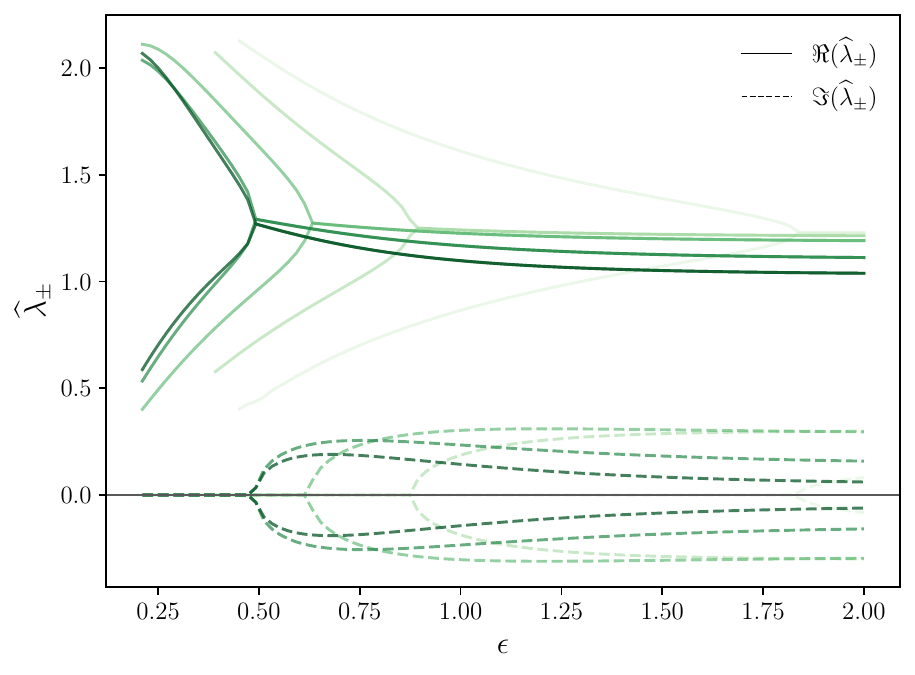}
    \caption{
\textbf{Finite-$q$ effective eigenvalues.}
Results are shown for $t=2,3,4$, from left to right.
The solid and dashed curves denote
$\operatorname{Re}(\widehat{\lambda}_{\pm})$ and
$\operatorname{Im}(\widehat{\lambda}_{\pm})$, respectively, as functions of $\epsilon$.
From light to dark green, the values of $q$ are
$q=4,5,6,20,100$ for $t=2$,
$q=6,7,8,20,100$ for $t=3$, and
$q=4,5,6$ for $t=4$.
For all $t$ and $q$ shown, the two real eigenvalue branches coalesce at an exceptional point, beyond which they form a complex-conjugate pair with non-zero imaginary parts.
As $q$ increases, the EP persists and the eigenvalue trajectories progressively approach the spectra predicted by the large-$q$ analysis in Figs.~\ref{fig:grid_4_sectors}, \ref{fig:grid_all_sectors}, and \ref{fig:grid_all_sectors_t3}. 
    }
    \label{fig:eff_eigs_manyq}    
\end{figure}

The corresponding effective discriminant and eigenvalues are
\begin{equation}
    \widehat{\Delta}
    =
    \widehat{\Trb}^{\,2}
    -
    4\widehat{\Detb},
    \qquad
    \widehat{\lambda}_{\pm}
    =
    \frac{\widehat{\Trb}\pm\sqrt{\widehat{\Delta}}}{2}.
    \label{eq:CH_discriminant_extraction}
\end{equation}
Since \(\widehat{\Trb}\) and \(\widehat{\Detb}\) are real, the sign of \(\widehat{\Delta}\) diagnoses the \(\mathcal{PT}\) transition: \(\widehat{\Delta}>0\) gives two real effective eigenvalues, while \(\widehat{\Delta}<0\) gives a complex-conjugate pair, \(\widehat{\lambda}_{-}=\widehat{\lambda}_{+}^{*}\). Thus, the zero of \(\widehat{\Delta}\) locates the finite-\(q\) exceptional point (EP).
We plot this finite-$q$ EP analysis in Fig.~\ref{fig:app_I_finite_q_ep} for $t=2,3,4$ (left to right) and increasing $q$. Across all $t$, the results show that as $q$ increases, the transition interaction strength $\epsilon_{\mathrm{EP}}$ approaches the large-$q$ value predicted in the main text. These results strongly support the correctness of the finite-$q$ analysis, and they show that the EP found in the large-$q$ theory persists at finite-$q$.

In Fig.~\ref{fig:eff_eigs_manyq}, we plot the effective eigenvalues in the \(k=(\pi,\pi,0)\) sector for increasing \(q\) and $t=2,3,4$. Across the transition, the two real branches coalesce and split into a complex-conjugate pair, providing a direct finite-\(q\) signature of \(\mathcal{PT}\) symmetry breaking. The extracted spectra show qualitative agreement with the large-\(q\) eigenvalues in the same momentum sector, see figures in the main text and Fig.~\ref{fig:grid_all_sectors}, and approach them as \(q\) is increased.
In particular, in the case of $t=2$, the eigenvalue moduli in the \(\mathcal{PT}\) unbroken phase increase with \(q\) and eventually exceed unity, so that the TopSFF changes from exponential decay to exponential growth with system size in the $\mathcal{PT}$ unbroken phase. At $t=3$, the finite-\(q\) analytics predicts both an exponential decrease (smaller $\epsilon$) and exponential growth (larger but still small $\epsilon$) of the TopSFF in the \(\mathcal{PT}\) unbroken phase. This is consistent with the findings in Fig.~\ref{fig:sidebyside}, in which the curves change from decay to growth (blue curves). For $t=4$, exponential growth is predicted throughout the pre-Heisenberg regime.

Although the TopSFF extraction provides access to effective eigenvalues rather than eigenvectors, the effective discriminant still serves as a useful proxy for eigenvector coalescence. We  define the discriminant based proxy  Petermann factor $P^{\mathrm{proxy}}_{\pm}$ and proxy eigenvector overlap  $O^{\mathrm{proxy}}_{\pm}$ as
\begin{equation}
    P^{\mathrm{proxy}}(\epsilon)
    =
    \frac{1}{|\widehat{\Delta}(\epsilon)|+\eta},
    \qquad
    O^{\mathrm{proxy}}(\epsilon)
    =
    \left|\sqrt{\widehat{\Delta}(\epsilon)}\right| ,
    \label{eq:petermann_proxy}
\end{equation}
where \(\eta=10^{-12}\) is a small regularizer that removes the numerical singularity at \(\widehat{\Delta}=0\). The first quantity mimics the divergence of the Petermann factor at an exceptional point, while the second tracks the square-root eigenvalue splitting and vanishes as the two effective eigenvalues coalesce. As shown in Fig.~\ref{fig:petermann_proxy}, these finite-\(q\) proxies display the same qualitative behaviour as the large-\(q\) Petermann factor and eigenvector overlap diagnostics in Figs.~\ref{fig:pipi0_petermann_largeq} and~\ref{fig:pipi0_evector_overlap_largeq}.

\begin{figure}
    \centering
    \includegraphics[width=0.4\linewidth]{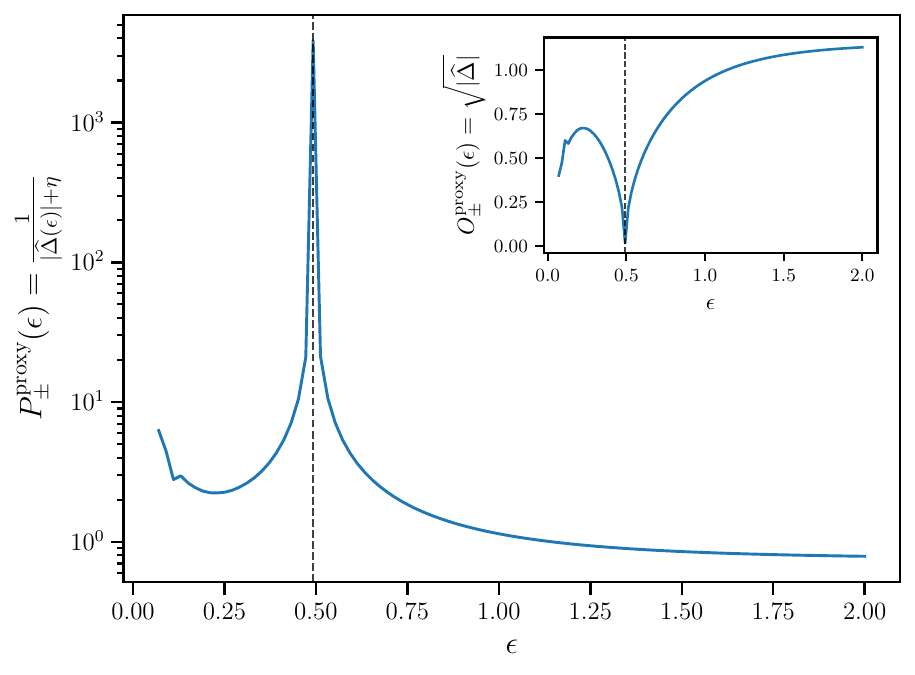}
    \caption{
\textbf{Finite-\(q\) Petermann factor proxy for the exceptional point.} For \(q=4\), we plot the Petermann factor proxy
\(P^{\rm proxy}_{\pm}(\epsilon):=1/(\sqrt{|\widehat{\Delta}(\epsilon)|}+\eta)\),
where \(\widehat{\Delta}\) is the discriminant of the two leading modes and \(\eta= 10^{-12}\) is a small regulator. 
The vertical dashed line marks \(\epsilon_{\rm EP}\simeq0.490495\). 
The peak in \(P^{\rm proxy}_{\pm}\), together with the collapse of \(\sqrt{|\widehat{\Delta}|}\) in the inset, provide proxy signatures of eigenvector coalescence at the EP. 
}
   \label{fig:petermann_proxy}
\end{figure}

At finite \(q\), at the extracted exceptional point, the boundary-resolved TopSFF combination
\(
\frac{
\overline{\Knormalized_{\spatdef}^{00}}
+
\overline{\Knormalized_{\spatdef}^{11}}
}
{
[\widehat{\Trb}/2]^{\Leff-1}
}
\)
exhibits linear scaling in \(L\), consistent with the Jordan-block contribution expected at an EP. Away from the transition, the curves bend away from this linear form, as expected when the effective spectrum consists either of two distinct real eigenvalues or of a complex-conjugate pair. Thus, the boundary-resolved TopSFF exposes the Jordan contribution, hidden in the unresolved TopSFF by the vanishing physical-boundary prefactor.
In the main text, we demonstrated this boundary-resolved scaling for \(q=4\) in the \(k=(\pi,\pi,0)\) sector. 
In Fig.~\ref{fig:jordan_diff_t}, we show the corresponding finite-$q$ results for $t=2,3,4$ at $q=4,6,4$, respectively. 
At the exceptional point, after dividing out the leading exponential envelope, the TopSFF exhibits the characteristic linear-in-$L$ scaling expected from the nontrivial Jordan block. The same behaviour is observed more generally for larger $q$.
The far right panel of Fig.~\ref{fig:jordan_diff_t} verifies that the EP extracted from the boundary-resolved TopSFF agrees with that obtained from the regular, boundary-unresolved TopSFF.  Specifically, we independently extract the transition points from
$\overline{\Knormalized_{\spatdef}^{00}}$ and
$\overline{\Knormalized_{\spatdef}^{11}}$,
and compare them with $\epsilon_{\EP}^{\mathrm{R}}$ obtained from the regular TopSFF.
The small discrepancies show that the boundary-resolved and regular observables locate the same finite-$q$ exceptional point within numerical extraction uncertainty.

\begin{figure}
    \centering
    \includegraphics[width=0.24\linewidth]{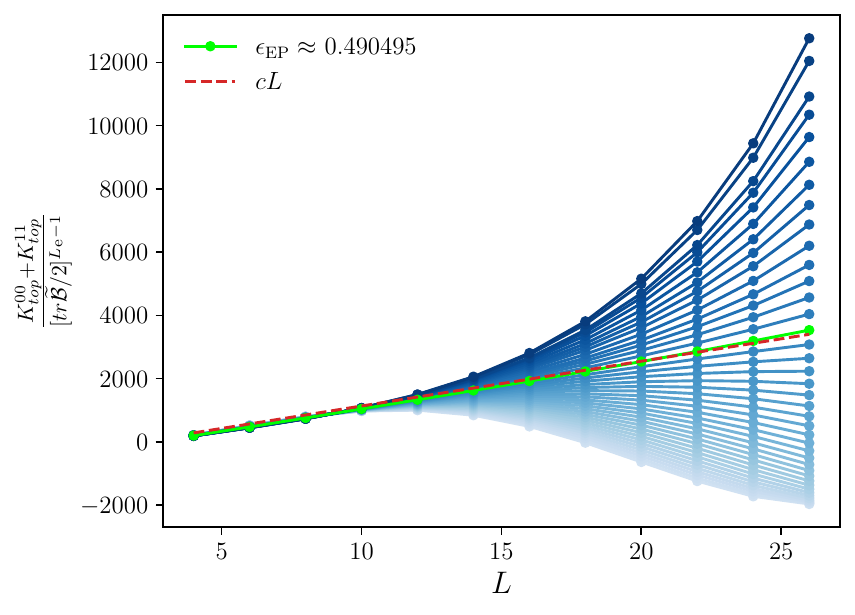}
    \includegraphics[width=0.24\linewidth]{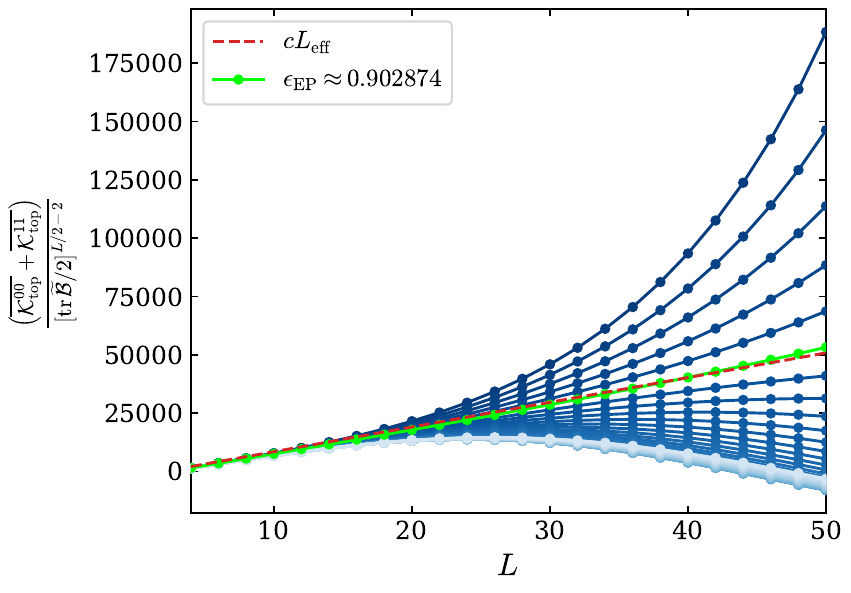}
    \includegraphics[width=0.24\linewidth]{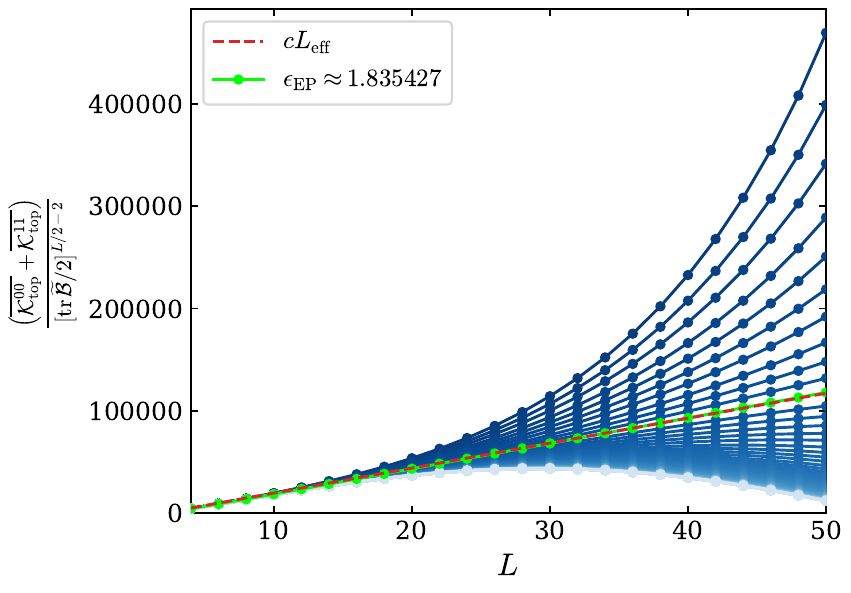}
    \includegraphics[width=0.24\linewidth]{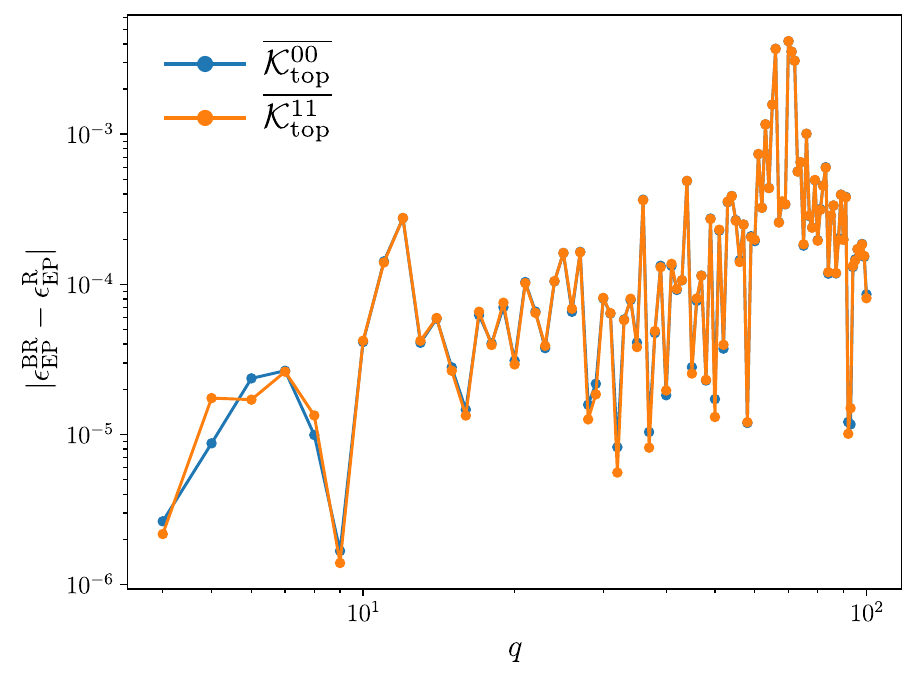}
    \caption{
\textbf{Jordan-block scaling of the TopSFF at the exceptional point.}
Results are shown for $t=2,3,4$ with $q=4,6,4$, respectively, from left to right.
The blue curves show the normalized TopSFF as a function of $L$ for values of $\epsilon$ in the vicinity of the exceptional point $\epsilon_{\mathrm{EP}}$, with lighter to darker shades corresponding to increasing $\epsilon$.
The green curve denotes the result at the numerically determined $\epsilon_{\mathrm{EP}}$, while the red dashed line shows the linear form $c L_{\mathrm{eff}}$.
At the EP, the two effective eigenvalues coalesce and form a non-trivial Jordan block, producing the characteristic linear prefactor $L_{\mathrm{eff}}$ after the leading exponential dependence has been removed.
In the far right panel, we show the difference \(|\epsilon_{\EP}^{\rm BR}-\epsilon_{\EP}^{\mathrm{R}}|\) between the EP extracted from 
boundary-resolved TopSFF and regular TopSFF for $t=2$ as an illustrative example. The consistently small discrepancy demonstrates that the two observables identify the same EP within numerical extraction errors.
 }
    \label{fig:jordan_diff_t}
\end{figure}

\subsubsection{Verification of large-$q$ RPM TopSFF predictions in the BWM at $q=2$}\label{app:bwm_topsff}

Simulations of a second independent model, the Brick Wall Model (BWM, Sec.~\ref{app:model_bwm}), which, unlike RPM, is composed of \textit{non-diagonal} gates, corroborate at $q=2$ the generality of the exact large-$q$ RPM results, by exhibiting the predicted momentum-sector-dependent oscillatory and monotonic TopSFF behaviours. 
We simulate BWM in an approach analogous to  the one described in Sec.~\ref{subsec:finite_q_symbolic_tsff}. Unlike the RPM, however, the simplest BWM that can be efficiently simulated in this manner does not possess a continuous tuning parameter analogous to $\epsilon$. Consequently, rather than tracking the $\mathcal{PT}$ transition as a function of a coupling parameter, we test its characteristic momentum-sector-dependent signatures at a fixed ensemble.

Fig.~\ref{fig:bwm_t2_q4_allmom} shows the BWM TopSFF at $t=2$ and $q=2$ in all momentum sectors.
Despite the microscopic differences between the BWM and RPM, the BWM reproduces the characteristic sector dependence predicted by the large-$q$ RPM analysis.
First, the results obey the expected symmetry relations between momentum sectors.
In particular, the TopSFF is invariant under $k_c\leftrightarrow k_d$, such that the corresponding sectors in the second and third columns of Fig.~\ref{fig:bwm_t2_q4_allmom} coincide, together with the associated relations under momentum inversion.
These relations agree with those obtained analytically for the RPM.

Furthermore, the TopSFF of BWM at different momentum sectors exhibit qualitatively distinct $L$ dependence, in agreement with the RPM  momentum-sector TopSFF results  at finite $q$ in Figs.~\ref{fig:t2_q4_allmom} and in large $q$ in Fig.~\ref{fig:grid_all_sectors}.
For sufficiently large $L$, the BWM TopSFF exhibits oscillatory exponential behaviour when dominated by a complex-conjugate pair of eigenvalues, and monotonic exponential behaviour when dominated by a real eigenvalue, as predicted by RPM.
In the momentum sectors with $k_b=\pi/2$ or $3\pi/2$, the BWM TopSFF exhibits oscillatory behaviour, consistent with the complex-conjugate eigenvalue pair predicted by the large-$q$ RPM analysis.
A particularly clean test is provided by the sectors $(\pi,\pi,\pi)$, $(\pi,0,0)$, $(0,\pi,0)$, and $(0,0,\pi)$, for which the large-$q$ RPM analysis predicts a dominant real negative eigenvalue. Consistently, the BWM TopSFF alternates in sign between consecutive values of $\Leff$. 
For $(\pi,\pi,0)$ and $(\pi,0,\pi)$, a tunable $\PT$ transition cannot be accessed in the simplest efficiently-simulatable  form of the BWM, since it lacks an interaction-strength tuning parameter.

The agreement between the BWM and RPM is significant because the two models have different microscopic constructions. It demonstrates that the momentum-dependent TopSFF structure obtained from the large-$q$ RPM is not an artifact of the random-phase model or of the large-$q$ limit, but persists in an independent many-body chaotic circuit at finite $q$.
Note however that the present BWM calculation does not resolve the exceptional point through a parameter sweep, due to the absence of a tunable parameter in this simplest formulation.

\begin{figure}
    \centering
    \includegraphics[width=\linewidth]
    {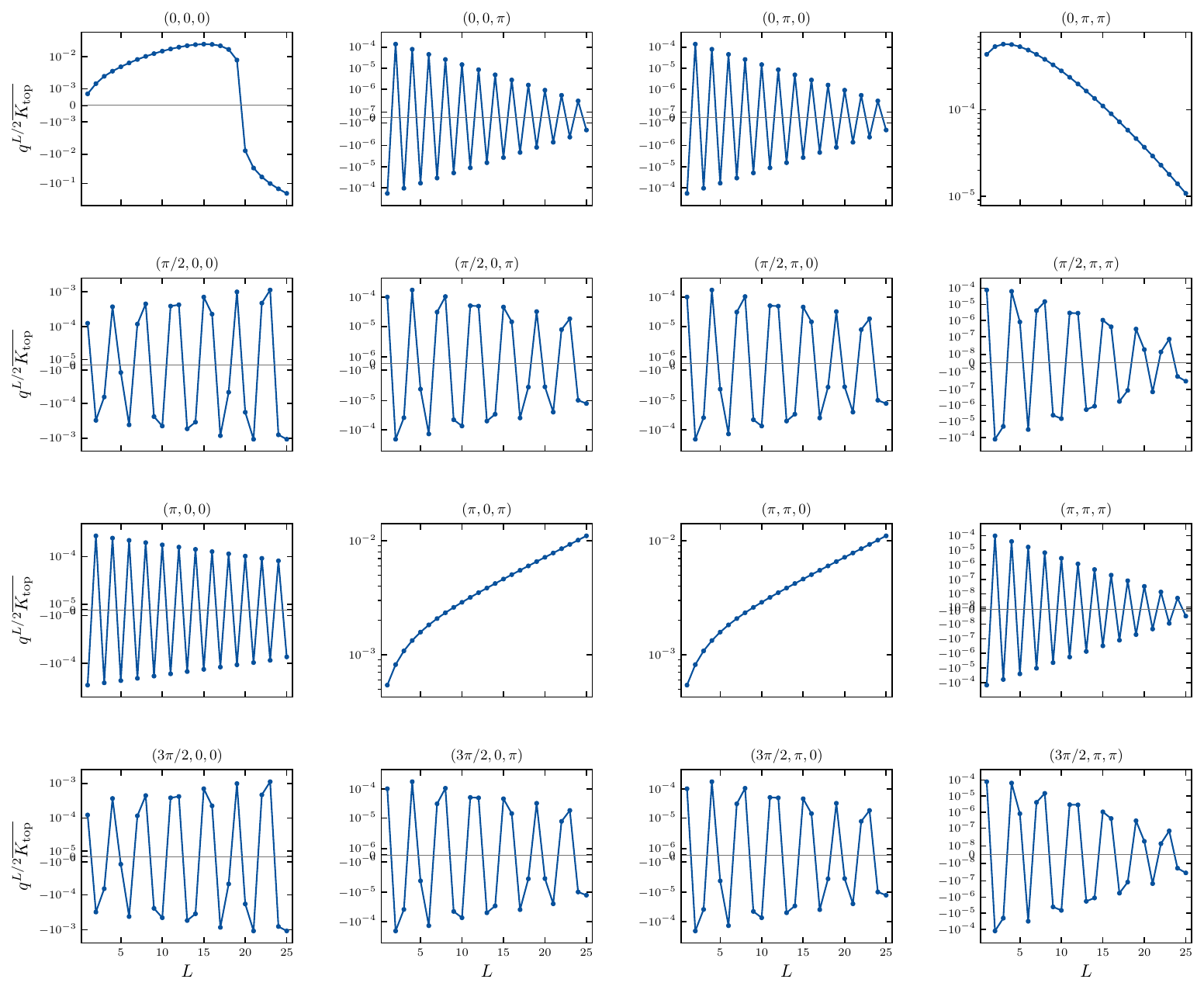}
\caption{
\textbf{BWM TopSFF at $t=2$ and $q=2$ in all momentum sectors.}
The normalized TopSFF, $q^{L/2}\overline{\mathcal{K}_{\mathrm{top}}}$, is plotted as a function of $L$ up to $L= 25$ for each momentum sector. The results exhibit the expected symmetry under $k_c\leftrightarrow k_d$, such that the second and third columns are identical, together with the corresponding relations under momentum inversion.
Several sectors, including $(\pi,0,0)$ and $(\pi,\pi,\pi)$, exhibit an alternating sign with $L$, consistent with a dominant negative real eigenvalue, whereas other sectors display non-oscillatory growth or decay. Importantly, these momentum-sector-resolved results of BWM at finite $q$ agree with the RPM  momentum-sector TopSFF results  at finite $q$ in Figs.~\ref{fig:t2_q4_allmom} and in large $q$ in Fig.~\ref{fig:grid_all_sectors}.
}   
    \label{fig:bwm_t2_q4_allmom}
\end{figure}

\subsection{Temporally extended topological defects}

We consider the TopSFF of a temporally extended global swap defect \(\hat{\mathcal D}=\hat S\otimes \hat\iden\) acting on \( U_{\mathrm{s}} \otimes  U_{\mathrm{s}}\) in the spatial direction.  The generic quantum many-body chaotic systems \(\hat U_{\mathrm{s}}(t,L):= \prod_{i=1}^L  \hat{V}(t,L)\) is generated by the space-time dual transfer matrices \(\hat{V}(t,L)\) [Fig.~\ref{fig:tdw_mickey_mouse}(a) purple] of RPM defined in \eqref{app_eq:global_trs_rpm} and RHM defined in \eqref{app_eq:rhm_gtrs}, satisfying the \([\hat{V}(t,L), \hat S]=0\), and therefore has time reversal symmetry (TRS).

We follow the scheme of Ref.~\cite{chan2021trans} to compute the SFF and TopSFF in the Thouless double-scaling limit
\begin{equation}
  \kappa(\ell)
  =
  \lim_{\substack{L,t\to\infty\\ \ell \equiv L/L_{\mathrm{Th}}(t)}}
   K(t,L),
\end{equation}
and analogously for the TopSFF, where \(L_{\mathrm{Th}}(t)\) is the Thouless length. We collapse the numerical data by estimating \(L_{\mathrm{Th}}(t)\) independently for each time \(t\). To do this, we choose a reference point \(\ell_0\) and set \(y_0=\kappa(\ell_0)\). For each fixed \(t\), we interpolate the numerical curve \(K(t,L)\) and find the first intersection with the horizontal line \(y=y_0\), which defines an intermediate scale \(L^\ast(t)\) through \(K\bigl(t,L^\ast(t)\bigr)=y_0\). Since \(L^\ast(t)/L_{\mathrm{Th}}(t)=\ell_0\) in the scaling regime, this yields the estimator \(L_{\mathrm{Th}}(t)=L^\ast(t)/\ell_0\). Finally, plotting \(K(t,L)\) against the rescaled variable \(L/L_{\mathrm{Th}}(t)\) produces the desired collapse.

In addition to the above procedure, we make two practical adjustments to improve the robustness of the collapse. First, there is residual freedom in the choice of the reference point \(\ell_0\). We therefore scan over a range of \(\ell_0\) values and quantify the quality of the resulting collapse by a sum of squared errors, defined as the squared distance between the collapsed data points and the scaling function \(\kappa(\ell)\). The optimal \(\ell_0\) is chosen by minimising this sum over the scan range.

Second, to improve the robustness of the interpolation and intersection-finding step in the presence of noisy data, we optionally apply Gaussian filtering. Writing the measured data as \(y_i=s_i+\eta_i\), with underlying signal \(s_i\) and noise \(\eta_i\), we define the filtered data by the discrete convolution
\begin{equation}
  y_i^{\mathrm{(filt)}}= \sum_{k} g_k y_{i-k},
  \qquad
  g_k =
  \frac{\exp\!\left[-k^2/(2\sigma_g^2)\right]}
  {\sum_{m}\exp\!\left[-m^2/(2\sigma_g^2)\right]},
  \qquad
  \sum_k g_k = 1 ,
\end{equation}
where \(\sigma_g\) controls the smoothing width. Assuming uncorrelated noise with variance \(\sigma_\eta^2\), the filtered noise variance is
\(
    \mathrm{Var}\!\left(\eta_i^{\mathrm{(filt)}}\right)
    =
    \sum_k g_k^2\,\mathrm{Var}(\eta_{i-k})
    =
    \sigma_\eta^2\sum_k g_k^2
    \leq
    \sigma_\eta^2 .
\)
Thus the filter suppresses fluctuations by forming a weighted average over neighbouring points, leading to a more stable estimate of \(L^\ast(t)\) and  \(L_{\mathrm{Th}}(t)\). We use this step only mildly, since excessive smoothing can wash out genuine fine structure in the data.

In Fig.~\ref{fig:scaling_collapse_and_Lth}, we apply this procedure to the TRS RPM and TRS RHM with a temporally extended global swap defect. The left and middle panels show the resulting collapse of the SFF and TopSFF, respectively, after rescaling the system size by the extracted Thouless length \(L_{\Th}(t)\). The data for different \(t\) and for both models collapse onto the large-\(q\) scaling functions, \(\kappa(\ell)=2\cosh\ell\) for the SFF and \(\kappa_{\topo}(\ell)=2\sinh\ell\) for the TopSFF. These results demonstrate the universality of the scaling forms of SFF and TopSFF with temporally extended global swap defect. The right panel shows the extracted \(L_{\Th}(t)\) for the two models, obtained independently from the SFF and TopSFF collapses. The SFF and TopSFF estimates agree within the numerical uncertainty, confirming that the two observables identify the same Thouless length.

In Fig.~\ref{fig:timetrans_collapse}, we apply the same collapse procedure to the Floquet RPM and Floquet RHM with a temporally extended time translation defect. After rescaling the system size by the extracted Thouless length \(L_{\Th}(t)\), the SFF and TopSFF data for different \(t\) and for both models collapse onto the large-\(q\) scaling functions \(\kappa(\ell)=e^{\ell}-\ell-1\) and \(\kappa_{\topo}(\ell)=e^{\ell}-1\), respectively. Again, this demonstrates the universality of the time translation defect scaling forms beyond the large-\(q\) limit. In the right panel, the independently extracted \(L_{\Th}(t)\) values from the SFF and TopSFF collapses agree within numerical uncertainty, confirming that both observables identify the same Thouless length.

\begin{figure}
    \centering
    \includegraphics[width=0.32\textwidth]{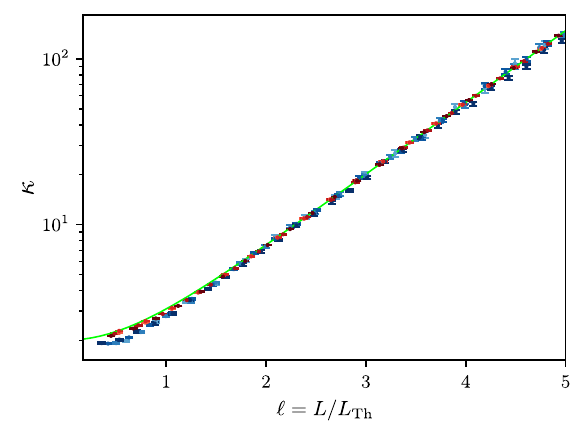}
    \hfill
    \includegraphics[width=0.32\textwidth]{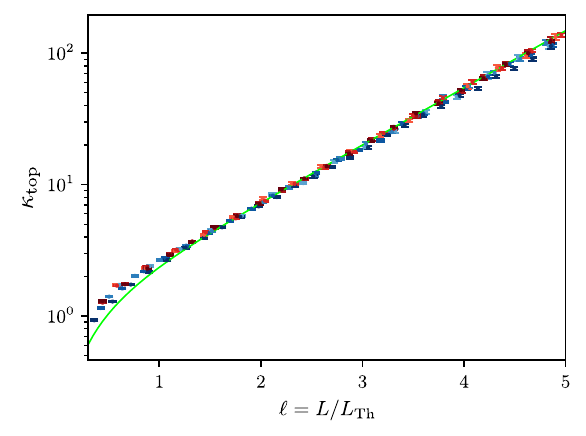}
    \hfill
    \includegraphics[width=0.32\textwidth]{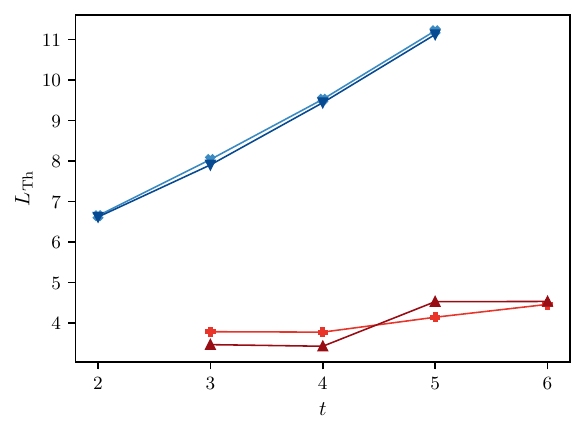}
    \caption{
    \textbf{Scaling collapse of TopSFF with temporally extended global swap defect.}
    Left: scaling collapse of the SFF for TRS RPM defined in \eqref{app_eq:global_trs_rpm} and TRS RHM defined in \eqref{app_eq:rhm_gtrs}. 
    Middle: corresponding scaling collapse of the TopSFF. 
    RPM (RHM) data are shown for timesteps \(t\in[2,6]\) (\(t\in[2,5]\)), with darker shades of red (blue) indicating larger \(t\). Results are averaged over \(7500\) realizations for RPM and \(5100\) realizations for RHM. In the left and middle panels, the data collapse onto the large-\(q\) exact scaling forms, \(\kappa=2\cosh(\ell)\) for the SFF  and \(\kappa_{\topo}=2\sinh(\ell)\) for the TopSFF (green lines). The RPM (RHM) data correspond to \(\epsilon=1.56\) (\(\epsilon=2.94\)). 
    Right: effective Thouless length \(L_{\mathrm{th}}(t)\) for RPM (red) and RHM (blue). For each model, the SFF and TopSFF are shown in lighter and darker shades, respectively.
    }
    \label{fig:scaling_collapse_and_Lth}
\end{figure}

\begin{figure}
    \centering
    \includegraphics[width=0.32\linewidth]{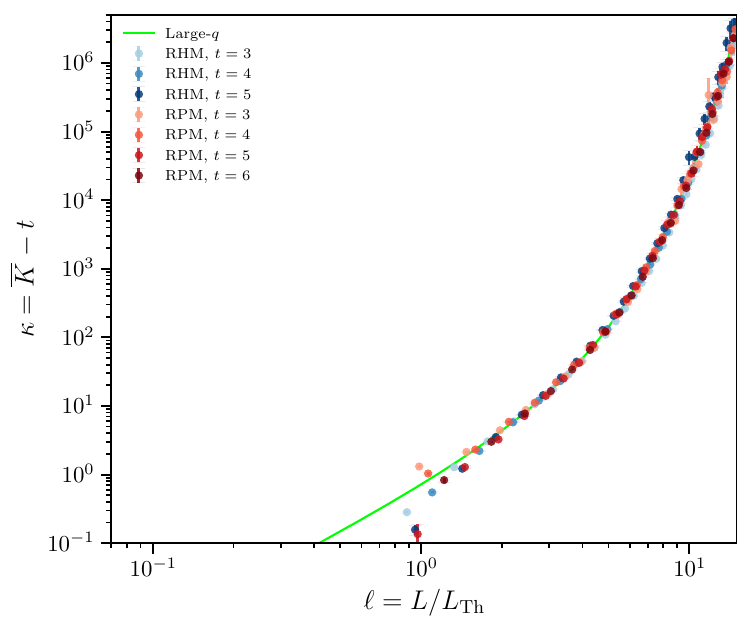}
    \hfill
    \includegraphics[width=0.32\linewidth]{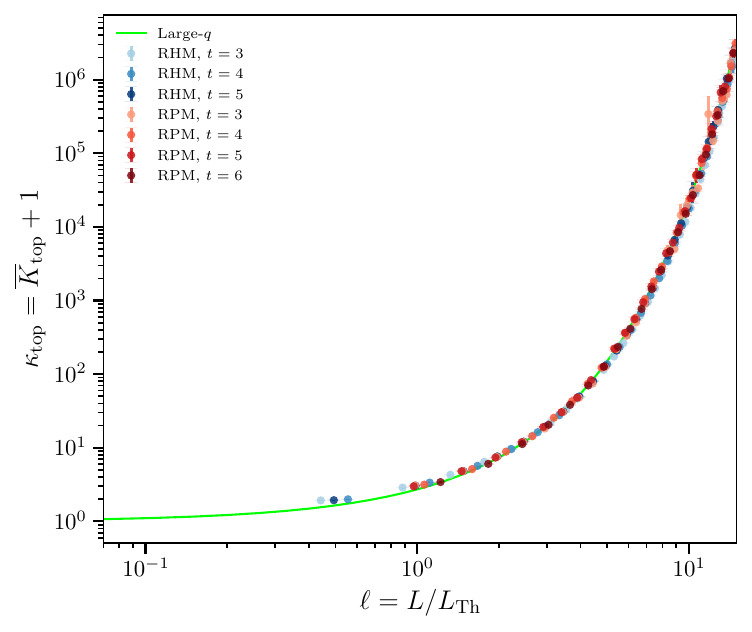}
    \hfill
    \includegraphics[width=0.32\linewidth]{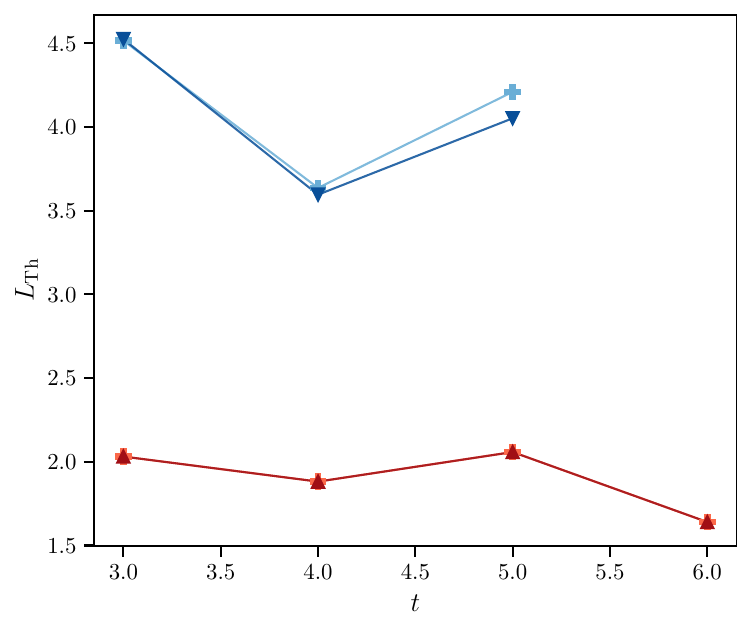}
    \caption{
    \textbf{Scaling collapse of TopSFF with temporally extended translation defect.}
Left: scaling collapse of the SFF for Floquet RPM defined in \eqref{app_eq:floqonly_rpm} and Floquet RHM defined in \eqref{app_eq:floqonly_rhm}. 
    Middle: corresponding scaling collapse of the TopSFF with a temporally extended time translation defect.
    RPM and RHM data are shown in red and blue, respectively, with darker shades indicating larger $t$.
    The data is averaged over $10000$ circuit realisations. The RPM data is evaluated at $\epsilon=2$, while the RHM data are evaluated at $\epsilon=1.85$.
    In the left panel the SFF collapses onto
    $\kappa=e^{\ell}-\ell-1$, while middle panel, the TopSFF collapses onto the large-$q$ scaling form
    $\kappa_{\topo}=e^{\ell}-1$, shown as green reference curves.
    Right: effective Thouless length \(L_{\mathrm{th}}(t)\) extracted from the SFF and TopSFF collapses for the RPM and RHM.
    For each model, the SFF and TopSFF are shown in lighter and darker shades, respectively.
    }
    \label{fig:timetrans_collapse}
\end{figure}

\end{document}